\begin{document}

\newcommand{\OMIT}[1]{}

\DeclarePairedDelimiter\ceil{\lceil}{\rceil}
\DeclarePairedDelimiter\floor{\lfloor}{\rfloor}

\allowdisplaybreaks

\newcommand{\ghw}{\mathsf{GHW}}
\newcommand{\sjf}{\mathsf{SJF}}
\newcommand{\spantl}{\mathsf{SpanTL}}
\newcommand{\spanl}{\mathsf{SpanL}}
\newcommand{\logcfl}{\mathsf{LOGCFL}}
\newcommand{\nlogspace}{\mathsf{NL}}
\newcommand{\hard}{\text{-hard}}
\newcommand{\complete}{\text{-complete}}
\newcommand{\trees}[2]{\mathsf{trees}_{#1}[#2]}

\newcommand{\oprh}[3]{\mathsf{ORep}_{#3}(#1,#2)}
\newcommand{\probhom}[2]{\probrep{#1}{#2}}
\newcommand{\crsh}[3]{\mathsf{CRS}_{#3}(#1,#2)}

\newcommand{\op}{\mathit{op}}
\newcommand{\PS}{\mathcal{P}}
\newcommand{\viol}[2]{\mathsf{V}(#1,#2)}
\newcommand{\rs}[2]{\mathsf{RS}(#1,#2)}
\newcommand{\rsone}[2]{\mathsf{RS}^1(#1,#2)}
\newcommand{\crs}[2]{\mathsf{CRS}(#1,#2)}
\newcommand{\crss}[3]{\mathsf{CRS}_{#3}(#1,#2)}
\newcommand{\cancrs}[2]{\mathsf{CanCRS}(#1,#2)}
\newcommand{\cancrss}[3]{\mathsf{CanCRS}_{#3}(#1,#2)}
\newcommand{\opr}[2]{\mathsf{ORep}(#1,#2)}
\newcommand{\copr}[2]{\mathsf{CORep}(#1,#2)}
\newcommand{\ops}[3]{\mathsf{Ops}_{#1}(#2,#3)}
\newcommand{\opsone}[3]{\mathsf{Ops}^1_{#1}(#2,#3)}
\newcommand{\abs}[1]{\mathsf{abs}_{>0}(#1)}
\renewcommand{\abs}[1]{\mathsf{RL}(#1)}
\newcommand{\insP}{\ins{P}}
\newcommand\sem[1]{{[\![ #1 ]\!]}}
\newcommand{\probrep}[2]{\mathsf{P}_{#1}(#2)}
\newcommand{\oca}[2]{\mathsf{OCA}_{#2}(#1)}
\newcommand{\ocqa}{\mathsf{OCQA}}
\newcommand{\rrelfreq}[1]{\mathsf{RRFreq}(#1)}
\newcommand{\rrelfreqone}[1]{\mathsf{RRFreq}^1(#1)}
\newcommand{\orfreq}[2]{\mathsf{rrfreq}_{#1}(#2)}
\newcommand{\orfreqone}[2]{\mathsf{rrfreq}^1_{#1}(#2)}
\newcommand{\srelfreq}[1]{\mathsf{SRFreq}(#1)}
\newcommand{\srelfreqone}[1]{\mathsf{SRFreq}^1(#1)}
\newcommand{\srfreq}[2]{\mathsf{srfreq}_{#1}(#2)}
\newcommand{\srfreqone}[2]{\mathsf{srfreq}^1_{#1}(#2)}
\newcommand{\ur}{\mathsf{ur}}
\newcommand{\us}{\mathsf{us}}
\newcommand{\uo}{\mathsf{uo}}

\newcommand{\IS}{\mathsf{IS}}
\newcommand{\ISZ}{\mathsf{IS}_{\neq \emptyset}}
\newcommand{\ISC}{\mathsf{IS}^{\mathsf{con}}}
\newcommand{\CC}{\mathsf{CC}}
\newcommand{\cg}[2]{\mathsf{CG}(#1,#2)}

\newcommand{\crsone}[2]{\mathsf{CRS}^1(#1,#2)}
\newcommand{\coprone}[2]{\mathsf{CORep}^1(#1,#2)}

\newcommand{\mi}[1]{\mathit{#1}}
\newcommand{\ins}[1]{\mathbf{#1}}
\newcommand{\adom}[1]{\mathsf{dom}(#1)}
\renewcommand{\paragraph}[1]{\textbf{#1}}
\newcommand{\ra}{\rightarrow}
\newcommand{\fr}[1]{\mathsf{fr}(#1)}
\newcommand{\dep}{\Sigma}
\newcommand{\sch}[1]{\mathsf{sch}(#1)}
\newcommand{\ar}[1]{\mathsf{ar}(#1)}
\newcommand{\body}[1]{\mathsf{body}(#1)}
\newcommand{\head}[1]{\mathsf{head}(#1)}
\newcommand{\guard}[1]{\mathsf{guard}(#1)}
\newcommand{\class}[1]{\mathbb{#1}}
\newcommand{\pos}[2]{\mathsf{pos}(#1,#2)}
\newcommand{\app}[2]{\langle #1,#2 \rangle}
\newcommand{\crel}[1]{\prec_{#1}}

\newcommand{\ccrel}[1]{\prec_{#1}^+}

\newcommand{\tcrel}[1]{\prec_{#1}^{\star}}
\newcommand{\rctaa}{\class{CT}_{\forall \forall}^{\mathsf{res}}}
\newcommand{\rctaapr}{\mathsf{CT}_{\forall \forall}^{\mathsf{res}}}
\newcommand{\rctae}{\class{CT}_{\forall \exists}^{\mathsf{res}}}
\newcommand{\rctaepr}{\mathsf{CT}_{\forall \exists}^{\mathsf{res}}}
\newcommand{\base}[1]{\mathsf{base}(#1)}
\newcommand{\eqt}[1]{\mathsf{eqtype}(#1)}
\newcommand{\result}[1]{\mathsf{result}(#1)}
\newcommand{\chase}[2]{\mathsf{ochase}(#1,#2)}
\newcommand{\pred}[1]{\mathsf{pr}(#1)}
\newcommand{\origin}[1]{\mathsf{org}(#1)}
\newcommand{\eq}[1]{\mathsf{eq}(#1)}
\newcommand{\dept}[1]{\mathsf{depth}(#1)}

\newcommand{\comp}[2]{\mathsf{comp}_{#2}(#1)}

\newcommand{\rep}[2]{\mathsf{rep}_{#2}(#1)}
\newcommand{\repp}[2]{\mathsf{rep}_{#2}\left(#1\right)}
\newcommand{\rfreq}[2]{\mathsf{rfreq}_{#2}(#1)}
\newcommand{\homs}[3]{\mathsf{hom}_{#2,#3}(#1)}
\newcommand{\prob}[1]{\mathsf{#1}}
\newcommand{\key}[1]{\mathsf{key}(#1)}
\newcommand{\keyval}[2]{\mathsf{key}_{#1}(#2)}
\newcommand{\block}[2]{\mathsf{block}_{#2}(#1)}
\newcommand{\sblock}[2]{\mathsf{sblock}_{#2}(#1)}

\newcommand{\rt}[1]{\mathsf{root}(#1)}
\newcommand{\child}[1]{\mathsf{child}(#1)}

\newcommand{\var}[1]{\mathsf{var}(#1)}
\newcommand{\const}[1]{\mathsf{const}(#1)}
\newcommand{\pvar}[2]{\mathsf{pvar}_{#2}(#1)}

\newcommand{\att}[1]{\mathsf{att}(#1)}
\newcommand{\card}[1]{\sharp #1}

\newcommand{\pr}{\mathsf{Pr}}
\newcommand{\prsp}{\mathsf{PS}}

\newcommand{\sign}[1]{\mathsf{sign}(#1)}
\newcommand{\litval}[2]{\mathsf{lval}_{#2}(#1)}
\newcommand{\angletup}[1]{\langle #1 \rangle}

\def\qed{\hfill{\qedboxempty}      
  \ifdim\lastskip<\medskipamount \removelastskip\penalty55\medskip\fi}

\def\qedboxempty{\vbox{\hrule\hbox{\vrule\kern3pt
                 \vbox{\kern3pt\kern3pt}\kern3pt\vrule}\hrule}}

\def\qedfull{\hfill{\qedboxfull}   
  \ifdim\lastskip<\medskipamount \removelastskip\penalty55\medskip\fi}

\def\qedboxfull{\vrule height 4pt width 4pt depth 0pt}

\newcommand{\markfull}{\qedboxfull}
\newcommand{\markempty}{\qed}

\newcommand{\atrees}[2]{\mathsf{trees}_{#2}(#1)}
\newcommand{\leaves}[1]{\mathsf{leaves}(#1)}
\newcommand{\children}[1]{\mathsf{children}(#1)}

\newcommand{\accept}{\text{\rm accept}}
\newcommand{\init}{\text{\rm init}}
\newcommand{\reject}{\text{\rm reject}}
\newcommand{\spanm}{\text{\rm span}}
\newcommand{\node}{\text{\rm node}}
\newcommand{\assign}{\text{\rm assignment}}
\newcommand{\confs}{\text{\rm configurations}}
\newcommand{\labeling}{L}
\newcommand{\final}{\mathsf{end}}
\newcommand{\process}{\mathsf{Process}}
\newcommand{\pathv}{\mathsf{path}}
\newcommand{\states}{\mathsf{states}}
\newcommand{\curr}{\mathsf{current}}
\newcommand{\per}{\mathsf{per}}

\newtheorem{claim}[theorem]{Claim}
\newtheorem{fact}[theorem]{Fact}
\newtheorem{observation}{Observation}
\newtheorem{remark}{Remark}
\newtheorem{apptheorem}{Theorem}[section]
\newtheorem{appcorollary}[apptheorem]{Corollary}
\newtheorem{appproposition}[apptheorem]{Proposition}
\newtheorem{applemma}[apptheorem]{Lemma}
\newtheorem{appclaim}[apptheorem]{Claim}
\newtheorem{appfact}[apptheorem]{Fact}

\newtheorem{manualtheoreminner}{Theorem}
\newenvironment{manualtheorem}[1]{%
	\renewcommand\themanualtheoreminner{#1}%
	\manualtheoreminner
}{\endmanualtheoreminner}

\newtheorem{manualpropositioninner}{Proposition}
\newenvironment{manualproposition}[1]{%
	\renewcommand\themanualpropositioninner{#1}%
	\manualpropositioninner
}{\endmanualpropositioninner}

\newtheorem{manuallemmainner}{Lemma}
\newenvironment{manuallemma}[1]{%
	\renewcommand\themanuallemmainner{#1}%
	\manuallemmainner
}{\endmanuallemmainner}

\title{Combined Approximations for Uniform Operational Consistent Query Answering}

\author{Marco Calautti}
\affiliation{%
	\institution{University of Milan}
	\country{Italy}
}
\email{marco.calautti@unimi.it}

\author{Ester Livshits}
\affiliation{%
	\institution{University of Edinburgh}
	\country{UK}
}
\email{ester.livshits@ed.ac.uk}

\author{Andreas Pieris}
\affiliation{%
	\institution{University of Edinburgh}
	\country{UK}
}
\affiliation{%
	\institution{University of Cyprus}
	\country{Cyprus}
}
\email{apieris@inf.ed.ac.uk}

\author{Markus Schneider}
\affiliation{%
	\institution{University of Edinburgh}
	\country{UK}
}
\email{m.schneider@ed.ac.uk}

\renewcommand{\shortauthors}{Marco Calautti, Ester Livshits, Andreas Pieris, \& Markus Schneider}

\begin{abstract}
	Operational consistent query answering (CQA) is a recent framework for CQA based on revised definitions of repairs, which are built by applying a sequence of operations (e.g., fact deletions) starting from an inconsistent database until we reach a database that is consistent w.r.t.~the given set of constraints.
	It has been recently shown that there is an efficient approximation for computing the percentage of repairs that entail a given  query when we focus on primary keys, conjunctive queries, and assuming the query is fixed (i.e.,~in data complexity). However, it has been left open whether such an approximation exists when the query is part of the input (i.e.,~in combined complexity). 
	We show that this is the case when we focus on self-join-free conjunctive queries of bounded generelized hypertreewidth. We also show that it is unlikely that efficient approximation schemes exist once we give up one of the adopted syntactic restrictions, i.e., self-join-freeness or bounding the generelized hypertreewidth.
	Towards the desired approximation, we introduce a counting complexity class, called $\mathsf{SpanTL}$, show that each problem in it admits an efficient approximation scheme by using a recent approximability result about tree automata, and then place the problem of interest in $\mathsf{SpanTL}$. 
\end{abstract}

\setcopyright{acmlicensed}
\acmJournal{PACMMOD}
\acmYear{2024} \acmVolume{2} \acmNumber{2 (PODS)} \acmArticle{99} \acmMonth{5}\acmDOI{10.1145/3651600}

\begin{CCSXML}
<ccs2012>
   <concept>
       <concept_id>10003752.10010070.10010111.10011736</concept_id>
       <concept_desc>Theory of computation~Incomplete, inconsistent, and uncertain databases</concept_desc>
       <concept_significance>500</concept_significance>
       </concept>
   <concept>
       <concept_id>10002951.10002952.10002953.10010820.10010915</concept_id>
       <concept_desc>Information systems~Inconsistent data</concept_desc>
       <concept_significance>500</concept_significance>
       </concept>
 </ccs2012>
\end{CCSXML}

\ccsdesc[500]{Theory of computation~Incomplete, inconsistent, and uncertain databases}
\ccsdesc[500]{Information systems~Inconsistent data}

\keywords{inconsistency, consistent query answering, operational semantics, primary keys, conjunctive queries, approximation schemes}

\received{December 2023}
\received[revised]{February 2024}
\received[accepted]{March 2024}

\maketitle

\section{Introduction}\label{sec:introduction}

Operational consistent query answering (CQA) is a recent framework, introduced in 2018~\cite{CaLP18}, that allows us to compute conceptually meaningful answers to queries posed over inconsistent databases, that is, databases that do not conform to their specifications given in the form of constraints, and it is based on revised definitions of repairs and consistent answers.
In particular, the main idea underlying this new framework is to replace the declarative approach to repairs, introduced in the late 1990s by Arenas, Bertossi, and Chomicki~\cite{ArBC99}, and adopted by numerous works (see, e.g.,~\cite{CaCP19,CaCP21,CLPS21,FuFM05,FuMi07,GePW15,KoSu14,KoWi15,KoWi21,MaWi13}), with an {\em operational} one that explains the process of constructing a repair. In other words, we can iteratively apply operations (e.g., fact deletions), starting from an inconsistent database, until we reach a database that is consistent w.r.t.~the given set of constraints. This gives us the flexibility of choosing the probability with which we apply an operation, which in turn allows us to calculate the probability of an operational repair, and thus, the probability with which an answer is entailed. 
Probabilities can be naturally assigned to operations in many scenarios leading to inconsistencies. This is illustrated using the following example from~\cite{CaLP18}.

\begin{example}\label{exa:operational}
	Consider a data integration scenario that results in a database containing the facts ${\rm Emp}(1, {\rm Alice})$ and ${\rm Emp}(1,{\rm Tom})$ that violate the constraint that the first attribute of the relation name ${\rm Emp}$ (the id) is a key.
	Suppose we have a level of
	trust in each of the sources; say we believe that each is 50\% reliable. With probability $0.5 \cdot 0.5 = 0.25$ we do not trust either tuple and apply the operation that removes both facts. With probability $(1-0.25)/2=0.375$ we remove either ${\rm Emp}(1,{\rm Alice})$ or ${\rm Emp}(1,{\rm Tom})$.
	Therefore, we have three repairs, each with its probability: the empty repair with probability $0.25$, and the repairs consisting of ${\rm Emp}(1,{\rm Alice})$ or ${\rm Emp}(1,{\rm Tom})$ with probability $0.375$. 
	%
	Note that the standard CQA approach from~\cite{ArBC99} only allows the removal of one of the two facts (with equal probability $0.5$). It somehow assumes that we trust at least one of the sources, even though they are conflicting. \hfill\markfull
	%
\end{example}

As recently discussed in~\cite{CLPS22}, a natural way of choosing the probabilities assigned to operations is to follow the uniform probability distribution over a reasonable space. The obvious candidate for such a space is the set of operational repairs. 
This led to the so-called {\em uniform} operational CQA. Let us note that in the case of uniform operational repairs, the probability with which an answer is entailed is essentially the percentage of operational repairs that entail the answer in question. We know from~\cite{CLPS22} that, although computing the above percentage is $\sharp ${\rm P}-hard, it can be efficiently approximated via a fully polynomial-time randomized approximation scheme (FPRAS) when we focus on primary keys, conjunctive queries, and the query is fixed (i.e.,~in data complexity). However, it has been left open whether an efficient approximation exists when the query is part of the input (i.e.,~in combined complexity). Closing this problem is the main goal of this work.

\subsection{Our Contributions}

We show that the problem of computing the percentage of operational repairs that entail an answer to the given query admits an FPRAS in combined complexity when we focus on self-join-free conjunctive queries of bounded generalized hypertreewidth.
We further show that it is unlikely to have an efficient approximation scheme once we give up one of the adopted syntactic restrictions, i.e., self-join-freeness or bounding the generelized hypertreewidth.
Note that an analogous approximability result has been recently established for the problem of computing the probability of an answer to a conjunctive query over a tuple-independent probabilistic database~\cite{BrMe23}.
Towards our approximability result, our main technical task is to show that computing the numerator of the ratio in question admits an FPRAS since we know from~\cite{CLPS22} that the denominator can be computed in polynomial time. To this end, we introduce a novel counting complexity class, called $\mathsf{SpanTL}$, establish that each problem in $\mathsf{SpanTL}$ admits an FPRAS, and then place the problem of interest in $\mathsf{SpanTL}$.
The complexity class $\spantl$, apart from being interesting in its own right, will help us to obtain the desired approximation scheme for the problem of interest in a more modular and comprehensible way.

The class $\mathsf{SpanTL}$ relies on alternating Turing machines with output. Although the notion of output is clear for (non-)deterministic Turing machines, it is rather uncommon to talk about the output of an alternating Turing machine. To the best of our knowledge, alternating Turing machines with output have not been considered before. We define an output of an alternating Turing machine $M$ on input $w$ as a node-labeled rooted tree, whose nodes are labeled with (finite) strings over the alphabet of the machine, obtained from a computation of $M$ on $w$.
We then define $\mathsf{SpanTL}$ as the class that collects all the functions that compute the number of distinct accepted outputs of an alternating Turing machine with output with some resource usage restrictions.
For establishing that each problem in $\mathsf{SpanTL}$ admits an FPRAS, we exploit a recent approximability result in the context of automata theory, that is, the problem $\sharp\mathsf{NFTA}$ of counting the number of trees of a certain size accepted by a non-deterministic finite tree automaton admits an FPRAS~\cite{ACJR21}. In fact, we reduce 
each function in $\mathsf{SpanTL}$ to $\sharp\mathsf{NFTA}$.
Finally, for placing our problems in $\mathsf{SpanTL}$, we build upon an idea from~\cite{BrMe23} on how the generalized hypertree decomposition of the query can be used to encode a database as a node-labeled rooted tree. Note, however, that transferring the idea from~\cite{BrMe23} to our setting requires a non-trivial adaptation since we need to encode operational repairs.

At this point, we would like to stress that $\mathsf{SpanTL}$ generalizes the known complexity class $\mathsf{SpanL}$, which collects all the functions that compute the number of distinct accepted outputs of a non-deterministic logspace Turing machine with output~\cite{AlJe93}. We know that each problem in $\mathsf{SpanL}$ admits an FPRAS~\cite{ACJR21-NFA}, but
our problem of interest is unlikely to be in $\mathsf{SpanL}$ (unless $\mathsf{NL} = \mathsf{LOGCFL}$). This is because its decision version, which asks whether the number that we want to compute is greater than zero, is $\mathsf{LOGCFL}\hard$ (this is implicit in~\cite{GoLS02}), whereas the decision version of each counting problem in $\mathsf{SpanL}$ is in $\mathsf{NL}$.


\OMIT{
The obvious question is how the complexity of exact and approximate operational CQA is affected if we assign probabilities to operations according to the above refined ways. In particular, we would like to understand whether uniform operational CQA allows us to go beyond the relatively simple case of primary keys.


Consistent query answering (CQA) is an elegant framework, introduced in the late 1990s by Arenas, Bertossi, and Chomicki~\cite{ArBC99}, that allows us to compute conceptually meaningful answers to queries posed over inconsistent databases, that is, databases that do not conform to their specifications given in the form of integrity constraints.
The key elements underlying CQA are (i) the notion of {\em (database) repair} of an inconsistent database $D$, that is, a consistent database whose difference with $D$ is somehow minimal, and (ii) the notion of query answering based on {\em consistent answers}, that is, answers that are entailed by every repair.


\vspace{10mm}
Since deciding whether a candidate answer is a consistent answer is most commonly intractable in data complexity (in fact, even for primary keys and conjunctive queries, the problem is coNP-hard~\cite{ChMa05}), there was a great effort on drawing the tractability boundary for CQA; see, e.g.,~\cite{FuFM05,FuMi07,GePW15,KoSu14,KoWi15,KoWi21}.
Much of this effort led to interesting dichotomy results that precisely characterize when CQA is tractable/intractable in data complexity.
However, the tractable fragments do not cover many relevant scenarios that go beyond primary keys.

As extensively argued in~\cite{CaLP18}, the goal of a practically applicable CQA approach should be efficient approximate query answering with explicit error guarantees rather than exact query answering. In the realm of the CQA approach described above, one could try to devise efficient probabilistic algorithms with bounded one- or two-sided error. However, it is unlikely that such algorithms exist since, even for very simple scenarios (e.g., primary keys and conjunctive queries), placing the problem in tractable randomized complexity classes such as RP or BPP would imply that the polynomial hierarchy collapses~\cite{KaLi80}.
Another promising idea is to replace the rather strict notion of consistent answers with the more refined notion of relative frequency, that is, the percentage of repairs that entail an answer, and then try to approximate it via a fully polynomial-time randomized approximation scheme (FPRAS); computing it exactly is, unsurprisingly, $\sharp ${\rm P}-hard~\cite{MaWi13}. Indeed, for primary keys and conjunctive queries, one can approximate the relative frequency via an FPRAS; this is implicit in~\cite{DaSu07}, and it has been made explicit in~\cite{CaCP19}. Moreover, a recent experimental evaluation revealed that approximate CQA in the presence of primary keys and conjunctive queries is not unrealistic in practice~\cite{CaCP21}.
However, it seems that the simple case of primary keys is the limit of this approach. We have strong indications that in the case of FDs the problem of computing the relative frequency does not admit an FPRAS, while in the case of keys it is a highly non-trivial problem~\cite{CLPS21}.


The above limitations of the classical CQA approach led the authors of~\cite{CaLP18} to propose a new framework for CQA, based on revised definitions of repairs and consistent answers, which opens up the possibility of efficient approximations with error guarantees. The main idea underlying this new framework is to replace the declarative approach to repairs with an {\em operational} one that explains the process of constructing a repair.
In other words, we can iteratively apply operations (e.g., fact deletions), starting from an inconsistent database, until we reach a database that is consistent w.r.t.~the given set of constraints. This gives us the flexibility of choosing the probability with which we apply an operation, which in turn allows us to calculate the probability of an operational repair, and thus, the probability with which an answer is entailed. 
Probabilities can be naturally assigned to operations in many scenarios leading to inconsistencies. This is illustrated using an example from~\cite{CaLP18}:

\begin{example}\label{exa:operational}
	Consider a data integration scenario that results in a database containing the facts ${\rm Emp}(1, {\rm Alice})$ and ${\rm Emp}(1,{\rm Tom})$ that violate the constraint that the first attribute of the relation name ${\rm Emp}$ (the id) is a key.
	Suppose we have a level of
	trust in each of the sources; say we believe that each is 50\% reliable. With probability $0.5 \cdot 0.5 = 0.25$ we do not trust either tuple and apply the operation that removes both facts. With probability $(1-0.25)/2=0.375$ we remove either ${\rm Emp}(1,{\rm Alice})$ or ${\rm Emp}(1,{\rm Tom})$.
	Therefore, we have three repairs, each with its probability: the empty repair with probability $0.25$, and the repairs consisting of ${\rm Emp}(1,{\rm Alice})$ or ${\rm Emp}(1,{\rm Tom})$ with probability $0.375$. 
	%
	Note that the standard CQA approach only allows the removal of one of the two facts (with equal probability $0.5$). It somehow assumes that we trust at least one of the sources, even though they are in conflict. \hfill\markfull
	%
\end{example}

The preliminary data complexity analysis of operational CQA performed in~\cite{CaLP18} revealed that computing the probability of a candidate answer is $\sharp ${\rm P}-hard and inapproximable, even for primary keys and conjunctive queries.
%
%
%
However, these negative results should not be seen as the end of the story, but rather as the beginning since operational CQA gives us the flexibility to choose the probabilities assigned to operations.
Indeed, the main question left open by~\cite{CaLP18} is the following: how can we choose the probabilities assigned to operations in a conceptually meaningful way and, at the same time, the existence of an FPRAS is guaranteed?

A natural way of choosing those probabilities is to follow the uniform probability distribution over a reasonable space. The obvious candidates for such a space are (i) the set of operational repairs, (ii) the set of sequences of operations that lead to a repair (note that multiple such sequences can lead to the same repair), and (iii) the set of available operations at a certain step of the repairing process. This leads to the so-called {\em uniform operational CQA}; further justification of the above choices is given in Section~\ref{sec:uniform}, where the formal definitions underlying uniform operational CQA are given.
The obvious question is how the complexity of exact and approximate operational CQA is affected if we assign probabilities to operations according to the above refined ways. In particular, we would like to understand whether uniform operational CQA allows us to go beyond the relatively simple case of primary keys.

Our goal is to perform a complexity analysis of uniform operational CQA, and provide answers to the above central questions.
Our main findings can be summarized as follows:

\begin{enumerate}
	\item Exact uniform operational CQA remains $\sharp ${\rm P}-hard, even in the case of primary keys and conjunctive queries.
	
	\item Uniform operational CQA admits an FPRAS if we focus on primary keys and conjunctive queries.
	
\end{enumerate}

\OMIT{
A database is inconsistent if it does not conform to its specifications given in the form of integrity constraints. There is a consensus that inconsistency is a real-life phenomenon that arises due to many reasons such as integration of conflicting sources. With the aim of obtaining conceptually meaningful answers to queries posed over inconsistent databases, Arenas, Bertossi, and Chomicki introduced in the late 1990s the notion of Consistent Query Answering (CQA)~\cite{ArBC99}. The key elements underlying CQA are (i) the notion of {\em (database) repair} of an inconsistent database $D$, that is, a consistent database whose difference with $D$ is somehow minimal, and (ii) the notion of query answering based on {\em certain answers}, that is, answers that are entailed by every repair. A simple example, taken from~\cite{CaCP19}, that illustrates the above notions follows:

\begin{example}\label{exa:cqa}
	Consider the relational schema consisting of a single relation name $\text{\rm Employee}(\text{\rm id}, \text{\rm name}, \text{\rm dept})$ that comes with the constraint that the attribute {\text{\rm id}} functionally determines \text{\rm name} and \text{\rm dept}.
	Consider also the database $D$ consisting of the tuples:
	(1, \text{\rm Bob}, \text{\rm HR}), (1, \text{\rm Bob}, \text{\rm IT}), (2, \text{\rm Alice}, \text{\rm IT}), (2, \text{\rm Tim}, \text{\rm IT}).
	It is easy to see that $D$ is inconsistent since we are uncertain about Bob's department, and the name of the employee with id $2$. To devise a repair, we need to keep one tuple from each conflicting pair, which leads to a maximal subset of $D$ that is consistent.
	Observe now that the query that asks whether employees $1$ and $2$ work in the same department is true only in two out of four repairs, and thus, not entailed. \hfill\markfull
\end{example}

\noindent
\paragraph{Counting Repairs Entailing a Query.}
A key task in this context is to count the number of repairs of an inconsistent database $D$ w.r.t.~a set $\dep$ of constraints that entail a given query $Q$; for clarity, we base our discussion on Boolean queries.
Depending on the shape of the constraints and the query, the data complexity of the above problem can be tractable, i.e., in \text{\rm FP} (the counting analogue of \textsc{PTime}), or intractable, i.e., $\sharp \text{\rm P}$-complete (with $\sharp \text{\rm P}$ being the counting analogue of \text{\rm NP}).
In other words, given a set $\dep$ of constraints and a query $Q$, the problem $\sharp \prob{Repairs}(\dep,Q)$ that takes as input a database $D$, and asks for the number of repairs of $D$ w.r.t.~$\dep$ that entail $Q$, can be tractable or intractable depending on the shape of $\dep$ and $Q$. This leads to the natural question whether we can establish a complete classification, i.e., for every $\dep$ and $Q$, classify $\sharp \prob{Repairs}(\dep,Q)$ as tractable or intractable by simply inspecting $\dep$ and $Q$.

This is a highly non-trivial question for which Maslowski and Wijsen gave an affirmative answer providing that we concentrate on primary keys, i.e., at most one key constraint per relation name, and self-join-free conjunctive queries (SJFCQs), i.e., CQs that cannot mention a relation name more than once~\cite{MaWi13}. 
More precisely, they have established the following dichotomy result: given a set $\dep$ of primary keys, and an SJFCQ $Q$, $\sharp \prob{Repairs}(\dep,Q)$ is either in \text{\rm FP} or $\sharp$\text{\rm P}-complete, and we can determine in polynomial time, by simply analyzing $\dep$ and $Q$, which complexity statement holds.
An analogous dichotomy for arbitrary CQs with self-joins was established by the same authors in~\cite{MaWi14} under the assumption that the primary keys are simple, i.e., they consist of a single attribute. The question whether such a dichotomy result exists for arbitrary primary keys and CQs with self-joins remains a challenging open problem.


Although the picture is rather well-understood for primary keys and SJFCQs, once we go beyond primary keys we know very little concerning the existence of a complete data complexity classification as the one described above.
In particular, the dichotomy result by Maslowski and Wijsen does not apply when we consider the more general class of functional dependencies (FDs), i.e., constraints of the form $R : X \ra Y$, where $X,Y$ are subsets of the set of attributes of $R$, stating that the attributes of $X$ functionally determine the attributes of $Y$.
This brings us to the following question:

\smallskip

\noindent {\em \textbf{Question 1:} Can we lift the dichotomy result for primary keys and SJFCQs to the more general case of functional dependencies?}

\smallskip

The closest known result to the complexity classification asked by Question 1 is for the problem $\sharp \prob{Repairs}(\dep)$, where $\dep$ is a set of FDs, that takes as input a database $D$, and asks for the number of repairs of $D$ w.r.t.~$\dep$ (without considering a query). In particular, we know from~\cite{LiKW21} that whenever $\dep$ has a so-called left-hand side (LHS, for short) chain (up to equivalence), $\sharp \prob{Repairs}(\dep)$ is in \text{\rm FP}; otherwise, it is $\sharp\text{\rm P}$-complete. We also know that checking whether $\dep$ has an LHS chain (up to equivalence) is feasible in polynomial time. Let us recall that a set $\dep$ of FDs has a LHS chain if, for every two FDs $R : X_1 \ra Y_1$ and $R : X_2 \ra Y_2$ of $\dep$, $X_1 \subseteq X_2$ or $X_2 \subseteq X_1$.


\noindent
\paragraph{Approximate Counting.} Another key task in the context of database repairing is to classify $\sharp \prob{Repairs}(\dep,Q)$, for a set of constraints $\dep$ and a query $Q$, as approximable, that is, the target value can be efficiently approximated with error guarantees via a fully polynomial-time randomized approximation scheme (FPRAS), or as inapproximable. Of course, whenever $\sharp \prob{Repairs}(\dep,Q)$ is tractable, then it is trivially approximable. Thus, the interesting task is to classify the intractable cases as approximable or inapproximable.

For a set $\dep$ of primary keys, and a CQ $Q$ (even with self-joins), $\sharp \prob{Repairs}(\dep,Q)$ is always approximable; this is implicit in~\cite{DaSu07}, and it has been made explicit in~\cite{CaCP19}. However, for FDs this is not the case. Depending on the syntactic shape of $\dep$ and $Q$, $\sharp \prob{Repairs}(\dep,Q)$ can be approximable or not; these are actually results of the present work. This leads to the following question:




\noindent
\paragraph{Summary of Contributions.} 
Concerning Question (1), we lift the dichotomy of~\cite{MaWi13} for primary keys and SJFCQs to the general case of FDs (Theorem~\ref{the:fds-dichotomy}). To this end, we build on the dichotomy for the problem of counting repairs (without a query) from~\cite{LiKW21}, which allows us to concentrate on FDs with an LHS chain (up to equivalence) since for all the other cases we can inherit the $\sharp \text{P}$-hardness.
Therefore, our main technical task was actually to lift the dichotomy for primary keys and SJFCQs from~\cite{MaWi13} to the case of FDs with an LHS chain (up to equivalence). Although the proof of this result borrows several ideas from the proof of~\cite{MaWi13}, the task of lifting the result to FDs with an LHS chain (up to equivalence) was a non-trivial one. This is due to the significantly more complex structure of database repairs under FDs with an LHS chain compared to those under primary keys; further details are given in Section~\ref{sec:lhs-chain-fds}.

\OMIT{
Concerning Question (2), although we do not establish a complete classification, we provide results that, apart from being interesting in their own right, are crucial steps towards a complete classification (Theorem~\ref{the:apx-main-result}). After discussing the difficulty underlying a proper 
dichotomy (it will resolve the challenging open problem of whether counting maximal matchings in a bipartite graph is approximable), we show that, for every set $\dep$ of FDs with an LHS chain (up to equivalence) and a CQ $Q$ (even with self-joins), $\sharp \prob{Repairs}(\dep,Q)$ admits an FPRAS. On the other hand, we show that there is a very simple set $\dep$ of FDs such that, for every SJFCQ $Q$, $\sharp \prob{Repairs}(\dep,Q)$ does not admit an FPRAS (under a standard complexity assumption). 
}
}
}

\section{Preliminaries}\label{sec:preliminaries}
%

%
We assume the disjoint countably infinite sets $\ins{C}$ and $\ins{V}$ of {\em constants} and {\em variables}, respectively. For $n > 0$, let $[n]$ be the set $\{1,\ldots,n\}$.

\medskip

\noindent\paragraph{Relational Databases.}
A {\em (relational) schema} $\ins{S}$ is a finite set of relation names with associated arity; we write $R/n$ to denote that $R$ has arity $n > 0$.
A {\em fact} over $\ins{S}$ is an expression of the form $R(c_1,\ldots,c_n)$, where $R/n \in \ins{S}$ and $c_i \in \ins{C}$ for each $i \in [n]$. A {\em database} $D$ over $\ins{S}$ is a finite set of facts over $\ins{S}$. The {\em active domain} of $D$, denoted $\adom{D}$, is the set of constants occurring in $D$.

\medskip

\noindent
\paragraph{Key Constraints.}
A {\em key constraint} (or simply {\em key}) $\kappa$ over a schema $\ins{S}$ is an expression $\key{R} = A$, where $R/n \in \ins{S}$ and $A \subseteq [n]$. Such an expression is called an $R$-key.
Given an $n$-tuple of constants $\bar t = (c_1,\ldots,c_n)$ and a set $A = \{i_1,\ldots,i_m\} \subseteq [n]$, for some $m \in [n]$, we write $\bar t[A]$ for the tuple $(c_{i_1},\ldots,c_{i_m})$, i.e., the projection of $\bar t$ over the positions in $A$.
A database $D$ satisfies $\kappa$, denoted $D \models \kappa$, if, for every two facts $R(\bar t),R(\bar s) \in D$, $\bar t[A] = \bar s[A]$ implies $\bar t = \bar s$. We say that $D$ is {\em consistent} w.r.t.~a set $\dep$ of keys, written $D \models \dep$, if $D \models \kappa$ for each $\kappa \in \dep$; otherwise, it is {\em inconsistent} w.r.t.~$\dep$.
In this work, we focus on sets of {\em primary keys}, i.e., sets of keys that, for each predicate $R$ of the underlying schema, have at most one $R$-key.

\medskip

\noindent
\paragraph{Conjunctive Queries.}
A {\em (relational) atom} $\alpha$ over a schema $\ins{S}$ is an expression $R(t_1,\ldots,t_n)$, where $R/n \in \ins{S}$ and $t_i \in \ins{C} \cup \ins{V}$ for each $i \in [n]$.
A {\em conjunctive query} (CQ) $Q$ over $\ins{S}$ is an expression of the form $\textrm{Ans}(\bar x)\ \text{:-}\ R_1(\bar y_1), \ldots, R_n(\bar y_n)$, where $R_i(\bar y_i)$, for $ i \in [n]$, is an atom over $\ins{S}$, $\bar x$ are the {\em answer variables} of $Q$, and each variable in $\bar x$ is mentioned in $\bar y_i$ for some $i \in [n]$. We may write $Q(\bar x)$ to indicate that $\bar x$ are the answer variables of $Q$. When $\bar x$ is empty, $Q$ is called {\em Boolean}. Moreover, when $Q$ mentions every relation name of $\ins{S}$ at most once it is called {\em self-join-free} and we write $\sjf$ for the class of self-join-free CQs.
The semantics of CQs is given via homomorphisms. Let $\var{Q}$ and $\const{Q}$ be the set of variables and constants in $Q$, respectively. A {\em homomorphism} from a CQ $Q$ of the form $\textrm{Ans}(\bar x)\ \text{:-}\ R_1(\bar y_1), \ldots, R_n(\bar y_n)$ to a database $D$ is a function $h : \var{Q} \cup \const{Q} \ra \adom{D}$, which is the identity over $\ins{C}$, such that $R_i(h(\bar y_i)) \in D$ for each $i \in [n]$.
A tuple $\bar c \in \adom{D}^{|\bar x|}$ is an {\em answer to $Q$ over $D$} if there is a homomorphism $h$ from $Q$ to $D$ with $h(\bar x) = \bar c$. Let $Q(D)$ be the answers to $Q$ over $D$. For Boolean CQs, we write $D \models Q$, and say that $D$ {\em entails} $Q$, if $() \in Q(D)$.

A central class of CQs, which is crucial for our work, is that of {\em bounded generalized hypertreewidth}. The definition of generalized hypertreewidth relies on the notion of tree decomposition.
A \emph{tree decomposition} of a CQ $Q$ of the form $\textrm{Ans}(\bar x)\ \text{:-}\ R_1(\bar y_1), \ldots, R_n(\bar y_n)$ is a pair $(T, \chi)$, where $T = (V,E)$ is a tree and $\chi$ is a labeling function $V \ra 2^{\var{Q} \setminus \bar x}$, i.e., it assigns a subset of $\var{Q} \setminus \bar x$ to each node of $T$, such that (1) for each $i \in [n]$, there exists $v \in V$ such that $\chi(v)$ contains all the variables in $\bar y_i \setminus \bar x$, and (2) for each $t \in \var{Q} \setminus \bar x$, the set $\{v \in V \mid t \in \chi(v)\}$ induces a connected subtree of $T$; the second condition is generally known as the {\em connectedness condition}.
A {\em generalized hypertree decomposition} of $Q$ is a triple $(T,\chi,\lambda)$, where $(T,\chi)$ is a tree decomposition of $Q$ and, assuming $T = (V,E)$, $\lambda$ is a labeling function that assigns a subset of $\{R_1(\bar y_1), \ldots, R_n(\bar y_n)\}$ to each node of $T$ such that, for each $v \in V$, the terms in $\chi(v)$ are ``covered'' by the atoms in $\lambda(v)$, i.e., $\chi(v) \subseteq \bigcup_{R(t_1,\ldots,t_n) \in \lambda(v)} \{t_1,\ldots,t_n\}$.
The {\em width} of $(T,\chi,\lambda)$ is the number $\max_{v \in V} \{|\lambda(v)|\}$, i.e., the maximal size of a set of the form $\lambda(v)$ over all nodes $v$ of $T$. The {\em generalized hypertreewidth} of $Q$ is the minimum width over all its generalized hypertree decompositions.
For $k > 0$, we denote by $\ghw_k$ the class of CQs of generalized hypertreewidth $k$. Recall that $\ghw_1$ coincides with the class of {\em acyclic} CQs.
\section{Uniform Operational CQA}\label{sec:operational-cqa}

We recall the basic notions underlying the operational approach to consistent query answering, focusing on keys, as defined in~\cite{CaLP18}. We then use those notions to recall the recent uniform operational approach to consistent query answering considered in~\cite{CLPS22}.

\medskip

\noindent\paragraph{Operations.}
The notion of operation is the building block of the operational approach, which essentially removes some facts from the database in order to resolve a conflict. As usual, we write $\PS(S)$ for the powerset of a set $S$.

\begin{definition}[\textbf{Operation}]\label{def:operation}
	For a database $D$, a {\em $D$-operation} is a function $\op : \PS(D) \ra
	\PS(D)$ such that, for some non-empty set $F \subseteq D$, for every $D' \in \PS(D)$, $\op(D') = D' \setminus F$. We write $-F$ to refer to this operation. We say that $-F$ is {\em $(D',\dep)$-justified}, where $D' \subseteq D$ and $\dep$ is a set of keys, if $F \subseteq \{f,g\} \subseteq D'$ and $\{f,g\} \not\models \dep$.\hfill\markfull
\end{definition}

The operations $-F$ depend on the database $D$ as they are defined over $D$. Since $D$ will be clear from the context, we may refer to them simply as operations, omitting $D$. 
%
The main idea of the operational approach to CQA is to iteratively apply justified operations, starting from an inconsistent database $D$, until we reach a database $D' \subseteq D$ that is consistent w.r.t. the given set $\dep$ of keys.
Note that justified operations do not try to minimize the number of atoms that need to be removed. As argued in~\cite{CaLP18}, a set of facts that collectively contributes to a key violation should be considered as a justified operation during the iterative repairing process since we do not know a priori which atoms should be deleted.

\OMIT{
However, as discussed in~\cite{CaLP18}, we need to ensure that at each step of this repairing process, at least one violation is resolved. To this end, we need to keep track of all the reasons that cause the inconsistency of $D$ w.r.t.~$\dep$. This brings us to the notion of key violation.

\begin{definition}[\textbf{Key Violation}]\label{def:violation}
	For a database $D$ over a schema $\ins{S}$, a {\em $D$-violation} of a key $\kappa = \key{R} = A$ over $\ins{S}$ is a set $\{f,g\} \subseteq D$ of facts such that $\{f,g\} \not\models \kappa$.	
	We denote the set of $D$-violations of $\kappa$ by $\viol{D}{\kappa}$. Furthermore, for a set $\dep$ of keys, we denote by $\viol{D}{\dep}$ the set $\{(\kappa,v) \mid \kappa \in \dep \textrm{~~and~~} v \in \viol{D}{\kappa}\}$. \hfill\markfull
\end{definition}

Thus, a pair $(\kappa,\{f,g\}) \in \viol{D}{\dep}$ means that one of the reasons why the database $D$ is inconsistent w.r.t.~$\dep$ is because it violates $\kappa$ due to the facts $f$ and $g$.
As discussed in~\cite{CaLP18}, apart from forcing an operation to be fixing, i.e., to fix at least one violation, we also need to force an operation to remove a set of facts only if it contributes as a whole to a violation. Such operations are called justified.

\begin{definition}[\textbf{Justified Operation}]\label{def:justified}
	Let $D$ be a database over a schema $\ins{S}$ and $\dep$ a set of keys over $\ins{S}$. For a database $D' \subseteq D$, a
	$D$-operation $-F$ is called {\em $(D',\dep)$-justified}
	if there exists $(\kappa,\{f,g\}) \in \viol{D'}{\dep}$ such that $F \subseteq \{f,g\}$. \hfill\markfull
\end{definition}

Note that justified operations do not try to minimize the number of atoms that need to be removed. As argued in~\cite{CaLP18}, a set of facts that collectively contributes to a violation should be considered as a justified operation during the iterative repairing process since we do not know a priori which atoms should be deleted, and therefore, we need to explore all the possible scenarios.
}

\medskip

\noindent\paragraph{Repairing Sequences and Operational Repairs.}
The main idea of the operational approach discussed above is formalized via the notion of repairing sequence.
Consider a database $D$ and a set $\dep$ of keys. Given a sequence $s = (\mi{op}_i)_{1 \leq i \leq n}$ of $D$-operations, we define $D_{0}^{s} = D$ and $D_{i}^{s} = \mi{op}_i(D_{i-1}^{s})$ for $i \in [n]$, that is, $D_{i}^{s}$ is obtained by applying to $D$ the first $i$ operations of $s$. 

\begin{definition}[\textbf{Repairing Sequence}]\label{def:rep-sequence}
	Consider a database $D$ and a set $\dep$ of keys. A sequence of $D$-operations $s = (\op_i)_{1 \leq i \leq n}$ is called {\em $(D,\dep)$-repairing} if, for every $i \in [n]$, $\op_i$ is $(D_{i-1}^{s},\dep)$-justified. Let $\rs{D}{\dep}$ be the set of all $(D,\dep)$-repairing sequences. \hfill\markfull
\end{definition}

It is easy to verify that the length of a $(D,\dep)$-repairing sequence is linear in the size of $D$ and that the set $\rs{D}{\dep}$ is finite. For a $(D,\dep)$-repairing sequence $s = (\op_i)_{1 \leq i \leq n}$, we define its {\em result} as the database $s(D) = D_{n}^{s}$ and call it {\em complete} if $s(D) \models \dep$.
Let $\crs{D}{\dep}$ be the set of all complete $(D,\dep)$-repairing sequences.
We can now introduce the central notion of operational repair.

\begin{definition}[\textbf{Operational Repair}]\label{def:operational-repair}
	Let $D$ be a database over a schema $\ins{S}$ and $\dep$ a set of keys over $\ins{S}$. An {\em operational repair} of $D$ w.r.t.~$\dep$ is a database $D'$ such that $D' = s(D)$ for some $s \in \crs{D}{\dep}$. Let $\opr{D}{\dep}$ be the set of all operational repairs of $D$ w.r.t.~$\dep$. \hfill\markfull
\end{definition}


\OMIT{
\medskip

\noindent\paragraph{Operational Repairs.} 
An {\em operational repair} of a database $D$ w.r.t.~a set $\dep$ of keys is a database $D'$ such that $D' = s(D)$ for some $s \in \crs{D}{\dep}$. Let $\opr{D}{\dep}$ be the set of all operational repairs of $D$ w.r.t.~$\dep$.

Although every database of $\copr{D}{\dep}$ corresponds to a conceptually meaningful way of repairing the database $D$, we would like to have a mechanism that allows us to choose which candidate repairs should be considered for query answering purposes, and assign likelihoods to those repairs.

The fact that we can operationally repair an inconsistent database via repairing sequences gives us the flexibility of choosing which operations (that is, fact deletions) are more likely than others, which in turn allows us to talk about the probability of a repair, and thus, the probability with which an answer is entailed.
The idea of assigning likelihoods to operations extending sequences can be described as follows: for all possible extensions $s\cdot \op_1,\ldots, s\cdot \op_k$ of a repairing sequence $s$, we assign
probabilities $p_1,\ldots, p_k$ to them so they add up to $1$. This is done by exploiting a {\em tree-shaped Markov chain} that arranges its states (i.e., repairing sequences) in a rooted tree, where (i) the empty sequence of operations, which is by definition repairing, is the root, (ii) the children of each state are its possible extensions, and (iii) the set of states corresponding to complete sequences coincide with the set of leaves.
We write $\varepsilon$ for the empty sequence of operations.
We further write $\ops{s}{D}{\dep}$ for the set of $(D,\dep)$-repairing sequences $\{s' \in \rs{D}{\dep} \mid s' = s \cdot \op \text{ for some } D\text{-operation } \op\}$.

\begin{definition}[\textbf{Repairing Markov Chain}]\label{def:repaiting-mc}
	For a database $D$ and a set $\dep$ of keys, a {\em $(D,\dep)$-repairing Markov chain} is an edge-labeled rooted tree $T = (V,E,\ins{P})$, where $V = \rs{D}{\dep}$, $E \subseteq V \times V$, and $\ins{P} : E \ra [0,1]$, such that:
	\begin{enumerate}
		\item the root is the empty sequence $\varepsilon$,
		\item for a non-leaf node $s \in V$, $\{s' \mid (s,s') \in E\} = \ops{s}{D}{\dep}$,
		\item for a non-leaf node $s \in V$, $\sum_{t \in \{s' \mid (s,s') \in E\}} \ins{P}(s,t) = 1$, and
		\item $\{s \in V \mid s \text{ is a leaf}\} = \crs{D}{\dep}$.
	\end{enumerate}
	A {\em repairing Markov chain generator} w.r.t.~$\dep$ is a function $M_\dep$ assigning to every database $D$ a $(D,\dep)$-repairing Markov chain. \hfill\markfull
\end{definition}

For a database $D$ and a set $\dep$ of keys, we assume that a $(D,\dep)$-repairing Markov chain $(V,E,\insP)$ is compactly represented as a function $f : \rs{D}{\dep} \times \rs{D}{\dep} \ra [0,1] \cup \{\bot\}$ such that, for every pair $(s,s') \in \rs{D}{\dep} \times \rs{D}{\dep}$, $f(s,s') = \insP(s,s')$ if $(s,s') \in E$; otherwise, $f(s,s') = \bot$.
\OMIT{
We give a simple example, taken from~\cite{CLPS22}, that illustrates the notion of repairing Markov chain:

\begin{figure}[t]
	\centering
	\includegraphics[width=.45\textwidth]{example-mc.pdf}
	\caption{Repairing Markov Chain}
	\label{fig:markov-chain}
\end{figure}

\begin{example}\label{exa:repairing-mc}
	Consider the database $D=\{f_1,f_2,f_3\}$ over the schema $\ins{S} = \{R/3\}$, where $f_1 = R(a_1,b_1,c_1)$, $f_2 = R(a_1,b_2,c_2)$ and $f_3 = R(a_2,b_1,c_2)$. Consider also the set $\dep = \{\phi_1,\phi_2\}$ of FDs over $\ins{S}$, where $\phi_1 = R: A \ra B$ and $\phi_2 = R: C \ra B$, assuming that $(A,B,C)$ is the tuple of attributes of $R$. 
	It is easy to see that $D \not\models \dep$. In particular, we have that $\viol{D}{\dep} = \{(\phi_1,\{f_1,f_2\}), (\phi_2,\{f_2,f_3\})\}$.
	It is easy to verify that for the edge-labeled rooted tree $T = (V,E,\ins{P})$ in Figure~\ref{fig:markov-chain}, $V = \rs{D}{\dep}$, for a non-leaf node $s$ the set of its children is $\ops{s}{D}{\dep}$, and the set of leaves coincides with $\crs{D}{\dep}$. Hence, providing that $p_1+p_2+p_3+p_4+p_5=1$, $p_6+p_7+p_8=1$ and $p_9+p_{10}+p_{11}=1$, $T$ is a $(D,\dep)$-repairing Markov chain. \hfill\markfull
\end{example}
}
The purpose of a repairing Markov chain generator is to provide a way for defining a family of repairing Markov chains independently of the database. One can design a repairing Markov chain generator $M_\dep$ once and for a database $D$, the desired $(D,\dep)$-repairing Markov chain is simply $M_\dep(D)$.

We now recall the notion of operational repair: they are candidate operational repairs obtained via repairing sequences that are {\em reachable} leaves of a repairing Markov chain, i.e., leaves with non-zero probability. The probability of a leaf is coming from the so-called leaf distribution of a repairing Markov chain. Formally, given a database $D$ and a set $\dep$ of keys, the {\em leaf distribution} of a $(D,\dep)$-repairing Markov chain $T = (V,E,\ins{P})$ is a function $\pi$ that assigns to each leaf $s$ of $T$ a number from $[0,1]$ as follows: assuming that $(s_0,s_1)$, $(s_1,s_2)$, $\ldots$, $(s_{n-1},s_n)$, where $n \geq 0$, $\varepsilon = s_0$ and $s = s_n$, is the unique path in $T$ from $\varepsilon$ to $s$, $\pi(s) = \ins{P}(s_0,s_1) \cdot \ins{P}(s_1,s_2) \cdot \cdots \cdot \ins{P}(s_{n-1},s_n)$.
The set of {\em reachable leaves} of $T$, denoted $\abs{T}$, is the set of leaves of $T$ that have non-zero probability according to the leaf distribution of $T$.

\begin{definition}[\textbf{Operational Repair}]\label{def:operational-repair}
	Given a database $D$, a set $\dep$ of keys, and a repairing Markov chain generator $M_\dep$ w.r.t.~$\dep$, an {\em (operational) repair} of $D$ w.r.t.~$M_\dep$ is a database $D' \in \copr{D}{\dep}$ such that $D' = s(D)$ for some $s \in \abs{M_\dep(D)}$. 	Let $\opr{D}{M_\dep}$ be the set of all operational repairs of $D$ w.r.t.~$M_\dep$. \hfill\markfull
\end{definition}

An operational repair may be obtainable via multiple repairing sequences that are reachable leaves of the underlying
repairing Markov chain. The probability of a repair $D'$ is calculated by summing up the probabilities of all reachable leaves $s$ so that $D' = s(D)$.

\begin{definition}[\textbf{Operational Semantics}]\label{def:operational-semantics}
	Given a database $D$, a set $\dep$ of keys, and a repairing Markov chain generator $M_\dep$ w.r.t.~$\dep$, the probability of an operational repair $D'$ of $D$ w.r.t.~$M_\dep$ is
	\[
	\probrep{D,M_\dep}{D'}\ =\ \sum\limits_{s \in \abs{M_\dep(D)} \text{ and } D'=s(D)} \pi(s),
	\]
	where $\pi$ is the leaf distribution of $M_\dep(D)$.
	The {\em operational semantics} of $D$ w.r.t.~$M_\dep$ is defined as the set of repair-probability pairs $\sem{D}_{M_\dep} =
	\left\{\left(D',\probrep{D,M_\dep}{D'}\right) \mid D' \in \opr{D}{M_\dep}\right\}$.
	\hfill\markfull
\end{definition}
}

\medskip

\noindent\paragraph{Uniform Operational CQA.} We can now 
recall the uniform operational approach to consistent query answering introduced in~\cite{CLPS22} and define the main problem of interest.
The goal is to provide a way to measure how certain we are that a candidate answer is indeed a consistent answer. A natural idea is to compute the percentage of operational repairs that entail a candidate answer, which measures how often a candidate answer is an answer to the query if we evaluate it over all operational repairs. Here, we consider all the operational repairs to be equally important; hence the term ``uniform''. This leads to the notion of repair relative frequency. For a database $D$, a set $\dep$ of keys, a CQ $Q(\bar x)$, and a tuple $\bar c \in \adom{D}^{|\bar x|}$, the {\em repair relative frequency} of $\bar c$ w.r.t.~$D$, $\dep$, and $Q$ is
\begin{eqnarray*}
	\mathsf{RF}(D,\dep,Q,\bar c) &=& \frac{|\{D' \in \opr{D}{\dep} \mid \bar c \in Q(D')\}|}{|\opr{D}{\dep}|}.
\end{eqnarray*}


\subsection{Problem of Interest}

The problem of interest in the context of uniform operational CQA, focusing on {\em primary keys}, is defined as follows: for 
a class $\mathsf{Q}$ of CQs (e.g., $\sjf$ or  $\ghw_k$ for $k > 0$),

\medskip

\begin{center}
	\fbox{\begin{tabular}{ll}
			{\small PROBLEM} : & $\ocqa^\ur[\mathsf{Q}]$
			\\
			{\small INPUT} : & A database $D$, a set $\dep$ of primary keys, a query $Q(\bar x)$ from $\mathsf{Q}$,\\
			& a tuple $\bar c \in \adom{D}^{|\bar x|}$.
			\\
			{\small OUTPUT} : &  $\mathsf{RF}(D,\dep,Q,\bar c)$.
	\end{tabular}}
\end{center}

\medskip

\noindent The superscript $\ur$ stands for ``uniform repairs''. We can show that the above problem is hard, even for self-join-free CQs of bounded generalized hypertreewidth. 
Let $\sjf \cap \ghw_k$ for $k > 0$ being the class of self-join-free CQs of generalized hypertreewidth $k$. Then:

\def\theocqaexact{
	For every $k > 0$, $\ocqa^\ur[\sjf \cap \ghw_k]$ is $\sharp ${\rm P}-hard.
}

\begin{theorem}\label{the:ocqa-exact}
\theocqaexact
\end{theorem}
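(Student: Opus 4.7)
My strategy is a polynomial-time Turing reduction from $\sharp\mathsf{PP2DNF}$, the classical $\sharp{\rm P}$-complete problem of counting satisfying assignments of a positive partitioned $2$-DNF formula $\varphi = \bigvee_{(i,j)\in F}(x_i \wedge y_j)$ over disjoint variable sets $\{x_1,\ldots,x_n\}$ and $\{y_1,\ldots,y_m\}$. Since $\ghw_1 \subseteq \ghw_k$ for every $k \geq 1$, it suffices to establish hardness for $k=1$, and I would target the fixed Boolean self-join-free acyclic CQ
\[
Q_0()\ \text{:-}\ X(i,1),\ Y(j,1),\ E(i,j)
\]
over the schema $\{X/2, Y/2, E/2\}$ with primary keys $\key{X} = \{1\}$ and $\key{Y} = \{1\}$ (and no key on $E$); a three-node path tree decomposition with $E(i,j)$ in the central bag confirms $Q_0 \in \ghw_1$.

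Given $\varphi$, I would query the oracle on a family of instances parametrised by a positive integer $\ell$: the database $D_\ell$ consists of a fact $E(i,j)$ for every $(i,j) \in F$ together with, for each $i \in [n]$, the key-block $\{X(i,1), X(i,0^{(i)}_1),\ldots,X(i,0^{(i)}_\ell)\}$ (and analogously $\{Y(j,1), Y(j,0^{(j)}_1),\ldots,Y(j,0^{(j)}_\ell)\}$ for each $j \in [m]$), with all ``zero'' constants fresh and pairwise distinct. Each of the $n+m$ blocks has size $\ell+1$ and thus contributes $\ell+2$ independent options to an operational repair (keep one of the $\ell+1$ facts, or empty the block), while the $E$-facts are conflict-free and always survive; hence $|\opr{D_\ell}{\dep}| = (\ell+2)^{n+m}$ is computable in polynomial time, so the oracle value $\mathsf{RF}(D_\ell,\dep,Q_0,())$ yields the numerator $N_\ell := \bigl|\{D' \in \opr{D_\ell}{\dep} \mid D' \models Q_0\}\bigr|$.

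Classifying each operational repair by the Boolean pattern $A \in \{0,1\}^{n+m}$ that records, for each block, whether the ``$1$''-fact is the one retained (the only fact per block that can contribute to $Q_0$), a repair entails $Q_0$ if and only if $A \models \varphi$, and exactly $(\ell+1)^{|A^{-1}(0)|}$ repairs realise each such $A$ (a $0$-variable admits any of the $\ell$ zero-facts or the empty option, while a $1$-variable is forced). Hence
\[
N_\ell\ =\ P(\ell+1),\qquad\text{where}\qquad P(z)\ :=\ \sum_{A \in \{0,1\}^{n+m},\ A \models \varphi} z^{|A^{-1}(0)|}
\]
is a polynomial in $z$ of degree at most $n+m$. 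Querying for $\ell = 1,2,\ldots,n+m+1$ provides $n+m+1$ evaluations of $P$ at distinct integer points $z = 2,\ldots,n+m+2$, so by Lagrange interpolation I can recover $P$ in polynomial time and output $P(1) = \sum_{A \models \varphi} 1 = \sharp\varphi$, finishing the reduction.

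The main obstacle, compared with classical-repair hardness arguments, is that every block offers an extra ``empty'' option producing spurious operational repairs that have no counterpart among classical repairs and that, for a fixed query, do not align cleanly with satisfying assignments of $\varphi$. The interpolation trick is what bypasses this: padding each block with $\ell$ additional zero-facts turns the contribution of these spurious repairs into a controllable polynomial in $\ell$, which can then be inverted by evaluating at sufficiently many distinct points. Verifying the exact combinatorial identity for $N_\ell$ and checking that $Q_0$ remains in $\sjf \cap \ghw_1 \subseteq \sjf \cap \ghw_k$ for every $k \geq 1$ are the remaining routine but essential checks.
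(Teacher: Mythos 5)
Your reduction is correct in its core combinatorics, but it takes a genuinely different route from the paper. The paper reduces from $\sharp H\text{-}\mathsf{Coloring}$ for a fixed six-vertex bipartite graph $H$ (hard by the $\sharp H$-coloring dichotomy): there every key block has exactly two facts, so each block admits exactly three repair outcomes (keep one, keep the other, keep neither), and these are matched to the three vertices on each side of $H$ so that the repairs \emph{not} entailing the query are in bijection with the partition-preserving homomorphisms $G\to H$. No interpolation is needed because the ``empty block'' outcome is absorbed into the target graph itself (the vertices $?_L,?_R$). Your reduction instead starts from $\sharp\mathsf{PP2DNF}$ and confronts the empty-block outcome head-on by padding each block with $\ell$ dummy facts and interpolating $P(z)=\sum_{A\models\varphi}z^{|A^{-1}(0)|}$ from its values at $z=2,\dots,n+m+2$; since $P(1)=\sharp\varphi$, this works, and the intermediate claims (a block of size $\ell+1$ yields $\ell+2$ reachable consistent outcomes, hence denominator $(\ell+2)^{n+m}$; a repair entails $Q_0$ iff its retained-``$1$'' pattern satisfies $\varphi$; pattern $A$ is realised by exactly $(\ell+1)^{|A^{-1}(0)|}$ repairs) all check out against the paper's definitions of justified operations and operational repairs. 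The paper's argument buys a single oracle call and a clean bijection; yours buys a reduction from a more standard counting problem at the cost of polynomially many oracle calls and exact interpolation.

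The one genuine defect is the opening claim that $\ghw_1\subseteq\ghw_k$, which you use to restrict to $k=1$. Under the paper's definition, $\ghw_k$ is the class of CQs whose generalized hypertreewidth (the \emph{minimum} width over all decompositions) equals $k$, so $\ghw_1$ and $\ghw_k$ are disjoint for $k>1$ and your acyclic query $Q_0$ lies only in $\sjf\cap\ghw_1$; as stated, your argument does not prove the theorem for $k>1$. This is precisely why the paper conjoins the gadget $\bigwedge_{i<j}C_{i,j}(w_i,w_j)$ encoding a $(k{+}1)$-clique over fresh, conflict-free, variable-disjoint relations: it forces the width up to exactly $k$ while being entailed by every repair and preserving self-join-freeness. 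Adding the same padding to $Q_0$ (and the corresponding conflict-free facts to $D_\ell$) leaves every count in your analysis unchanged and closes the gap.
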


With the above intractability result in place, the question is whether the problem of interest is approximable, i.e.,~whether the target ratio can be approximated via a {\em fully polynomial-time randomized approximation scheme} (FPRAS).
An FPRAS for $\ocqa^\ur[\mathsf{Q}]$ is a randomized algorithm $\mathsf{A}$ that takes as input a database $D$, a set $\dep$ of primary keys, a query $Q(\bar x)$ from $\mathsf{Q}$, a tuple $\bar c \in \adom{D}^{|\bar x|}$, $\epsilon > 0$ and $0 < \delta < 1$, runs in polynomial time in $||D||$, $||\dep||$, $||Q||$, $||\bar c||$,\footnote{As usual, $||o||$ denotes the size of the encoding of a syntactic object $o$.} $1/\epsilon$ and $\log(1/\delta)$, and produces a random variable $\mathsf{A}(D,\dep,Q,\bar c,\epsilon,\delta)$ such that
$
\text{\rm Pr}(|\mathsf{A}(D,\dep,Q,\bar c,\epsilon,\delta) - \mathsf{RF}(D,\dep,Q,\bar c)|\ \leq\ \epsilon \cdot \mathsf{RF}(D,\dep,Q,\bar c))\ \geq\
1-\delta.
$
It turns out that the answer to the above question is negative, even if we focus on self-join-free CQs or CQs of bounded generalized hypertreewidth, under the widely accepted complexity assumption that ${\rm RP} \neq {\rm NP}$. Recall that RP is the complexity class of problems that are efficiently solvable via a randomized algorithm with a bounded one-sided error~\cite{ArBa09}.

\def\theocqaapx{
	Unless  ${\rm RP} = {\rm NP}$,
\begin{enumerate}
	\item There is no FPRAS for $\ocqa^\ur[\sjf]$.
	
	\item For every $k > 0$, there is no FPRAS for $\ocqa^\ur[\ghw_k]$.
\end{enumerate}
}

\begin{theorem}\label{the:ocqa-apx}
\theocqaapx
\end{theorem}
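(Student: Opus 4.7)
The plan is to show that, under $\mathrm{RP}\neq\mathrm{NP}$, neither class admits an FPRAS, by reducing in each case an $\mathrm{NP}$-hard decision problem to the positivity question ``is $\mathsf{RF}(D,\dep,Q,\bar c)>0$?''. This suffices because an FPRAS invoked with $\epsilon=1/2$ and $\delta=1/3$ returns exactly $0$ with probability $\ge 2/3$ when $\mathsf{RF}=0$ (forced by the relative-error bound $|A|\le\epsilon\cdot 0$) and a strictly positive value with probability $\ge 2/3$ when $\mathsf{RF}>0$; hence positivity is in $\mathrm{BPP}$, and the self-reducibility of SAT promotes this to $\mathrm{NP}\subseteq\mathrm{RP}$, contradicting the hypothesis.

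For Part~(1), I reduce from $k$-\textsc{Clique}. Given a simple graph $G=(V,E)$ and integer $k$, I introduce $\binom{k}{2}$ fresh binary relation names $R_{ij}$ ($1\le i<j\le k$) and populate each $R_{ij}$ with the ordered pairs of $E$ (in both directions). The Boolean query
\[
Q_{\mathrm{clq}}()\ \text{:-}\ \bigwedge_{1\le i<j\le k} R_{ij}(x_i,x_j)
\]
is self-join-free and has a homomorphism into this database iff $G$ contains a $k$-clique (distinctness of the $x_i$ is enforced by $G$ being simple). To fit the input shape of $\ocqa^{\ur}[\sjf]$, I adjoin a dummy unary relation with a single fact and declare a primary key on its only attribute so that $\dep\neq\emptyset$; the resulting database is consistent w.r.t.~$\dep$, so no operation is justified and the unique operational repair is the database itself, whence $\mathsf{RF}>0$ iff $G$ has a $k$-clique. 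The generalized hypertreewidth of $Q_{\mathrm{clq}}$ grows with $k$, which is precisely the restriction the statement drops.

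For Part~(2), query evaluation is tractable for bounded hypertreewidth, so the hardness must come from the repair space, which self-joins let us exploit. I reduce from $3$-\textsc{SAT}: given $\phi=C_1\wedge\cdots\wedge C_m$ over $x_1,\ldots,x_n$, put into the database $V(i,T),V(i,F)$ for every $i\in[n]$, declare $\dep=\{\key{V}=\{1\}\}$, and add $L(j,i,s)$ for every literal $x_i^s$ appearing in $C_j$ (leaving $L$ unconstrained). Take
\[
Q()\ \text{:-}\ \bigwedge_{j=1}^{m}\bigl(L(j,v_j,u_j)\wedge V(v_j,u_j)\bigr).
\]
For distinct $j$ the variable pairs $\{v_j,u_j\}$ are disjoint, so the atom hypergraph is a disjoint union of doubled edges; each bag $\{v_j,u_j\}$ is covered by the single atom $L(j,v_j,u_j)$, giving generalized hypertreewidth $1$, whence $Q\in\ghw_k$ for every $k\ge 1$. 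Each operational repair $r$ induces a partial assignment $f_r$ (the two $V$-facts of the same variable may both be deleted, as $\{V(i,T),V(i,F)\}$ is a justified deletion), and $r\models Q$ iff every $C_j$ contains a literal $x_i^s$ with $f_r(i)=s$. Hence $\mathsf{RF}(D,\dep,Q,())>0$ iff some partial, and therefore some total, assignment satisfies $\phi$, iff $\phi$ is satisfiable.

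The main care needed is the accounting for the 3-way per-variable repair choice in Part~(2) --- observing that partial and total satisfying assignments coincide for a purely existential positivity question --- and verifying in Part~(1) that adjoining the dummy consistent relation really leaves the set of operational repairs a singleton. I do not expect a substantive obstacle beyond these details; the hypertreewidth bound in Part~(2) is immediate once the atom hypergraph is drawn out, and the FPRAS-to-$\mathrm{BPP}$ step is entirely standard.
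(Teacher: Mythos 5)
Your overall strategy is sound, and for item (2) it is genuinely different from the paper's. For item (1) you and the paper do essentially the same thing: make the database consistent with respect to $\dep$, so that no operation is ever justified, the empty sequence is the unique complete repairing sequence, the unique operational repair is the database itself, and positivity of $\mathsf{RF}$ collapses to Boolean evaluation of a self-join-free CQ, which is NP-hard in combined complexity. You use the Clique encoding where the paper uses $3$-Colorability; the difference is immaterial. (Your dummy keyed unary relation is unnecessary --- the paper simply takes $\dep=\emptyset$, which is a legitimate set of primary keys.) For item (2) the paper does \emph{not} argue via the decision version at all: it gives an approximation-preserving reduction from $\sharp\mathsf{MON2SAT}$, constructing a bijection between satisfying assignments and repairs entailing the query (the subquery $\psi_2$ in their construction exists precisely to force every variable to receive a value, which is what makes the correspondence a bijection rather than merely a surjection), and then inherits inapproximability of the count. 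You instead show that the positivity question is already NP-hard for width-one queries with self-joins, via $3$-SAT, correctly observing that for a purely existential question partial satisfying assignments (arising from blocks emptied by pair-deletions) suffice. Your route is simpler and proves something strictly stronger --- even deciding $\mathsf{RF}>0$ is intractable --- at the cost of invoking the FPRAS~$\Rightarrow$~BPP~$\Rightarrow$~${\rm NP}={\rm RP}$ chain in both items rather than only in item (1).

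One detail does need repair. The paper treats $\ghw_k$ as the class of CQs whose generalized hypertreewidth is exactly $k$; this is why its hardness constructions for both Theorem~\ref{the:ocqa-exact} and item (2) of this theorem explicitly conjoin a clique-shaped subquery $\psi_3$ over $k+1$ fresh variables and fresh relations, whose sole purpose is to push the width up to $k$ and which is trivially satisfied in every repair. Your query has width $1$, so the claim that it lies in $\ghw_k$ for every $k\ge 1$ does not match the paper's convention, and as written your reduction establishes item (2) only for $k=1$. The fix is exactly the paper's padding gadget: conjoin a variable-disjoint subquery encoding a $(k{+}1)$-clique over a fresh relation whose facts are never in conflict; this changes neither the set of repairs nor positivity, and forces the query into $\ghw_k$.
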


After a quick inspection of the proof of the above result, one can verify that the CQ used in the proof of item (1) is of unbounded generalized hypertreewidth, whereas the CQ used in the proof of item (2) is with self-joins. Hence, the key question that comes up is whether we can provide an FPRAS if we focus on self-join-free CQs of bounded generalized hypertreewidth.
The main result of this work provides an affirmative answer to this question:

\def\themainfpras{
	For every $k > 0$, $\ocqa^\ur[\sjf \cap \ghw_k]$ admits an FPRAS.
}

\begin{theorem}\label{the:main-fpras}
\themainfpras
\end{theorem}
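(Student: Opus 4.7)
The plan is to reduce the approximation of the ratio $\mathsf{RF}(D,\dep,Q,\bar c)$ to approximating its numerator $N_1 = |\{D' \in \opr{D}{\dep} \mid \bar c \in Q(D')\}|$, since the denominator $N_2 = |\opr{D}{\dep}|$ is known to be computable exactly in polynomial time by the result of~\cite{CLPS22} invoked in the introduction. With an exact value for $N_2$ in hand, any FPRAS for $N_1$ lifts to an FPRAS for the ratio: the relative error of $N_1/N_2$ equals the relative error of $N_1$, so the $(\epsilon,\delta)$-guarantee transfers directly. Thus the entire task reduces to producing an FPRAS for $N_1$ whose running time is polynomial also in $||Q||$, i.e., in combined complexity.

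To approximate $N_1$, I would introduce the counting class $\spantl$ of functions counting the distinct accepted outputs of alternating Turing machines equipped with a tree-shaped output, under suitable logarithmic-space and polynomial-time bounds on the computation. The abstract approximability theorem for this class would be proven by simulating such a machine with a non-deterministic finite tree automaton whose accepted trees of the right size are in bijection with the accepted outputs, and then invoking the FPRAS for $\sharp\mathsf{NFTA}$ from~\cite{ACJR21}. The blow-up in the reduction must be controlled carefully so that polynomial size bounds are preserved; the key observation is that a $\spantl$ machine's output trees have depth bounded by the alternation/time budget and labels of logarithmic length, hence can be encoded as NFTA-accepted trees of polynomial size.

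The core technical step is placing the numerator problem into $\spantl$ for $Q \in \sjf \cap \ghw_k$. Here I would follow the idea from~\cite{BrMe23}: fix a width-$k$ generalized hypertree decomposition $(T,\chi,\lambda)$ of $Q$, which is computable in polynomial time for fixed $k$, and traverse it bottom-up with the alternating machine, universally branching over the children of each bag and existentially branching over the at most $k$ tuples of facts used to witness that bag and the chosen answer tuple $\bar c$. The alternation encodes the join structure faithfully because $Q$ is self-join-free, and the output written at each node records the chosen facts so that distinct accepting computations correspond exactly to distinct operational repairs containing a witness for $\bar c \in Q(D')$. The main obstacle is that the machine must enumerate \emph{operational repairs}, not just arbitrary maximal consistent subsets; so the output tree must additionally encode, for every key-block touched by the witness atoms, a canonical representation of the operational repair restricted to that block. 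This requires a non-trivial adaptation of the Bringmann--Meier encoding: I would exploit the block-wise independence of primary-key repairs and the polynomial-time computability (from~\cite{CLPS22}) of the repair count per block to guarantee that a single output tree corresponds to exactly one operational repair, so that counting outputs yields $N_1$ rather than an over-count.

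The main difficulty I anticipate is exactly this faithful encoding of operational repairs (as opposed to maximal subset repairs) inside the $\spantl$ output tree, together with respecting the logspace/polytime resource bounds simultaneously while $Q$ is part of the input. Once the encoding is in place, combining the FPRAS for $\spantl$ with the exact computation of $N_2$ yields the desired FPRAS for $\ocqa^\ur[\sjf \cap \ghw_k]$.
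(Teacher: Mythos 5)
Your proposal follows essentially the same route as the paper: exact polynomial-time computation of the denominator $|\opr{D}{\dep}|$ via~\cite{CLPS22}, an FPRAS for the numerator obtained by defining $\spantl$, reducing $\spantl$ to $\sharp\mathsf{NFTA}$, and placing the numerator problem in $\spantl$ by an alternating traversal of a generalized hypertree decomposition that records per-block choices (a fact or $\bot$) in the output tree. Two details need correcting, though neither changes the plan. First, computing a generalized hypertree decomposition of width exactly $k$ is \emph{not} polynomial-time feasible for fixed $k$; the paper instead computes in polynomial time a decomposition of width $\ell$ with $k \leq \ell \leq 3k+1$ and runs the FPRAS for $\sharp\mathsf{Repairs}[\ell]$, which suffices. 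Second, your invocation of ``the polynomial-time computability of the repair count per block'' is not what makes the encoding injective; what is actually used is the structural characterization that, under primary keys, an operational repair is exactly a subset keeping every singleton block and at most one fact (possibly none) from every non-singleton block, so a canonical per-block label in $B \cup \{\bot\}$, emitted at the $\prec_T$-minimal covering vertex of the (unique, by self-join-freeness) atom over that relation, puts valid outputs in bijection with the repairs entailing $\bar c$.
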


\subsection{Plan of Attack}

For a database $D$ and a set $\dep$ of primary keys, $|\opr{D}{\dep}|$, that is, the denominator of the ratio in question, can be computed in polynomial time~\cite{CLPS22}.
Therefore, to establish Theorem~\ref{the:main-fpras}, it suffices to show that the numerator of the ratio can be efficiently approximated. In fact, it suffices to show that the following auxiliary counting problem admits an FPRAS; the notion of FPRAS for this problem is defined as expected. 
For $k>0$,

\medskip

\begin{center}
	\fbox{\begin{tabular}{ll}
			{\small PROBLEM} : & $\sharp\mathsf{Repairs}[k]$
			\\
			{\small INPUT} : & A database $D$, a set $\dep$ of primary keys, a query $Q(\bar x)$ from $\sjf$,\\
			& a generalized hypertree decomposition of $Q$ of width $k$, a tuple $\bar c \in \adom{D}^{|\bar x|}$.
			\\
			{\small OUTPUT} : &  $|\{D' \in \opr{D}{\dep} \mid \bar c \in Q(D')\}|$.
	\end{tabular}}
\end{center}

\medskip

Note that the above problem takes as input, together with the CQ, a generalized hypertree decomposition of it, whereas the main problem of interest $\ocqa^\ur[\sjf \cap \ghw_k]$ takes as input only the CQ. Despite this mismatch, the existence of an FPRAS for $\sharp\mathsf{Repairs}[k]$, for every $k > 0$, implies Theorem~\ref{the:main-fpras}. This is due to the well-known fact that, given a CQ $Q$ from $\ghw_k$, although it is hard to compute a generalized hypertree decomposition of $Q$ of width $k$, we can compute in polynomial time a generalized hypertree decomposition of $Q$ of width $\ell$, where $k \leq \ell \leq 3k + 1$~\cite{AdGG07,GoLS02}.
%
Hence, our task is to show that $\sharp\mathsf{Repairs}[k]$, for $k > 0$, admits an FPRAS. 
%
\OMIT{
To this end, we are going to introduce the novel counting complexity class $\spantl$, 
show that our counting problem in $\spantl$ admits an FPRAS, and place $\sharp\mathsf{Repairs}[k]$, for every $k>0$, in $\spantl$.
}
%

\OMIT{
 More precisely, for a database $D$, a set $\dep$ of keys, a query $Q(\bar x)$, and a tuple $\bar c \in \adom{D}^{|\bar x|}$, the {\em repair relative frequency} of $\bar c$ being an answer to $Q$ over some operational repair of $D$ is defined as
\begin{eqnarray*}
	\probrep{}{D,M_\dep,Q,\bar c} &=& \sum\limits_{(D',p) \in \sem{D}_{M_\dep} \textrm{~and~} \bar c \in
		Q(D')} p.
\end{eqnarray*} 
We can now talk about operational consistent answers. In particular, the set of {\em operational consistent answers} to $Q$ over $D$ according to $M_\dep$ is defined as the set $\big\{\big(\bar c, \probrep{}{D,M_\dep,Q,\bar c}\big) \mid \bar c \in \adom{D}^{|\bar x|}\big\}$.

The problem of interest in this context is defined as follows:

\medskip

\begin{center}
	\fbox{\begin{tabular}{ll}
			{\small PROBLEM} : & $\ocqa$
			\\
			{\small INPUT} : & A database $D$, a set $\dep$ of primary keys,\\
			& a repairing Markov chain generator $M_\dep$ w.r.t.~$\dep$,\\
			& a query $Q(\bar x)$, and a tuple $\bar c \in \adom{D}^{|\bar x|}$.
			\\
			{\small OUTPUT} : &  $\probrep{M_\dep,Q}{D,\bar c}$.
	\end{tabular}}
\end{center}
}
\section{A Novel Complexity Class}\label{sec:spantl}

\def\pronotinspanl{
	Unless $\nlogspace = \logcfl$, for each $k > 0$, $\sharp\mathsf{Repairs}[k] \not\in \spanl$.
}

Towards our main task, we would like to place $\sharp\mathsf{Repairs}[k]$, for each $k > 0$, in a counting complexity class $\mathsf{C}$ such that every problem in $\mathsf{C}$ admits an FPRAS. Such a complexity class is essentially a collection of functions of the form $f : \Lambda^* \ra \mathbb{N}$, for some alphabet $\Lambda$; note that whenever we use $\mathbb{N}$ as the codomain of such a function, we assume that $0 \in \mathbb{N}$. An FPRAS for a function $f : \Lambda^* \ra \mathbb{N}$ is a randomized algorithm $\mathsf{A}$ that takes as input $w \in \Lambda^*$, $\epsilon > 0$ and $0 < \delta < 1$, runs in polynomial time in $|w|$, $1/\epsilon$ and $\log(1/\delta)$, and produces a random variable $\mathsf{A}(w,\epsilon,\delta)$ such that $\text{\rm Pr}\left(|\mathsf{A}(w,\epsilon,\delta) - f(w)|\ \leq\ \epsilon \cdot f(w)\right)\ \geq\ 1-\delta.$
\OMIT{
{\color{red} In order to show that $\sharp\mathsf{Repairs}[k]$, for $k > 0$, admits an FPRAS, we would like to show that $\sharp\mathsf{Repairs}[k]$ belongs to a complexity class $\mathsf{C}$ such that every function in $\mathsf{C}$ admits an FPRAS. The notion of FPRAS for an arbitrary function $f : \Lambda^* \ra \mathbb{N}$, for some alphabet $\Lambda$, is defined in the obvious way, i.e., an FPRAS for $f$ is a randomized algorithm $\mathsf{A}$ that takes as input $w \in \Lambda^*$, $\epsilon > 0$ and $0 < \delta < 1$, runs in polynomial time in $|w|$, $1/\epsilon$ and $\log(1/\delta)$, and produces a random variable $\mathsf{A}(w,\epsilon,\delta)$ such that
$$\text{\rm Pr}\left(|\mathsf{A}(w,\epsilon,\delta) - f(w)|\ \leq\ \epsilon \cdot f(w)\right)\ \geq\ 1-\delta.$$
}}
The obvious candidate for the class $\mathsf{C}$ is $\spanl$, which, to the best our knowledge, is the largest complexity class of functions admitting an FPRAS~\cite{ArCJR19}. Let us recall the basics about $\spanl$.

\medskip
\noindent \paragraph{The Complexity Class $\spanl$.} 
A function $f : \Lambda^* \ra \mathbb{N}$, for some alphabet $\Lambda$, is in $\spanl$ if there is a non-deterministic logspace Turing machine $M$ with output, with $\Lambda$ being the input alphabet of $M$, such that, for every $w \in \Lambda^*$, $f(w)$ coincides with the number of distinct accepted outputs of $M$ when executed with $w$ as input. It has been recently shown that:

\begin{theorem}[\cite{ArCJR19}]\label{thm:spanl-fpras}
	Every function in $\spanl$ admits an FPRAS.
\end{theorem}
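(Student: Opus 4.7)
The plan is to reduce every $f \in \spanl$, in polynomial time, to the problem $\sharp\mathsf{NFA}$ of counting the distinct strings of a given length accepted by a non-deterministic finite automaton, and then invoke the FPRAS for $\sharp\mathsf{NFA}$ due to Arenas, Croquevielle, Jayaram and Riveros. The semantic match is crucial here: both $\spanl$ and $\sharp\mathsf{NFA}$ count \emph{distinct accepted objects} (outputs and strings, respectively), not accepting computations or paths, so a faithful encoding of the output language of an NL machine with output as an NFA will complete the reduction.

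Fix $f \in \spanl$ witnessed by a non-deterministic logspace Turing machine $M$ with output, and let $w$ be an input of size $n$. Since $M$ uses $O(\log n)$ work space, it has only polynomially many distinct configurations on $w$, runs in time $p(n)$ for some fixed polynomial $p$, and hence produces outputs of length at most $p(n)$. I would build an NFA $A_w$ whose states are the configurations of $M$ on $w$, with a transition $(c,a,c')$ whenever $M$ can move from $c$ to $c'$ while appending symbol $a$ to its output tape; moves that do not write to the output tape become $\varepsilon$-transitions and are eliminated via the standard $\varepsilon$-closure construction. The initial state is the initial configuration and the final states are the accepting configurations of $M$. To match the fixed-length format required by $\sharp\mathsf{NFA}$, I would pad with a fresh symbol $\#$ by adding a $\#$-self-loop at each accepting state, so that every original output $v$ of length $\ell \leq p(n)$ corresponds bijectively to the unique string $v \cdot \#^{p(n)-\ell}$ of length exactly $p(n)$ accepted by the modified $A_w$. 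The whole construction is polynomial in $n$.

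By this bijection, $f(w)$ equals the number of distinct strings of length $p(n)$ accepted by $A_w$, so running the $\sharp\mathsf{NFA}$ FPRAS on $(A_w, p(n), \epsilon, \delta)$ yields an FPRAS for $f$ with the required guarantees, since $|A_w|$ and $p(n)$ are polynomial in $n$. The main obstacle is not the reduction itself, which is elementary, but the underlying $\sharp\mathsf{NFA}$ approximability result being imported: counting distinct strings accepted by an NFA is $\sharp$P-hard and resisted polynomial-time approximation for a long time, precisely because many accepting paths may encode the same word, so naive self-reducibility or uniform path sampling over-counts. The deep content is a careful sampling scheme that compensates for this multiplicity; once that is available as a black box, the plan above finishes the proof.
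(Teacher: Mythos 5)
This theorem is imported from~\cite{ArCJR19} and the paper gives no proof of its own, but your reduction---configurations of the logspace transducer as NFA states, output-writing moves as labeled transitions, non-writing moves as $\varepsilon$-transitions, padding with a fresh symbol to reach a fixed length, then invoking the $\sharp\mathsf{NFA}$ FPRAS---is exactly the argument of that cited work, and it is also the one-dimensional analogue of the paper's own proof of Theorem~\ref{the:spantl-fpras}, which reduces $\spantl$ to $\sharp\mathsf{NFTA}$ via the computation DAG. The proposal is correct; you rightly identify that all the difficulty lives in the black-box approximability of $\sharp\mathsf{NFA}$ itself.
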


A well-known problem that belongs to $\spanl$ is $\sharp \mathsf{PosDNF}$, which takes as input a Boolean formula $\varphi$ in DNF, with only positive literals, and asks for the number $\sharp \varphi$ of truth assignments that satisfy $\varphi$. We can easily place $\sharp \mathsf{PosDNF}$ in $\spanl$ as follows.
Assume that $\varphi = C_1 \vee \cdots \vee C_m$ with Boolean variables $x_1,\ldots,x_n$. We first guess a disjunct $C_j$ for some $j \in [m]$. Then, for each Boolean variable $x_i$, in some fixed order, if $x_i \in C_j$, we output the symbol $1$; otherwise, we guess a value $\beta \in \{0,1\}$ and output $\beta$. The procedure accepts once all the variables have been considered. 
Roughly, to satisfy $\varphi$, it suffices to assign $1$ to all variables occurring in a disjunct of $\varphi$. Hence, the above procedure first guesses a disjunct $C_j$ to satisfy, and then non-deterministically outputs (the encoding of a) truth assignment that is guaranteed to satisfy $C_j$. 
One can easily verify that the above procedure requires only logarithmic space, and the number of its \emph{distinct} accepted outputs coincides with $\sharp \varphi$. Hence, $\sharp \mathsf{PosDNF} \in \spanl$, and thus, by Theorem~\ref{thm:spanl-fpras}, admits an FPRAS.

The question that comes up is whether $\sharp\mathsf{Repairs}[k]$, for $k > 0$, belongs to $\spanl$. We can show that this cannot be the case, under the widely accepted complexity-theoretic assumption that $\nlogspace \neq \logcfl$; recall that $\logcfl$ is the class of languages (i.e., decision problems) that are reducible in logspace to a context-free language.


\begin{proposition}\label{pro:notinspanl}
	\pronotinspanl
\end{proposition}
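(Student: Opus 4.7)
The plan is to show that if $\sharp\mathsf{Repairs}[k] \in \spanl$ for some $k > 0$, then $\logcfl \subseteq \nlogspace$. Combined with the classical inclusion $\nlogspace \subseteq \logcfl$, this yields $\nlogspace = \logcfl$, and the contrapositive is the claim.

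The first step is to record a standard property of $\spanl$: for every $f \in \spanl$, the associated decision problem $L_f = \{w \mid f(w) > 0\}$ lies in $\nlogspace$. Indeed, by definition of $\spanl$, $f$ is witnessed by a non-deterministic logspace Turing machine $M$ with output, and $f(w) > 0$ holds precisely when $M$ has at least one accepting computation on $w$; this non-emptiness check is in $\nlogspace$ by simulating $M$ on $w$ while discarding the output tape.

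The second step is to argue that the decision version $\mathsf{Repairs}[k]^{>0}$ of $\sharp\mathsf{Repairs}[k]$, which asks, given $(D,\dep,Q,T,\bar c)$, whether there exists $D' \in \opr{D}{\dep}$ with $\bar c \in Q(D')$, is $\logcfl$-hard. I would reduce from combined-complexity evaluation of self-join-free acyclic CQs, which is $\logcfl$-hard by the result of Gottlob, Leone, and Scarcello referenced in the excerpt as~\cite{GoLS02}. Given an instance $(D,Q,\bar c)$ of that problem, I map it to $(D, \emptyset, Q, T, \bar c)$, where $T$ is a join tree of $Q$ viewed as a width-$1$ generalized hypertree decomposition; since $Q$ is acyclic and self-join-free, $T$ can be produced by a logspace reduction via the standard join-tree construction. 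Because $\dep = \emptyset$ implies $\{f,g\} \models \dep$ for every pair of facts, no $D$-operation is $(D',\emptyset)$-justified, so the only $(D,\emptyset)$-repairing sequence is the empty one and $\opr{D}{\emptyset} = \{D\}$. Hence the produced tuple is a yes-instance of $\mathsf{Repairs}[1]^{>0}$ iff $\bar c \in Q(D)$. This gives $\logcfl$-hardness of $\mathsf{Repairs}[1]^{>0}$, and hence of $\mathsf{Repairs}[k]^{>0}$ for every $k \geq 1$, since a width-$1$ decomposition is in particular a decomposition of width at most $k$.

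Putting the two steps together, if $\sharp\mathsf{Repairs}[k] \in \spanl$ then $\mathsf{Repairs}[k]^{>0} \in \nlogspace$ by the first step; combined with the $\logcfl$-hardness of $\mathsf{Repairs}[k]^{>0}$ from the second step, this yields $\logcfl \subseteq \nlogspace$, as desired. The main point requiring care is the $\logcfl$-hardness step: one must verify that evaluation of self-join-free acyclic CQs is indeed $\logcfl$-hard in combined complexity (this is precisely what the excerpt means by ``implicit in~\cite{GoLS02}'') and that the logspace reduction can supply the required generalized hypertree decomposition of the reduced query, which is immediate for $k=1$ via the classical logspace construction of a join tree for an acyclic CQ.
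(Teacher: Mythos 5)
Your proposal is correct and follows essentially the same route as the paper: observe that the decision version of any $\spanl$ function is in $\nlogspace$, show that the decision version of $\sharp\mathsf{Repairs}[k]$ is $\logcfl\hard$ by reducing from combined-complexity evaluation of CQs with a given width-$k$ decomposition (the paper phrases this as ``when $D$ is consistent w.r.t.\ $\dep$,'' while you take $\dep = \emptyset$, which amounts to the same thing since in both cases $\opr{D}{\dep} = \{D\}$), and conclude $\logcfl \subseteq \nlogspace$ using closure of $\nlogspace$ under logspace reductions. The only cosmetic caveat is that the input to $\sharp\mathsf{Repairs}[k]$ is a decomposition of width $k$, so for $k>1$ one should pad the width-$1$ join tree (or the query) accordingly, a triviality the paper also glosses over.
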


The above result is shown by exploiting the fact that, for each function $f : \Lambda^* \rightarrow \mathbb{N} \in \spanl$, its {\em decision version}, that is,
the decision problem that takes as input a string $w \in \Lambda^*$, and asks whether $f(w) > 0$, is in $\nlogspace$, whereas the decision version of $\sharp \mathsf{Repairs}[k]$, for each $k>0$, is $\logcfl\hard$.

\smallskip
\noindent \paragraph{Towards a New Complexity Class.} The above proposition leads to the following question: can we define a new complexity class $\mathsf{C}$ that includes $\spanl$ such that, every function in $\mathsf{C}$ admits an FPRAS, and, for every $k > 0$, $\sharp\mathsf{Repairs}[k] \in \mathsf{C}$.
We believe that such a complexity class, apart from allowing us to solve our main problem, is interesting in its own right.
We provide an affirmative answer to the above question by introducing the counting complexity class $\spantl$ ($\mathsf{T}$ stands for tree and $\mathsf{L}$ for logspace). Its definition relies on alternating Turing machines with output. 
Note that, although the notion of output is clear for (non-)deterministic Turing machines, it is rather uncommon to talk about the output of an alternating Turing machine. To the best of our knowledge, alternating Turing machines with output have not been considered before.

Before proceeding with the formal definition of $\spantl$, let us provide some intuition underlying this new complexity class by means of an analogy with how $\spanl$ is defined. 
Recall that a function $f$ belongs to $\spanl$ due to the existence of a non-deterministic logspace Turing machine $M$ with output such that the number of distinct outputs accepted by $M$ on input $w$ coincides with $f(w)$.
Our observation is that one can see each output string $\alpha_1 \alpha_2 \cdots \alpha_n$ of $M$ on input $w$ as a node-labeled path, i.e., a very simple node-labeled tree $T$ of the form $v_1 \rightarrow \cdots \rightarrow v_n$, where $v_i$ is labeled with $\alpha_i$, for $i \in [n]$, and $n$ is at most polynomial w.r.t.\ $|w|$. 
The tree $T$ is essentially extracted from a computation $\Gamma = C_1 \ra \cdots \ra C_m$ of $M$ on $w$ that outputs $\alpha_1,\ldots,\alpha_n$, where $m \geq n$.\footnote{As usual, a computation of $M$ on $w$ is a branch of the computation tree of $M$ on $w$.} We first contract $\Gamma$ into $\Gamma' = C_{i_1} \ra \cdots \ra C_{i_n}$, where $C_{i_1},\ldots,C_{i_n}$ are the configurations that write $\alpha_1,\ldots,\alpha_n$ on the output tape. Then, $T$ is induced from $\Gamma'$ by associating to each $C_{i_j}$ the node $v_j$ with label $\alpha_j$.

\OMIT{
that outputs the string $\alpha_1 \cdots \alpha_n$ is essentially a sequence of $m \geq n$ steps $s_1,\ldots,s_m$, where at each step the machine moves from one configuration to exactly one configuration, and, assuming that $s_{i_1},\ldots,s_{i_n}$, where $1 \leq i_1 < \cdots < i_n \leq m$, are the steps that write $\alpha_1,\ldots,\alpha_n$ on the output tape, the step $s_{i_1}$ essentially creates the root $v_1$ of $T$ with label $\alpha_1$, whereas the step $s_{i_j}$, for $j \in \{2,\ldots,n\}$, creates the node $v_j$ with label $\alpha_j$ whose parent is the node $v_{j-1}$ created by step $s_{i_{j-1}}$.
}

\OMIT{
Before proceeding with the formal definition of $\spantl$, let us provide some intuition underlying this new complexity class by means of an analogy with how $\spanl$ is defined. 
Recall that a function $f$ belongs to $\spanl$ due to the existence of a non-deterministic logspace Turing machine $M$ with output such that the number of distinct outputs accepted by $M$ on input $w$ coincides with $f(w)$.
Our observation is that one can see each output string $\alpha_1 \alpha_2 \cdots \alpha_n$ of $M$ on input $w$ as a node-labeled path, i.e., a very simple node-labeled tree $T$ of the form $v_1 \rightarrow v_2 \rightarrow \cdots \rightarrow v_n$, where $v_i$ is labeled with $\alpha_i$, for $i \in [n]$, and $n$ is at most polynomial w.r.t.\ $|w|$. 
Thus, a computation of $M$ on $w$ that outputs the string $\alpha_1 \cdots \alpha_n$ is essentially a sequence of $m \geq n$ steps $s_1,\ldots,s_m$, where at each step the machine moves from one configuration to exactly one configuration, and, assuming that $s_{i_1},\ldots,s_{i_n}$, where $1 \leq i_1 < \cdots < i_n \leq m$, are the steps that write $\alpha_1,\ldots,\alpha_n$ on the output tape, the step $s_{i_1}$ essentially creates the root $v_1$ of $T$ with label $\alpha_1$, whereas the step $s_{i_j}$, for $j \in \{2,\ldots,n\}$, creates the node $v_j$ with label $\alpha_j$ whose parent is the node $v_{j-1}$ created by step $s_{i_{j-1}}$.
}


With the above observation in mind, it should not be difficult to see how the notion of output can be extended to alternating Turing machines.
In particular, recall that alternating Turing machines (without output) generalize non-deterministic machines (without output) in the sense that each computation of an alternating machine $M$ on input $w$ is now a \emph{tree} of configurations, where all paths from the root to a leaf of the computation are understood to be executed in parallel, and the computation is accepted if each path from the root to a leaf is accepted. Therefore, an output of an alternating Turing machine $M$ on input $w$ is the node-labeled tree $T$ induced by the contraction of a computation of $M$ on $w$ as illustrated below:

\medskip
\centerline{
	\includegraphics[width=.35\textwidth]{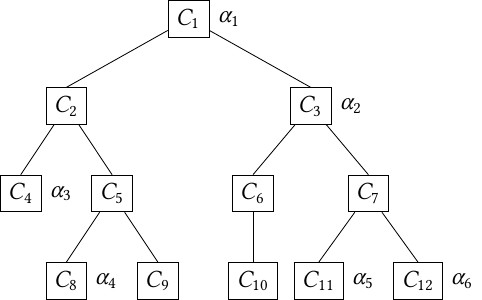}
	\includegraphics[width=.25\textwidth]{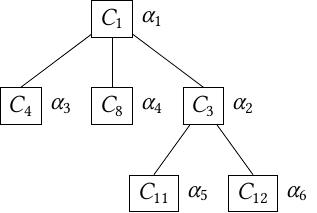}
	\includegraphics[width=.23\textwidth]{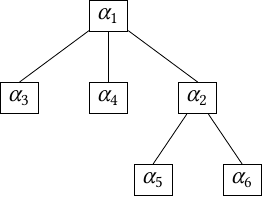}
}
\medskip

\noindent In particular, the leftmost tree is a computation of an alternating Turing Machine with output, where the configurations that output a symbol $\alpha_i$ are flagged with $\alpha_i$, the corresponding contraction is the middle tree, whereas the output is the right tree. 
We proceed, in the next section, to formalize the above intuition.



\OMIT{
we define an \emph{alternating Turing machine with output} as an alternating machine that is able at some step of some path of its computation, to output the label of a fresh node $v$ of a node-labeled tree $T$, and the parent of $v$ in $T$ is implicitly the node whose label has been previously outputted along the same path of the computation.
}


\subsection{Alternating Turing Machines with Output}

We consider alternating Turing machines with a read-only input tape, a read-write working tape, and a write-only one-way ``labeling'' tape. Furthermore, some of the states of the machine are classified as labeling states.
The labeling tape and states, which are rather non-standard in the definition of a Turing machine, should be understood as auxiliary constructs that allow us to build the labels of the nodes of an output. 
Let us clarify that we label nodes with strings instead of individual symbols in order to produce outputs that are easier to interpret. Note, however, that this choice does not lead to a more powerful computational model. In particular, we use a standard alternating Turing machine that, in addition, whenever it enters a labeling state, it, intuitively speaking, produces a fresh node $v$ of the output tree $T$ that is labeled with the string stored on the labeling tape, and then erases the content of the labeling tape, when entering a new configuration.
The parent of $v$ in $T$, as discussed above, is the last node produced along the same path of the computation. Finally, in order for each node of $T$ to have a well-defined parent, the initial state of the machine is labeling.
%

\begin{definition}[\textbf{Model}]\label{def:atm}
	An \emph{alternating Turing machine with output} (ATO) $M$ is a tuple 
	\[(S,\Lambda,s_\text{\rm init},s_\text{\rm accept},s_\text{\rm reject},S_\exists,S_\forall,S_{L},\delta),
	\] 
	where
	\begin{itemize}
		\item[-] $S$ is the finite set of states of $M$,
		\item[-] $\Lambda$ is a finite set of symbols, the alphabet of $M$, including the symbols $\bot$ (blank symbol) and $\triangleright$ (left marker),
		\item[-] $s_\text{\rm init} \in S$ is the initial state of $M$,
		\item[-] $s_\text{accept},s_\text{reject} \in S$ are the accepting and rejecting states of $M$, respectively,
		\item[-] $S_\exists,S_\forall$ are the existential and universal states of $M$, respectively, and they form a partition of $S \setminus \{s_\text{\rm accept},s_\text{\rm reject}\}$,
		\item[-] $S_L \subseteq S$ are the labeling states of $M$ including $s_\text{\rm init}$,
		\item[-] $\delta : (S \setminus \{s_\text{\rm accept},s_\text{\rm reject}\}) \times \Lambda \times \Lambda \rightarrow \PS(S \times \{-1,0,+1\} \times \{-1,0,+1\} \times \Lambda \times \Lambda^*)$ is the transition function of $M$. \hfill\markfull
	\end{itemize}
\end{definition}

The above definition follows the definition of a standard alternating Turing machine with an input and a working tape. The only difference is that some of the states can also be ``labeling'', and the transition function, besides of being able to read (resp., read/write) a symbol from the input (resp., from/to the working) tape, can also write a string from $\Lambda^*$ to the labeling tape.\footnote{ As it is standard with other definitions of Turing machines with output, an ATO can write the empty string on the labeling tape; this is to allow the machine to not write anything in its output when it is not required.}
We now define the semantics of an ATO. The definition is similar to the one for standard alternating Turing machine; we only need to clarify what happens when an ATO visits a state that is also labeling. In what follows, fix an ATO $M = (S,\Lambda,s_\text{\rm init},s_\text{\rm accept},s_\text{\rm reject},S_\exists,S_\forall,S_{L},\delta)$.

\medskip
\noindent \textbf{Configurations.}
A {\em configuration} of the ATO $M$ is a tuple $C = (s,x,y,z,h_x,h_y)$, where $s \in S$, $x,y,z$ are strings from $\Lambda^*$, and $h_x,h_y$ are positive integers.
Intuitively, if $M$ is in configuration $C$, then the input (resp., working, labeling) tape contains the infinite string $x \bot \bot \cdots$ (resp., $y \bot \bot \cdots$, $z \bot \bot \cdots$) and the cursor of the input (resp., working) tape points to the cell $h_x$, (resp., $h_y$). As usual, we use the left marker, which means that $x$ and $y$ are always starting with $\triangleright$. Moreover, $\delta$ is restricted in such a way that $\triangleright$ occurs exactly once in $x$ and $y$ and always as the first symbol.

\medskip
\noindent \textbf{Configuration Transition.}
We now define how $M$ transitions from one configuration to multiple configurations. Roughly, if $M$ is in some configuration $C$, then it simultaneously enters each configuration obtained from $C$ by applying one of the compatible transitions in $\delta$, similarly to a standard alternating Turing machine. The only addition is that when $M$ is in a configuration whose state is labeling, it first erases the content of the labeling tape before applying the transition function. This reflects the fact that the current content of the labeling tape is used to produce the label of a fresh node of the output, and the content of the labeling tape must be erased to allow the machine to construct a new label for a subsequent node. Formally,
assume that $M$ is in configuration $C = (s,x,y,z,h_x,h_y)$, where $x[h_x] = \alpha$ and $y[h_y] = \beta$, and assume that $\delta(s,\alpha,\beta) = \{(s_i,d_i,d_i',\beta_i,z_i)\}_{i \in [n]}$ for some $n > 0$. We write $C \ra_M \{C_1,\ldots,C_n\}$ to say that $M$ simultaneously enters the configurations $C_i = (s_i,x,y'_i,z_i',h_x',h_{y'_i})$ for $i \in [n]$, where $y'_i$ is obtained from $y$ by replacing $y[h_y] = \beta$ with $\beta_i$, $z_i' = z_i$ (resp., $z_i' = z \cdot z_i$) if $s \in S_L$ (resp., $s \not\in S_L$), $h_x' = h_x + d_i$, and $h_{y'_i} = h_y + d_i'$. 
Note that the cursors of the input and the working tapes cannot move left of the symbol $\triangleright$, which is achieved by restricting the transition function.
 
\medskip
\noindent \textbf{Computation.}
We are now ready to define the notion of computation of $M$ on input $w$. This is defined in exactly the same way as the computation of a standard alternating Turing machine on input $w$, with the difference that we now use the relation $\rightarrow_M$ defined above to determine the configurations that $M$ enters.
For a string  $w = w_1 \cdots w_n \in (\Lambda \setminus \{\bot,\triangleright\})^*$, the {\em initial configuration} of $M$ on input $w$ is $(s_\text{\rm init},\triangleright w, \triangleright, \epsilon,1,1)$. We call a configuration of $M$ {\em accepting} (resp., {\em rejecting}) if its state is $s_\text{\rm accept}$ (resp., $s_\text{\rm reject}$). We further call $C$ {\em existential} (resp., {\em universal}, {\em labeling}) if its state belongs to $S_{\exists}$ (resp., $S_\forall$, $S_L$).
We proceed to define the notion of computation of an ATO.

\begin{definition}[\textbf{Computation}]\label{def:computation}
	A {\em computation} of $M$ on input $w \in (\Lambda \setminus \{\bot,\triangleright\})^*$ is a finite node-labeled rooted tree $T = (V,E,\lambda)$, where $\lambda$ assigns to the nodes of $T$ configurations of $M$, s.t.:
	\begin{enumerate}
		\item[-] if $v \in V$ is the root node of $T$, then $\lambda(v)$ is the initial configuration of $M$ on input $w$,
		
		\item[-] if $v \in V$ is a leaf node of $T$, then $\lambda(v)$ is either an accepting or a rejecting configuration of $M$,
		
		\item[-] if $v \in V$ is a non-leaf node of $T$ with $\lambda(v) \ra_M \{C_1,\ldots,C_n\}$ for some $n > 0$, then: $\lambda(v)$ is existential implies $v$ has exactly one child $u \in V$ with $\lambda(u) = C_i$ for some $i \in [n]$, and $\lambda(v)$ is universal implies $v$ has $n$ children $u_1,\ldots,u_n \in V$ with $\lambda(u_i) = C_i$ for each $i \in [n]$.
	\end{enumerate}
	We say that $T$ is {\em accepting} if all its leaves are labeled with accepting configurations of $M$; otherwise, it is {\em rejecting}. \hfill\markfull 
\end{definition}

\medskip
\noindent \textbf{Output.}
We now define the notion of output of $M$ on input $w$, which is a node-labeled tree whose nodes are constructed when $M$ enters a labeling configuration. To this end, we need some auxiliary notions.
A path $v_1,\ldots,v_n$, for $n \geq 0$, in a computation $T = (V,E,\lambda)$ of $M$ is called {\em labeled-free} if $\lambda(v_i)$ is not a labeling configuration for each $i \in [n]$.
Moreover, given two nodes $u$ and $v$ of $T$, we say that $u$ reaches $v$ in $T$ via a labeled-free path if there is a path $u,v_1,\ldots,v_n,v$, where $n \geq 0$, in $T$ such that $v_1,\ldots,v_n$ is labeled-free.
The {\em output} of a computation $T = (V,E,\lambda)$ of $M$ on input $w \in (\Lambda \setminus \{\bot,\triangleright\})^*$ is the node-labeled rooted tree $O = (V',E',\lambda')$, where $\lambda' : V' \ra \Lambda^*$, such that:
\begin{itemize}
	\item[-] $V' = \{v \in V \mid \lambda(v) \text{ is a labeling configuration of } M\}$,
	\item[-] $E' = \{(u,v) \mid u \text{ reaches } v \text{ in } T \text{ via a } \text {labeled-free path}\}$,
	
	\item[-] for every $v' \in V'$, $\lambda'(v') = z$ assuming that $\lambda(v')$ is of the form $(\cdot,\cdot,\cdot,z,\cdot,\cdot)$.
\end{itemize}
We are now ready to define the output of $M$ on a certain input.

\begin{definition}[\textbf{Output}]\label{def:output}
	A node-labeled rooted tree $O$ is an {\em output} of $M$ on input $w$ if $O$ is the output of some computation $T$ of $M$ on input $w$. We further call $O$ a {\em valid output} if $T$ is accepting. \hfill\markfull
\end{definition}

\OMIT{
\begin{definition}[\textbf{Output}]\label{def:output}
	%
	The {\em output} of a computation $T = (V,E,\lambda)$ of $M$ on input $w \in (\Lambda \setminus \{\bot,\triangleright\})^*$ is the node-labeled rooted tree $T' = (V',E^\star,\lambda')$, where $\lambda' : V' \ra \Lambda^*$, such that:
	\begin{itemize}
		\item[-] if $v' \in V'$ is the root of $T'$, then $\lambda'(v') = \epsilon$, and
		
		\item[-] if $v' \in V'$, then $v'$  has $n$ children $u_1,\ldots,u_n$, for some $n \geq 0$, with $\lambda'(u_i) = z_i$ for each $i \in [n]$, where $v_1,\ldots,v_n$ are all the nodes of $T$ labeled with a configuration of the form $(q_i,\cdot,\cdot,z_i,\cdot,\cdot,\cdot)$, where $q_i \in Q_L$, that are reachable from the root of $T$ via a (possibly empty) label-free path.
	\end{itemize}
	We say that $T'$ is a {\em valid output} of $M$ on input $w$ if $T$ is an accepting configuration of $M$ on input $w$. \hfill\markfull
\end{definition}
}

\OMIT{
\subsection{Alternating Turing Machines with Output}

We consider standard alternating Turing machines with a read-only input tape, a read-write working tape, and a write-only output tape. The formal definition follows.

\begin{definition}[\textbf{Alternating Turing Machine}]\label{def:atm}
An \emph{alternating Turing machine with output} (ATO) $M$ is a tuple 
\[
(Q,\dep,q_\text{\rm init},q_\text{\rm accept},q_\text{\rm reject},Q_\exists,Q_\forall,Q_{L},\delta),
\] 
where
	\begin{itemize}
		\item[-] $Q$ is the finite set of states of $M$,
		\item[-] $\Lambda$ is a finite set of symbols, the {\em alphabet} of $M$, including the symbols $\bot$ (blank symbol) and $\triangleright$ (left marker),
		\item[-] $q_\text{\rm init} \in Q$ is the initial state of $M$,
		\item[-] $q_\text{accept},q_\text{reject} \in Q$ are the accepting and rejecting states of $M$, respectively,
		\item[-] $Q_\exists,Q_\forall$ are the existential and universal states of $M$, respectively, and they form a partition of $Q \setminus \{q_\text{\rm accept},q_\text{\rm reject}\}$,
		\item[-] $Q_L$ are the labelling states of $M$ including $q_\text{\rm init}$,
		\item[-] $\delta : (Q \setminus \{q_\text{\rm accept},q_\text{\rm reject}\}) \times \Lambda \times \Lambda \rightarrow \PS(Q \times \{-1,0,+1\} \times \{-1,0,+1\} \times \Lambda \times \Lambda)$ is the transition function of $M$. \hfill\markfull
	\end{itemize}
\end{definition}

Fix an ATO $M = (Q,\dep,q_\text{\rm init},q_\text{\rm accept},q_\text{\rm reject},Q_\exists,Q_\forall,Q_{L},\delta)$.
A {\em configuration} of $M$ is a tuple $C = (q,x,y,z,h_x,h_y,h_z)$, where $q \in Q$, $x,y,z$ are strings of $\Lambda^*$, and $h_x,h_y,h_z$ are positive integers. If $M$ is in configuration $C$, then the input (resp., working, output) tape contains the infinite string $x \bot \bot \cdots$ (resp., $y \bot \bot \cdots$, $z \bot \bot \cdots$) and the cursor points to the cell $h_x$ (resp., $h_y$, $h_z$). As usual, we use the left marker, which means that $x$ and $z$ are always starting with $\triangleright$. Moreover, the transition function $\delta$ is restricted in such a way that $\triangleright$ occurs exactly once in $x$ and $z$ and always as the first symbol.

Assume now that $M$ is in configuration $C = (q,x,y,z,h_x,h_y,h_z)$, where $x[h_x] = \alpha$ and $y[h_y] = \beta$, and assume that $\delta(q,\alpha,\beta) = \{(q_i,d_i,d_i',\beta_i,\gamma_i)\}_{i \in [n]}$ for some $n > 0$. Then, in one step, $M$ enters the configurations $C_i = (q_i,x,y_i,z_i,h_x',h_{y_i},h_{z_i})$ for $i \in [n]$, where $y_i$ is obtained from $y$ by replacing $y[h_y] = \beta$ with $\beta_i$, $z_i = \gamma_i$ (resp., $z_i = z\gamma_i$) if $q \in Q_L$ (resp., $q \not\in Q_L$), $h_x' = h_x + d_i$, $h_{y_i} = h_y + d_i'$, and $h_{z_i} = 1$ (resp., $h_{z_i} = h_z + 1$) if $q \in Q_L$ (resp., $q \not\in Q_L$). In this case we write $C \ra_M \{C_1,\ldots,C_n\}$. Note that the cursors of the input and the working tapes cannot move left of the symbol $\triangleright$. This is achieved by restricting the transition function.

The ATO $M$ receives an input $w = w_1 \cdots w_n \in (\Lambda \setminus \{\bot,\triangleright\})^*$. The {\em initial configuration} of $M$ on input $w$ is $(q_\text{\rm init},\triangleright w, \triangleright, \epsilon,1,1,1)$. We call a configuration of $M$ {\em accepting} (resp., {\em rejecting}) if its state is $q_\text{\rm accept}$ (resp., $q_\text{\rm reject}$). We further call $C$ {\em existential} (resp., {\em universal}, {\em labeling}) if its state belongs to $Q_{\exists}$ (resp., $Q_\forall$, $Q_L$).
We proceed to define the notion of computation of an ATO.

\begin{definition}[\textbf{Computation}]\label{def:computation}
A {\em computation} of $M$ on input $w \in (\Lambda \setminus \{\bot,\triangleright\})^*$ is a node-labeled rooted tree $T = (V,E,\lambda)$, where $\lambda$ assigns to the nodes of $T$ configurations of $M$, such that:
\begin{enumerate}
	\item[-] if $v \in V$ is the root node of $T$, then $\lambda(v)$ is the initial configuration of $M$ on input $w$,
	
	\item[-] if $v \in V$ is a leaf node of $T$, then $\lambda(v)$ is either an accepting or a rejecting configuration of $M$, and
	
\item[-] if $v \in V$ is an internal node of $T$ with $\lambda(v) \ra_M \{C_1,\ldots,C_n\}$ for $n > 0$, then $\lambda(v) \in Q_\exists$ implies $v$ has exactly one child $u \in V$ with $\lambda(u) = C_i$ for some $i \in [n]$, and $\lambda(v) \in Q_\forall$ implies $v$ has $n$ children $u_1,\ldots,u_n \in V$ with $\lambda(u_i) = C_i$ for $i \in [n]$.
\end{enumerate}
We say that $T$ is {\em accepting} if all its leaves are labeled with accepting configurations of $M$; otherwise, it is {\em rejecting}. \hfill\markfull 
\end{definition}

It remains to define the notion of output of $M$. A path $v_1,\ldots,v_n$, for some $n > 0$, in a computation $T = (V,E,\lambda)$ of $M$ is called {\em labeled-free} if $\lambda(v_i)$ is not a labeling configuration for each $i \in [n]$.

\begin{definition}[\textbf{Output}]\label{def:output}
%
The {\em output} of a computation $T = (V,E,\lambda)$ of $M$ on input $w \in (\Lambda \setminus \{\bot,\triangleright\})^*$ is the node-labeled rooted tree $T' = (V',E',\lambda')$, where $\lambda' : V' \ra \Lambda^*$, such that:
\begin{itemize}
	\item[-] if $v' \in V'$ is the root of $T'$, then $\lambda'(v') = \epsilon$, and
	
	\item[-] if $v' \in V'$ is a node other that the root of $T'$, then $v'$ has $n$ children $u_1,\ldots,u_n$, for some $n \geq 0$, with $\lambda'(u_i) = z_i$ for each $i \in [n]$, where $v_1,\ldots,v_n$ are all the nodes of $T$ labeled with a configuration of the form $(q_i,\cdot,\cdot,z_i,\cdot,\cdot,\cdot)$, where $q_i \in Q_L$, that are reachable from the root of $T$ via a (possibly empty) label-free path.
\end{itemize}
We say that $T'$ is a {\em valid output} of $M$ on input $w$ if $T$ is an accepting configuration of $M$ on input $w$. \hfill\markfull
\end{definition}
}

\subsection{The Complexity Class $\mathsf{SpanTL}$}

We now proceed to define the complexity class $\mathsf{SpanTL}$ and establish that each of its problems admits an efficient approximation scheme. To this end, we need to introduce the notion of well-behaved ATO, that is, an ATO with some resource usage restrictions.
In particular, we require (i) each computation to be of polynomial size, (ii) the working and labeling tapes to use logarithmic space, and (iii) each labeled-free path to have a bounded number of universal configurations.
Note that the first two conditions are direct extensions of the conditions that define non-deterministic logspace Turing machines, i.e., the size of each computation is bounded by a polynomial, and the space usage is bounded by a logarithm. The last item applies only to alternating Turing machines, as for non-deterministic Turing machines with output it holds trivially, assuming that we interpret a non-deterministic Turing machine with output as a special case of an ATO without universal states. We now formally define the above resource usage restrictions.

\begin{definition}[\textbf{Well-behaved ATO}]\label{def:well-behaved}
Consider an ATO $M = (S,\Lambda,s_\text{\rm init},s_\text{\rm accept},s_\text{\rm reject},S_\exists,S_\forall,S_{L},\delta)$. We call $M$ {\em well-behaved} if there is a polynomial function $\text{\rm pol} : \mathbb{N} \ra \mathbb{N}$ and an integer $k \geq 0$ such that, for every string $w = w_1 \cdots w_n \in (\Lambda \setminus \{\bot,\triangleright\})^*$ and computation $T = (V,E,\lambda)$ of $M$ on input $w$, the following hold:
\begin{itemize}
	\item[-] $|V| \in O(\text{\rm pol}(n))$,
	\item[-] for each $v \in V$ with $\lambda(v)$ being of the form $(\cdot,\cdot,y,z,\cdot,\cdot)$, $|y|,|z| \in O(\log(n))$, and
	\item[-] for each labeled-free path $v_1,\ldots,v_m$ of $T$, where $m \geq 0$, $|\{v_i \mid \lambda(v_i) \text{ is universal}\}| \leq k$. \hfill\markfull
\end{itemize}
\end{definition}

Let $\mathsf{span}_M : (\Lambda \setminus \{\bot,\triangleright\})^* \rightarrow \mathbb{N}$ be the function that assigns to each $w \in (\Lambda \setminus \{\bot,\triangleright\})^*$ the number of distinct valid outputs of $M$ on input $w$, i.e., $\mathsf{span}_M(w) = |\{T \mid T \text{ is a valid output of } M \text{ on } w\}|$. 
%
We finally define the complexity class
\[
\mathsf{SpanTL}\ =\ \left\{\mathsf{span}_M \mid M \text{ is a well-behaved ATO}\right\}.
\]

\def\prospanlinspantl{
	$\spanl \subseteq \spantl$. Furthermore, unless $\nlogspace = \logcfl$, $\spanl \subset \spantl$.
}

	We are going to show in the next section that $\sharp\mathsf{Repairs}[k]$ is in $\spantl$, for each $k>0$. Interestingly, we can show that there are other natural database counting problems, which are not in $\spanl$ unless $\nlogspace = \logcfl$, that belong to $\spantl$. In particular, we can show that the problem $\sharp \mathsf{GHWCQ}[k]$, which takes as input a database $D$, a CQ $Q$, and a generalized hypertree decomposition $H$ of $Q$ of width $k$,\footnote{Here, we assume that $H = (T,\chi,\lambda)$ considers also the output variables of $Q(\bar x)$. In other words, the tree decomposition $(T,\chi)$ considers all the variables of $Q$ and not only the variables $\var{Q} \setminus \bar x$. Note that this assumption is also needed in~\cite{ACJR21} for providing an FPRAS for the problem $\sharp \mathsf{GHWCQ}[k]$.} and asks for the number of answers to $Q$ over $D$, belongs to $\spantl$.
	Moreover, the uniform reliability problem $\sharp \mathsf{UR}[k]$, which takes as input a database $D$, a CQ $Q$ from $\sjf$, a generalized hypertree decomposition of $Q$ of width $k$, and a tuple $\bar c$, and asks for the number of subsets $D'$ of $D$ such that $\bar c \in Q(D')$, also belongs to $\spantl$.
	Finally, the problem $\sharp\mathsf{SRepairs}[k]$, defined as $\sharp\mathsf{Repairs}[k]$ with the difference that we focus on the classical subset repairs from~\cite{ArBC99}, is also in $\mathsf{SpanTL}$.	
	It is known that, for $k > 0$, $\sharp \mathsf{GHWCQ}[k]$ and $\sharp\mathsf{UR}[k]$ admit an FPRAS, which was shown via reductions to the problem of counting the number of trees of a certain size accepted by a non-deterministic finite tree automaton~\cite{ACJR21,BrMe23}.
	Our novel complexity class allows us to reprove the approximability of $\sharp \mathsf{GHWCQ}[k]$ and $\sharp\mathsf{UR}[k]$ in an easier way.
	The fact that $\sharp\mathsf{SRepairs}[k]$ admits an FPRAS is a result of this work.

	It is straightforward to see that $\spantl$ generalizes $\spanl$. Furthermore, since the decision version of every function in $\spanl$ is in $\nlogspace$, whereas the decision version of $\sharp \mathsf{GHWCQ}[k]$, $\sharp \mathsf{UR}[k]$, and $\sharp\mathsf{SRepairs}[k]$, for each $k>0$, is $\logcfl\hard$~\cite{GoLS02}, we get that:
	
	\begin{proposition}\label{pro:spanl-in-spantl}
		\prospanlinspantl
	\end{proposition}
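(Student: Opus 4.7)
The plan is to handle the two claims separately: first the inclusion $\spanl \subseteq \spantl$ by a direct simulation, and then the strictness under the assumption $\nlogspace \neq \logcfl$ by invoking a problem that lies in $\spantl$ but whose decision version is $\logcfl$-hard.

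For the inclusion, fix a function $f \in \spanl$ witnessed by a non-deterministic logspace Turing machine $N$ with output, over alphabet $\Lambda$. I would build a well-behaved ATO $M$ with no universal states that simulates $N$ step-by-step on its input and working tapes, and whose labeling mechanism mimics $N$'s output tape. Concretely, whenever $N$ writes a symbol $\alpha \in \Lambda$ to its output tape, $M$ enters a labeling state whose content on the labeling tape is exactly $\alpha$; all other transitions of $N$ are simulated by non-labeling existential states. Since the initial state of an ATO must be labeling, I would have $M$ begin in a dedicated labeling state whose labeling-tape content is $\epsilon$, representing the root of the output tree, and then immediately transition into the simulation of $N$'s initial configuration. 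A computation of $M$ on input $w$ is then a single path (no universal splits), and its output tree is a path $v_0 \to v_1 \to \cdots \to v_n$ whose node labels record exactly the string $\alpha_1 \cdots \alpha_n$ that $N$ outputs along the corresponding computation. Distinct accepting output strings of $N$ yield distinct node-labeled path outputs of $M$, and vice versa, so $\mathsf{span}_M(w) = f(w)$.

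The well-behavedness of $M$ comes almost for free: the polynomial bound on computation size and the logarithmic bound on working-tape usage are inherited from the logspace character of $N$; the labeling tape holds at most one symbol at a time (it is reset upon leaving a labeling state), so it trivially fits in logarithmic space; and the condition bounding the number of universal configurations on labeled-free paths is satisfied with constant $k=0$ since $M$ has no universal states at all. This gives $f = \mathsf{span}_M \in \spantl$, hence $\spanl \subseteq \spantl$.

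For the strict inclusion under the assumption $\nlogspace \neq \logcfl$, the idea is to exhibit a witness in $\spantl \setminus \spanl$. Fix any $k > 0$ and consider $\sharp \mathsf{GHWCQ}[k]$ (or equally well $\sharp \mathsf{UR}[k]$ or $\sharp\mathsf{SRepairs}[k]$), which, as stated in the excerpt, belongs to $\spantl$. Its decision version --- does the input have at least one answer? --- is $\logcfl$-hard by~\cite{GoLS02}. On the other hand, for any function $g \in \spanl$ witnessed by an NL machine $N_g$, the decision version of $g$ is in $\nlogspace$: simply run $N_g$ non-deterministically and accept iff some branch accepts. Hence if $\sharp \mathsf{GHWCQ}[k] \in \spanl$, then $\logcfl \subseteq \nlogspace$, which together with the trivial $\nlogspace \subseteq \logcfl$ gives $\nlogspace = \logcfl$, contradicting our assumption. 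Therefore $\sharp \mathsf{GHWCQ}[k] \in \spantl \setminus \spanl$ and the inclusion $\spanl \subseteq \spantl$ is strict.

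The only step that requires any care is the first: making sure that the translation from $N$-computations to $M$-outputs is bijective on accepted outputs, in particular that two different accepting runs of $N$ producing the same output string yield the same labeled path in $M$ (so we do not overcount), and that two different output strings yield genuinely distinct node-labeled trees (so we do not undercount). Both follow from the fact that the node-labeled path is determined by the sequence of labels produced, which by construction is exactly the output string of the simulated run of $N$.
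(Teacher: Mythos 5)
Your proposal is correct and follows essentially the same route as the paper: the inclusion is established by the same direct simulation in which each output symbol of the NL machine triggers a labeling configuration, turning output strings into labeled paths, and the strictness is derived by exhibiting $\sharp\mathsf{GHWCQ}[k]$ (shown elsewhere to lie in $\spantl$) as a witness whose $\logcfl$-hard decision version would collapse $\logcfl$ into $\nlogspace$ if the function were in $\spanl$. No gaps.
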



A property of $\mathsf{SpanTL}$, which will be useful for our later development, is the fact that it is {\em closed under logspace reductions}. This means that, given two functions $f : \Lambda_1^* \rightarrow \mathbb{N}$ and $g : \Lambda_2^* \rightarrow \mathbb{N}$, for some alphabets $\Lambda_1$ and $\Lambda_2$, if there is a logspace computable function $h : \Lambda_1^* \rightarrow \Lambda_2^*$ with $f(w) = g(h(w))$ for all $w \in \Lambda_1^*$, and $g$ belongs to $\spantl$, then $f$ also belongs to $\spantl$.
	
\def\prologspaceclosure{
	$\spantl$ is closed under logspace reductions.
}

\begin{proposition}\label{pro:logspace-closure}
\prologspaceclosure
\end{proposition}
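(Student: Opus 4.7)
My plan is to adapt the classical composition-of-logspace-reductions argument to the alternating, output-producing setting. Fix a well-behaved ATO $M_g$ witnessing $g \in \spantl$ (with polynomial size bound $\mathrm{pol}_g$ and universal-count bound $k_g$ on labeled-free paths), and a deterministic logspace transducer $T_h$ computing $h$ (so $|h(w)|$ is polynomial in $|w|$, and each symbol $h(w)[i]$ is computable by $T_h$ in space $O(\log |w|)$). I will build a well-behaved ATO $M_f$ such that, on input $w$, the computations of $M_f$ on $w$ are in label-preserving bijection with the computations of $M_g$ on $h(w)$; this immediately gives $\mathsf{span}_{M_f}(w) = \mathsf{span}_{M_g}(h(w)) = g(h(w)) = f(w)$.

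The machine $M_f$ on input $w$ maintains on its working tape (i) a faithful copy of $M_g$'s working tape, (ii) the current head position $i$ of $M_g$ on its input tape, stored in binary, and (iii) scratch space for $T_h$. Whenever $M_g$ would read $h(w)[i]$ in order to apply $\delta_g$, $M_f$ enters a deterministic subroutine: it re-runs $T_h$ on $w$ from scratch, simulating the transducer while counting produced output symbols until the $i$-th one is emitted, caches that symbol, and then erases the transducer's scratch. With this symbol in hand, $M_f$ applies the transition of $\delta_g$, updating its simulated state, working tape, and virtual head position exactly as $M_g$ would. The states of $M_f$ are classified so that the state simulating a $\forall$-state (resp.\ $\exists$-state, labeling state) of $M_g$ is universal (resp.\ existential, labeling), and produces the identical string on the labeling tape; all the auxiliary states used for the transducer subroutine are existential, non-labeling, and deterministic (i.e., their $\delta_f$-image has exactly one element).

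Verifying well-behavedness is then routine but requires three checks, which I would carry out in order. First, each step of $M_g$ is simulated by $O(\mathrm{pol}_{T_h}(|w|))$ deterministic steps of $M_f$, and $M_g$'s computations have $O(\mathrm{pol}_g(|h(w)|))$ nodes, which is polynomial in $|w|$; so each computation of $M_f$ has polynomial size. Second, the working tape holds three pieces of data each of size $O(\log |w|)$ and the labeling tape is identical to $M_g$'s (logspace), so the tape bounds are met. Third — and this is the step requiring the most care — the subroutines inserted between two ``real'' steps of $M_g$ are entirely existential, deterministic, and non-labeling, so every labeled-free path of $M_f$ projects to a labeled-free path of $M_g$ with exactly the same multiset of universal configurations; hence the bound $k_g$ is preserved.

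The only genuinely subtle point, and the one I expect to spend most effort on in writing the full proof, is the bijection between valid outputs of $M_f$ on $w$ and valid outputs of $M_g$ on $h(w)$. The classification of states ensures that the computation tree of $M_f$ on $w$ is obtained from the computation tree of $M_g$ on $h(w)$ by subdividing each edge with a (deterministic, non-labeling) path corresponding to one transducer run. Because those inserted paths are non-labeling, the contraction that defines the output collapses them away, yielding exactly the output tree of the underlying $M_g$-computation (with identical node labels). A computation of $M_f$ is accepting iff all its leaves simulate accepting leaves of $M_g$, so valid outputs correspond one-to-one, giving the required equality and completing the argument that $f \in \spantl$.
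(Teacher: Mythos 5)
Your proposal is correct and follows essentially the same route as the paper's proof: both convert $h$ into an on-demand, symbol-indexed logspace subroutine, simulate $M_g$ while maintaining its virtual input-head position in binary, implement each subroutine call as a deterministic sequence of non-labeling existential configurations, and then check the three well-behavedness conditions plus the label-preserving correspondence of computations. The only difference is presentational — you spell out the output bijection via the contraction more explicitly than the paper does — but the underlying argument is identical.
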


Another key property of $\mathsf{SpanTL}$, which is actually the crucial one that we need for establishing Theorem~\ref{the:main-fpras}, is that each of its problems admits an FPRAS. 


\def\thespantlfpras{
	Every function in $\mathsf{SpanTL}$ admits an FPRAS.
}

\begin{theorem}\label{the:spantl-fpras}
\thespantlfpras
\end{theorem}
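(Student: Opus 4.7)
The plan is to reduce, for every well-behaved ATO $M$, the counting function $\mathsf{span}_M$ to the problem $\sharp\mathsf{NFTA}$ of counting trees of a prescribed size accepted by a non-deterministic finite tree automaton; since $\sharp\mathsf{NFTA}$ admits an FPRAS~\cite{ACJR21}, this will yield an FPRAS for $\mathsf{span}_M$. Given $M = (S,\Lambda,s_{\init},s_{\accept},s_{\reject},S_\exists,S_\forall,S_L,\delta)$ and an input $w$ with $|w| = n$, I would construct in polynomial time an NFTA $\mathcal{A}_{M,w}$ whose accepting computations, over trees of sizes between $1$ and $\text{\rm pol}(n)$, are in one-to-one correspondence with the valid outputs of $M$ on $w$.

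The state set of $\mathcal{A}_{M,w}$ is the set of labeling configurations of $M$ on input $w$; by the log-space restriction on the working and labeling tapes, this set has polynomial size in $n$. The input alphabet consists of all strings of length $O(\log n)$ over $\Lambda$ (the possible node labels in a valid output), which is again polynomial. The maximum arity is a constant $c$, obtained from the bound $k$ on universal configurations along a labeled-free path and the maximal nondeterministic branching of $M$: each node of a valid output has at most $c \leq |\delta|^k$ children. The transition relation is designed so that a move from states $C_1,\ldots,C_m$ under label $\ell$ to state $C$ is allowed precisely when $C$ is a labeling configuration whose labeling tape content equals $\ell$, and there exists a \emph{labeled-free sub-computation} of $M$ that starts at $C$ and has $C_1,\ldots,C_m$ (in a fixed canonical order) as its labeling children. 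Such sub-computations have at most $k$ universal branchings per path, hence there are only polynomially many of them, and they can be enumerated and checked in polynomial time in $n$.

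To deal with the varying size of valid outputs, I would invoke the FPRAS for $\sharp\mathsf{NFTA}$ once for every size $s \in \{1,\ldots,\text{\rm pol}(n)\}$, with accuracy parameter $\epsilon$ and confidence parameter $\delta/\text{\rm pol}(n)$, and then sum the resulting estimates. A standard argument shows that a sum of $(1\pm\epsilon)$-approximations is itself a $(1\pm\epsilon)$-approximation of the total, and a union bound controls the overall failure probability by $\delta$. Combining this with the bijection between accepted trees of the NFTA and valid outputs of $M$ on $w$ yields the desired FPRAS for $\mathsf{span}_M$.

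The main obstacle is the precise design and verification of the NFTA transition relation, specifically ensuring that each valid output of $M$ on $w$ corresponds to exactly one accepting run of $\mathcal{A}_{M,w}$, so that no double-counting occurs, and conversely that every accepted tree decodes to a valid output. The bounded number $k$ of universal configurations per labeled-free path is exactly what keeps the number of candidate sub-computations polynomial, and is therefore essential both for the polynomial-size construction and for the correctness of the transition relation; the log-space bound on the working and labeling tapes is essential for the polynomial size of the state set and alphabet.
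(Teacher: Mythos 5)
Your proposal follows essentially the same route as the paper: an NFTA whose states are the labeling configurations of $M$ on $w$, whose transitions connect a labeling configuration to the tuple of labeling configurations reachable from it via labeled-free (sub-)computations, with polynomiality guaranteed by the logspace tape bounds and the bound $k$ on universal configurations per labeled-free path, followed by summing per-size invocations of the $\sharp\mathsf{NFTA}$ FPRAS with confidence $\delta/\text{\rm pol}(n)$. One small imprecision: the number of labeled-free sub-computations themselves can be exponential (many existential choices through the polynomial-size configuration DAG), so rather than enumerating them you should compute the polynomially many distinct tuples of labeling children by dynamic programming over the computation DAG (combining children sets by union at existential configurations and a bounded product at universal ones), which is exactly how the paper realizes your transition relation.
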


To establish the above result we exploit a recent approximability result in the context of automata theory, that is, the problem of counting the number of trees of a certain size accepted by a non-deterministic finite tree automaton admits an FPRAS~\cite{ACJR21}.
Hence, for showing Theorem~\ref{the:spantl-fpras}, we reduce in polynomial time each function in $\mathsf{SpanTL}$ to the above counting problem for tree automata.

\OMIT{
In what follows, we recall the basics about non-deterministic finite tree automata and the associated counting problem, and then discuss the key ideas underlying our reduction.

\medskip

\noindent \paragraph{Ordered Trees and Tree Automata.} For an integer $k \geq 1$, a {\em finite ordered $k$-tree} (or simple {\em $k$-tree}) is a prefix-closed non-empty finite subset $T$ of $[k]^*$, that is, if $w \cdot i \in T$ with $w \in [k]^*$ and $i \in [k]$, then $w \cdot j \in T$ for every $w \in [i]$.
The root of $T$ is the empty string, and every maximal element of $T$ (under prefix ordered) is a leaf. For every $u,v \in T$, we say that $u$ is a child of $v$, or $v$ is a parent of $u$, if $u = v \cdot i$ for some $i \in [k]$. The size of $T$ is $|T|$. Given a finite alphabet $\Lambda$, let $\trees{k}{\Lambda}$ be the set of all $k$-trees in which each node is labeled with a symbol from $\Lambda$. By abuse of notation, for $T \in \trees{k}{\Lambda}$ and $u \in T$, we write $T(u)$ for the label of $u$ in $T$.

A {\em (top-down) non-deterministic finite tree automation} (NFTA) over $\trees{k}{\Lambda}$ is a tuple $A = (S,\Lambda,s_\text{\rm init},\delta)$, where $S$ is the finite set of states of $A$, $\Lambda$ is a finite set of symbols (the alphabet of $A$), $s_\text{\rm init} \in S$ is the initial state of $A$, and $\delta \subseteq S \times \Lambda \times \left(\bigcup_{i = 0}^{k} S^k\right)$ is the transition relation of $A$.
A {\em run} of $A$ over a tree $T \in \trees{k}{\Lambda}$ is a function $\rho : T \ra S$ such that, for every $u \in T$, if $u \cdot 1,\ldots,u \cdot n$ are the children of $u$ in $T$, then $(\rho(u),T(u),(\rho(u \cdot 1),\ldots,\rho(u \cdot n))) \in \delta$. In particular, if $u$ is a leaf, then $(\rho(u),T(u),()) \in \delta$. We say that $A$ {\em accepts} $T$ if there is a run $\rho$ of $A$ over $T$ with $\rho(\epsilon) = s_\text{\rm init}$, i.e., $\rho$ assigns to the root the initial state. We write $L(A) \subseteq \trees{k}{\Lambda}$ for the set of all trees accepted by $A$, i.e., the language of $A$. We further write $L_n(A)$ for the set of trees $\{T \in L(A) \mid |T| = n\}$, i.e., the set of trees of size $n$ accepted by $A$. The relevant counting problem for NFTA follows:

\medskip

\begin{center}
	\fbox{\begin{tabular}{ll}
			{\small PROBLEM} : & $\sharp\mathsf{NFTA}$
			\\
			{\small INPUT} : & An NFTA $A$  and a string $0^n$ for some $n \geq 0$.
			\\
			{\small OUTPUT} : &  $\left|\bigcup_{i = 0}^n L_i(A)\right|$.
	\end{tabular}}
\end{center}

\medskip

The notion of FPRAS for $\sharp\mathsf{NFTA}$ is defined in the obvious way. 
We know from~\cite{ACJR21} that $\sharp\mathsf{NFTA}_=$, defined as $\sharp\mathsf{NFTA}$ with the difference that it asks for $|L_n(A)|$, i.e., the number trees of size $n$ accepted by $A$, admits an FPRAS. By using this result, we can easily show that:

\def\thenfta{
	$\sharp\mathsf{NFTA}$ admits an FPRAS.
}

\begin{theorem}\label{the:nfta}
	\thenfta
\end{theorem}

\noindent\paragraph{The Reduction.} Recall that our goal is reduce in polynomial time every function of $\mathsf{SpanTL}$ to $\sharp\mathsf{NFTA}$, which, together with Theorem~\ref{the:nfta}, will immediately imply Theorem~\ref{the:spantl-fpras}. In particular, we need to establish the following technical result:

\def\proreductiontonfta{
Fix a function $f : \Lambda^* \ra \mathbb{N}$ of $\mathsf{SpanTL}$. For every $w \in \Lambda^*$, we can construct in polynomial time in $|w|$ an NFTA $A$ and a string $0^n$, for some $n \geq 0$, such that $f(w) = \left|\bigcup_{i = 0}^n L_i(A)\right|$.
}

\begin{proposition}\label{pro:reduction}
	\proreductiontonfta
\end{proposition}

We discuss the main ideas underlying the proof of Proposition~\ref{pro:reduction}. Since $f : \Lambda^* \ra \mathbb{N}$ belongs to $\mathsf{SpanTL}$, there exists a well-behaved ATO $M = (S,\Lambda',s_\text{\rm init},s_\text{\rm accept},s_\text{\rm reject},S_{\exists},S_{\forall},S_{L},\delta)$, where $\Lambda' = \Lambda \cup \{\bot,\triangleright\}$ and $\bot,\triangleright \not\in \Lambda$, such that $f$ is the function $\mathsf{span}_M$. 
The goal is, for an arbitrary string $w \in \Lambda^*$, to construct in polynomial time in $|w|$ an NFTA $A = (S^A,\Lambda^A,s_{\text{\rm init}}^{A}\delta^A)$ and a string $0^n$, for some $n \geq 0$, such that $\mathsf{span}_M(w) = |\bigcup_{i = 0}^{n} L_i(A)|$. This is done in two steps:
\begin{enumerate}
	\item We first construct an NFTA $A$ in polynomial time in $|w|$ such that $\mathsf{span}_M (w) = |L(A)|$, i.e., $A$ accepts $\mathsf{span}_M(w)$ trees.
	
	\item We define a polynomial function $\mathsf{pol} : \mathbb{N} \ra \mathbb{N}$ such that $L(A) = \bigcup_{i = 0}^{\mathsf{pol}(|w|)} L_i(A)$. 
\end{enumerate}
After completing the above two steps, it is clear that Proposition~\ref{pro:reduction} follows with $n = \mathsf{pol}(|w|)$. Let us now discuss the above two steps.

\medskip

\noindent\paragraph{\underline{Step 1: The NFTA}}

\smallskip

\noindent For the construction of the desired NFTA, we first need to introduce the auxiliary notion of the computation directed acyclic graph (DAG) of the ATO $M$ on input $w \in \Lambda^*$, which compactly represents all the computations of $M$ on $w$. Recall that a DAG $G$ is rooted if it has exactly one node, the root, with no incoming edges. We also say that a node of $G$ is a leaf if it has no outgoing edges.

\begin{definition}[\textbf{Computation DAG}]\label{def:computation_dag}
	The {\em computation DAG} of $M$ on input $w \in \Lambda^*$ is the DAG $G = (\mathcal{C},\mathcal{E})$, where $\mathcal{C}$ is the set of configurations of $M$ on $w$, defined as follows:
	\begin{enumerate}
		\item[-] if $C \in \mathcal{C}$ is the root node of $G$, then $C$ is the initial configuration of $M$ on input $w$,
		
		\item[-] if $C \in \mathcal{C}$ is a leaf node of $G$, then $C$ is either an accepting or a rejecting configuration of $M$, and
		
		\item[-] if $C \in \mathcal{C}$ is a non-leaf node of $G$ with $C \ra_M \{C_1,\ldots,C_n\}$ for $n > 0$, then (i) for each $i \in [n]$, $(C,C_i) \in \mathcal{E}$, and (ii) for every configuration $C' \not\in \{C_1,\ldots,C_n\}$ of $M$ on $w$, $(C,C') \not\in \mathcal{E}$. \hfill\markfull 
	\end{enumerate}
\end{definition}

As said above, the computation DAG $G$ of $M$ on input $w$ compactly represents all the computations of $M$ on $w$. In particular, a computation $(V,E,\lambda)$ of $M$ on $w$ can be constructed from $G$ by traversing $G$ from the root to the leaves and (i) for every universal configuration $C$ with outgoing edges $(C,C_1),\dots,(C,C_n)$, add a node $v$ with $\lambda(v)=C$ and children $u_1,\dots,u_n$ with $\lambda(u_i)=C_i$ for all $i\in[n]$, and (ii) for every existential configuration $C$ with outgoing edges $(C,C_1),\dots,(C,C_n)$, add a node $v$ with $\lambda(v)=C$ and a single child $u$ with $\lambda(u)=C_i$ for some $i\in [n]$.

\OMIT{
proof of Proposition~\ref{pro:reduction}. Since, by hypothesis, $f : \Lambda^* \ra \mathbb{N}$ belongs to $\mathsf{SpanTL}$, there exists a well-behaved ATO $M = (S,\Lambda',s_\text{\rm init},s_\text{\rm accept},s_\text{\rm reject},S_{\exists},S_{\forall},S_{L},\delta)$, where $\Lambda' = \Lambda \cup \{\bot,\triangleright\}$ and $\bot,\triangleright \not\in \Lambda$, such that $f$ is the function $\mathsf{span}_M$. 
The goal is, for an arbitrary string $w \in \Lambda^*$, to construct in polynomial time in $|w|$ an NFTA $A = (S^A,\Lambda^A,s_{\text{\rm init}}^{A}\delta^A)$ and a string $0^n$ from some $n \geq 0$ such that $\mathsf{span}_M(w) = |\bigcup_{i = 0}^{n} L_i(A)|$. This is done in two steps:
\begin{enumerate}
	\item We first construct an NFTA $A$ in polynomial time in $|w|$ such that $\mathsf{span}_M (w) = |L(A)|$, i.e., $A$ accepts $\mathsf{span}_M(w)$ trees.
	
	\item We define a polynomial function $\mathsf{pol} : \mathbb{N} \ra \mathbb{N}$ such that $L(A) = |\bigcup_{i = 0}^{\mathsf{pol}(|w|)} L_i(A)|$. 
\end{enumerate}
After completing the above two steps, it is clear that Proposition~\ref{pro:reduction} follows with $n = \mathsf{pol}(|w|)$. We proceed to discuss those two steps.

\medskip

\noindent\paragraph{\underline{Step 1: The NFTA}}

\smallskip

%
}

The construction of the NFTA $A$ is performed by $\mathsf{BuildNFTA}$, depicted in Algorithm~\ref{alg:dagtonfta}, which takes as input a string $w \in \Lambda^*$. It first constructs the computation DAG $G = (\mathcal{C},\mathcal{E})$ of $M$ on $w$, which will guide the construction of $A$. It then initializes the sets $S^A,\Lambda^A,\delta^A$, as well as the auxiliary set $Q$, which will collect pairs of the form $(C,U)$, where $C$ is a configuration of $\mathcal{C}$ and $U$ a set of tuples of states of $A$, to empty.
Then, it calls the recursive procedure $\mathsf{Process}$, depicted in Algorithm~\ref{alg:process}, which constructs the set of states $S^A$, the alphabet $\Lambda^A$, and the transition relation $\delta^A$, while traversing the computation DAG $G$ from the root to the leaves. Here, $S^A$, $\Lambda^A$, $\delta^A$, $G$, and $Q$ should be seen as global structures that can be used and updated inside the procedure $\mathsf{Process}$. Eventually, $\mathsf{Process}(\rt{G})$ returns a state $s \in S^A$, which acts as the initial state of $A$, and $\mathsf{BuildNFTA}(w)$ returns the NFTA $(S^A,\Lambda^A,s,\delta^A)$.

Concerning the procedure $\mathsf{Process}$, when we process a labeling configuration $C \in \mathcal{C}$, we add to $S^A$ a state $s_C$ representing $C$. 
Then, for every computation $T=(V,E,\lambda)$ of $M$ on $w$, if $V$ has a node $v$ with $u_1,\dots,u_n$ being the nodes reachable from $v$ via a labeled-free path, and $\lambda(u_i)$ is a labeling configuration for every $i\in[n]$, we add the transition $(s_C,z,(s_{C_1},\dots,s_{C_n}))$ to $\delta^A$, where $\lambda(v)=C$, $C$ is of the form $(\cdot,\cdot,\cdot,z,\cdot,\cdot,\cdot)$, and $\lambda(u_i)=C_i$ for every $i\in[n]$, for some arbitrary order $C_1,\dots,C_n$ over those configurations.
Since, by definition, the output of $T$ has a node corresponding to $v$ with children corresponding to $u_1,\dots,u_n$, using these transitions we ensure that there is a one-to-one correspondence between the outputs of $M$ on $w$ and the trees accepted by $A$.

Now, when processing non-labeling configurations, we accumulate all the information needed to add all these transitions to $\delta^A$. In particular, the procedure $\process$ always returns a set of tuples of states of $S^A$. For labeling configurations $C$, it returns a single tuple $(s_C)$, but for non-labeling configurations $C$, the returned set depends on the outgoing edges $(C,C_1),\dots,(C,C_n)$ of $C$. In particular, if $C$ is an existential configuration, it simply takes the union $P$ of the sets $P_i$ returned by $\process(C_i)$, for $i\in[n]$, because in every computation of $M$ on $w$ we choose only one child of each node associated with an existential configuration; each $P_i$ represents one such choice. If, on the other hand, $C$ is a universal configuration, then $P$ is the set $\bigotimes_{i \in [n]} P_i$ obtained by first computing the cartesian product $P' = \bigtimes_{i \in [n]} P_i$ and then merging each tuple of $P'$ into a single tuple of states of $S^A$. For example, with $P_1 = \{(),(s_1,s_2),(s_3)\}$ and $P_2 = \{(s_5),(s_6,s_7)\}$, $P_1 \otimes P_2$ is the set $\{(s_5),(s_6,s_7),(s_1,s_2,s_5),(s_1,s_2,s_6,s_7),(s_3,s_5),(s_3,s_6,s_7)\}$. We define $\bigotimes_{i \in [n]} P_i=\emptyset$ if $P_i=\emptyset$ for some $i\in[n]$.
We use the $\otimes$ operator since in every computation of $M$ on $w$ we choose all the children of each node associated with a universal configuration. When we reach a labeling configuration $C \in \mathcal{C}$, we add transitions to $\delta^A$ based on this accumulated information. In particular, we add a transition from $s_C$ to every tuple $(s_1,\dots,s_\ell)$ of states in $P$.

Concerning the running time of $\mathsf{BuildNFTA}(w)$, we first observe that the size (number of nodes) of the computation DAG $G$ of $M$ on input $w$ is polynomial in $|w|$. This holds since for each configuration $(\cdot,\cdot,y,z,\cdot,\cdot,\cdot)$ of $M$ on $w$, we have that $|y|,|z|\in O(\log(|w|))$. Moreover, we can construct $G$ in polynomial time in $|w|$ by first adding a node for the initial configuration of $M$ on $w$, and then following the transition function to add the remaining configurations of $M$ on $w$ and the
outgoing edges from each configuration.
Now, in the procedure $\process$ we use the auxiliary set $Q$ to ensure that we process each node of $G$ only once; thus, the number of calls to the $\process$ procedure is polynomial in the size of $|w|$. Moreover, in $\process(C)$, where $C$ is a non-labeling universal configuration that has $n$ outgoing edges $(C,C_1),\dots,(C,C_n)$, the size of the set $P$ is $|P_1| \times \dots \times |P_n|$, where $P_i=\process(C_i)$ for $i\in[n]$. When we process a non-labeling existential configuration $C$ that has $n$ outgoing edges $(C,C_1),\dots,(C,C_n)$, the size of the set $P$ is $|P_1|+\dots+ |P_n|$. We also have that $|P|=1$ for every labeling configuration $C$. Hence, in principle, many universal states along a labeled-free path could cause an exponential blow-up of the size of $P$. However, since $M$ is a well-behaved ATO, there exists $k\ge 0$ such that every labeled-free path of every computation $(V,E,\lambda)$ of $M$ on $w$ has at most $k$ nodes $v$ for which $\lambda(v)$ is a universal configuration. It is rather straightforward to see that every labeled-free path of $G$ also enjoys this property since this path occurs in some computation of $M$ on $w$. Hence, the size of the set $P$ is bounded by a polynomial in the size of $|w|$. From the above discussion, we get that $\mathsf{BuildNFTA}(w)$ runs in polynomial time in $|w|$, and the next lemma can be shown:

\def\lemmabuildnfta{
For a string $w\in \Lambda^*$, $\mathsf{BuildNFTA}(w)$ runs in polynomial time in $|w|$ and returns an NFTA $A$ such that $\spanm_M(w)=|L(A)|$.
}

\begin{lemma}\label{lem:buildnfta}
\lemmabuildnfta
\end{lemma}


\begin{algorithm}[t]
	\KwIn{A string $w \in \Lambda^*$}
	\KwOut{An NFTA $A$}
	\vspace{2mm}
	
	{Construct the computation DAG $G = (\mathcal{C},\mathcal{E})$ of $M$ on $w$;}
 {\\ $S^A := \emptyset$; $\Lambda^A := \emptyset$; $\delta^A := \emptyset$; $Q :=\emptyset$;}
	{\\ $P :=\mathsf{Process}(\rt{G})$;}
	{\\ Assuming $P = \{(s)\}$, $A := \left(S^A,\Lambda^A, s, \delta^A\right)$;}
	{\\\Return{$A$;}}
	\caption{The algorithm $\mathsf{BuildNFTA}$}\label{alg:dagtonfta}
\end{algorithm}

\begin{algorithm}[t]
	\KwIn{A configuration $C=(s,x,y,z,h_x,h_y)$ of $\mathcal{C}$}
	\vspace{2mm}
	\If{$(C,U) \in Q$}
	{\Return $U$;}
	\If{$C$ is a leaf of $G$}
	{\If{$s\in S_\labeling$}
		{{$S^A := S^A \cup \{s_C\}$;\\}
			{$\Lambda^A := \Lambda \cup \{z\}$;\\}
            \If{$s=s_\accept$}
            {$\delta^A := \delta^A \cup \{(s_C,z,())\}$;}
			{$Q :=Q \cup \{(C,\{(s_C)\})\}$;}
		}
		\lElseIf{$s=s_\accept$}{$Q := Q \cup \{(C,\{()\})\}$}
            \lElse{$Q : = Q \cup \{(C,\emptyset)\}$}}
	\Else{
		{let $C_1,\dots,C_n$ be an arbitrary order over the nodes of $G$ with an incoming edge from $C$;\\}
		\lForEach{$i \in [n]$}{
			{$P_i := \process(C_i)$}}
		\lIf{$s\in S_\exists$}
		{$P := \bigcup_{i \in [n]} P_i$}
		\lElse{    
			$P := \bigotimes_{i \in [n]} P_i$
			%
		}
		\If{$s\in S_\labeling$}
		{{$S^A := S^A \cup \{s_C\}$;\\} 
			{$\Lambda^A := \Lambda^A \cup \{z\}$;\\}
			\ForEach{$(s_1,\dots,s_\ell)\in P$}{$\delta^A := \delta^A \cup \{(s_C,z,(s_1,\dots,s_\ell))\}$;}
			{$Q := Q \cup \{(C,\{(s_C)\})\}$;}}
		\lElse{$Q := Q \cup \{(C,P)\}$}}
	{\Return $U$ with $(C,U) \in Q$;}
	\caption{The recursive procedure $\mathsf{Process}$}\label{alg:process}
\end{algorithm}

\medskip

\noindent\paragraph{\underline{Step 2: The Polynomial Function}}

\smallskip

\noindent 
%
Since $M$ is a well-behaved ATO, there exists a polynomial function $\mathsf{pol}:\mathbb{N}\rightarrow \mathbb{N}$ such that the size of every computation of $M$ on $w$ is bounded by $\mathsf{pol}(|w|)$. Clearly, $\mathsf{pol}(|w|)$ is also a bound on the size of the valid outputs of $M$ on $w$. From the proof of Lemma~\ref{lem:buildnfta}, we get that every tree accepted by $A$ has the same structure as some valid output of $M$ on $w$. Hence, we have that $\mathsf{pol}(|w|)$ is also a bound on the size of the trees accepted by $A$, and the next lemma follows:

\def\lempolynomial{
	It holds that $L(A) = \bigcup_{i = 0}^{\mathsf{pol}(|w|)} L_i(A)$.
}

\begin{lemma}\label{lem:polynomial}
\lempolynomial
\end{lemma}

Proposition~\ref{pro:reduction} readily follows from Lemmas~\ref{lem:buildnfta} and~\ref{lem:polynomial}.
}
\newcommand{\lIfElse}[3]{\lIf{#1}{#2 \textbf{else}~#3}}
\section{Combined Approximation}\label{sec:algorithms}


As discussed in Section~\ref{sec:operational-cqa}, for establishing Theorem~\ref{the:main-fpras}, it suffices to show that, for every $k>0$, $\sharp\mathsf{Repairs}[k]$ admits an FPRAS. To this end, by Theorem~\ref{the:spantl-fpras}, it suffices to show that

\begin{theorem}\label{the:in-spantl}
	For every $k>0$, $\sharp\mathsf{Repairs}[k]$ is in $\spantl$.
\end{theorem}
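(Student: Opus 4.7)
The plan is to construct a well-behaved ATO $M$ whose number of distinct valid outputs on an encoding of the input $(D, \dep, Q(\bar x), H, \bar c)$, with $H = (T, \chi, \lambda)$ being the given generalized hypertree decomposition of $Q$ of width $k$, equals $|\{D' \in \opr{D}{\dep} \mid \bar c \in Q(D')\}|$. Two ideas drive the construction: (a) use $H$ to organise an alternating top-down traversal of $T$ that existentially guesses and verifies a homomorphism $h : Q \to D$ with $h(\bar x) = \bar c$, universally branching into the children of each $T$-node; and (b) place the labeling states of $M$ so that the output of every accepting computation is a canonical encoding of the operational repair $D'$ determined by the choices made, independently of which homomorphism witnesses it. Since $|\lambda(v)| \leq k$, each step of the traversal manipulates only $O(k)$ fact indices and therefore fits in logspace.

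Setup and construction: fix a canonical node $v_R \in T$ for each relation symbol $R$ occurring in $Q$ (say, the first $v$, in some DFS order, with $R$'s atom in $\lambda(v)$); relation symbols not in $Q$ are handled at the root. Under primary keys, the operational repairs $D' \in \opr{D}{\dep}$ are exactly those subsets of $D$ retaining at most one fact per block, so $D'$ is determined by a choice from $B \cup \{\star\}$ for every block $B$ of $D$. At each $v \in T$, the ATO (i) existentially guesses an index into $D$ for each atom of $\lambda(v)$, verifying via shared-variable values inherited through the computation tape that the guesses are consistent with the parent's guesses, and at the root also enforcing that the guessed values on $\bar x$ equal $\bar c$; (ii) enters a labeling state producing a fixed ``anchor'' string identifying $v$, guaranteeing that every $T$-node contributes at least one node to the output tree; (iii) for each relation $R$ with $v_R = v$, iterates over the $R$-blocks of $D$ in canonical order, producing one labeling configuration per block whose label encodes the chosen fact — forced to equal the guessed fact for the unique $R$-block containing it, and existentially picked from $B \cup \{\star\}$ for every other $R$-block; and (iv) universally branches into the $T$-children of $v$. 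At the root, the blocks of the relation symbols not in $Q$ are enumerated as an independent chain of labeling configurations with fully existential choices.

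Well-behavedness and correctness: the computation has $O(|T| \cdot |D|)$ configurations; each configuration stores $O(k)$ fact indices and a few pointers into $T$ and $D$, fitting in $O(\log|w|)$ space on both the working and the labeling tapes; and, thanks to the anchor labels, every labeled-free path passes through at most one universal configuration (the $T$-children branching), independently of $|T|$, so $M$ is well-behaved in the sense of Definition~\ref{def:well-behaved}. Any accepting computation fixes choices $f_B \in B \cup \{\star\}$ for all blocks, yielding $D' = \{f_B : f_B \neq \star\} \in \opr{D}{\dep}$, and the combination of the hom guess in (i) with the force-equality in (iii) guarantees that the guess is a homomorphism from $Q$ to $D'$, so $\bar c \in Q(D')$. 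Conversely, for every such $D'$ a witnessing $h$ can be guessed in (i), after which the forced choices match $D'$ on the hom-blocks and the other existential choices can be set to match $D'$ elsewhere.

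The main obstacle is ensuring that each qualifying repair $D'$ is counted exactly once by $\mathsf{span}_M$, even though $D'$ may admit many witnessing homomorphisms, each giving rise to a different accepting computation of $M$. The crucial trick is to make the output labels depend on $D'$ only (via the block choices $f_B$ and the input-determined anchors), not on $h$, so that all computations witnessing a fixed $D'$ collapse to the same output tree, while distinct repairs differ in at least one $f_B$ and hence in at least one label; fixing the output-tree skeleton via the anchor labels at every $T$-node is what decouples the geometry of the output from the homomorphism used to produce it.
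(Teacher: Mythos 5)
Your construction is essentially the paper's proof: the paper also traverses the generalized hypertree decomposition top-down with an ATO, guesses the homomorphism fragment-by-fragment at each node (checking coherence only against the parent's fragment, which suffices by connectedness), emits one labeling configuration per block of each relation at a canonical covering vertex (the $\prec_T$-minimal one, matching your $v_R$), forces the block choice to agree with the guessed homomorphism image, and argues that the output tree depends only on the resulting repair $D'$ and not on the witnessing homomorphism, giving a bijection between valid outputs and $\{D' \in \opr{D}{\dep} \mid \bar c \in Q(D')\}$. The one organisational difference is that the paper front-loads your ad hoc devices (anchors at every $T$-node, root-level handling of relations absent from $Q$) into a logspace normal-form reduction -- it pads $Q$ with fresh atoms so that every relation of $D$ occurs in $Q$ and the decomposition is \emph{strongly complete} (every vertex is the minimal covering vertex of some atom, so it necessarily emits a label) -- and then invokes closure of $\spantl$ under logspace reductions; your in-machine anchors achieve the same effect.

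There is one point you assert but do not justify, and it is precisely the point the paper's normal form exists to handle: you claim every labeled-free path contains at most one universal configuration ``independently of $|T|$''. A single universal configuration of an ATO has only constantly many successors (the transition function is part of the fixed machine description), so universally branching into the children of a $T$-node with input-dependent fan-out $d$ requires a chain of $\Omega(\log d)$ (or $\Omega(d)$) universal configurations, all lying on the labeled-free path between the last label emitted at $v$ and the anchors of its children. This violates the third condition of well-behavedness unless the fan-out of $T$ is bounded. The paper resolves this by making the decomposition $2$-uniform in the normal form; you would need either to binarize $T$ the same way or to interleave dummy labeling configurations between successive binary universal branchings. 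This is an easy repair, but as written the well-behavedness claim has a gap.
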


To prove the above result, for $k>0$, we need to devise a procedure, which can be implemented as a well-behaved ATO $M_R^k$,  that take as input a database $D$, a set $\dep$ of primary keys, a CQ $Q(\bar x)$ from $\sjf$, a generalized hypertree decomposition $H$ of $Q$ of width $k>0$, and a tuple $\bar c \in \adom{D}^{|\bar x|}$, such that $\mathsf{span}_{M_{R}^{k}}(D,\dep,Q,H,\bar c)$ is precisely $|\{D' \in \opr{D}{\dep} \mid \bar c \in Q(D')\}|.$
%
For the sake of simplicity, we are going to assume, w.l.o.g., that the input to this procedure is in a certain normal form. Let us first introduce this normal form.

%

\medskip
\noindent \paragraph{Normal Form.} 
Consider a CQ $Q(\bar x)$ from $\sjf$ and a generalized hypertree decomposition $H=(T, \chi, \lambda)$ of $Q$. We say that $H$ is $\ell$-uniform, for $\ell > 0$, if every non-leaf vertex of $T$ has exactly $\ell$ children. For any two vertices $v_1,v_2$ of $T$, we write $v_1 \prec_T v_2$ if $\dept{v_1} < \dept{v_2}$, or $\dept{v_1} = \dept{v_2}$ and $v_1$ precedes $v_2$ lexicographically. It is clear that $\prec_T$ is a total ordering over the vertices of $T$.
Consider an atom $R(\bar y)$ of $Q$. We say that a vertex $v$ of $T$ is a \emph{covering vertex for} $R(\bar y)$ (in $H$) if $\bar y \subseteq \chi(v)$ and $R(\bar y) \in \lambda(v)$~\cite{BrMe23}. 
We say that $H$ is \emph{complete} if each atom of $Q$ has at least one covering vertex in $H$.
%
%
%
The \emph{$\prec_T$-minimal covering vertex} for an atom $R(\bar y)$ of $Q$ (in $H$) is the covering vertex $v$ for $R(\bar y)$ for which there is no other covering vertex $v'$ for $R(\bar y)$ with $v' \prec_T v$; if $H$ is complete, every atom of $Q$ has a $\prec_T$-minimal covering vertex. We say that $H$ is \emph{strongly complete} if it is complete, and every vertex of $T$ is the $\prec_T$-minimal covering vertex of some atom of $Q$.

Now, given a database $D$, a CQ $Q$ from $\sjf$, and a generalized hypertree decomposition $H$ of $Q$, we say that the triple $(D,Q,H)$ is in \emph{normal form} if \emph{(i)} every relation name in $D$ also occurs in $Q$, and \emph{(ii)} $H$ is strongly complete and 2-uniform.
We can show that, given a database $D$, a set $\dep$ of primary keys, a CQ $Q(\bar x)$ from $\sjf$, a generalized hypertree decomposition $H$ of $Q$ of width $k>0$, and a tuple $\bar c \in \adom{D}^{|\bar x|}$, we can convert in logarithmic space the triple $(D,Q,H)$ into a triple $(\hat{D},\hat{Q},\hat{H})$, where $\hat{H}$ is a generalized hypertree decomposition of $\hat{Q}$ of width $k+1$, that is in normal form while
$|\{D' \in \opr{D}{\dep} \mid \bar c \in Q(D')\}| = |\{D' \in \opr{\hat{D}}{\dep} \mid \bar c \in \hat{Q}(D')\}|$. 
%
This, together with the fact that $\spantl$ is closed under logspace reductions (Proposition~\ref{pro:logspace-closure}), allows us to assume, w.l.o.g., that the input $D,\dep, Q, H, \bar c$ to the procedure that we are going to devise enjoys the property that $(D,Q,H)$ is in normal form.

\OMIT{
\smallskip
In what follows, consider a query  $Q(\bar x)$ from $\sjf$, and a generalized hypertree decomposition $H=(T, \chi, \lambda)$ of $Q$. We say that $H$ is $k$-uniform, for some integer $k > 0$, if every non-leaf vertex of $T$ has exactly $k$ children. For any two vertices $v_1,v_2$ of $T$, we write $v_1 \prec_T v_2$ if $\dept{v_1} < \dept{v_2}$, or, in case $\dept{v_1} = \dept{v_2}$, if $v_1$ precedes $v_2$ lexicographically. It is easy to verify that $\prec_T$ is a total ordering of the vertices of $T$.
Consider an atom $R(\bar y)$ occuring in the body of $Q$. We say that a vertex $v$ of $T$ is a \emph{covering vertex for} $R(\bar y)$ (in $H$) if $\bar y \subseteq \chi(v)$ and $R(\bar y) \in \lambda(v)$. 
We say that $H$ is \emph{complete} if each atom in the body of $Q$ has at least one covering vertex in $H$.
%
%
%
The \emph{$\prec_T$-minimal covering vertex} for some atom $R(\bar y)$ (in $H$) is the covering vertex $v$ for $R(\bar y)$ for which there is no other covering vertex $v'$ for $R(\bar y)$ with $v' \prec_T v$; note that every atom has a $\prec_T$-minimal covering vertex, if $H$ is complete. We say that $H$ is \emph{strongly complete} if $H$ is complete, and every vertex of $T$ is the $\prec_T$-minimal covering vertex of some atom of $Q$.

\medskip
\noindent
\textbf{Normal form.} For a database $D$, a query $Q$ from $\sjf$, and a generalized hypertree decomposition $H$ of $Q$, we say that $(D,Q,H)$ is in \emph{normal form} if \emph{1)} every relation in $D$ also occurs in $Q$, and \emph{2)} $H$ is strongly complete and 2-uniform.


\def\pronormalform{
Consider a database $D$, a set of primary keys $\dep$, a query $Q(\bar x)$ in $\sjf$, a generalized hypertree decomposition $H$ of $Q$ of width $k$, and $\bar c \in \adom{D}^{|\bar x|}$. Then, there exists a database $\hat{D}$, a query $\hat{Q}(\bar x)$ from $\sjf$, and a generalized hypertree decomposition $\hat{H}$ of $\hat{Q}$ of width $k+1$ such that:
\begin{itemize}
	\item $(\hat{D},\hat{Q},\hat{H})$ is in normal form,
	\item $|\{D' \in \opr{D}{\dep} \mid \bar c \in Q(D')\}| = |\{D' \in \opr{\hat{D}}{\dep} \mid \bar c \in \hat{Q}(D')\}|$,
	\item $|\{s \in \crs{D}{\dep} \mid \bar c \in Q(s(D))\}| = |\{s \in \crs{\hat{D}}{\dep} \mid \bar c \in \hat{Q}(s(\hat{D}))\}|$, and
	\item $(\hat{D},\hat{Q},\hat{H})$ can be computed in logarithmic space w.r.t.\ $D,\dep,Q,H,\bar c$.
\end{itemize}
}

\begin{proposition}\label{pro:normal-form}
\pronormalform
\end{proposition}
With the above proposition in place, together with the fact that $\spantl$ is closed under logspace reductions, to prove that $\sharp\mathsf{Repairs}[k]$ and $\sharp\mathsf{Sequences}[k]$ are in $\spantl$, for each $k > 0$, it is enough to focus on databases $D$, queries $Q$, and generalized hypertree decompositions $H$ such that $(D,Q,H)$ is in normal form.
}


\medskip
\noindent
\underline{\textbf{The Proof of Theorem~\ref{the:in-spantl}}}
\smallskip

\noindent We show Theorem~\ref{the:in-spantl} via the procedure $\mathsf{Rep}[k]$, depicted in Algorithm~\ref{alg:repairs}, for which we can show the following technical lemma:

\def\lemrepairsato{
	For every $k > 0$, the following hold:
	\begin{enumerate}
		\item $\mathsf{Rep}[k]$ can be implemented as a well-behaved ATO $M_R^k$.
		\item For a database $D$, a set $\dep$ of primary keys, a CQ $Q(\bar x)$ from $\sjf$, a generalized hypertree decomposition $H$ of $Q$ of width $k$, and a tuple $\bar c \in \adom{D}^{|\bar x|}$, where $(D,Q,H)$ is in normal form, $\mathsf{span}_{M_R^k}(D,\dep,Q,H,\bar c) = |\{D' \in \opr{D}{\dep} \mid \bar c \in Q(D')\}|$.
	\end{enumerate}
}
\begin{lemma}\label{lem:repairs-ato}
	\lemrepairsato
\end{lemma}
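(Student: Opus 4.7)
My plan is to have $\mathsf{Rep}[k]$ simulate an ATO $M_R^k$ that traverses the generalized hypertree decomposition $H=(T,\chi,\lambda)$ top-down, existentially constructing an operational repair $D' \subseteq D$ while verifying, via the decomposition, the existence of a homomorphism witnessing $\bar c \in Q(D')$. The guiding principle is that every computation of $M_R^k$ should output a labeled tree that depends only on $D'$, so that the distinct valid outputs are in bijection with the operational repairs of $D$ that entail $\bar c$. Concretely, the output tree mirrors $T$ (which in normal form is $2$-uniform and strongly complete): at each vertex $v$ of $T$, the output lists, as a chain of labels, the block choices of $D'$ for the primary-key blocks \emph{owned} by $v$, that is, the $R$-blocks of $D$ where $R$ is the relation of some atom with $v$ as its $\prec_T$-minimal covering vertex. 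Since $Q$ is self-join-free and, in normal form, every relation of $D$ appears in $Q$, the blocks of $D$ are partitioned across the vertices of $T$.

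\textbf{Structure of the ATO.} At a vertex $v$, the ATO first existentially guesses the assignment $h_v$ of the variables in $\chi(v)$ not already fixed by the parent, storing it on the logspace working tape, and verifies that $R(h_v(\bar y)) \in D$ for every atom $R(\bar y) \in \lambda(v)$ whose $\prec_T$-minimal covering vertex is $v$, as well as that $h_v$ agrees with $\bar c$ on the answer variables lying in $\chi(v)$. It then enumerates, in a fixed order, each owned block $B$ of $v$ and, for each $B$, enters a labeling state that emits either the fact of $B$ forced by an owned atom (whenever $h_v$ forces such a fact) or an existentially guessed element of $B \cup \{\text{none}\}$ otherwise; by self-join-freeness, at most one atom owned by $v$ can constrain any single block, so the emission is well-defined. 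After the block enumeration, the ATO enters a universal state with exactly two branches, one per child of $v$ (by $2$-uniformity), each branch inheriting the projection of $h_v$ onto the variables shared with the child; leaves of $T$ accept once their block enumeration is complete.

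\textbf{Well-behavedness, span identity, and main obstacle.} The computation tree has size polynomial in $|D|+|Q|$ since $|T|$ and the total number of blocks are polynomial, and the working and labeling tapes only ever hold a constant number of indices and a single fact of total size $O(\log(|D|+|Q|))$. Every labeled-free path either runs sequentially through labels at a single vertex (zero universal configurations) or consists of the last label of a parent followed by exactly one universal branch to a child's first label (one universal configuration), so the bound on universal configurations per labeled-free path holds with constant $1$; hence $M_R^k$ is well-behaved. Fixing the enumeration orders, the sequence of emitted labels along any output tree depends only on $D' = \bigcup_B (D' \cap B)$, so all accepting computations for the same $D'$ produce the same output tree, and conversely a computation accepts iff the per-vertex guesses $h_v$ glue into a homomorphism $h : Q \to D'$ with $h(\bar x) = \bar c$, which by the standard correctness of generalized hypertree decompositions is equivalent to $\bar c \in Q(D')$. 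This yields $\mathsf{span}_{M_R^k}(D,\dep,Q,H,\bar c) = |\{D' \in \opr{D}{\dep} \mid \bar c \in Q(D')\}|$. The main obstacle is precisely this one-output-per-repair property: different homomorphisms witnessing the same $D'$ must not produce different outputs, which I handle by keeping all homomorphism-related data on the working tape and emitting labels only for block choices, using self-join-freeness to make the block-label sequence a well-defined function of $D'$.
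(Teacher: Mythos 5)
Your construction is essentially the paper's: the same top-down traversal of the decomposition, the same division of labour between existential guesses (the homomorphism data, kept on the working tape) and labeling configurations (one emitted node per block, with each block assigned to the $\prec_T$-minimal covering vertex of the unique atom over its relation), the same use of $2$-uniformity and strong completeness for well-behavedness, and the same ``one output per repair'' argument for the span identity. All of that is sound and matches the paper's Algorithm~\ref{alg:repairs} and its proof.

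There is, however, one concrete error in the block-emission step. For a block $B$ not forced by the guessed homomorphism you emit ``an existentially guessed element of $B \cup \{\text{none}\}$'' regardless of $|B|$. When $|B| = 1$ this is wrong: the unique fact of a singleton block participates in no key violation, so no justified operation can ever remove it (an operation $-F$ requires $F \subseteq \{f,g\}$ with $\{f,g\} \not\models \dep$, and such a $g$ would have to lie in the same block as $f$). Hence that fact belongs to every database in $\opr{D}{\dep}$; the reachable final states of a block are any single fact, or the empty set only when $|B| \geq 2$. With your rule the machine also accepts outputs encoding consistent subsets of $D$ that drop a singleton-block fact, and these are not operational repairs, so $\mathsf{span}_{M_R^k}$ strictly overcounts $|\{D' \in \opr{D}{\dep} \mid \bar c \in Q(D')\}|$ whenever some relation of $Q$ has a singleton block in $D$ whose fact is not forced by every witnessing homomorphism. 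The paper avoids this with a dedicated deterministic case: if $B = \{\beta\}$ then keep $\beta$. Adding that case restores your bijection, and the rest of your argument then goes through as in the paper.
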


Theorem~\ref{the:in-spantl} follows from Lemma~\ref{lem:repairs-ato}. The rest of this section is devoted to discussing the procedure $\mathsf{Rep}[k]$ and why Lemma~\ref{lem:repairs-ato} holds. We first give some auxiliary notions and a discussion on how the valid outputs of $\mathsf{Rep}[k]$ (which are labeled trees) look like, which will help the reader to understand how $\mathsf{Rep}[k]$ works.

%

\medskip
\noindent \paragraph{Auxiliary Notions.}
Consider a database $D$, a set $\dep$ of primary keys, and a fact $\alpha = R(c_1,\ldots,c_n) \in D$. The \emph{key value} of $\alpha$ w.r.t. $\dep$ is
\begin{eqnarray*}
	\keyval{\dep}{\alpha}\
	= \left\{
	\begin{array}{ll}
		\langle R, \langle c_{i_1},\ldots,c_{i_m} \rangle \rangle & \text{if }\key{R} = \{i_1,\ldots,i_m\} \in \dep,\\
		&\\
		\langle R, \langle c_1,\ldots,c_n \rangle \rangle & \text{otherwise.}
	\end{array} \right.
\end{eqnarray*}
Moreover, we let $\block{\dep}{\alpha,D} = \{\beta \in D \mid \keyval{\dep}{\beta} = \keyval{\dep}{\alpha}\}$, 
and $\block{\dep}{R,D} = \{ \block{\dep}{R(\bar c),D} \mid R(\bar c) \in D\}$, for a relation name $R$.
%


\begin{example}\label{ex:repairs-algo}
	Consider now the following Boolean CQ $Q$
	\[
	\text{Ans}() \text{ :- } P(x,y),S(y,z),T(z,x),U(y,w),
	\]
	which has generalized hypertreewidth $2$. A generalized hypertree decomposition $H=(T,\chi,\lambda)$ of $Q$ of width $2$ is
		
	\medskip
	\centerline{\includegraphics[width=.37\textwidth]{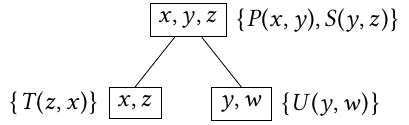}}
	\medskip
	
	\noindent where each box denotes a vertex $v$ of $T$, and the content of the box is 
	$\chi(v)$ while on its side is
	$\lambda(v)$.
	Consider now the set of primary keys $\dep = \{\key{R} = \{1\} \mid R \in \{P,S,T,U\}\}$ and the database
	\begin{multline*}
		D\ =\ \{ P(a_1,b), P(a_1,c), P(a_2,b), P(a_2,c), P(a_2,d),  S(c,d), S(c,e), \\
		T(d,a_1), U(c,f), U(c,g), U(h,i), U(h,j), U(h,k)\}.
	\end{multline*}
	%
	Each operational repair $D'$ in $\opr{D}{\dep}$ is such that, for each relation $R$ in $D$, and for each block $B \in \block{\dep}{R,D}$, either $B = \{\beta\}$ and $\beta \in D'$, or \emph{at most} one fact of $B$ occurs in $D'$; recall that justified operations can remove pairs of facts, and thus a repairing sequence can leave a block completely empty.
	For example, the database
	$$ D' = \{P(a_1,c), S(c,d), T(d,a_1), U(c,f), U(h,i)\}$$
	is an operational repair of $\opr{D}{\dep}$ where we keep the atoms occurring in $D'$ from their respective block in $D$, while for all the other blocks we do not keep any atom.
	Assuming a lexicographical order among the key values of all facts of $D$, we can unambiguously encode the database $D'$ as the following labeled tree $\mathcal{T}$:
	
	\vspace*{2mm}
	\centerline{\includegraphics[width=0.56\textwidth]{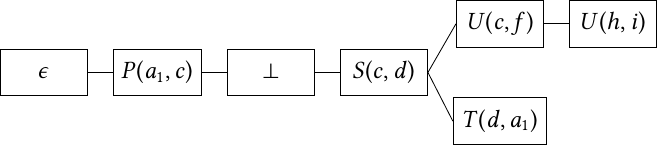}}
	\vspace*{2mm}
	
	\noindent
	The root of $\mathcal{T}$ is labeled with the empty string $\epsilon$, and the label of each other node denotes the choice made on a block $B$ of $\block{\dep}{R,D}$, for some relation name $R$, in order to construct $D'$, i.e., either we keep a certain fact from $B$, or we keep none ($\bot$).
	Crucially, the shape of $\mathcal{T}$ is determined by $H$. In particular, since the root $v$ of $H$ is such that $\lambda(v)$ mentions the relation names $P$ and $S$, then, after its root, $\mathcal{T}$ proceeds with a path where nodes correspond to choices relative to the blocks of $P$ and $S$, and the choices are ordered according to the order over key values discussed before. Then, since $v$ has two children $u_1,u_2$ such that $\lambda(u_1)$ and $\lambda(u_2)$ mention $T$ and $U$, respectively, $\mathcal{T}$ continues in two parallel branches, where in the first branch it lists the choices for the blocks of $T$, while in the second branch it lists the choices for the blocks of $U$.\hfill\markfull
\end{example}

We can now discuss the procedure $\mathsf{Rep}[k]$. In what follows, a \emph{tuple mapping} is an expression of the form $\bar x \mapsto \bar c$, where $\bar x$ is a tuple of terms and $\bar c$ a tuple of constants with $|\bar x| = |\bar c|$, and we say that a set $A$ of tuple mappings is \emph{coherent} if, for any $\bar x \mapsto \bar t \in A$, $\bar x[i] \in \ins{C}$ implies $\bar x[i] = \bar t[i]$, for all $i \in [|\bar x|]$, and for any $\bar x \mapsto \bar t, \bar y \mapsto \bar u \in A$, $\bar x[i] = \bar y[j]$ implies $\bar t[i] = \bar u[j]$, for all $i \in [|\bar x|]$, $j \in [|\bar y|]$.

\begin{algorithm}[t]
	\SetInd{0.7em}{0.7em}
	\DontPrintSemicolon
	\SetArgSty{textnormal}
	\KwIn{A database $D$, a set $\dep$ of primary keys, a CQ $Q(\bar x)$ from $\sjf$, a generalized hypertree decomposition $H=(T,\chi,\lambda)$ of $Q$ of width $k$, and $\bar c \in \adom{D}^{|\bar x|}$, where $(D,Q,H)$ is in normal form}
	\vspace{2mm}
	
	{$v := \mathsf{root}(T)$; $A := \emptyset$;\\}
	\vspace{1mm}
	
	{Assuming $\lambda(v) = \{ R_{i_1}(\bar y_{i_1}), \ldots, R_{i_\ell}(\bar y_{i_\ell})\}$, guess a set $A' =$ $\{ \bar y_{i_1} \mapsto \bar c_1,\ldots,\bar y_{i_\ell} \mapsto \bar c_\ell\}$, with $R_{i_j}(\bar c_j) \in D$ for $j \in [\ell]$, and verify $A \cup A' \cup \{\bar x \mapsto \bar c\}$ is coherent; if not, \textbf{reject};\\}\label{line:begin-repairs}
	
	\vspace{1mm}
	\For{$j=1, \ldots, \ell$}{
		\If{$v$ is the $\prec_T$-minimal covering vertex for $R_{i_j}(\bar y_{i_j})$}{\label{line:minimalc-repairs}
			\ForEach{$B \in \block{\dep}{R_{i_j},D}$}{ \label{line:being-choice-repairs}
				\lIf{$B = \{\beta\}$}{$\alpha := \beta;$}
				\lElseIf{$R_{i_j}(\bar c_j) \in B$}{$\alpha := R_{i_j}(\bar c_j)$;}
				\lElse{Guess $\alpha \in B \cup \{\bot\}$;}
				
				{Label with $\alpha$;\\} \label{line:end-choice-repairs}
			}
		}
	}
	
	\If{$v$ is not a leaf of $T$}{\label{line:end-repairs}
		Assuming $u_1,u_2$ are the (only) children of $v$ in $T$, universally guess $i \in \{1,2\}$;\\
		{$v := u_i$; $A := A'$;\\}
		{\textbf{goto} line \ref{line:begin-repairs};\\}
	}\lElse{\textbf{accept};}
	\caption{The alternating procedure $\mathsf{Rep}[k]$}\label{alg:repairs}
\end{algorithm}

\medskip
\noindent \paragraph{The Procedure $\mathsf{Rep}[k]$.} Roughly, $\mathsf{Rep}[k]$ describes the computation of a well-behaved ATO $M_{R}^{k}$ that, given $D,\dep,Q,H,\bar c$, with $(D,Q,H)$ being in normal form, non-deterministically outputs a labeled tree which, if accepted, encodes in the way discussed above 
a repair of $\{D' \in \opr{D}{\dep} \mid \bar c \in Q(D') \}$.

Let us first clarify some of the conventions that we use in the pseudocode of $\mathsf{Rep}[k]$. As it is standard when describing alternating Turing machines (without output) using pseudocode, all operations/instructions, unless performed ``universally'' or by a ``guess'', are assumed to be deterministic and implemented by means of subsequent non-labeling existential configurations that write nothing on the labeling tape, i.e., each configuration $C$ visited by the underlying machine $M_{R}^{k}$ is existential and non-labeling, and $C$ has exactly one subsequent configuration $C'$, which is also existential and non-labeling.
%
When we write ``Guess $o$'', for some object $o$, we mean that $M_{R}^{k}$ non-deterministically writes on the working tape the binary encoding of the object $o$; the machine writes nothing on the labeling tape, and only visits non-labeling existential configurations while doing so.
When we write ``Universally guess $o$'', for some object $o$, we mean that $M_{R}^{k}$ first enters a non-labeling universal configuration, and then non-deterministically writes the binary encoding of $o$ on the working tape; the machine writes nothing on the labeling tape, only visits non-labeling universal configurations while doing so, and when the encoding of $o$ is completely written, it moves to a non-labeling existential configuration.

Finally, we adopt the following convention regarding labeling configurations and the labeling tape. 
We assume that at the very beginning of $\mathsf{Rep}[k]$, the machine $M_{R}^{k}$ immediately and deterministically moves from the initial configuration, which is always labeling, to a non-labeling existential configuration, without writing anything on the labeling tape, and then proceeds with its computation. Recall that having the initial configuration as labeling, the root of every output tree of $M_{R}^{k}$ is labeled with the empty string.
Finally, when we write ``Label with $o$'', for some object $o$, we mean that $M_{R}^{k}$ deterministically writes the binary encoding of $o$ on the labeling tape (always via non-labeling existential configurations), then moves to a labeling existential configuration, and then moves to a non-labeling existential configuration. This implies that after the execution of ``Label with $o$'', a new node of the output tree is created, and its label is the encoding of $o$. We can now discuss $\mathsf{Rep}[k]$.

Starting from the root $v$ of $H$, $\mathsf{Rep}[k]$ guesses a set $A'$ of tuple mappings from the tuples of terms in $\lambda(v)$ to constants. This set witnesses that the tree being constructed encodes an operational repair $D'$ such that $\bar c \in Q(D')$. For this to be the case, $A'$ must be coherent with $\{\bar x \mapsto \bar c\}$ and the set $A$ of tuple mappings guessed at the previous step.
%
%
%
Having $A'$ in place, lines~\ref{line:being-choice-repairs}-\ref{line:end-choice-repairs} non-deterministically construct a path of the output tree that encodes, as discussed above,
the choices needed to obtain the operational repair $D'$ being considered. In the statement ``Label with $\alpha$'', we choose as the binary encoding of $\alpha$ a \emph{pointer} to $\alpha$ in the input tape; writing a pointer to $\alpha$ allows to use at most logarithmic space on the labeling tape.
Note that due to the ``if'' statement in line~\ref{line:minimalc-repairs}, $\mathsf{Rep}[k]$ does not consider the same block more than once when visiting different nodes of $H$. Moreover, since $H$ is strongly complete, there is a guarantee that, for every visited node $v$ of $H$, $\mathsf{Rep}[k]$ will always output a node of the tree encoding $D'$ (this guarantees that $M_{R}^{k}$ is well-behaved).
The procedure then proceeds in a parallel branch for each child of $v$.
Since $H$ enjoys the connectedness condition, only the current set $A'$ of tuple mappings needs to be kept in memory when moving to a new node of $H$; this set can be stored in logarithmic space using pointers. Moreover, since $H$ is 2-uniform, only a constant number of non-labeling universal configurations are needed to universally move to each child of $v$, and thus $M_{R}^{k}$ is well-behaved.
%
When all branches accept, the output of $\mathsf{Rep}[k]$ is a tree that encodes an operational repair $D'$ of $D$ w.r.t.\ $\dep$ such that $\bar c \in Q(D')$ since each relation of $D$ occurs in $Q$, and thus no blocks of $D$ are left unrepaired.

Let us stress that by guessing $\alpha \in B$ in line 8 of $\mathsf{Rep}[k]$, without considering the symbol $\bot$, we can show that the problem $\sharp\mathsf{SRepairs}[k]$, defined as $\sharp\mathsf{Repairs}[k]$ with the difference that we focus on the classical subset repairs from~\cite{ArBC99}, is in $\mathsf{SpanTL}$, and thus it admits an FPRAS. To the best of our knowledge, this is the first result concerning the combined complexity of the problem of counting classical repairs entailing a query.

\OMIT{
We proceed to show item (1) of Theorem~\ref{the:in-spantl}. This is done via the procedure $\mathsf{Rep}[k]$, depicted in Algorithm~\ref{alg:repairs}, for which we can show the following technical lemma:
\OMIT{
\def\thmrepairsspantl{
	For every $k > 0$, $\sharp\mathsf{Repairs}[k] \in \spantl$.
}

\begin{theorem}\label{thm:repairs-spantl}
\thmrepairsspantl
\end{theorem}

The procedure that 

To prove the above theorem, we introduce a procedure (i.e., Algorithm~\ref{alg:repairs}) describing the computation of an ATO $M$ with input a database $D$, a set $\dep$ of primary keys, a query $Q(\bar x)$ from $\sjf$, a generalized hypertree decomposition $H$ of $Q$ of width $k$, and a tuple $\bar c \in \adom{D}^{|\bar x|}$, such that $(D,Q,H)$ is in normal form. With Algorithm~\ref{alg:repairs} in place, our goal is to prove the following lemma.

\def\lemrepairsato{
	For every $k > 0$, the following hold:
	\begin{itemize}
		\item Algorithm~\ref{alg:repairs} can be implemented as a well-behaved ATO $M$, and
		\item for every database $D$, set $\dep$ of primary keys, query $Q(\bar x)$, generalized hypertree decomposition $H$ of $Q$ of width $k$, and $\bar c \in \adom{D}^{|\bar x|}$ with $(D,Q,H)$ in normal form, $\mathsf{span}_M(D,\dep,Q,H,\bar c) = \sharp\mathsf{Repairs}[k](D,\dep,Q,H,\bar c)$.
	\end{itemize}
}
}

\def\lemrepairsato{
	For every $k > 0$, the following hold:
	\begin{enumerate}
		\item $\mathsf{Rep}[k]$ can be implemented as a well-behaved ATO $M_R^k$.
		\item For a database $D$, a set $\dep$ of primary keys, a CQ $Q(\bar x)$ from $\sjf$, a generalized hypertree decomposition $H$ of $Q$ of width $k$, and a tuple $\bar c \in \adom{D}^{|\bar x|}$, where $(D,Q,H)$ is in normal form, $\mathsf{span}_{M_R^k}(D,\dep,Q,H,\bar c) = |\{D' \in \opr{D}{\dep} \mid \bar c \in Q(D')\}|$.
	\end{enumerate}
}
\begin{lemma}\label{lem:repairs-ato}
\lemrepairsato
\end{lemma}

Item (1) of Theorem~\ref{the:in-spantl} readily follows from Lemma~\ref{lem:repairs-ato}. The rest of this section is devoted to discussing the procedure $\mathsf{Rep}[k]$ and why Lemma~\ref{lem:repairs-ato} holds. But first we need some auxiliary notions.

%

\medskip
\noindent \paragraph{Auxiliary Notions.}
Consider a database $D$, a set $\dep$ of primary keys, and a fact $\alpha = R(c_1,\ldots,c_n) \in D$. The \emph{key value} of $\alpha$ w.r.t. $\dep$ is
\begin{eqnarray*}
	\keyval{\dep}{\alpha}\
	= \left\{
	\begin{array}{ll}
		\langle R, \langle c_{i_1},\ldots,c_{i_m} \rangle \rangle & \text{if }\key{R} = \{i_1,\ldots,i_m\} \in \dep,\\
		&\\
		\langle R, \langle c_1,\ldots,c_n \rangle \rangle & \text{otherwise.}
	\end{array} \right.
\end{eqnarray*}
Moreover, we define
\[
\block{\dep}{\alpha,D}\ =\ \{\beta \in D \mid \keyval{\dep}{\beta} = \keyval{\dep}{\alpha}\},
\]
and, for a relation name $R$, we define
\[
\block{\dep}{R,D}\ =\ \{ \block{\dep}{R(\bar c),D} \mid R(\bar c) \in D\}.
\]
A \emph{tuple mapping} is an expression of the form $\bar x \mapsto \bar c$, where $\bar x$ is a tuple of variables and $\bar c$ a tuple of constants with $|\bar x| = |\bar c|$. A set $A$ of tuple mappings is \emph{coherent} if for any two tuple mappings $\bar x \mapsto \bar t, \bar y \mapsto \bar u \in A$, $\bar x[i] = \bar y[j]$ implies $\bar t[i] = \bar u[j]$, for all $i \in [|\bar x|]$, $j \in [|\bar y|]$.
%
We are now ready to discuss the procedure $\mathsf{Rep}[k]$.

\begin{algorithm}[t]
	\SetInd{0.7em}{0.7em}
	\DontPrintSemicolon
	\SetArgSty{textnormal}
	\KwIn{A database $D$, a set $\dep$ of primary keys, a CQ $Q(\bar x)$ from $\sjf$, a generalized hypertree decomposition $H=(T,\chi,\lambda)$ of $Q$ of width $k$, and $\bar c \in \adom{D}^{|\bar x|}$, where $(D,Q,H)$ is in normal form}
	\vspace{2mm}
	
	{$v := \mathsf{root}(T)$; $A := \emptyset$;\\}
	\vspace{1mm}
	
	{Assuming $\lambda(v) = \{ R_{i_1}(\bar y_{i_1}), \ldots, R_{i_\ell}(\bar y_{i_\ell})\}$, guess a set $A' =$ $\{ \bar y_{i_1} \mapsto \bar c_1,\ldots,\bar y_{i_\ell} \mapsto \bar c_\ell\}$, with $R_{i_j}(\bar c_j) \in D$ for $j \in [\ell]$, and verify $A \cup A' \cup \{\bar x \mapsto \bar c\}$ is coherent; if not, \textbf{reject};\\}\label{line:begin-repairs}
	
	\vspace{1mm}
	\For{$j=1, \ldots, \ell$}{
		\If{$v$ is the $\prec_T$-minimal covering vertex for $R_{i_j}(\bar y_{i_j})$}{
			\ForEach{$B \in \block{\dep}{R_{i_j},D}$}{
				\lIf{$B = \{\beta\}$}{$\alpha := \beta;$}
				\lElseIf{$R_{i_j}(\bar c_j) \in B$}{$\alpha := R_{i_j}(\bar c_j)$;}
				\lElse{Guess $\alpha \in B \cup \{\bot\}$;}
				
				{Label with $\alpha$;\\}
			}
		}
	}
	
	\If{$v$ is not a leaf of $T$}{\label{line:end-repairs}
		{Universally guess a child $u$ of $v$ in $T$;\\}
		{$v := u$; $A := A'$;\\}
		{\textbf{goto} line \ref{line:begin-repairs};\\}
	}\lElse{\textbf{accept};}
	\caption{The alternating procedure $\mathsf{Rep}[k]$}\label{alg:repairs}
\end{algorithm}

\medskip
\noindent \paragraph{The Procedure $\mathsf{Rep}[k]$.} Let us first say that we assume that $\mathsf{Rep}[k]$ immediately moves from the initial (necessarily labeling) state to a non-labeling state, and then proceeds with its computation. Moreover, we use the statement ``Label with $o$'', for some object $o$, to say that the underlying ATO writes the encoding of $o$ in the labeling tape, then moves to a labeling state, and finally moves to a non-labeling state.
At a high level, $\mathsf{Rep}[k]$ traverses the generalized hypertree decomposition $(T,\chi,\lambda)$ of the input CQ $Q$. Starting from the root, for each node $v$ of $T$, the procedure guesses a mapping from the variables of the atoms in $\lambda(v)$ to constants (the set of tuple mappings $A'$) that witnesses a homomorphism from $\lambda(v)$ to $D$. Of course, such a homomorphism must map $\bar x$ to $\bar c$, and must be also coherent with the homomorphism used to map the atoms of the previously visited node of $T$ (at the beginning, no previous node has been visited, and thus, $A = \emptyset$). 
Note that the set $A'$ contains a constant number of tuple mappings (i.e., at most $k$) of the form $\bar y_{i_j} \mapsto \bar c_j$, which can be stored in logarithmic space using two pointers to $\bar y_{i_j}$ and $\bar c_j$, respectively, since both appear in the input.

Having $A'$ in place, the procedure now non-deterministically outputs a path, whose vertices are labeled with all the facts over the relation names in $\lambda(v)$ of some operational repair $D'$. That is, for each relation name $R_{i_j}$ in $\lambda(v)$, in some fixed order, the procedure chooses, for each block $B$ of $R_{i_j}$, again in some fixed order, which fact of $B$ (if any) should be part of $D'$, and outputs this choice as a node labeled with this fact. If $B$ is a singleton, or the mappings of $A'$ force a certain fact of $B$ to be part of $D'$, then the choice is deterministic; otherwise, a fact from $B$ (or none at all) is chosen non-deterministically. To use no more than logarithmic space in the labeling tape, a \emph{pointer} to the fact is written instead of the fact.

Since multiple nodes of $T$ might share the same atom, the choices for the blocks  relative to the atom with relation $R_{i_j}$ are made only if $v$ is the $\prec_T$-minimal covering vertex. If not, it means that the procedure will make such a choice in a vertex $v'$ of $T$ with $v' \prec_T v$. Note that thanks to the fact that $Q$ is self-join free, no conflicting choices can be made on two different nodes for the same relation. Moreover, since $(T,\chi,\lambda)$ is strongly complete, $v$ is the $\prec_T$-minimal covering vertex of at least one atom in $\lambda(v)$, and thus, the procedure always outputs some facts, for each vertex of $T$ (this is needed to guarantee that the underlying ATO is well-behaved).

Once the path relative to $v$ is computed, if the current vertex $v$ is not a leaf, then the procedure proceeds in parallel to all children of $v$ in $T$, and updates the last set of mappings $A$ to be the current one (that is, $A'$). Note that the procedure does not need to remember $A \cup A'$ due to the connectedness condition satisfied by $H$. Moreover, since $(T,\chi,\lambda)$ is 2-uniform, the underlying ATO does not require an arbitrarily long sequence of universal, non-labeling configurations, which ensures that it is well-behaved.
Now, if $v$ is a leaf, then there are no other vertices to process, and the procedure accepts on this branch of the computation. If all branches of the computation are accepting, the output is a tree, encoding an operational repair $D'$ of $D$ w.r.t.\ $\dep$ such that $\bar c \in Q(D')$; in fact, since every relation of $D$ occurs in $Q$, no blocks of $D$ are left unrepaired.

We proceed to give an example that illustrates the above discussion concerning the procedure $\mathsf{Rep}[k]$.

\begin{example}\label{ex:repairs-algo}
Consider the Boolean CQ $Q$ 
\[
\text{Ans}() \text{ :- } P(x,y),S(y,z),T(z,x),U(y,w),
\]
which has generalized hypertreewidth $2$. A generalized hypertree decomposition $H=(T,\chi,\lambda)$ of $Q$ of width $2$ is the following:

\vspace*{2mm}
\centerline{\includegraphics[width=.25\textwidth]{example-htd}}
\vspace*{2mm}

\noindent where each box denotes a vertex $v$ of $T$, and the content of the box is the set of variables $\chi(v)$ while on its side is the set $\lambda(v)$.
Consider now the set of primary keys $\dep = \{\key{R} = \{1\} \mid R \in \{P,S,T,U\}\}$, and the database $D$
\begin{multline*}
\{ P(a_1,b), P(a_1,c), P(a_2,b), P(a_2,c), P(a_2,d),  S(c,d), S(c,e), \\
T(d,a_1), U(c,f), U(c,g), U(h,i), U(h,j), U(h,k)\}.
\end{multline*}
Note that $(D,Q,H)$ is in normal form. A possible accepting computation of the ATO described by $\mathsf{Rep}[2]$ with input $D$, $\dep$, $Q$, $H$, and the empty tuple $()$, is the one where, for the root of $H$, $A' = \{ (x,y) \mapsto (a_1,c), (y,z) \mapsto (c,d)\}$ and the computed path contains three nodes labeled with $P(a_1,c)$, $\bot$, and $S(c,d)$, respectively. Then, for the first child of the root, $A' = \{(z,x) \mapsto (d,a_1)\}$ and the path is made of one node with label $T(d,a_1)$, while for the second child, $A' = \{(y,w) \mapsto (c,f)\}$ and the path contains two nodes labeled with $U(c,f)$ and $U(h,i)$ respectively. The overal output of this computation is the tree\footnote{For the sake of presentation, we omit the root node that is always labeled with the empty string (recall that the initial state of an ATO is always labeling).} on the left below

\vspace*{2mm}
\centerline{\includegraphics[width=0.21\textwidth]{example-tree1} \qquad \includegraphics[width=0.21\textwidth]{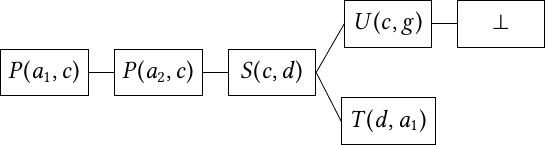}}
\vspace*{2mm}

\noindent that encodes the operational repair $D'$ where $P(a_1,c)$, $S(c,d)$, $T(d,a_1)$, $U(c,f)$, and $U(h,i)$ are all kept from their corresponding block, while no fact from $P(a_2,b), P(a_2,c), P(a_2,d)$ is kept. It is easy to verify that $D' \models Q$. Another valid output is the tree on the right in the figure above,
where the procedure chooses, at the root, one fact for each block of $P$, at the first child, one fact for each block of $T$, and at the second child, the fact $U(c,g)$ is chosen, while no fact from $U(h,i),U(h,j),U(h,k)$ is chosen.\hfill\markfull
\end{example}
}

\section{Concluding Remark}\label{sec:conclusion}
%


Considering all the operational repairs to be equally important is a self-justified choice. However, as discussed in~\cite{CLPS22}, this does not take into account the support in terms of the repairing process. In other words, an operational repair obtained by very few repairing sequences is equally important as an operational repair obtained by several repairing sequences. These considerations lead to the notion of uniform sequences, where all complete repairing sequences are considered equally important. We can then define the problem $\sharp\mathsf{Sequences}[k]$, similarly to $\sharp\mathsf{Repairs}[k]$, that counts the complete repairing sequences leading to an operational repair that entails the tuple in question. We can show that this problem can be placed in $\spantl$, and thus, $\ocqa$ admits an FPRAS in combined complexity also when considering uniform sequences.

\OMIT{
The take-home message of our work is that uniform operational CQA is flexible enough to lead to approximability results that go beyond the simple case of primary keys, which seems to be the limit of the classical approach to CQA.

Although we understand well uniform operational CQA, there are still interesting open problems  concerning approximability:
\begin{enumerate}
	\item the case of keys and uniform repairs (we only have a negative result for the problem of counting repairs), 
	\item the case of keys/FDs and uniform sequences, and 
	\item the case of FDs and uniform operations (we only have a positive result assuming singleton operations).
\end{enumerate}

Another interesting direction for future research is to consider conceptually relevant distributions that deviate from the uniform ones considered in this work, and perform the same complexity analysis of exact and approximate operational CQA.
}

\begin{acks}
	We thank the anonymous referees for their feedback. This
	work was funded by the \grantsponsor{eu-pnrr}{European Union}{http://dx.doi.org/10.13039/501100000780} - Next Generation
	EU under the MUR PRIN-PNRR grant \grantnum{eu-pnrr}{P2022KHTX7} ``DISTORT'',
	and by the \grantsponsor{epsrc}{EPSRC}{http://dx.doi.org/10.13039/501100000266} grant \grantnum{epsrc}{EP/S003800/1} ``EQUID''.
\end{acks}



\newpage
\appendix
\section{Proof of Theorem~\ref{the:ocqa-exact}}

We prove the following result:

\begin{manualtheorem}{\ref{the:ocqa-exact}}
\theocqaexact
\end{manualtheorem}

For brevity, for a fact $f$, we write $-f$ instead of $-\{f\}$ to denote the operation that removes $f$.
%

%

Fix an arbitrary $k>0$. We are going to show that $\ocqa^\ur[\sjf \cap \ghw_k]$ is $\sharp ${\rm P}-hard. Let us first introduce the $\sharp ${\rm P}-hard problem that we are going to reduce to $\ocqa^\ur[\sjf \cap \ghw_k]$. 
Consider the undirected bipartite graph $H = (V_H,E_H)$, depicted in Figure~\ref{fig:hard-graph}, where $V_{H,L} = \{1_L,0_L,?_L\}$ and $V_{H,R} = \{1_R,0_R,?_R\}$, with $\{V_{H,L},V_{H,R}\}$ being a partition of $V_H$, and $E_H = \{\{u,v\} \mid (u,v) \in (V_{H,L} \times V_{H,R}) \setminus \{(1_L,1_R)\}\}$.

\medskip
\begin{figure}[ht]
	\centering
	\begin{tikzpicture}[thick, main/.style = {draw, circle}]
	\node[main] (1L) {$1_L$}; 
	\node[main] (0L) [below=15mm of 1L] {$0_L$}; 
	\node[main] (?L) [below=15mm of 0L] {$?_L$}; 
	\node[main] (1R) [right=50mm of 1L] {$1_R$}; 
	\node[main] (0R) [below=15mm of 1R] {$0_R$}; 
	\node[main] (?R) [below=15mm of 0R] {$?_R$};
	\draw (1L) -- (0R);
	\draw (1L) -- (?R);
	\draw (0L) -- (1R);
	\draw (0L) -- (0R);
	\draw (0L) -- (?R);
	\draw (?L) -- (1R);
	\draw (?L) -- (0R);
	\draw (?L) -- (?R);
	\end{tikzpicture}
	\caption{The graph $H$.}
	\label{fig:hard-graph}
\end{figure}
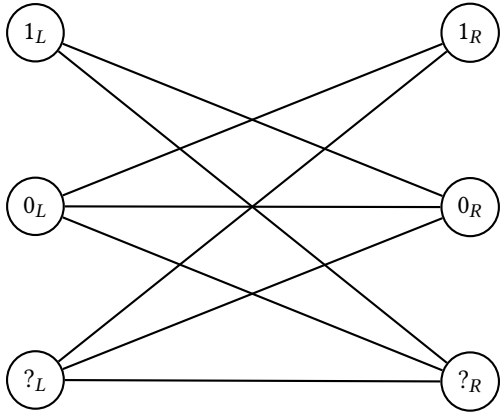

\medskip

\noindent Given an undirected graph $G = (V_G,E_G)$, a homomorphism from $G$ to $H$ is a mapping $h : V_G \rightarrow V_H$ such that $\{u,v\} \in E_G$ implies $\{h(u),h(v)\} \in E_H$. We write $\mathsf{hom}(G,H)$ for the set of homomorphisms from $G$ to $H$.
The problem $\sharp H\text{-}\mathsf{Coloring}$ is defined as follows:

\medskip

\begin{center}
	\fbox{\begin{tabular}{ll}
			{\small PROBLEM} : & $\sharp H\text{-}\mathsf{Coloring}$\\
			{\small INPUT} : & A connected undirected graph $G$.\\
			{\small OUTPUT} : &  The number $|\mathsf{hom}(G,H)|$.
	\end{tabular}}
\end{center}

\medskip

\noindent It is implicit in~\cite{Dyer00} that $\sharp H\text{-}\mathsf{Coloring}$ is $\sharp ${\rm P}-hard. In fact,~\cite{Dyer00} establishes the following dichotomy result: $\sharp \hat{H}\text{-}\mathsf{Coloring}$ is $\sharp ${\rm P}-hard if $\hat{H}$ has a connected component which is neither an isolated node without a loop, nor a complete graph with all loops present, nor a complete bipartite graph without loops; otherwise, it is solvable in polynomial time. Since our fixed graph $H$ above consists of a single connected component which is neither a single node, nor a complete graph with all loops present, nor a complete bipartite graph without loops (the edge $(1_L,1_R)$ is missing), we conclude that $\sharp H\text{-}\mathsf{Coloring}$ is indeed $\sharp ${\rm P}-hard.
Note that we can assume the input graph $G$ to be connected since the total number of homomorphisms from a (non-connected) graph $G'$ to $H$ is the product of the number of homomorphisms from each connected component of $G'$ to $H$.
We proceed to show via a polynomial-time Turing reduction from $\sharp H\text{-}\mathsf{Coloring}$ that $\ocqa^\ur[\sjf \cap \ghw_k]$ is $\sharp ${\rm P}-hard. 
In other words, we need to show that, given a connected undirected graph $G$, the number $|\mathsf{hom}(G,H)|$ can be computed in polynomial time in the size of $G$ assuming that we have access to an oracle for the problem $\ocqa^\ur[\sjf \cap \ghw_k]$.

Fix a connected undirected bipartite graph $G$ with vertex partition $\{V_{G,L},V_{G,R}\}$. Let $\ins{S}_k$ be the schema 
\[
\{V_L/2, V_R/2, E/2, T/1, T'/1\}\cup \{C_{i,j}/2 \mid i,j\in [k+1],\ i<j\},
\]
with $(A,B)$ being the tuple of attributes of both $V_L$ and $V_R$.
We define the database $D_G^k$ over $\ins{S}_k$ encoding $G$ as follows:
\begin{eqnarray*}
&& \{V_L(u,0),V(u,1) \mid u \in V_{G,L}\}\\ 
& \cup& \{V_R(u,0),V(u,1) \mid u \in V_{G,R}\}\\
&\cup& \{E(u,v) \mid (u,v) \in E_G\}\ \cup\ \{T(1)\}\ \cup\ \{T'(1)\}\\
&\cup& \{C_{i,j}(i,j) \mid i,j\in [k+1],\ i<j\}.    
\end{eqnarray*}
We also define the set $\dep$ of keys consisting of
\[
\key{V_L} = \{1\} \quad \textrm{and} \quad \key{V_R} = \{1\}
\]
and the (constant-free) self-join-free Boolean CQ $Q_k$ 
\[
\textrm{Ans}()\ \text{:-}\ E(x,y), V_L(x,z), V_R(y,z'), T(z), T'(z'), \bigwedge_{\substack{i,j\in [k+1]\\i<j}} C_{i,j}(w_i,w_j).
\]
It is easy to verify that $Q_k\in \ghw_k$ since the big conjunction over $C_{i,j}$-atoms encodes a clique of size $k+1$. 
%
Let us stress that the role of this conjunction is only to force the CQ $Q_k$ to be of generalized hypertreewidth $k$ as this part of the query will be entailed in every repair (since there are no inconsistencies involving a $C_{i,j}$-atom).
%

\OMIT{
\noindent
Given an undirected bipartite graph $G = (V_G,E_G)$, with vertex partition $V_{G,L}\cup V_{G,R}$, we define the following database over $\ins{S}$ encoding $G$:
\begin{align*}
D_G^k\ =\ &\{V_L(u,0),V(u,1) \mid u \in V_{G,L}\}\ \cup\ \{V_R(u,0),V(u,1) \mid u \in V_{G,R}\}\\
&\cup\ \{E(u,v) \mid (u,v) \in E_G\}\ \cup\ \{T(1)\}\ \cup\ \{T'(1)\}\\
&\cup\ \{C_{i,j}(i,j) \mid i,j\in [k+1],\ i<j\}.    
\end{align*}

\noindent
Further, we stress that the role of relations $C_{i,j}$ is only to provide the query with the required generalized hypertree-width, as this part of the query will be entailed in every repair (since there are no inconsistencies involving a $C_{i,j}$-atom).
}

Consider now the algorithm $\mathsf{HOM}$, which accepts as input a connected undirected graph $G = (V_G,E_G)$ and has access to an oracle for $\ocqa^\ur[\sjf \cap \ghw_k]$, 
and performs as follows:
\begin{enumerate}
	\item If $|V_G|=1$ and $E_G=\emptyset$, then output the number $6$; otherwise, continue.
	\item If $G$ is not bipartite, then output the number $0$; otherwise, continue.
	\item Let $r = \mathsf{RF}(D_G^k,\dep,Q_k,())$.
	\item Output the number $2 \cdot 3^{|V_G|} \cdot (1- r)$.
\end{enumerate}
It is clear that $\mathsf{HOM}(G)$ runs in polynomial time in $||G||$ assuming access to an oracle for the problem $\ocqa^\ur[\sjf \cap \ghw_k]$. It remains to show that $|\mathsf{hom}(G,H)| = \mathsf{HOM}(G)$.

If $G$ consists of only one isolated node, then there are six homomorphism from $G$ to $H$ (to each of the six vertices of $H$). Further, if $G$ is not bipartite, then clearly $|\mathsf{hom}(G,H)|=0$.
Let us now consider the case where $G$ is a connected undirected bipartite graph with vertex partition $\{V_{G,L},V_{G,R}\}$.
%
Recall that
\[\mathsf{RF}(D_G^k,\dep,Q_k,()) = \frac{|\{D \in \opr{D_G^k}{\dep} \mid D \models Q_k\}|}{|\opr{D_G^k}{\dep}|}.\]
%
Observe that there are $3^{|V_G|}$ operational repairs of $D_G^k$ w.r.t.~$\dep$. In particular, in each such a repair $D$, for each node $u \in V_{G,L}$ of $G$, either $V_L(u,0) \in D$ and $V_L(u,1) \not \in D$, or $V_L(u,0) \not \in D$ and $V_L(u,1) \in D$, or $V_L(u,0),V_L(u,1) \not \in D$. Further, for each node $u \in V_{G,R}$ of $G$, either $V_R(u,0) \in D$ and $V_R(u,1) \not \in D$, or $V_R(u,0) \not \in D$ and $V_R(u,1) \in D$, or $V_R(u,0),V_L(u,1) \not \in D$. Thus, there are $3^{|V_{G,L}|}\cdot 3^{|V_{G,R}|}=3^{|V_G|}$ many operational repairs of $D_G^k$ w.r.t.~$\dep$.
%
%
Therefore,
\[
\mathsf{RF}(D_G^k,\dep,Q_k,())\ =\ \dfrac{|\{D \in \opr{D_G^k}{\dep} \mid D \models Q_k\}|}{3^{|V_G|}}.
\]
Thus, $\mathsf{HOM}(G)$ coincides with
\begin{align*}
&2 \cdot 3^{|V_G|} \cdot \left(1 - \dfrac{|\{D \in \opr{D_G^k}{\dep} \mid D \models Q_k\}|}{3^{|V_G|}}\right)\\
=\; &2 \cdot \left(3^{|V_G|} - |\{D \in \opr{D_G^k}{\dep} \mid D \models Q_k\}|\right).
\end{align*}
Since $D_G^k$ has $3^{|V_G|}$ operational repairs w.r.t.~$\dep$, we can conclude that $3^{|V_G|} - |\{D \in \opr{D_G^k}{\dep} \mid D \models Q_k\}|$ is precisely the cardinality of the set $\{D \in \opr{D_G^k}{\dep} \mid D \not\models Q_k\}$.

Since $G$ and $H$ are both connected and bipartite, we know that any homomorphism from $G$ to $H$ must preserve the partitions, i.e., every vertex in one partition of $G$ must be mapped to the same partition in $H$ (and the other partition of $G$ must be mapped to the other partition of $H$). Let $\mathsf{hom'}(G,H)$ be the set of homomorphisms from $G$ to $H$ such that $V_{G,L}$ is mapped to $V_{H,L}$ and $V_{G,R}$ is mapped to $V_{H,R}$. Since $H$ is symmetric, we know that for every homomorphism in $h\in \mathsf{hom'}(G,H)$, there are two homomorphisms in $\mathsf{hom}(G,H)$. In particular, $h\in \mathsf{hom}(G,H)$ and $h'\in \mathsf{hom}(G,H)$, where $h'$ is the homomorphism that is essentially $h$ but maps the vertices of $V_{G,L}$ to $V_{H,R}$ and the vertices of $V_{G,R}$ to $V_{H,L}$. Thus, $|\mathsf{hom}(G,H)|=2\cdot|\mathsf{hom'}(G,H)|$. 
We proceed to show that $|\{D \in \opr{D_G^k}{\dep} \mid D \not\models Q_k\}|$ coincides with $|\mathsf{hom'}(G,H)|$.

\begin{lemma}\label{lem:keys-aux}
	$|\mathsf{hom'}(G,H)|\ =\ |\{D \in \opr{D_G^k}{\dep} \mid D \not\models Q_k\}|$.
\end{lemma}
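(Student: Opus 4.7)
I will establish the equality by exhibiting an explicit bijection between $\mathsf{hom'}(G,H)$ and $\{D \in \opr{D_G^k}{\dep} \mid D \not\models Q_k\}$. The plan rests on two preliminary observations that together give a clean characterization of each side.

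First, I will analyze the structure of operational repairs of $D_G^k$ w.r.t.~$\dep$. Since the only constraint violations in $D_G^k$ involve pairs of the form $\{V_L(u,0),V_L(u,1)\}$ for $u \in V_{G,L}$ and $\{V_R(u,0),V_R(u,1)\}$ for $u \in V_{G,R}$ (and nothing else conflicts, including the $E$, $T$, $T'$, and $C_{i,j}$ facts), every operational repair $D$ is uniquely determined by a function $\phi_D : V_G \to \{0,1,?\}$ where $\phi_D(u) = 0$ (resp.~$1$) means that only the fact with second argument $0$ (resp.~$1$) survives in the block of $u$, and $\phi_D(u) = {?}$ means the whole block is removed via a single two-element deletion. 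This is essentially argued in the main text when justifying that there are $3^{|V_G|}$ repairs.

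Second, I will characterize when $D \models Q_k$. The clique atoms $\bigwedge_{i<j} C_{i,j}(w_i,w_j)$ are always satisfied in every operational repair, since none of their facts are involved in any conflict. Moreover, the atoms $T(z)$ and $T'(z')$ in $Q_k$ together with the fact that $D_G^k$ contains only $T(1)$ and $T'(1)$ (also untouched by any repair) force $z = z' = 1$ in every homomorphism witnessing $D \models Q_k$. Hence $D \models Q_k$ iff there is an edge $(x,y) \in E_G$ with $x \in V_{G,L}$, $y \in V_{G,R}$ such that $V_L(x,1) \in D$ and $V_R(y,1) \in D$, which translates precisely to $\phi_D(x) = 1$ and $\phi_D(y) = 1$. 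Equivalently, $D \not\models Q_k$ iff for every edge $(x,y) \in E_G$, $(\phi_D(x),\phi_D(y)) \neq (1,1)$.

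Third, I will define the bijection. Associate the labels $0,1,?$ with the vertices of $V_{H,L}$ (resp.~$V_{H,R}$) in the obvious way: $0 \leftrightarrow 0_L$, $1 \leftrightarrow 1_L$, $? \leftrightarrow {?}_L$, and analogously on the right. Given $\phi_D$ induced by a repair $D$ with $D \not\models Q_k$, define $h_D : V_G \to V_H$ by mapping each $u \in V_{G,L}$ (resp.~$V_{G,R}$) to the corresponding vertex in $V_{H,L}$ (resp.~$V_{H,R}$); and conversely, given $h \in \mathsf{hom'}(G,H)$, define $\phi_h$ by reading off the labels. This map preserves bijectivity by construction. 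To verify that $h_D$ is indeed in $\mathsf{hom'}(G,H)$, recall that the only missing edge in $H$ between $V_{H,L}$ and $V_{H,R}$ is $\{1_L,1_R\}$; hence the constraint ``$(h_D(x),h_D(y)) \in E_H$ for every $(x,y) \in E_G$'' is exactly ``$(\phi_D(x),\phi_D(y)) \neq (1,1)$ for every $(x,y) \in E_G$'', which by the second step is precisely $D \not\models Q_k$. The converse direction is symmetric, which completes the proof. No step is particularly delicate; the only care needed is the observation that the $C_{i,j}$, $T$, $T'$ atoms do not contribute to any conflict, so the repairs have the simple combinatorial form described.
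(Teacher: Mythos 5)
Your proof is correct and follows essentially the same route as the paper: an explicit bijection between $\mathsf{hom'}(G,H)$ and the repairs violating $Q_k$, resting on the facts that each repair is determined by an independent three-way choice per block and that $Q_k$ holds exactly when some edge has both endpoints keeping the value~$1$. The paper merely packages the same argument as a verification that a single map $\mu$ is well-defined, injective, and surjective (constructing the witnessing repairing sequences explicitly), whereas you front-load the two characterizations so the bijection becomes immediate; the content is identical.
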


\begin{proof}
	It suffices to show that there exists a bijection from the set $\mathsf{hom'}(G,H)$ to the set $\{D \in \opr{D_G^k}{\dep} \mid D \not\models Q_k\}$. To this end, we define the mapping $\mu : \mathsf{hom'}(G,H) \ra \PS(D_G^k)$, where $\PS(D_G^k)$ is the power set of $D_G^k$, as follows: for each $h \in \mathsf{hom'}(G,H)$,
	\begin{align*}
	\mu(h)\ =\ &\{V_L(u,0) \mid u \in V_{G,L} \text{ and } h(u) = 0_L\}\\
	&\cup\ \{V_L(u,1) \mid u \in V_{G,L} \text{ and } h(u) = 1_L\}\\
	&\cup\ \{V_R(u,0) \mid u \in V_{G,R} \text{ and } h(u) = 0_R\}\\
	&\cup\ \{V_R(u,1) \mid u \in V_{G,R} \text{ and } h(u) = 1_R\}\\
	&\cup\ \{E(u,v) \mid \{u,v\} \in E_G\}\ \cup\ \{T(1)\}\ \cup\ \{T'(1)\}\\
	&\cup\ \{C_{i,j}(i,j) \mid i,j\in [k+1],\ i<j\}.
	\end{align*}
	We proceed to show the following three statements:
	\begin{enumerate}
		\item $\mu$ is correct, that is, it is indeed a function from $\mathsf{hom'}(G,H)$ to $\{D \in \opr{D_G^k}{\dep} \mid D \not\models Q_k\}$.
		\item $\mu$ is injective.
		\item $\mu$ is surjective.
	\end{enumerate}
	
	\medskip 
	
	\noindent
	\paragraph{The mapping $\mu$ is correct.} Consider an arbitrary homomorphism $h \in \mathsf{hom'}(G,H)$. We need to show that there exists a $(D_G^k,\dep)$-repairing sequence $s_h$ such that $\mu(h) = s_h(D_G^k)$, $s_h(D_G^k) \models \dep$ (i.e., $s_h$ is complete), and $Q_k(s_h(D_G^k)) = \emptyset$. Let $V_G = \{u_1,\ldots,u_n\}$. Consider the sequence $s_h = \op_1,\ldots,\op_n$ such that, for every $i \in [n]$:
	\[
	\op_{i}\ =\
	\begin{cases}
	-V_L(u_i,1) & \text{if } h(u_i) = 0_L \\
	-V_L(u_i,0) & \text{if } h(u_i) = 1_L \\
	-\{V_L(u_i,0),V_L(u_i,1)\} & \text{if } h(u_i) =\ ?_L \\
	-V_R(u_i,1) & \text{if } h(u_i) = 0_R \\
	-V_R(u_i,0) & \text{if } h(u_i) = 1_R \\
	-\{V_R(u_i,0),V_R(u_i,1)\} & \text{if } h(u_i) =\ ?_R \\
	\end{cases}
	\]
	In simple words, the homomorphism $h$ guides the repairing process, i.e., $h(u_i) = 0_L$ (resp., $h(u_i)=1_L$) implies $V_L(u_i,0)$ (resp., $V_L(u_i,1)$) should be kept, while $h(u_i) =\ ?_L$ implies none of the atoms $V_L(u_i,0),V_L(u_i,1)$ should be kept. The analogous is true for the vertices that are mapped to one of $\{0_R, 1_R, ?_R\}$. It is easy to verify that $s_h$ is indeed a $(D_G^k,\dep)$-repairing sequence $s_h$ such that $\mu(h) = s_h(D_G^k)$ and $s_h(D_G^k) \models \dep$.
	The fact that $Q_k(s_h(D_G^k)) = \emptyset$ follows from the fact that, for every edge $\{u,v\} \in E_G$, $\{h(u),h(v)\} \in E_H$ cannot be the edge $(1_L,1_R)$, since it is not in $H$. This implies that for every $\{u,v\} \in E_G$, it is not possible that the atoms $V_L(u,1)$ and $V_R(v,1)$ (or $V_R(u,1)$ and $V_L(v,1)$) coexist in $s_h(D_G^k)$, which in turn implies that $Q_k(s_h(D_G^k)) = \emptyset$, as needed.

	\medskip

	\noindent
	\paragraph{The mapping $\mu$ is injective.} Assume that there are two distinct homomorphisms $h,h' \in \mathsf{hom'}(G,H)$ such that $\mu(h) = \mu(h')$. By the definition of $\mu$, we get that $h(u) = h'(u)$, for every node $u \in V_G$. But this contradicts the fact that $h$ and $h'$ are different homomorphisms of $\mathsf{hom'}(G,H)$. Therefore, for every two distinct homomorphisms $h,h' \in \mathsf{hom'}(G,H)$, $\mu(h) \neq \mu(h')$, as needed.

	\medskip

	\noindent
	\paragraph{The mapping $\mu$ is surjective.} Consider an arbitrary operational repair $D \in \{D \in \opr{D_G^k}{\dep} \mid D \not\models Q_k\}$. We need to show that there exists $h \in \mathsf{hom'}(G,H)$ such that $\mu(h) = D$. We define the mapping $h_D : V_G \ra V_H$ as follows: for every $u \in V_G$:
	\[
	h_D(u)\ =\
	\begin{cases}
	1_L & \text{if } u\in V_{G,L} \text{ and } V_L(u,1) \in D \text{ and } V_L(u,0) \not\in D \\
	0_L & \text{if } u\in V_{G,L} \text{ and } V_L(u,1) \not\in D \text{ and } V_L(u,0) \in D \\
	?_L & \text{if } u\in V_{G,L} \text{ and } V_L(u,1) \not\in D \text{ and } V_L(u,0) \not\in D \\
	1_R & \text{if } u\in V_{G,R} \text{ and } V_R(u,1) \in D \text{ and } V_R(u,0) \not\in D \\
	0_R & \text{if } u\in V_{G,R} \text{ and } V_R(u,1) \not\in D \text{ and } V_R(u,0) \in D \\
	?_R & \text{if } u\in V_{G,R} \text{ and } V_R(u,1) \not\in D \text{ and } V_R(u,0) \not\in D \\
	\end{cases}
	\]
	It is clear that $h_D$ is well-defined: for every $u \in V_G$, $h_D(u) = x$ and $h_D(u) = y$ implies $x=y$. It is also clear that $\mu(h_D) = D$. It remains to show that $h_D \in \mathsf{hom'}(G,H)$. Consider an arbitrary edge $\{u,v\} \in E_G$. W.l.o.g.~let $u\in V_{G,L}$ and $v\in V_{G,R}$. By contradiction, assume that $\{h_D(u),h_D(v)\} \not\in E_H$. This implies that $h_D(u) = 1_L$ and $h_D(v) = 1_R$, since $u$ is mapped to $V_{H,L}$ and $v$ is mapped to $V_{H,R}$. Therefore, $D$ contains both atoms $V_L(u,1)$ and $V_R(v,1)$, which in turn implies that $Q_k(D) \neq \emptyset$, which contradicts the fact that $D \in \{D \in \opr{D_G^k}{\dep} \mid D \not\models Q_k\}$.
\end{proof}

Summing up, since $\mathsf{HOM}(G) = 2\cdot|\{D \in \opr{D_G^k}{\dep} \mid D \not\models Q_k\}|$, Lemma~\ref{lem:keys-aux} implies that
\[
\mathsf{HOM}(G)\ =\ 2\cdot|\mathsf{hom'}(G,H)|\ =\ |\mathsf{hom}(G,H)|.
\]
This shows that indeed $\mathsf{HOM}$ is a polynomial-time Turing reduction from $\sharp H\text{-}\mathsf{Coloring}$ to $\ocqa^\ur[\sjf \cap \ghw_k]$. Thus, $\ocqa^\ur[\sjf \cap \ghw_k]$ is $\sharp ${\rm P}-hard, as needed.


\section{Proof of Theorem~\ref{the:ocqa-apx}}

We proceed to prove the following result:

\begin{manualtheorem}{\ref{the:ocqa-apx}}
\theocqaapx
\end{manualtheorem}

For brevity, for a fact $f$, we write $-f$ instead of $-\{f\}$ to denote the operation that removes $f$.
We prove each item of the theorem separately. 

\subsection{Item (1)}

\noindent We proceed to show that, unless ${\rm RP} = {\rm NP}$, $\ocqa^\ur[\sjf]$ does not admit an FPRAS.
Consider the decision version of $\ocqa^\ur[\sjf]$, dubbed $\mathsf{Pos}\ocqa^\ur[\sjf]$, defined as follows:

\medskip

\begin{center}
	\fbox{\begin{tabular}{ll}
			{\small PROBLEM} : & $\mathsf{Pos}\ocqa^\ur[\sjf]$
			\\
			{\small INPUT} : & A database $D$, a set $\dep$ of primary keys,\\
			& a query $Q(\bar x)$ from $\mathsf{SJF}$, a tuple $\bar c \in \adom{D}^{|\bar x|}$.
			\\
			{\small QUESTION} : & $\mathsf{RF}(D,\dep,Q,\bar c)>0$.
	\end{tabular}}
\end{center}

\medskip

\noindent To establish our claim it suffices to show that $\mathsf{Pos}\ocqa^\ur[\sjf]$ is NP-hard.
Indeed, assuming that an FPRAS for $\ocqa^\ur[\sjf]$ exists, we can place the problem $\mathsf{Pos}\ocqa^\ur[\sjf]$ in BPP, which implies that $\textrm{NP} \subseteq \textrm{BPP}$. This in turn implies that ${\rm RP} = {\rm NP}$~\cite{Jerrum03}, which contradicts our assumption that ${\rm RP}$ and ${\rm NP}$ are different.

To show that $\mathsf{Pos}\ocqa^\ur[\sjf]$, we reduce from the problem of deciding whether an undirected graph is 3-colorable, which is a well-known NP-hard problem.
Recall that, for an undirected graph $G = (V,E)$, we say that $G$ is {\em 3-colorable} if there is a mapping $\mu: V \ra \{1,2,3\}$ such that, for every edge $\{u,v\}\in E$, we have that $\mu(u)\neq \mu(v)$.
The $3{\rm -}\mathsf{Colorability}$ problem is defined as follows:

\medskip

\begin{center}
	\fbox{\begin{tabular}{ll}
			{\small PROBLEM} : & $3{\rm-}\mathsf{Colorability}$\\
			{\small INPUT} : & An undirected graph $G$.\\
			{\small QUESTION} : &  Is $G$ 3-colorable?
	\end{tabular}}
\end{center}

\medskip


Our reduction is actually an adaptation of the standard reduction from $3{\rm-}\mathsf{Colorability}$ for showing that CQ evaluation NP-hard.
Consider an undirected graph $G=(V,E)$. Let $\ins{S}_G$ be the schema
\[
\left\{C^{u,v}/2, C^{v,u}/2 \mid \{u,v\}\in E\right\}.
\]
We define the database $D_G$ over $\ins{S}_G$ as follows:
\[
\left\{C^{u,v}(i,j), C^{v,u}(i,j) \mid \{u,v\}\in E;\ i,j\in \{1,2,3\};\ i\neq j\right\}.
\]
We further consider the set of primary 
keys $\dep$ to be empty and define the (constant-free) Boolean CQ $Q_G$ as
\[
\textrm{Ans}()\ \text{:-}\  \bigwedge_{\{u,v\}\in E} C^{u,v}(x_u,x_v)\wedge C^{v,u}(x_v,x_u).
\]
Intuitively, the database stores all possible ways to colour the vertices of an edge in different colours. Note that we store every edge twice since we do not know in which order $u$ and $v$ occur in the edge. The role of the query is to check whether there is a coherent mapping from vertices to colours across all vertices.

Since $\dep$ is empty, the only repair of $D_G$ is $D_G$ itself. Therefore, $\mathsf{RF}(D_G,\dep,Q_G,())=1>0$ iff $D_G \models Q_G$. Hence, it remains to show that $G$ is 3-colorable iff there exists a homomorphism $h: \var{Q_G} \ra \adom{D_G}$. We proceed to show the latter equivalence:

\medskip
\noindent $(\Rightarrow)$
Assume that $G$ is 3-colorable. Thus, there is a mapping $\mu: V \ra \{1,2,3\}$ such that for every edge $\{u,v\}\in E$ we have that $\mu(u)\neq \mu(v)$. Now, consider the mapping $h: \var{Q_G} \ra \{1,2,3\}$ such that, for every $x_v\in \var{Q_G}$, $h(x_v)=\mu(v)$. We need to show that $h$ is indeed a homomorphism from $Q_G$ to $D_G$. 
Consider an arbitrary edge $\{u,v\}\in E$. We need to show that $\{C^{u,v}(h(x_u),h(x_v)), C^{v,u}(h(x_v),h(x_u))\}\subseteq D_G$, which will imply that $h$ is indeed a homomorphism from $Q_G$ to $D_G$. Since $\{u,v\}$ is an edge in $G$, we know that $\mu(u)\neq \mu(v)$; thus, $h(x_u)\neq h(x_v)$. Hence, $\{C^{u,v}(h(x_u),h(x_v)), C^{v,u}(h(x_v),h(x_u))\}\subseteq D_G$, as all possible combinations $h(x_u),h(x_v)\in\{1,2,3\}$ with $h(x_u)\neq h(x_v)$ are stored in the relations $C^{u,v}$ and $C^{v,u}$.

\medskip

\noindent $(\Leftarrow)$ Conversely, assume there is a homomorphism $h: \var{Q_G} \ra \adom{D_G}$. Consider the mapping $\mu: V \ra \{1,2,3\}$ such that, for every $v\in V$, $\mu(v)=h(x_v)$. We show that $G$ is 3-colorable with $\mu$ being the witnessing mapping. First, we note that $\mu$ is well-defined since there exists exactly one variable $x_v$ for every vertex $v\in V$ that is reused across the atoms representing different edges (and thus, making sure the mapping is consistent across different edges). Consider an edge $\{u,v\}\in E$. Since the variables $x_u$ and $x_v$ appear together in the query in the atom $C^{u,v}(x_u,x_v)$, we know that $h$ is such that $C^{u,v}(h(x_u),h(x_v))\in D_G$. By the construction of $D_G$, this implies that $h(x_u)\neq h(x_v)$, and thus, $\mu(u)\neq \mu(v)$, as needed.

\subsection{Item (2)}

\noindent A Boolean formula $\varphi$ in Pos2CNF is of the form $\bigwedge_{i\in[m]}C_i$, where $C_i=v_1^i\vee v_2^i$ with $v_1^i,v_2^i$, for each $i\in[m]$, being Boolean variables.
We write $\var{\varphi}$ for the set of variables occurring in $\varphi$ and $\sharp \varphi$ for the number of satisfying assignments of $\varphi$. We consider the following decision problem: 
\medskip

\begin{center}
	\fbox{\begin{tabular}{ll}
			{\small PROBLEM} : & $\sharp \mathsf{MON2SAT}$\\
			{\small INPUT} : & A formula $\varphi$ in Pos2CNF.\\
			{\small OUTPUT} : &  $\sharp \varphi$.
	\end{tabular}}
\end{center}

\medskip

\noindent It is known that $\sharp \mathsf{MON2SAT}$ does not admit an FPRAS, unless ${\rm NP}={\rm RP}$~\cite{+2001}. Fix an arbitrary $k>0$. We show via an approximation preserving reduction from $\sharp \mathsf{MON2SAT}$ that also $\ocqa^\ur[\ghw_k]$ does not admit an FPRAS, unless ${\rm NP}={\rm RP}$. 
Given a Pos2CNF formula $\varphi=\bigwedge_{i\in[m]}C_i$, let $\ins{S}_\varphi$ be the schema 
\[
\{C_i/2 \mid i\in[m]\}\cup\{\mathsf{Var}_v/1 \mid v\in\var{\varphi}\}\cup\{V/2, E/2\}
\]
with $(A,B)$ being the tuple of attributes of relation $V$.
We define the database $D_\varphi^k$ as
\begin{eqnarray*}
&&\{C_i(v_1^i, 1), C_i(v_2^i, 1) \mid i\in[m]\}\\
&\cup& \{\mathsf{Var}_v(v) \mid v\in\var{\varphi}\}\\
&\cup& \{V(v,0), V(v,1) \mid v\in\var{\varphi}\}\\
&\cup& \{E(i,j) \mid i,j\in [k+1],\ i<j\}.
\end{eqnarray*}
Let $\dep$ be the singleton set of keys consisting of
\[
\key{V} = \{1\}.
\]
We finally define the Boolean CQ $Q_\varphi^k$ as $\textrm{Ans}()\ \text{:-}\  \psi_1 \wedge \psi_2 \wedge \psi_3$, where 
\begin{align*}
\psi_1\ &=\ \bigwedge_{i\in[m]}C_i(x_i,y_i)\wedge V(x_i,y_i),\\
\psi_2\ &=\ \bigwedge_{v\in\var{\varphi}}\mathsf{Var}_v(z_v)\wedge V(z_v,\_),\\
\psi_3\ &=\ \bigwedge_{\substack{i,j\in [k+1]\\i<j}} E(w_i,w_j).
\end{align*}
We use $\_$ to denote a variable that occurs only once. Note that $\psi_1$ and $\psi_2$ are acyclic subqueries, whereas $\psi_3$ encodes a clique of size $k+1$ and has generalized hypertreewidth $k$.
Since $\psi_1$, $\psi_2$, and $\psi_3$ do not share variable, we can conclude that  $Q_\varphi^k$ has generalized hypertreewidth $k$.
Note that the sole purpose of the relation $E$ and the subquery $\psi_3$ is to force the generalized hypertreewidth of $Q_\varphi^k$ to be $k$. It is clear that since this part of $Q_\varphi^k$ will be satisfied in every repair of $D_\varphi^k$. 
Intuitively, $\psi_1$ ensures that the repair satisfying the query entails an assignment to the variables such that $\varphi$ is satisfied, $\psi_2$ ensures that a repair entailing the query encodes some variable assignment for every variable $v$ occurring in $\varphi$, and $\psi_3$, as said above, ensures that $Q_\varphi^k$ has generalized hypertreewidth $k$.

Let $n=|\var{\varphi}|$. It is clear that there are $3^n$ operational repairs of $D_\varphi^k$ since for every $v\in\var{\varphi}$ we can either keep $V(v,0)$ and remove $V(v,1)$ in the repair, or we keep $V(v,1)$ and remove $V(v,0)$, or we remove both facts in the repair. Hence, $|\opr{D_\varphi^k}{\dep}|=3^n$.

We proceed to show that $|\{D \in \opr{D_\varphi^k}{\dep} \mid D \models Q_\varphi^k\}|=\sharp\varphi$. 
%
%
To this end, let $\mathsf{sat}(\varphi)$ be the set of satisfying assignments from the variables of $\varphi$ to $\{0,1\}$. We will show that there exists a bijection from the set $\mathsf{sat}(\varphi)$ to the set $\{D \in \opr{D_\varphi^k}{\dep} \mid D \models Q_\varphi^k\}$. We define the mapping $\mu$ as follows: for each $h \in \mathsf{sat}(\varphi)$,
\begin{align*}
\mu(h)\ =\ &\{V(v,0) \mid v \in \var{\varphi} \text{ and } h(v) = 0\}\\
&\cup\ \{V(v,1) \mid v \in \var{\varphi} \text{ and } h(v) = 1\}\\
&\cup\ \{C_i(v_1^i, 1), C_i(v_2^i, 1) \mid i\in[m]\}\\
&\cup\ \{\mathsf{Var}_v(v) \mid v\in\var{\varphi}\}\\
&\cup\ \{E(i,j) \mid i,j\in [k+1],\ i<j\}.
\end{align*}
We proceed to show the following three statements:
\begin{enumerate}
	\item $\mu$ is correct, that is, it is indeed a function from $\mathsf{sat}(\varphi)$ to $\{D \in \opr{D_\varphi^k}{\dep} \mid D \models Q_\varphi^k\}$.
	\item $\mu$ is injective.
	\item $\mu$ is surjective.
\end{enumerate}

\medskip 

\noindent
\paragraph{The mapping $\mu$ is correct.} Consider an arbitrary satisfying assignment $h\in\mathsf{sat}(\varphi)$. We need to show that there exists a $(D_\varphi^k,\dep)$-repairing sequence $s_h$ such that $\mu(h) = s_h(D_\varphi^k)$, $s_h(D_\varphi^k) \models \dep$ (i.e., $s_h$ is complete), and $s_h(D_\varphi^k) \models Q_\varphi^k$ (i.e., the resulting repair entails the query). Let $\var{\varphi} = \{v_1,\ldots,v_n\}$. Consider the sequence $s_h = \op_1,\ldots,\op_n$ such that, for every $i \in [n]$:
\[
\op_{i}\ =\
\begin{cases}
-V(v_i,1) & \text{if } h(v_i) = 0 \\
-V(v_i,0) & \text{if } h(v_i) = 1.
\end{cases}
\]
In simple words, the homomorphism $h$ guides the repairing process, i.e., $h(v_i) = 0$ (resp., $h(v_i)=1$) implies $V(v_i,0)$ (resp., $V(v_i,1)$) should be kept in the repair. It is easy to verify that $s_h$ is indeed a $(D_\varphi^k,\dep)$-repairing sequence, such that $\mu(h) = s_h(D_\varphi^k)$ and $s_h(D_\varphi^k) \models \dep$.
%
To show that $s_h(D_\varphi^k) \models Q_\varphi^k$ it suffices to see that each of the subqueries $\psi_1$, $\psi_2$, and $\psi_3$ is satisfied in $s_h(D_\varphi^k)$. Since $h$ is a satisfying assignment of the variables of $\varphi$, we know that for every clause $C_i$, where $i\in [m]$, one of its variables $v_1^i$ or $v_2^i$ evaluates to $1$ under $h$, i.e.~$h(v_l^i) = 1$ for some $l\in \{1, 2\}$. Hence, for every $i\in [m]$ and (at least) one $l\in \{1, 2\}$, we have that $C_i(v_l^i, 1)$ and $V(v_l^i, 1)$ remains in the repair $s_h(D_\varphi^k)$ and thus $\psi_1$ is satisfied. To see that $\psi_2$ is satisfied as well it suffices to see that for every variable $v\in\var{\varphi}$ either $V(v,0)$ or $V(v,1)$ is kept in the repair and the relation $\mathsf{Var}_v$ remains untouched by the repairing sequence. Since the relation $E$ remains also unchanged in every repair, it is clear that $\psi_3$ is also satisfied. We can therefore conclude that $s_h(D_\varphi^k) \models Q_\varphi^k$.

\medskip 

\noindent
\paragraph{The mapping $\mu$ is injective.} Assume that there are two distinct satisfying assignments $h,h'\in\mathsf{sat}(\varphi)$ such that $\mu(h) = \mu(h')$. By the definition of $\mu$, we get that $h(v) = h'(v)$, for every $v \in \var{\varphi}$. But this contradicts the fact that $h$ and $h'$ are different satisfying assignments of the variables in $\varphi$. Therefore, for every two distinct satisfying assignments $h,h' \in \mathsf{sat}(\varphi)$, $\mu(h) \neq \mu(h')$, as needed.

\medskip

\noindent
\paragraph{The mapping $\mu$ is surjective.} Consider an arbitrary operational repair $D \in \opr{D_\varphi^k}{\dep}$ such that $D \models Q_\varphi^k\}$. We need to show that there exists some $h \in \mathsf{sat}(\varphi)$ such that $\mu(h) = D$. Note that, since $D \models Q_\varphi^k$, necessarily the subquery $\psi_2$ must also be satisfied in $D$. In other words, for every variable $v\in\var{\varphi}$ either $V(v,0)\in D$ or $V(v,1)\in D$ (and never both since $D \models \dep$). We define the mapping $h_D : \var{\varphi} \ra \{0,1\}$ as follows: for every $v \in \var{\varphi}$:
\[
h_D(v)\ =\
\begin{cases}
1 & \text{if } V(v,1) \in D \\
0 & \text{if } V(v,0) \in D.
\end{cases}
\]
It is clear that $h_D$ is well-defined: for every $v \in \var{\varphi}$, $h_D(v) = x$ and $h_D(v) = y$ implies $x=y$. It is also clear that $\mu(h_D) = D$. The fact that $h_D \in \mathsf{sat}(\varphi)$ follows directly from the fact that $D$ entails the subquery $\psi_1$. Since, for every $i\in [m]$ and some $l\in\{1,2\}$, $V(v_l^i, 1)\in D$ and thus $h_D(v_l^i) = 1$, for every $C_i$, where $i\in [m]$, one of its literals evaluates to true under $h_D$ and thus $h_D\in\mathsf{sat}(\varphi)$. 

\medskip

Summing up, recall that the repair relative frequency is the ratio
\[
\mathsf{RF}(D_\varphi^k,\dep,Q_\varphi^k,()) = \frac{|\{D \in \opr{D_\varphi^k}{\dep} \mid D \models Q_\varphi^k\}|}{|\opr{D_\varphi^k}{\dep}|} = \frac{\sharp\varphi}{3^n}.
\]
Hence, assuming an FPRAS $A$ for the problem $\ocqa^\ur[\ghw_k]$, we can build an FPRAS $A'$ for $\sharp \mathsf{MON2SAT}$ by simply running $A$ on the input $(D_\varphi^k, \dep, Q_\varphi^k, \varepsilon, \delta)$ and returning its result multiplied by $3^n$. We note that $A'$ is an FPRAS for $\sharp \mathsf{MON2SAT}$ with the same probabilistic guarantees $\varepsilon$ and $\delta$, contradicting the assumption ${\rm NP}\neq{\rm RP}$. Thus, $\ocqa^\ur[\ghw_k]$ does not admit an FPRAS, unless ${\rm NP}={\rm RP}$.

\section{Proof of Proposition~\ref{pro:notinspanl}}

We show that:
\begin{manualproposition}{\ref{pro:notinspanl}}
	\pronotinspanl
\end{manualproposition}

	We show that $\sharp\mathsf{Repairs}[k] \in \spanl$, for some $k > 0$, implies $\logcfl \subseteq \nlogspace$; the inclusion $\nlogspace \subseteq \logcfl$ is well-known. For this, we need some auxiliary notation.
	
	For a function $f : \Lambda^* \ra \mathbb{N}$, over some alphabet $\Lambda$, we use $f_{>0}$ to denote the decision problem associated to $f$, i.e., the language of all strings $w \in \Lambda^*$ such that $f(w) > 0$. In other words, $f_{>0}$ is the problem of checking whether the count $f(w)$ is non-zero. Hence, for $k > 0$, $\sharp\mathsf{Repairs}[k]_{>0}$ is the decision problem that, given a database $D$, a set $\dep$ of primary keys, a query $Q(\bar x)$ from $\sjf$, a generalized hypertree decomposition of $Q$ of width $k$, and a tuple $\bar c \in \adom{D}^{|\bar x|}$, checks whether there exists at least one operational repair $D' \in \opr{D}{\dep}$ such that $\bar c \in Q(D')$.
	
	We can now proceed with the proof.
	Assume there is $k > 0$ such that $\sharp\mathsf{Repairs}[k] \in \spanl$.
	By definition of $\spanl$, for every $f \in \spanl$, $f_{>0}$ is in $\nlogspace$. Thus, $\sharp\mathsf{Repairs}[k]_{>0} \in \nlogspace$. However, $\sharp\mathsf{Repairs}[k]_{>0}$ is $\logcfl\hard$. The latter follows from the fact that when the input database $D$ is consistent w.r.t.\ $\dep$, then $\sharp\mathsf{Repairs}[k]_{>0}$ coincides with the problem of evaluating a conjunctive query, when a generalized hypertree of $Q$ of width $k$ is given, which is known to be $\logcfl\hard$ \cite{GoLS02}. The fact that $\sharp\mathsf{Repairs}[k]_{>0}$ is $\logcfl\hard$ means that every language in $\logcfl$ reduces in logspace to $\sharp\mathsf{Repairs}[k]_{>0}$. This, together with the fact that $\sharp\mathsf{Repairs}[k]_{>0} \in \nlogspace$, and that $\nlogspace$ is closed under logspace reductions, implies that every language in $\logcfl$ belongs to $\nlogspace$.

\section{Proof of Proposition~\ref{pro:spanl-in-spantl}}

We proceed to show that:
\begin{manualproposition}{\ref{pro:spanl-in-spantl}}
	\prospanlinspantl
\end{manualproposition}

The inclusion $\spanl \subseteq \spantl$ is an easy consequence of how $\spantl$ is defined. That is, consider a non-deterministic, logspace Turing machine $M$ with output. We can convert $M$ to an ATO $M'$ as follows. Whenever $M$ writes a symbol $\alpha$ in the output tape, it instead writes $\alpha$ to the labeling tape, then moves to a labeling state, and then moves to a non-labeling configuration. Hence, each accepted output $\alpha_1,\ldots,\alpha_n$ of $M$ with input some string $w$ corresponds to a node-labeled tree of the form $v_0 \ra v_1 \ra \cdots \ra v_n$, with $v_0$ labeled with the empty string $\epsilon$, and $v_i$ labeled with $\alpha_i$, for $i \in [n]$. Hence, for every input string $w$, the number of accepted outputs of $M$ on input $w$ coincides with the number of accepted outputs of $M'$ on input $w$. The fact that $M'$ is well-behaved follows from the fact that $M$ is a logspace machine, and from the fact that $M'$ uses only existential states, and at most one symbol at the time is stored on the labeling tape.

For the second part of the claim, we show that $\spanl = \spantl$ implies $\logcfl \subseteq \nlogspace$; the inclusion $\nlogspace \subseteq \logcfl$ is well-known. As done in the proof of Proposition~\ref{pro:notinspanl}, for a function $f : \Lambda^* \ra \mathbb{N}$, over some alphabet $\Lambda$, we use $f_{>0}$ to denote the decision problem associated to $f$, i.e., the language of all strings $w \in \Lambda^*$ such that $f(w) > 0$.
Assume $\spanl = \spantl$. From Proposition~\ref{pro:other-prob-spantl} (which we prove at the end of the appendix), the problem $\sharp \mathsf{GHWCQ}[k]$ is in $\spantl$, for every $k > 0$. By definition of $\spanl$, for every $f \in \spanl$, $f_{>0}$ is in $\nlogspace$. Thus, $\sharp \mathsf{GHWCQ}[k]_{>0} \in \nlogspace$, for every $k > 0$. However, for every $k > 0$, $\sharp \mathsf{GHWCQ}[k]_{>0}$ is the decision problem checking, given a database $D$, a CQ $Q(\bar x)$, and  generalized hypertree decomposition of $Q$ of width $k$, whether there exists a tuple $\bar c \in \adom{D}^{|\bar x|}$ such that $\bar c \in Q(D)$, which is known to be $\logcfl\hard$~\cite{GoLS02}. The fact that $\sharp \mathsf{GHWCQ}[k]_{>0}$, for every $k > 0$, is $\logcfl\hard$ means that every language in $\logcfl$ reduces in logspace to $\sharp \mathsf{GHWCQ}[k]_{>0}$, for every $k > 0$. This, together with the fact that $\sharp \mathsf{GHWCQ}[k]_{>0} \in \nlogspace$, for every $k > 0$, and that $\nlogspace$ is closed under logspace reductions, implies that every language in $\logcfl$ belongs to $\nlogspace$.

\section{Proof of Proposition~\ref{pro:logspace-closure}}

We proceed to show that:

\begin{manualproposition}{\ref{pro:logspace-closure}}
	\prologspaceclosure
\end{manualproposition}

Consider two functions $f : \Lambda_1^* \rightarrow \mathbb{N}$ and $g : \Lambda_2^* \rightarrow \mathbb{N}$, for some alphabets $\Lambda_1$ and $\Lambda_2$, with $g \in \spantl$, and assume there is a logspace computable function $h : \Lambda_1^* \rightarrow \Lambda_2^*$ with $f(w) = g(h(w))$ for all $w \in \Lambda_1^*$. We prove that $f \in \spantl$. The proof employees a standard argument used to prove that logspace computable functions can be composed.
It is well known that $h$ can be converted to a logspace computable function $h' : \Lambda_1^* \times \mathbb{N} \rightarrow \Lambda_2$ that, given a string $w \in \Lambda_1^*$ and an integer $i \in [|h(w)|]$, where $|h(w)|$ denotes the length of the string $h(w)$, outputs the $i$-th symbol of $h(w)$, and $h'$ is computable in logspace (e.g., see~\cite{ArBa09}). This can be achieved by modifying $h$ so that it first initializes a counter $k$ to $1$, and then, whenever it needs to write a symbol on the output tape, it first checks whether $k = i$; if not, then it does not write anything on the output tape, and increases $k$, otherwise, it writes the required symbol, and then halts. The counter $k$ can be stored in logarithmic space since $h$ works in logarithmic space, and thus $h(w)$ contains no more than polynomially many symbols, i.e., $|h(w)| \in O(|w|^c)$, for some constant $c$. 

With the above in place, proving that $f \in \spantl$ is straightforward. Let $M$ be the well-behaved ATO such that $g = \mathsf{span}_M$. We modify $M$ to another ATO $M'$ such that $f = \mathsf{span}_{M'}$; again, this construction is rather standard, but we give it here for the sake of completeness. In particular, we modify $M$ so that, with input a string $w$, it keeps a counter $k$ initialized to $1$, and whenever $M$ needs to read the current symbol from the input tape, it executes $h'$ with input $(w,k)$, where its (single) output symbol $\alpha$ is written on the working tape; then $M'$ uses $\alpha$ as the input symbol. Then, when the input tape head of $M$ moves to the right (resp., left, stays), $M'$ increases (resp., decreases, leaves untouched) the counter $k$. The fact that $\mathsf{span}_{M'} = g(h(w))$ follows by construction of $M'$ and the fact that when executing $h'$, no symbols are written on the labeling tape, and no labeling states of $M'$ are visited. Hence, $\mathsf{span}_{M'} = f(w)$. To show that $f \in \spantl$, it remains to to prove that $M'$ is well-behaved.

Obviously, since $h'$ is computable in logspace, $M'$ can execute $h'$ using no more than logarithmic space w.r.t.~$|w|$ on its working tape. Moreover, while executing $h'$, $M'$ does not need to write anything on the labeling tape, and thus, $M'$ uses no more space than the one that $M$ uses on the labeling tape. Furthermore, storing $k$ requires logarithmic space since $k \in [|h(w)|]$ can be encoded in binary. 
Since $h'$ is computable in (deterministic) logspace, an execution of $h'$ corresponds to a sequence of polynomially many (non-labeling) existential configurations in a computation of $M'$ with input $w$. Hence, each computation $T'$ of $M'$ corresponds to a computation $T$ of $M$, where, in the worst case, each node of $T$ becomes a polynomially long path in $T'$; hence, $T'$ is of polynomial size w.r.t.\ $|w|$. Finally, since, as already discussed, each execution of $h'$ coincides with a sequence of non-labeling \emph{existential} configurations, the maximum number of non-labeling universal configurations on a labeled-free path of any computation of $M'$ is the same as the one for $M$, and thus remains a constant. Consequently, $M'$ is well-behaved. 
\section{Proof of Theorem~\ref{the:spantl-fpras}}

We proceed to prove the following result:

\begin{manualtheorem}{\ref{the:spantl-fpras}}
\thespantlfpras
\end{manualtheorem}

As discussed in the main body, to establish the above result we exploit a recent result showing that the problem of counting the number of trees of a certain size accepted by a non-deterministic finite tree automaton admits an FPRAS~\cite{ACJR21}.
In particular, we reduce in polynomial time each function in $\mathsf{SpanTL}$ to the above counting problem for tree automata. In what follows, we recall the basics about non-deterministic finite tree automata and the associated counting problem, and then give our reduction.

\medskip

\noindent \paragraph{Ordered Trees and Tree Automata.} For an integer $k \geq 1$, a {\em finite ordered $k$-tree} (or simply {\em $k$-tree}) is a prefix-closed non-empty finite subset $T$ of $[k]^*$, that is, if $w \cdot i \in T$ with $w \in [k]^*$ and $i \in [k]$, then $w \cdot j \in T$ for every $w \in [i]$.
The root of $T$ is the empty string, and every maximal element of $T$ (under prefix ordered) is a leaf. For every $u,v \in T$, we say that $u$ is a child of $v$, or $v$ is a parent of $u$, if $u = v \cdot i$ for some $i \in [k]$. The size of $T$ is $|T|$. Given a finite alphabet $\Lambda$, let $\trees{k}{\Lambda}$ be the set of all $k$-trees in which each node is labeled with a symbol from $\Lambda$. By abuse of notation, for $T \in \trees{k}{\Lambda}$ and $u \in T$, we write $T(u)$ for the label of $u$ in $T$.

A {\em (top-down) non-deterministic finite tree automation} (NFTA) over $\trees{k}{\Lambda}$ is a tuple $A = (S,\Lambda,s_\text{\rm init},\delta)$, where $S$ is the finite set of states of $A$, $\Lambda$ is a finite set of symbols (the alphabet of $A$), $s_\text{\rm init} \in S$ is the initial state of $A$, and $\delta \subseteq S \times \Lambda \times \left(\bigcup_{i = 0}^{k} S^k\right)$ is the transition relation of $A$.
A {\em run} of $A$ over a tree $T \in \trees{k}{\Lambda}$ is a function $\rho : T \ra S$ such that, for every $u \in T$, if $u \cdot 1,\ldots,u \cdot n$ are the children of $u$ in $T$, then $(\rho(u),T(u),(\rho(u \cdot 1),\ldots,\rho(u \cdot n))) \in \delta$. In particular, if $u$ is a leaf, then $(\rho(u),T(u),()) \in \delta$. We say that $A$ {\em accepts} $T$ if there is a run $\rho$ of $A$ over $T$ with $\rho(\epsilon) = s_\text{\rm init}$, i.e., $\rho$ assigns to the root the initial state. We write $L(A) \subseteq \trees{k}{\Lambda}$ for the set of all trees accepted by $A$, i.e., the language of $A$. We further write $L_n(A)$ for the set of trees $\{T \in L(A) \mid |T| = n\}$, i.e., the set of trees of size $n$ accepted by $A$. The relevant counting problem for NFTA follows:

\medskip

\begin{center}
    \fbox{\begin{tabular}{ll}
            {\small PROBLEM} : & $\sharp\mathsf{NFTA}$
            \\
            {\small INPUT} : & An NFTA $A$  and a string $0^n$ for some $n \geq 0$.
            \\
            {\small OUTPUT} : &  $\left|\bigcup_{i = 0}^n L_i(A)\right|$.
    \end{tabular}}
\end{center}

\medskip

The notion of FPRAS for $\sharp\mathsf{NFTA}$ is defined in the obvious way. 
We know from~\cite{ACJR21} that $\sharp\mathsf{NFTA}_=$, defined as $\sharp\mathsf{NFTA}$ with the difference that it asks for $|L_n(A)|$, i.e., the number trees of size $n$ accepted by $A$, admits an FPRAS. By using this result, we can easily show that:

\def\thenfta{
    $\sharp\mathsf{NFTA}$ admits an FPRAS.
}

\begin{theorem}\label{the:nfta}
    \thenfta
\end{theorem}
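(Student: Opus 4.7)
The plan is to reduce $\sharp\mathsf{NFTA}$ to $\sharp\mathsf{NFTA}_=$ via a simple decomposition that exploits the known FPRAS for $\sharp\mathsf{NFTA}_=$. The key observation is that trees of different sizes are automatically distinct, so the languages $L_0(A), L_1(A), \ldots, L_n(A)$ are pairwise disjoint, which yields the identity
\[
\left|\bigcup_{i=0}^{n} L_i(A)\right| \;=\; \sum_{i=0}^{n} |L_i(A)|.
\]
Hence it suffices to approximate each $|L_i(A)|$ separately using the FPRAS for $\sharp\mathsf{NFTA}_=$ and sum up the resulting estimates, provided we calibrate the per-call accuracy and confidence correctly.

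Concretely, given an input $(A, 0^n)$ together with parameters $\epsilon > 0$ and $0 < \delta < 1$, I would invoke the FPRAS for $\sharp\mathsf{NFTA}_=$ on the instance $(A, 0^i)$, for every $i \in \{0, 1, \ldots, n\}$, with accuracy $\epsilon$ and confidence parameter $\delta' = \delta/(n+1)$. Each call returns a random variable $\hat{f}_i$ satisfying $\Pr[\,|\hat{f}_i - |L_i(A)|| \leq \epsilon \cdot |L_i(A)|\,] \geq 1-\delta'$, and runs in time polynomial in $|A|$, $n$, $1/\epsilon$, and $\log(1/\delta') = O(\log(n+1) + \log(1/\delta))$. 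The overall algorithm outputs $\sum_{i=0}^{n} \hat{f}_i$; since we make $n+1$ calls to a polynomial-time procedure with a merely logarithmically inflated confidence parameter, the total running time remains polynomial in $|A|$, $n$, $1/\epsilon$, and $\log(1/\delta)$, matching the definition of an FPRAS.

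For correctness, a standard union bound ensures that, with probability at least $1 - (n+1)\cdot\delta' = 1 - \delta$, \emph{every} estimate $\hat{f}_i$ simultaneously lies in the interval $[(1-\epsilon)|L_i(A)|,\, (1+\epsilon)|L_i(A)|]$. Summing these interval containments over $i$ gives
\[
(1-\epsilon)\sum_{i=0}^{n}|L_i(A)| \;\leq\; \sum_{i=0}^{n} \hat{f}_i \;\leq\; (1+\epsilon)\sum_{i=0}^{n}|L_i(A)|,
\]
which, combined with the disjointness identity above, is precisely the required $(1\pm\epsilon)$-approximation with confidence at least $1-\delta$.

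I do not anticipate any substantial obstacle. The only delicate point is the treatment of sizes $i$ with $|L_i(A)| = 0$: here the standard FPRAS definition forces the estimator to return $0$ with probability at least $1-\delta'$, so such indices contribute nothing to either the true sum or the estimate within the high-probability event, and the union bound argument carries through unchanged.
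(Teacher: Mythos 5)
Your proposal is correct and follows essentially the same route as the paper: decompose $\bigl|\bigcup_{i=0}^n L_i(A)\bigr|$ into $\sum_{i=0}^n |L_i(A)|$ via disjointness of the $L_i(A)$, approximate each term with the FPRAS for $\sharp\mathsf{NFTA}_=$, and sum. The only (immaterial) difference is that you amplify confidence to $\delta/(n+1)$ per call and apply a union bound, whereas the paper uses $\delta/(2(n+1))$ and multiplies the independent success probabilities together with the inequality $1-x \leq (1-\tfrac{x}{2m})^m$; both yield the required $1-\delta$ guarantee.
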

\begin{proof}
    The goal is to devise a randomized algorithm $\mathsf{A}$ that takes as input an NFTA $A$, a string $0^n$, with $n \geq 0$, $\epsilon > 0$ and $0 < \delta < 1$, runs in polynomial time in $||A||$, $n$, $1/\epsilon$ and $\log(1/\delta)$, and produces a random variable $\mathsf{A}(A,0^n,\epsilon,\delta)$ such that
\[
\text{\rm Pr}\left(\left|\mathsf{A}(A,0^n,\epsilon,\delta) - L_{0}^{n} \right|\ \leq\ \epsilon \cdot L_{0}^{n}\right)\ \geq\
1-\delta,
\]
where  $L_{0}^{n} = \left|\bigcup_{i = 0}^n L_i(A)\right|$, or equivalently
\[
\text{\rm Pr}\left((1-\epsilon) \cdot L_{0}^{n}\ \leq  \mathsf{A}(A,0^n,\epsilon,\delta)\ \leq\ (1 + \epsilon) \cdot L_{0}^{n}\right)\ \geq\
1-\delta.
\]
We know from~\cite{ACJR21} that the problem $\sharp\mathsf{NFTA}_=$ admits an FPRAS. This means that there is a randomized algorithm $B$  that takes as input an NFTA $A$, a string $0^n$, with $n \ge 0$, $\epsilon > 0$ and $0 < \delta < 1$, runs in polynomial time in $||A||$, $n$, $1/\epsilon$ and $\log(1/\delta)$, and produces a random variable $\mathsf{B}(A,0^n,\epsilon,\delta)$ such that
\[
\text{\rm Pr}\left((1-\epsilon) \cdot |L_n(A)|\ \leq  \mathsf{B}(A,0^n,\epsilon,\delta)\ \leq\ (1 + \epsilon) \cdot |L_n(A)|\right)\ \geq\
1-\delta.
\]
The desired randomized algorithm $\mathsf{A}$ is defined in such a way that, on input $A$, $0^n$, $\epsilon$ and $\delta$, returns the number
\[
\sum_{i = 0}^{n} \mathsf{B}\left(A,0^i,\epsilon,\frac{\delta}{2(n+1)}\right).
\]
Note that the $n+1$ runs of $\mathsf{B}$ in the expression above are all independent,
and thus, the random variables $\mathsf{B}\left(A,0^i,\epsilon,\frac{\delta}{2(n+1)}\right)$ and $\mathsf{B}\left(A,0^j,\epsilon,\frac{\delta}{2(n+1)}\right)$ are independent, for $i,j \in \{0,\ldots,n\}$, with $i \neq j$;
it is also clear that 
\[
\sum_{i=0}^{n} \left|L_i(A)\right|\ =\ \left|\bigcup_{i=0}^{n} L_i(A)\right|
\]
since $L_i(A) \cap L_j(A) = \emptyset$. Therefore, we get that
$$
\text{\rm Pr}\bigg(\left(1-\epsilon\right) \cdot L_{0}^{n}\ \leq  \mathsf{A}(A,0^n,\epsilon,\delta)\ \leq\ \left(1 + \epsilon\right) \cdot L_{0}^{n}\bigg)\ \geq 
\left(1-\frac{\delta}{2(n+1)}\right)^{n+1}.
$$
We know that the following inequality hold (see, e.g.,~\cite{DBLP:journals/tcs/JerrumVV86}) for $0 \leq x \leq 1$ and $m \geq 1$,
\[
1-x\ \leq\ \left(1 - \frac{x}{2m}\right)^{m}.
\]
Consequently,
\[
\text{\rm Pr}\left((1-\epsilon) \cdot L_{0}^{n}\ \leq  \mathsf{A}(A,0^n,\epsilon,\delta)\ \leq\ (1 + \epsilon) \cdot L_{0}^{n}\right)\ \geq\
1-\delta,
\]
as needed. It is also clear that $\mathsf{A}$ runs in polynomial time in
$||A||$, $n$, $1/\epsilon$ and $\log(1/\delta)$, since $\mathsf{B}$ runs in polynomial time in $||A||$, $i$, $1/\epsilon$ and $\log(1/\delta)$, for each $i \in \{0,\ldots,n\}$, and the claim follows.
\end{proof}

\noindent\paragraph{The Reduction.} Recall that our goal is reduce in polynomial time every function of $\mathsf{SpanTL}$ to $\sharp\mathsf{NFTA}$, which, together with Theorem~\ref{the:nfta}, will immediately imply Theorem~\ref{the:spantl-fpras}. In particular, we need to establish the following technical result:

\def\proreductiontonfta{
Fix a function $f : \Lambda^* \ra \mathbb{N}$ of $\mathsf{SpanTL}$. For every $w \in \Lambda^*$, we can construct in polynomial time in $|w|$ an NFTA $A$ and a string $0^n$, for some $n \geq 0$, such that $f(w) = \left|\bigcup_{i = 0}^n L_i(A)\right|$.
}

\begin{proposition}\label{pro:reduction}
    \proreductiontonfta
\end{proposition}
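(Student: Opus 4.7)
The plan is to unpack the hypothesis $f \in \spantl$ as giving a well-behaved ATO $M$ with $f = \mathsf{span}_M$, and then, given an input $w \in \Lambda^*$, construct an NFTA $A$ whose language is (in bijection with) the set of valid outputs of $M$ on $w$. Since $M$ is well-behaved, every configuration uses logarithmic space on the working and labeling tapes, hence the set $\mathcal{C}$ of configurations reachable from the initial one is of polynomial size and can be enumerated in polynomial time. This lets us first build the ``computation DAG'' $G = (\mathcal{C},\mathcal{E})$ in polynomial time by following $\rightarrow_M$.

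Next I would define the NFTA $A$ so that its states essentially correspond to \emph{labeling} configurations, and its transitions encode what happens on a maximal labeled-free path of a computation of $M$ starting from a labeling configuration $C$. Concretely, for each labeling configuration $C$ with label $z$, the transitions $(s_C, z, (s_{C_1}, \ldots, s_{C_\ell}))$ should range over all tuples $(C_1, \ldots, C_\ell)$ of labeling configurations that could arise as the next labeling configurations simultaneously reached along some labeled-free subtree rooted at $C$. I would compute these tuples by a recursive traversal of $G$ starting from $C$: at an existential (non-labeling) configuration, take the union of the tuples returned by the children, and at a universal (non-labeling) configuration, take a ``merged cartesian product'' $\bigotimes$ of the tuples returned by the children (concatenating tuples component-wise across the universal branching). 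At a labeling configuration $C'$, return the single tuple $(s_{C'})$. Accepting configurations correspond to the transition with empty right-hand side; rejecting configurations contribute nothing. This construction runs in polynomial time precisely because the well-behavedness condition bounds, by a constant $k$, the number of universal configurations on any labeled-free path, which is the only place the cartesian product could blow up exponentially; a straightforward induction along labeled-free paths then gives a polynomial bound on the size of the tuple sets, and hence on the number and size of transitions added to $A$.

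I would then prove the correspondence $\mathsf{span}_M(w) = |L(A)|$ by showing, via induction over the tree structure of outputs, that accepting runs of $A$ on a tree $T$ are in one-to-one correspondence with the ``contractions'' of accepting computations of $M$ on $w$ that produce exactly $T$. The labeling-on-labeling-only structure of the transitions ensures that the root and children relation in accepted trees matches the ``reaches via a labeled-free path'' relation used to define outputs; the use of $\bigotimes$ at universal configurations ensures that all branches of a universal step are reflected as children in the same accepted tree, while the use of union at existential configurations reflects the choice in a single computation. Finally, to obtain the string $0^n$, I would invoke the polynomial bound $\mathsf{pol}(|w|)$ on the size of any computation of $M$ on $w$ that the well-behavedness condition supplies. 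Since every valid output is a contraction of such a computation, every accepted tree of $A$ has size at most $\mathsf{pol}(|w|)$, so setting $n = \mathsf{pol}(|w|)$ gives $L(A) = \bigcup_{i=0}^{n} L_i(A)$, and hence $f(w) = \mathsf{span}_M(w) = |L(A)| = \bigl|\bigcup_{i=0}^{n} L_i(A)\bigr|$.

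The main obstacle I expect is handling the universal branching correctly and verifying that the $\bigotimes$ construction really produces a polynomial-size NFTA while still being sound and complete with respect to outputs. The bookkeeping is delicate because the same non-labeling universal configuration can be traversed many times when computing tuple sets for different ancestors; memoizing the tuple set associated with each configuration (so that each node of $G$ is processed once) is essential both for the polynomial running time and for the correctness of the one-to-one correspondence between accepted trees and valid outputs. Once the cap of $k$ universal configurations per labeled-free path is properly exploited in the induction, the remaining verifications reduce to routine structural inductions on computations.
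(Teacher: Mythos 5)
Your proposal follows essentially the same route as the paper's proof: build the polynomial-size computation DAG, convert it to an NFTA whose states are the labeling configurations via a memoized recursive traversal that takes unions at existential configurations and a merged cartesian product at universal ones, bound the blow-up using the cap on universal configurations per labeled-free path, establish the bijection between valid outputs and accepted trees, and set $n = \mathsf{pol}(|w|)$ from the polynomial bound on computation size. The approach and all key ingredients match the paper's construction ($\mathsf{BuildNFTA}$/$\mathsf{Process}$), so no further comparison is needed.
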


We proceed to show the above proposition. Since, by hypothesis, $f : \Lambda^* \ra \mathbb{N}$ belongs to $\mathsf{SpanTL}$, there exists a well-behaved ATO $M = (S,\Lambda',s_\text{\rm init},s_\text{\rm accept},s_\text{\rm reject},S_{\exists},S_{\forall},S_{L},\delta)$, where $\Lambda' = \Lambda \cup \{\bot,\triangleright\}$ and $\bot,\triangleright \not\in \Lambda$, such that $f$ is the function $\mathsf{span}_M$. 
The goal is, for an arbitrary string $w \in \Lambda^*$, to construct in polynomial time in $|w|$ an NFTA $A = (S^A,\Lambda^A,s_{\text{\rm init}}^{A}\delta^A)$ and a string $0^n$, for some $n \geq 0$, such that $\mathsf{span}_M(w) = |\bigcup_{i = 0}^{n} L_i(A)|$. This is done in two steps:
\begin{enumerate}
    \item We first construct an NFTA $A$ in polynomial time in $|w|$ such that $\mathsf{span}_M (w) = |L(A)|$, i.e., $A$ accepts $\mathsf{span}_M(w)$ trees.
    
    \item We show that $L(A) = \bigcup_{i = 0}^{\mathsf{pol}(|w|)} L_i(A)$ for some polynomial function $\mathsf{pol} : \mathbb{N} \ra \mathbb{N}$. 
\end{enumerate}
After completing the above two steps, it is clear that Proposition~\ref{pro:reduction} follows with $n = \mathsf{pol}(|w|)$. Let us now discuss the above two steps.

\medskip

\noindent\paragraph{\underline{Step 1: The NFTA}}

\smallskip

\noindent For the construction of the desired NFTA, we first need to introduce the auxiliary notion of the computation directed acyclic graph (DAG) of the ATO $M$ on input $w \in \Lambda^*$, which compactly represents all the computations of $M$ on $w$. Recall that a DAG $G$ is rooted if it has exactly one node, the root, with no incoming edges. We also say that a node of $G$ is a leaf if it has no outgoing edges.

\begin{definition}[\textbf{Computation DAG}]\label{def:computation_dag}
    The {\em computation DAG} of $M$ on input $w \in \Lambda^*$ is the DAG $G = (\mathcal{C},\mathcal{E})$, where $\mathcal{C}$ is the set of configurations of $M$ on $w$, defined as follows:
    \begin{enumerate}
        \item[-] if $C \in \mathcal{C}$ is the root node of $G$, then $C$ is the initial configuration of $M$ on input $w$,
        
        \item[-] if $C \in \mathcal{C}$ is a leaf node of $G$, then $C$ is either an accepting or a rejecting configuration of $M$, and
        
        \item[-] if $C \in \mathcal{C}$ is a non-leaf node of $G$ with $C \ra_M \{C_1,\ldots,C_n\}$ for $n > 0$, then (i) for each $i \in [n]$, $(C,C_i) \in \mathcal{E}$, and (ii) for every configuration $C' \not\in \{C_1,\ldots,C_n\}$ of $M$ on $w$, $(C,C') \not\in \mathcal{E}$. \hfill\markfull 
    \end{enumerate}
\end{definition}

As said above, the computation DAG $G$ of $M$ on input $w$ compactly represents all the computations of $M$ on $w$. In particular, a computation $(V,E,\lambda)$ of $M$ on $w$ can be constructed from $G$ by traversing $G$ from the root to the leaves and (i) for every universal configuration $C$ with outgoing edges $(C,C_1),\dots,(C,C_n)$, add a node $v$ with $\lambda(v)=C$ and children $u_1,\dots,u_n$ with $\lambda(u_i)=C_i$ for all $i\in[n]$, and (ii) for every existential configuration $C$ with outgoing edges $(C,C_1),\dots,(C,C_n)$, add a node $v$ with $\lambda(v)=C$ and a single child $u$ with $\lambda(u)=C_i$ for some $i\in [n]$.

\OMIT{
proof of Proposition~\ref{pro:reduction}. Since, by hypothesis, $f : \Lambda^* \ra \mathbb{N}$ belongs to $\mathsf{SpanTL}$, there exists a well-behaved ATO $M = (S,\Lambda',s_\text{\rm init},s_\text{\rm accept},s_\text{\rm reject},S_{\exists},S_{\forall},S_{L},\delta)$, where $\Lambda' = \Lambda \cup \{\bot,\triangleright\}$ and $\bot,\triangleright \not\in \Lambda$, such that $f$ is the function $\mathsf{span}_M$. 
The goal is, for an arbitrary string $w \in \Lambda^*$, to construct in polynomial time in $|w|$ an NFTA $A = (S^A,\Lambda^A,s_{\text{\rm init}}^{A}\delta^A)$ and a string $0^n$ from some $n \geq 0$ such that $\mathsf{span}_M(w) = |\bigcup_{i = 0}^{n} L_i(A)|$. This is done in two steps:
\begin{enumerate}
    \item We first construct an NFTA $A$ in polynomial time in $|w|$ such that $\mathsf{span}_M (w) = |L(A)|$, i.e., $A$ accepts $\mathsf{span}_M(w)$ trees.
    
    \item We define a polynomial function $\mathsf{pol} : \mathbb{N} \ra \mathbb{N}$ such that $L(A) = |\bigcup_{i = 0}^{\mathsf{pol}(|w|)} L_i(A)|$. 
\end{enumerate}
After completing the above two steps, it is clear that Proposition~\ref{pro:reduction} follows with $n = \mathsf{pol}(|w|)$. We proceed to discuss those two steps.

\medskip

\noindent\paragraph{\underline{Step 1: The NFTA}}

\smallskip

%
}

The construction of the NFTA $A$ is performed by $\mathsf{BuildNFTA}$, depicted in Algorithm~\ref{alg:dagtonfta}, which takes as input a string $w \in \Lambda^*$. It first constructs the computation DAG $G = (\mathcal{C},\mathcal{E})$ of $M$ on $w$, which will guide the construction of $A$. It then initializes the sets $S^A,\Lambda^A,\delta^A$, as well as the auxiliary set $Q$, which will collect pairs of the form $(C,U)$, where $C$ is a configuration of $\mathcal{C}$ and $U$ a set of tuples of states of $A$, to empty.
Then, it calls the recursive procedure $\mathsf{Process}$, depicted in Algorithm~\ref{alg:process}, which constructs the set of states $S^A$, the alphabet $\Lambda^A$, and the transition relation $\delta^A$, while traversing the computation DAG $G$ from the root to the leaves. Here, $S^A$, $\Lambda^A$, $\delta^A$, $G$, and $Q$ should be seen as global structures that can be used and updated inside the procedure $\mathsf{Process}$. Eventually, $\mathsf{Process}(\rt{G})$ returns a state $s \in S^A$, which acts as the initial state of $A$, and $\mathsf{BuildNFTA}(w)$ returns the NFTA $(S^A,\Lambda^A,s,\delta^A)$.

Concerning the procedure $\mathsf{Process}$, when we process a labeling configuration $C \in \mathcal{C}$, we add to $S^A$ a state $s_C$ representing $C$. 
Then, for every computation $T=(V,E,\lambda)$ of $M$ on $w$, if $V$ has a node $v$ with $u_1,\dots,u_n$ being the nodes reachable from $v$ via a labeled-free path, and $\lambda(u_i)$ is a labeling configuration for every $i\in[n]$, we add the transition $(s_C,z,(s_{C_1},\dots,s_{C_n}))$ to $\delta^A$, where $\lambda(v)=C$, $C$ is of the form $(\cdot,\cdot,\cdot,z,\cdot,\cdot)$, and $\lambda(u_i)=C_i$ for every $i\in[n]$, for some arbitrary order $C_1,\dots,C_n$ over those configurations.
Since, by definition, the output of $T$ has a node corresponding to $v$ with children corresponding to $u_1,\dots,u_n$, using these transitions we ensure that there is a one-to-one correspondence between the outputs of $M$ on $w$ and the trees accepted by $A$.

Now, when processing non-labeling configurations, we accumulate all the information needed to add all these transitions to $\delta^A$. In particular, the procedure $\process$ always returns a set of tuples of states of $S^A$. For labeling configurations $C$, it returns a single tuple $(s_C)$, but for non-labeling configurations $C$, the returned set depends on the outgoing edges $(C,C_1),\dots,(C,C_n)$ of $C$. In particular, if $C$ is an existential configuration, it simply takes the union $P$ of the sets $P_i$ returned by $\process(C_i)$, for $i\in[n]$, because in every computation of $M$ on $w$ we choose only one child of each node associated with an existential configuration; each $P_i$ represents one such choice. If, on the other hand, $C$ is a universal configuration, then $P$ is the set $\bigotimes_{i \in [n]} P_i$ obtained by first computing the cartesian product $P' = \bigtimes_{i \in [n]} P_i$ and then merging each tuple of $P'$ into a single tuple of states of $S^A$. For example, with $P_1 = \{(),(s_1,s_2),(s_3)\}$ and $P_2 = \{(s_5),(s_6,s_7)\}$, $P_1 \otimes P_2$ is the set $\{(s_5),(s_6,s_7),(s_1,s_2,s_5),(s_1,s_2,s_6,s_7),(s_3,s_5),(s_3,s_6,s_7)\}$. We define $\bigotimes_{i \in [n]} P_i=\emptyset$ if $P_i=\emptyset$ for some $i\in[n]$.
We use the $\otimes$ operator since in every computation of $M$ on $w$ we choose all the children of each node associated with a universal configuration. When we reach a labeling configuration $C \in \mathcal{C}$, we add transitions to $\delta^A$ based on this accumulated information. In particular, we add a transition from $s_C$ to every tuple $(s_1,\dots,s_\ell)$ of states in $P$.

Concerning the running time of $\mathsf{BuildNFTA}(w)$, we first observe that the size (number of nodes) of the computation DAG $G$ of $M$ on input $w$ is polynomial in $|w|$. This holds since for each configuration $(\cdot,\cdot,y,z,\cdot,\cdot,\cdot)$ of $M$ on $w$, we have that $|y|,|z|\in O(\log(|w|))$. Moreover, we can construct $G$ in polynomial time in $|w|$ by first adding a node for the initial configuration of $M$ on $w$, and then following the transition function to add the remaining configurations of $M$ on $w$ and the
outgoing edges from each configuration.
Now, in the procedure $\process$ we use the auxiliary set $Q$ to ensure that we process each node of $G$ only once; thus, the number of calls to the $\process$ procedure is polynomial in the size of $|w|$. Moreover, in $\process(C)$, where $C$ is a non-labeling universal configuration that has $n$ outgoing edges $(C,C_1),\dots,(C,C_n)$, the size of the set $P$ is $|P_1| \times \dots \times |P_n|$, where $P_i=\process(C_i)$ for $i\in[n]$. When we process a non-labeling existential configuration $C$ that has $n$ outgoing edges $(C,C_1),\dots,(C,C_n)$, the size of the set $P$ is $|P_1|+\dots+ |P_n|$. We also have that $|P|=1$ for every labeling configuration $C$. Hence, in principle, many universal states along a labeled-free path could cause an exponential blow-up of the size of $P$. However, since $M$ is a well-behaved ATO, there exists $k\ge 0$ such that every labeled-free path of every computation $(V,E,\lambda)$ of $M$ on $w$ has at most $k$ nodes $v$ for which $\lambda(v)$ is a universal configuration. It is rather straightforward to see that every labeled-free path of $G$ also enjoys this property since this path occurs in some computation of $M$ on $w$. Hence, the size of the set $P$ is bounded by a polynomial in the size of $|w|$. From the above discussion, we get that $\mathsf{BuildNFTA}(w)$ runs in polynomial time in $|w|$, and the next lemma can be shown:

\def\lemmabuildnfta{
For a string $w\in \Lambda^*$, $\mathsf{BuildNFTA}(w)$ runs in polynomial time in $|w|$ and returns an NFTA $A$ such that $\spanm_M(w)=|L(A)|$.
}

\begin{lemma}\label{lem:buildnfta}
\lemmabuildnfta
\end{lemma}

Before we give the full proof of Lemma~\ref{lem:buildnfta}, let us discuss the rather easy second step of the reduction.


\begin{algorithm}[t]
    \KwIn{A string $w \in \Lambda^*$}
    \KwOut{An NFTA $A$}
    \vspace{2mm}
    
    {Construct the computation DAG $G = (\mathcal{C},\mathcal{E})$ of $M$ on $w$;}
 {\\ $S^A := \emptyset$; $\Lambda^A := \emptyset$; $\delta^A := \emptyset$; $Q :=\emptyset$;}
    {\\ $P :=\mathsf{Process}(\rt{G})$;}
    {\\ Assuming $P = \{(s)\}$, $A := \left(S^A,\Lambda^A, s, \delta^A\right)$;}
    {\\\Return{$A$;}}
    \caption{The algorithm $\mathsf{BuildNFTA}$}\label{alg:dagtonfta}
\end{algorithm}

\begin{algorithm}[t]
    \KwIn{A configuration $C=(s,x,y,z,h_x,h_y)$ of $\mathcal{C}$}
    \vspace{2mm}
    \If{$(C,U) \in Q$}
    {\Return $U$;}
    \If{$C$ is a leaf of $G$}
    {\If{$s\in S_\labeling$}
        {{$S^A := S^A \cup \{s_C\}$;\\}
            {$\Lambda^A := \Lambda \cup \{z\}$;\\}
            \If{$s=s_\accept$}
            {$\delta^A := \delta^A \cup \{(s_C,z,())\}$;}
            {$Q :=Q \cup \{(C,\{(s_C)\})\}$;}
        }
        \lElseIf{$s=s_\accept$}{$Q := Q \cup \{(C,\{()\})\}$}
            \lElse{$Q : = Q \cup \{(C,\emptyset)\}$}}
    \Else{
        {let $C_1,\dots,C_n$ be an arbitrary order over the nodes of $G$ with an incoming edge from $C$;\\}
        \lForEach{$i \in [n]$}{
            {$P_i := \process(C_i)$}}
        \lIf{$s\in S_\exists$}
        {$P := \bigcup_{i \in [n]} P_i$}
        \lElse{    
            $P := \bigotimes_{i \in [n]} P_i$
            %
        }
        \If{$s\in S_\labeling$}
        {{$S^A := S^A \cup \{s_C\}$;\\} 
            {$\Lambda^A := \Lambda^A \cup \{z\}$;\\}
            \ForEach{$(s_1,\dots,s_\ell)\in P$}{$\delta^A := \delta^A \cup \{(s_C,z,(s_1,\dots,s_\ell))\}$;}
            {$Q := Q \cup \{(C,\{(s_C)\})\}$;}}
        \lElse{$Q := Q \cup \{(C,P)\}$}}
    {\Return $U$ with $(C,U) \in Q$;}
    \caption{The recursive procedure $\mathsf{Process}$}\label{alg:process}
\end{algorithm}

\medskip

\noindent\paragraph{\underline{Step 2: The Polynomial Function}}

\smallskip

\noindent 
%
Since $M$ is a well-behaved ATO, there exists a polynomial function $\mathsf{pol}:\mathbb{N}\rightarrow \mathbb{N}$ such that the size of every computation of $M$ on $w$ is bounded by $\mathsf{pol}(|w|)$. Clearly, $\mathsf{pol}(|w|)$ is also a bound on the size of the valid outputs of $M$ on $w$. From the proof of Lemma~\ref{lem:buildnfta}, which is given below, it will be clear that every tree accepted by $A$ has the same structure as some valid output of $M$ on $w$. Hence, we have that $\mathsf{pol}(|w|)$ is also a bound on the size of the trees accepted by $A$, and the next lemma follows:

\def\lempolynomial{
    It holds that $L(A) = \bigcup_{i = 0}^{\mathsf{pol}(|w|)} L_i(A)$.
}

\begin{lemma}\label{lem:polynomial}
\lempolynomial
\end{lemma}

Proposition~\ref{pro:reduction} readily follows from Lemmas~\ref{lem:buildnfta} and~\ref{lem:polynomial}.


\subsection{Proof of Lemma~\ref{lem:buildnfta}}
We now prove the following lemma.

\begin{manuallemma}{\ref{lem:buildnfta}}
    \lemmabuildnfta
\end{manuallemma}

We have already discussed the running time of $\mathsf{BuildNFTA(w)}$ and shown that it runs in polynomial time in $|w|$. We proceed to show that there is one-to-one correspondence between the valid outputs of $M$ on $w$ and the $k$-trees accepted by $A$.

\medskip

\noindent\paragraph{Valid Output to Accepted Tree.} Let $O=(V',E',\lambda')$ be a valid output of $M$ on $w$. Then, $O$ is the output of some accepting computation $T=(V,E,\lambda)$ of $M$ on $w$. For every node $v$ of $T$ with $\lambda(v)$ being a labeling configuration, if $u_1,\dots,u_n$ are the nodes associated with a labeling configuration that are reachable from $v$ in $T$ via a labeled-free path, we denote by $\succ_v$ the order defined over $u_1,\dots,u_n$ in the following way. For $i\neq j\in[n]$, let $v'$ be the lowest common ancestor of $u_i,u_j$. Since $v$ is a common ancestor, clearly, $v'$ lies somewhere along the path from $v$ to $u_i$ (similarly, to $u_j$). Now, let $w_1$ be the child of $v'$ that is the ancestor of $u_i$ (or $u_i$ itself if $v'$ is the parent of $u_i$) and let $w_2$ be the child of $v'$ that is the ancestor of $u_j$ (or $u_j$ itself if $v'$ is the parent of $u_j$); clearly, $w_1\neq w_2$. We define $u_j\succ_v u_i$ iff $\lambda(w_1)$ occurs before $\lambda(w_2)$ in the order defined over the nodes with an incoming edge from $\lambda(v')$ in line~13 of $\process(\lambda(v'))$ (inside $\mathsf{BuildNFTA}(w)$). Clearly, this induces a total order $\succ_v$ over $u_1,\dots,u_n$.
We will show that the $k$-tree $\tau$ inductively defined as follows is accepted by $A$:
\begin{enumerate}
    \item The root $\epsilon$ of $\tau$ corresponds to the root $v$ of $T$, with $\tau(\epsilon)=z$ assuming that $\lambda(v)$ is of the form $(\cdot,\cdot,\cdot,z,\cdot,\cdot)$.
    \item For every node $u$ of $\tau$ corresponding to a node $v\in V$, with 
     $v_1,\dots,v_n\in V$ being the nodes associated with a labeling configuration that are reachable from $v$ via a labeled-free path, if $v_n\succ_v,\dots,\succ_v v_1$, then the children $u\cdot 1,\dots,u\cdot n$ of $u$ are such that $\tau(u\cdot i)=z_i$ assuming that $\lambda(v_i)$ is of the form $(\cdot,\cdot,\cdot,z_i,\cdot,\cdot)$ for all $i\in [n]$.
\end{enumerate}
Note that every node $u$ of $\tau$ corresponds to some node $v\in V$ such that $\lambda(v)$ is a labeling configuration, that, in turn, corresponds to some node $v'$ of $O$.
Clearly, in this way, two different valid outputs of $M$ on $w$ give rise to two different $k$-trees. We  prove the following:

\begin{lemma}\label{lemma:recursive}
The following holds for $\mathsf{BuildNFTA}(w)$.
\begin{enumerate}
\item For every node $v\in V$ with $\lambda(v)=(s,x,y,z,h_x,h_y)$ being a labeling configuration, $\mathsf{Process}(\lambda(v))$ adds to $\delta^A$ the transition $(s_{\lambda(v)},z,(s_{\lambda(v_1)},\dots,s_{\lambda(v_n)}))$, 
where $v_1,\dots,v_n$ are the nodes associated with a labeling configuration that are reachable from $v$ in $T$ via a labeled-free path, and $v_n\succ_v,\dots,\succ_v v_1$, and returns the set $\{(s_{\lambda(v)})\}$.
\item For every node $v\in V$ with $\lambda(v)$ being a non-labeling configuration, $\mathsf{Process}(\lambda(v))$ returns a set containing the tuple $(s_{\lambda(v_1)},\dots, s_{\lambda(v_n)})$, 
where $v_1,\dots,v_n$ are the nodes associated with a labeling configuration that are reachable from $v$ in $T$ via a labeled-free path, and $v_n\succ_v,\dots,\succ_v v_1$.
\end{enumerate}
\end{lemma}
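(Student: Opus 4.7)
The plan is to prove both items simultaneously by strong induction on the height $h(v)$ of the subtree of $T$ rooted at $v$. The two items are intertwined: the recursion inside $\process$ alternates between labeling and non-labeling configurations, and item~1 (resp.\ item~2) will provide precisely the information that item~2 (resp.\ item~1) needs to combine one step closer to the root.

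For the base case $h(v) = 0$, $v$ is a leaf of $T$, so $\lambda(v)$ is the accepting configuration (since $T$ is accepting). Accepting configurations have no $\to_M$-successors, hence they are leaves of $G$, and $\process$ enters the leaf branch. If $\lambda(v)$ is labeling, the algorithm adds $(s_{\lambda(v)}, z, ())$ to $\delta^A$ and returns $\{(s_{\lambda(v)})\}$, which matches item~1 with $n = 0$. Otherwise it returns $\{()\}$, whose unique element is the empty tuple, corresponding to the empty sequence of labeled-free reachable labeling nodes from $v$; this matches item~2.

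For the inductive step, let $C = \lambda(v)$ and let $C_1, \ldots, C_n$ be the children of $C$ in $G$ in the order used at line~13 of $\process$. The children of $v$ in $T$ are: all of $u_1, \ldots, u_n$ with $\lambda(u_i) = C_i$ if $C$ is universal; a single $u$ with $\lambda(u) = C_j$ for some $j$ if $C$ is existential. Applying the IH to each such child $u$, we get that $P_{i(u)} := \process(C_{i(u)})$ contains the tuple describing the labeled-free reachable labeling nodes from $u$ in the $\succ_u$-order: either $(s_{C_{i(u)}})$ if $\lambda(u)$ is labeling (item~1), or the longer tuple supplied by item~2 otherwise. In both cases, this is precisely the tuple describing the labeled-free reachable labeling nodes from $v$ \emph{through the subtree of $u$}.

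When $C$ is existential, $P = \bigcup_i P_i \supseteq P_{i(u)}$ already contains the desired tuple for $v$. When $C$ is universal, $P = \bigotimes_i P_i$ contains, among other things, the concatenation of the tuples arising from the subtrees of $u_1, \ldots, u_n$; this concatenation will coincide with the desired tuple for $v$ once the two orderings are matched. To verify the match: for two labeled-free reachable labeling descendants of $v$ lying in the same subtree of some $u_i$, their lowest common ancestor lies within that subtree, so $\succ_v$ and $\succ_{u_i}$ order them identically; for two such descendants lying in distinct subtrees $u_i, u_j$ with $i < j$, the LCA is $v$ itself and the one in the $u_i$-subtree precedes the one in the $u_j$-subtree under $\succ_v$, which agrees with the order $P_1, P_2, \ldots, P_n$ used by $\bigotimes$. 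Finally, if $C$ is labeling, $\process$ adds the transition $(s_C, z, \tau)$ for every $\tau \in P$, in particular for the desired $\tau$, and returns $\{(s_C)\}$, establishing item~1; if $C$ is non-labeling, $\process$ returns $P$, which contains the desired tuple, establishing item~2. The principal obstacle will be the careful alignment of the two orderings---the LCA-based $\succ_v$ and the concatenation order of $\bigotimes$---and the case analysis above is what resolves it.
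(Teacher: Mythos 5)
Your proof is correct and follows essentially the same route as the paper's: induction on the height of $v$ in the computation tree $T$, with the accepting-leaf base case and an inductive step that feeds the children's tuples into $\bigcup$ (existential) or $\bigotimes$ (universal) and aligns the concatenation order with $\succ_v$ via the same lowest-common-ancestor case analysis. No gaps worth noting.
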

\begin{proof}
We prove the claim by induction on $h(v)$, that is, the height of the node $v$ in $T$.
The base case, $h(v)=0$, is when $v$ is a leaf of $T$. Clearly, in this case, there are no nodes associated with a labeling configuration reachable from $v$ via a labeled-free path. Since $T$ is an accepting computation, it must be the case that $\lambda(v)$ is of the form $(s_\accept,\cdot,\cdot,z,\cdot,\cdot)$. If $\lambda(v)$ is a labeling configuration, in line~8 of $\process(\lambda(v))$ the transition $(s_{\lambda(v)},z,())$ is added to $\delta^A$. Then, in line~9 we add $(\lambda(v),\{(s_{\lambda(v)})\})$ to $Q$, and in line~24 we return the set $\{(s_{\lambda(v)})\}$. If $\lambda(v)$ is a non-labeling configuration, in line~10 we add $(\lambda(v),\{()\})$ to $Q$, and in line~24 we return the set $\{()\}$. This concludes our proof for the base case.

Next, assume that the claim holds for $h(v)\le \ell$, and we prove that it holds for $h(v)= \ell+1$. We start with the case where $\lambda(v)$ is a labeling configuration. Let $w_1,\dots,w_m$ be the children of $v$ in $T$, and let $v_1,\dots,v_n$ be the nodes associated with a labeling configuration that are reachable from $v$ via a labeled-free path. Clearly, every node $v_i$ is reachable via a (possibly empty) labeled-free path from some child $w_j$ of $v$. Moreover, for each $w_j$ we have that $h(w_j)\le\ell$; hence, the inductive assumption implies that the claim holds for each node $w_j$. Each node $w_j$ that is associated with a labeling configuration is, in fact, one of the nodes $v_1,\dots,v_n$; that is, $w_j=v_i$ from some $i\in[n]$. The inductive assumption implies that $\process(\lambda(w_j))$ returns the set $\{(s_{\lambda(v_i)})\}$ that contains a single tuple. We denote this tuple by $p_j$. If, on the other hand, $\lambda(w_j)$ is a non-labeling configuration, let $v_{i_1},\dots,v_{i_{n_j}}$ be the nodes of $v_1,\dots,v_n$ that are reachable from $w_j$ via a labeled-free path. Clearly, these are the only nodes associated with a labeling configuration that are reachable from $w_j$ via a labeled-free path, as each such node is also reachable from $v$ via a labeled-free path. In this case, the inductive assumption implies that $\process(\lambda(w_j))$ returns a set containing the tuple $(s_{\lambda(v_{i_1})},\dots,s_{\lambda(v_{i_{n_j}})})$. We again denote this tuple by $p_j$. Note that if $\lambda(w_j)$ is a non-labeling configuration, and there are no nodes associated with a labeling configuration that are reachable from $w_j$ via a labeled-free path, we have that $p_j=()$.

If $\lambda(v)$ is a universal configuration,
then in line~16 we define $P := \bigotimes_{i \in [m]} P_i$. In particular, $P$ will contain the tuple obtained by merging the tuples
$p_1,\dots, p_m$, assuming that $\lambda(w_1),\dots,\lambda(w_m)$ is the order defined over the nodes with an incoming edge from $\lambda(v)$ in line~13 of $\process(\lambda(v))$. Note that for two nodes $v_i\neq v_k$ that are reachable from the same child $w_j$ of $v$, if $v_i \succ_{w_j} v_k$, then $v_i \succ_{v} v_k$, because the lowest common ancestor of $v_i, v_k$ occurs under $w_j$ (or it is $w_j$ itself). If $v_i$ is reachable from $w_{j_1}$ while $v_k$ is reachable from $w_{j_2}$ with $j_1<j_2$, then $v$ is the lowest common ancestor of $v_i$ and $v_j$, and since $\lambda(w_{j_1})$ occurs before $\lambda(w_{j_2})$ in the order defined over the nodes with an incoming edge from $\lambda(v)$ in line~13 of $\process(\lambda(v))$, we have that $v_j\succ_v v_i$. We conclude that the tuple obtained by merging the tuples
$p_1,\dots, p_m$ is precisely the tuple $(s_{\lambda(v_1)},\dots,s_{\lambda(v_n)})$, assuming that $v_n\succ_v\dots\succ_v v_1$. Then, in line~21 we add the transition $(s_{\lambda(v)},z,(s_{\lambda(v_1)},\dots,s_{\lambda(v_n)}))$ to $\delta^A$ and in line~24 we return the set $\{(s_{\lambda(v)})\}$.

Now, if $\lambda(v)$ is an existential configuration, $v$ has a single child $w_1$ in $T$. If $\lambda(w_1)$ is a labeling configuration, then $w_1$ is the only node associated with a labeling configuration that is reachable from $v$ via a labeled-free path, and the inductive assumption implies that $\process(\lambda(w_1))$ returns the set $\{(s_{\lambda(w_1)})\}$ that contains a single tuple, which we denote by $p$.
Otherwise, all the nodes $v_1,\dots,v_n$ are reachable from $w_1$ via a labeled-free path, and $\process(\lambda(w_1))$ returns a set that contains the tuple $(s_{\lambda(v_1)},\dots, s_{\lambda(v_n)})$, that we again denote by $p$. (As aforementioned, it cannot be the case that there is an additional node associated with a labeling configuration that is reachable from $w_1$ via a labeled-free path, as each such node is also reachable from $v$ via such a path.)
It is rather straightforward that $\succ_v$ is the same as $\succ_{w_1}$; hence, since $v_n\succ_{w_1}\dots\succ_{w_1} v_1$ based on the inductive assumption, it also holds that $v_n\succ_{v}\dots\succ_{v} v_1$. In line~15 of $\process(\lambda(v))$ we simply take the union of the sets returned by all the nodes of $G$ with an incoming edge from $\lambda(v)$, one of which is $\lambda(w_1)$; hence, the tuple $p$ will appear as it is in $P$. Finally, in line~21 we add the transition $(s_{\lambda(v)},z,p)$ to $\delta^A$ and in line~24 we return the set $\{(s_{\lambda(v)})\}$. 

Finally, if $\lambda(v)$ is a non-labeling configuration, instead of adding the transition, we add $(\lambda(v),P)$ to $Q$ in line~23, and return in line~24 the set $P$ that, as aforementioned, contains the tuple $(s_{\lambda(v_1)},\dots, s_{\lambda(v_n)})$, which concludes our proof.
\end{proof}

With Lemma~\ref{lemma:recursive} in place, it is not hard to show that there is a run $\rho$ of $A$ over $\tau$. In particular, for every node $u$ of $\tau$ corresponding to a node $v\in V$ (with $\lambda(v)$ being a labeling configuration), we define $\rho(u)=s_{\lambda(v)}$.
Lemma~\ref{lemma:recursive} implies that the transition
$(s_{\lambda(v)},z,(s_{\lambda(v_1)},\dots,s_{\lambda(v_n)}))$ occurs in $\delta^A$, where $v_1,\dots,v_n$ are the nodes associated with a labeling configuration that are reachable from $v$ via a labeled-free path and $v_n\succ_v\dots\succ_v v_1$. The children of $u$ in $\tau$ correspond, by construction, to the nodes $v_1,\dots,v_n$ following the same order $\succ_v$. Note that if no nodes associated with a labeling configuration are reachable from $v$ via a labeled-free path, we have the transition $(s_{\lambda(v)},z,())$ in $\delta^A$, and $u$ is a leaf of $\tau$.
It is now easy to verify that the mapping $\rho$ defines a run of $A$ over $\tau$.

\medskip

\noindent\paragraph{Accepted Tree to Valid Output.} 
Let $\tau$ be a $k$-tree accepted by $A$. We will show that the labeled tree $O=(V',E',\lambda')$ inductively defined as follows is a valid output of $M$ on $w$:

\begin{enumerate}
    \item The root $v'$ of $O$ corresponds to the root $\epsilon$ of $\tau$, with $\lambda(v')=\tau(\epsilon)$.
    \item For every node $v'$ of $O$ corresponding to a non-leaf node $u$ of $\tau$, if
$u\cdot 1,\dots,u\cdot n$ are the children of $u$ in $\tau$, then $v_1',\dots,v_n'$ are the children of $v'$ in $O$, with $\lambda'(v_i')=\tau(u\cdot i)$ for all $i\in[n]$.
\end{enumerate}
Clearly, with this definition, two different $k$-trees give rise to two different valid outputs of $M$ on $w$. We will show that there is an accepting computation $T$ of $M$ on $w$ such that $O$ is output of $T$.

Our proof is based on the following simple observation regarding $\mathsf{BuildNFTA}(w)$ and the labeling configurations of $M$ on $w$. Recall that $G$ is the computation DAG of $M$ on $w$.

\begin{observation}\label{obs:labeling}
Let $C$ be a labeling configuration of $M$ on $w$. Let $C_1,\dots,C_n$ be the nodes of $G$ with an incoming edge from $C$, and let $(s_C,z,(s_1,\dots,s_\ell))$ be one of the transitions added to $\delta^A$ in line~21 of $\process(C)$. The following hold:
\begin{itemize}
\item If $C$ is a universal configuration, then there are $p_1,\dots,p_n$ such that $p_i$ is one of the tuples returned by $\process(C_i)$ for all $i\in[n]$, and $(s_1,\dots,s_\ell)$ is the tuple obtained by merging $p_1,\dots,p_n$ in line~16, and
\item if $C$ is an existential configuration, then $(s_1,\dots,s_\ell)$ is one of the tuples returned by $\process(C_i)$ for some $i\in[n]$.
\end{itemize}
\end{observation}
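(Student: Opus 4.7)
The plan is to prove the observation by directly inspecting the pseudocode of $\process$ (Algorithm~\ref{alg:process}) on input $C$. The first step is to note that, since the transition $(s_C,z,(s_1,\ldots,s_\ell))$ is added at line~21 of $\process(C)$, the computation must have reached that line; this implies in particular that $C$ is not a leaf of $G$ (otherwise the if-branch at lines~4--10 would have been taken and the procedure would have terminated there without ever executing line~21). Hence the else-branch starting at line~11 is the relevant one.

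Next I would trace through the else-branch. For each $i \in [n]$ the set $P_i$ is assigned the value returned by $\process(C_i)$; the auxiliary set $P$ is then formed as $\bigcup_{i\in[n]} P_i$ if $C$ is existential (line~15) or as $\bigotimes_{i\in[n]} P_i$ if $C$ is universal (line~16); and finally, because $C$ is a labeling configuration, the loop at lines~20--21 inserts into $\delta^A$ the transition $(s_C,z,(s_1,\ldots,s_\ell))$ for every tuple $(s_1,\ldots,s_\ell) \in P$. So the two items of the observation follow immediately by unfolding the definitions of $\bigcup$ and $\bigotimes$: in the existential case, $(s_1,\ldots,s_\ell) \in \bigcup_{i\in[n]} P_i$ is precisely the statement that it belongs to some $P_i$, i.e.~that it is one of the tuples returned by $\process(C_i)$ for some $i \in [n]$; in the universal case, since $(s_1,\ldots,s_\ell) \in \bigotimes_{i\in[n]} P_i$ forces $P_i \neq \emptyset$ for every $i$, the definition of $\otimes$ gives tuples $p_i \in P_i$ for each $i \in [n]$ such that $(s_1,\ldots,s_\ell)$ is the tuple obtained by merging $p_1,\ldots,p_n$ in the order fixed in line~13.

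The only delicate point I anticipate is the memoization performed through the auxiliary set $Q$: a call $\process(C_i)$ may simply return a cached value $U$ with $(C_i,U) \in Q$ rather than recompute it from scratch. However, $U$ was placed in $Q$ precisely as the value returned by the first (and only genuinely computing) invocation of $\process(C_i)$, so the phrase ``tuples returned by $\process(C_i)$'' is unambiguous and the argument is unaffected. Beyond this minor bookkeeping remark, the observation is a pure reading of the algorithm, and I do not foresee any substantive obstacle.
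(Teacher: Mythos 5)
Your proposal is correct and takes essentially the same route as the paper, which states Observation~\ref{obs:labeling} without proof precisely because it is, as you say, a pure reading of Algorithm~\ref{alg:process}: since the transition is added in the labeling branch of the non-leaf case, the tuple $(s_1,\dots,s_\ell)$ must come from $P = \bigcup_{i\in[n]} P_i$ or $P = \bigotimes_{i\in[n]} P_i$, and unfolding those operators gives the two items. Your remark about the memoization via $Q$ being harmless is a sensible extra precaution; the only nitpick is a trivial off-by-one in your line references to the leaf/non-leaf branches, which does not affect the argument.
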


We also need the following observation regarding $\mathsf{BuildNFTA}(w)$ and non-labeling configurations of $M$ on $w$. 

\begin{observation}\label{obs:non-labeling}
Let $C$ be a non-labeling configuration of $M$ on $w$. Let $C_1,\dots,C_n$ be the nodes of $G$ with an incoming edge from $C$, and let $(s_1,\dots,s_\ell)$ be one of the tuples returned by $\process(C)$. Then:
\begin{itemize}
\item If $C$ is a universal configuration, then there are $p_1,\dots,p_n$ such that $p_i$ is one of the tuples returned by $\process(C_i)$ for all $i\in[n]$, and $(s_1,\dots,s_\ell)$ is obtained by merging the tuples $p_1,\dots,p_n$ in line~16, and
\item if $C$ is an existential configuration, then $(s_1,\dots,s_\ell)$ is one of the tuples returned by $\process(C_i)$ for some $i\in[n]$.
\end{itemize}
\end{observation}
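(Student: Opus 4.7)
The plan is to prove Observation~B.8 by direct inspection of the procedure $\mathsf{Process}$ (Algorithm~\ref{alg:process}), via a case analysis on whether $C$ is a leaf of the computation DAG $G$ or not, and, in the non-leaf case, on whether $C$ is existential or universal.

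First, I would dispose of the leaf case as vacuous. If $C$ is a leaf of $G$, then by Definition~\ref{def:computation_dag} it has no outgoing edges, which forces its state to be either $s_\accept$ or $s_\reject$, since every other state yields successors via the transition function $\delta$ of $M$. By the definition of an ATO (Definition~\ref{def:atm}), $S_\exists$ and $S_\forall$ partition $S\setminus\{s_\accept,s_\reject\}$, so $C$ is neither existential nor universal, and both cases in the statement are vacuously satisfied.

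For the non-leaf case, I would follow the pseudocode into the else-branch starting at the line that lists $C_1,\dots,C_n$ (the children of $C$ in $G$). The algorithm first computes $P_i := \mathsf{Process}(C_i)$ for every $i\in[n]$, and then sets $P := \bigcup_{i\in[n]} P_i$ if $C$ is existential and $P := \bigotimes_{i\in[n]} P_i$ if $C$ is universal. Since $C$ is non-labeling, the pair $(C,P)$ is added to $Q$ and, on any future call to $\mathsf{Process}(C)$, the set $P$ is the one returned. Hence the tuples returned by $\mathsf{Process}(C)$ are exactly the tuples in $P$.

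The two bullets then follow by unfolding these operators: in the existential case, each $(s_1,\dots,s_\ell) \in \bigcup_{i\in[n]} P_i$ lies in some $P_i$, which is precisely one of the tuples returned by $\mathsf{Process}(C_i)$; in the universal case, by the definition of $\otimes$ given just after the algorithm's description, each tuple in $\bigotimes_{i\in[n]} P_i$ is obtained by picking one tuple $p_i \in P_i = \mathsf{Process}(C_i)$ for every $i\in[n]$ and merging $p_1,\dots,p_n$ in the fixed order induced by the enumeration of $C_1,\dots,C_n$. No real obstacle is expected, since the statement is essentially an immediate unfolding of the definition of $\mathsf{Process}$ on non-labeling configurations; the only subtlety is the handling of the leaf case, which is absorbed by the partition property of ATO states.
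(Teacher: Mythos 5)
Your proof is correct and follows exactly the route the paper intends: the paper states this as an unproved ``observation'' that is immediate from inspecting the non-leaf, non-labeling branch of $\mathsf{Process}$ (lines computing $P:=\bigcup_i P_i$ or $P:=\bigotimes_i P_i$ and then storing $(C,P)$ in $Q$), which is precisely your case analysis. Your handling of the leaf case as vacuous (leaves of $G$ are accepting/rejecting configurations, hence neither existential nor universal by the partition of $S\setminus\{s_{\text{\rm accept}},s_{\text{\rm reject}}\}$) is a sound way to cover the only degenerate situation.
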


Since $\tau$ is accepted by $A$, there is a run $\rho$ of $A$ over $\tau$.  Based on this and the two observations above, we can now define an inductive procedure to
 construct $T=(V,E,\lambda)$. Intuitively, we construct $T$ while traversing the $k$-tree $\tau$ from the root to the leaves. We start by adding the root of $T$ associated with the initial configuration of $M$ on $w$. For each node $v$ that we add to $V$, we store two pieces of information: \emph{(1)} the current node $\node(v)$ of $\tau$ that we handle (starting from the root of $\tau$), and \emph{(2)} a tuple $\states(v)$ of states of $S^A$ that this node is responsible for. Intuitively, if $\states(v)=(s_1,\dots,s_\ell)$, then there should be labeled-free paths from the node $v$ to nodes $v_1,\dots,v_\ell$ such that $\lambda(v_j)$ is the configuration of $s_j$, for every $j\in[\ell]$. Note that every state in $S^A$ corresponds to a \emph{labeling} configuration of $M$ on $w$, because in $\mathsf{BuildNFTA}(w)$ we add a state $s_C$ to $S^A$ if only if $C$ is a labeling configuration; hence, $\lambda(v_j)$ is a labeling configuration, for every $j\in[\ell]$.

 In addition, for every node $v$ with $\lambda(v)$ being a labeling configuration, we store an additional tuple $\assign(v)$ that is determined by a node of $\tau$. In particular, when we reach a labeling configuration, we continue our traversal of $\tau$, and consider one of the children $u$ of $\node(v)$. Then, if the children of $u$ in $\tau$ are $u\cdot 1,\dots,u\cdot \ell$ with $\rho(u)=s_{\lambda(v)}$ and $\rho(u_i)=s_i$ for $i\in[\ell]$, and we have the transition $(s_{\lambda(v)},\tau(u),(s_1,\dots,s_\ell))$ in $\delta^A$ (such a transition must exist because $\rho$ is a run of $A$ over $\tau$), the tuple of states that we should assign responsibility for is $\assign(v)=(s_1,\dots,s_\ell)$. If $\lambda(v)$ is a universal configuration, this set is split between the children of $v$ (determined by the transition $\lambda(v) \ra_M \{C'_1,\ldots,C'_n\}$ of $M$), so each one of them is responsible for a (possibly empty) subtuple of $\assign(v)$. If, on the other hand, $\lambda(v)$ is an existential configuration, we add one child under $v$ and pass to him the responsibility for the entire tuple. Formally, we define the following procedure.
\begin{enumerate}
\item We add a root node $v$ to $V$ with $\lambda(v)$ being the initial configuration of $M$ on $w$. We define 
\[
\states(v) = (s_{\lambda(v)}).
\]

\item For every node $v\in V$ with $\lambda(v)$ being a labeling configuration, assuming $\lambda(v) \ra_M \{C'_1,\ldots,C'_n\}$ for some $n>0$, let $u$ be a child of $\node(v)$ (or the root of $\tau$ if $\node(v)$ is undefined), such that:

\emph{(1)} $\rho(u)=s_{\lambda(v)}$, 

\emph{(2)} $u\cdot 1,\dots,u\cdot \ell$ are the children of $u$ in $\tau$, 

\emph{(3)} $\rho(u\cdot j)=s_j$ for all $j\in[\ell]$. Consider the transition $(s_{\lambda(v)},\tau(u),(s_1,\dots,s_\ell))$ in $\delta^A$. We define 
\[\assign(v)=(s_1,\dots,s_\ell).\]
Then, if $\lambda(v)$ is an existential configuration,
we add a single child $v_i$ under $v$ with $\lambda(v_i)=C_i'$ for some $i\in[n]$, such that $\process(C'_i)$ returns the tuple $(s_1,\dots,s_\ell)$ based on Observation~\ref{obs:labeling}, and define 
\[
\node(v_i)=u \quad \text{and} \quad \states(v_i)=(s_1,\dots,s_\ell).
\]
 If $\lambda(v)$ is a universal configuration, we add the children $v_1,\dots,v_n$ under $v$ with $\lambda(v_i)=C_i'$ for all $i\in[n]$, and for each child $v_i$ we define
\[
\node(v_i)=u \quad \text{and} \quad \confs(v_i)=p_i
\]
where $p_i$ is the tuple returned by $\process(C_i')$, for every $i\in[n]$, such that $(s_1,\dots,s_\ell)$ is obtained by merging the tuples $p_1,\dots,p_n$ based on Observation~\ref{obs:labeling}.

\item For a node $v\in V$ with $\lambda(v)$ being a non-labeling configuration, assuming $\lambda(v) \ra_M \{C_1',\ldots,C_n'\}$ for some $n>0$ and $\states(v)=(s_1,\dots,s_\ell)$, if $\lambda(v)$ is an existential configuration, we add a single child $v_i$ under $v$ with $\lambda(c_i)=C_i'$ for some $i\in[n]$, such that the tuple $(s_1,\dots,s_\ell)$ is returned by $\process(C_i')$ based on Observation~\ref{obs:non-labeling}, and define 
\[
\node(v_i)=\node(v) \quad \text{and} \quad \states(v_i)=(s_1,\dots,s_\ell).
\]
If $\lambda(v)$ is a universal configuration, we add the children $v_1,\dots,v_n$ under $v$ with $\lambda(v_i)=C_i'$ for all $i\in[n]$, and for each child $v_i$ we define
\[\node(v_i)=\node(v) \quad \text{and} \quad
\states(v_i)=p_i\]
where $p_i$ is the tuple returned by $\process(C_i')$, for every $i\in[n]$, such that $(s_1,\dots,s_\ell)$ is obtained by merging the tuples $p_1,\dots,p_n$ based on Observation~\ref{obs:non-labeling}.

\item For a leaf node $v\in V$ with $\lambda(v)$ being a labeling configuration, we define
 \[
 \assign(v)=().
 \]
\end{enumerate}

We now show that the above procedure is well-defined. In particular, we show that we can always find a node of $\tau$ that satisfies the desired properties when considering labeling configurations, and that we can indeed apply Observations~\ref{obs:labeling} and~\ref{obs:non-labeling}.
\begin{lemma}\label{lemma:valid_procedure}
The following hold:
\begin{itemize}
\item For every added non-root node $v$, it holds that $\states(v)$ is a subtuple of $\assign(r)$, where $r$ is the lowest ancestor of $v$ with $\lambda(r)$ being a labeling configuration.
\item For every added node $v$, $\states(v)$ is a tuple returned by $\process(\lambda(v))$.
\item For every added node $v$ with $\lambda(v)$ being a labeling configuration, $\states(v)=(s_{\lambda(v)})$.
\item For every added node $v$ with $\lambda(v)$ being a labeling configuration, there exists a child $u$ of $\node(v)$ (or the root of $\tau$ if $\node(v)$ is undefined) with $\rho(u)=s_{\lambda(v)}$ such that, if $u\cdot 1,\dots,u\cdot \ell$ are the children of $u$ in $\tau$ with $\rho(u\cdot j)=s_j$ for every $j\in [\ell]$, then the transition $(s_{\lambda(v)},\tau(u),(s_1,\dots,s_\ell))$ occurs in $\delta^A$ and $\assign(v)=(s_1,\dots,s_\ell)$.
\item For every added leaf $v$, $\lambda(v)$ is an accepting configuration.
\end{itemize}
\end{lemma}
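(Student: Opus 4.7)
The plan is to prove all five properties simultaneously by strong induction on the order in which nodes are added to $V$ (equivalently, on the depth of the node in $T$). The properties are tightly coupled: property~4 feeds into property~2 for the children, properties~1 and~3 together with Lemma~\ref{lemma:recursive} ensure the consistency of $\states$ and $\assign$, and property~5 is only reached once the invariant is preserved throughout. Hence no subset of them can be proved in isolation.

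For the base case, the only added node is the root $v$ of $T$, whose $\lambda(v)$ is the initial (labeling) configuration of $M$ on $w$. Property~1 is vacuous, property~3 holds by the definition of $\states(v)$, and property~2 follows from Lemma~\ref{lemma:recursive}(1), which tells us $\process(\lambda(v))$ returns $\{(s_{\lambda(v)})\}$. For property~4 I take $u$ to be the root of $\tau$: by construction of $\mathsf{BuildNFTA}(w)$, the initial state of $A$ is exactly $s_{\lambda(v)}$, so $\rho(u) = s_{\lambda(v)}$; reading off the children of $u$ in $\tau$ and using that $\rho$ is a run of $A$ over $\tau$ yields a transition $(s_{\lambda(v)},\tau(u),(s_1,\dots,s_\ell))\in \delta^A$, and I set $\assign(v)=(s_1,\dots,s_\ell)$ accordingly.

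For the inductive step I consider adding children of an already-processed node $v$ and split into four cases according to whether $\lambda(v)$ is labeling or non-labeling and whether it is existential or universal. In the labeling cases I apply the inductive hypothesis (property~4) to obtain the node $u$ of $\tau$ and the tuple $\assign(v)=(s_1,\dots,s_\ell)$ sitting in a transition of $\delta^A$; Observation~\ref{obs:labeling} then provides either a single index $i$ (existential) or tuples $p_1,\dots,p_n$ merging to $\assign(v)$ (universal) returned by $\process(C'_i)$, which I use to define $\states$ of the new children. This immediately delivers property~2 for the children, property~1 because the new $\states$ values are (sub)tuples of $\assign(v)$ whose lowest labeling ancestor is $v$ itself, and property~3 via Lemma~\ref{lemma:recursive}(1) whenever the child turns out to be labeling. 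The non-labeling cases are analogous, using Observation~\ref{obs:non-labeling} and noting that the lowest labeling ancestor (and hence the $\assign$ tuple bounding property~1) is inherited from $v$. Property~4 for a newly created labeling child $v_i$ is then obtained by locating $v_i$'s state $s_{\lambda(v_i)}$ at some position $j$ inside $\assign$ of the nearest labeling ancestor, taking the corresponding child $u\cdot j$ of the previously chosen $u$, and invoking the run condition on $\rho$ to produce the required $\delta^A$-transition for $v_i$.

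Finally, for property~5, a leaf $v$ of $T$ has no outgoing $\to_M$-transitions, so $\lambda(v)$'s state is either $s_\accept$ or $s_\reject$. If $\lambda(v)$ is labeling, the transition in $\delta^A$ produced by property~4 must be of the form $(s_{\lambda(v)},\tau(u),())$, which is added by $\process$ in Algorithm~\ref{alg:process} only inside the branch $s=s_\accept$, forcing $\lambda(v)$ to be accepting, and moreover $\assign(v)=()$ as required. If $\lambda(v)$ is non-labeling, property~2 together with the leaf branches of $\process$ shows that $\process(\lambda(v))$ returns $\{()\}$ (rather than $\emptyset$) only when the state equals $s_\accept$, so again $\lambda(v)$ is accepting. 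The main obstacle I expect is bookkeeping in the inductive step: several of the five invariants must be re-established for each new child in each of the four cases, and the link between positions inside $\assign(v)$ for the ancestor and the children of $u$ in $\tau$ via the run $\rho$ has to be tracked explicitly to make property~4 go through.
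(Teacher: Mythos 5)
Your proposal is correct and follows essentially the same route as the paper: induction on the depth of the added node, a four-way case split on labeling/non-labeling and existential/universal driven by Observations~\ref{obs:labeling} and~\ref{obs:non-labeling}, locating $s_{\lambda(v)}$ inside $\assign$ of the nearest labeling ancestor to recover the child of $u$ in $\tau$ for property~4, and reading property~5 off the leaf branches of $\process$. The only cosmetic difference is that you cite Lemma~\ref{lemma:recursive}(1) for the fact that $\process$ of a labeling configuration returns $\{(s_C)\}$, whereas the paper reads this directly off Algorithm~\ref{alg:process} (which is cleaner here, since in this direction no accepting computation is available yet); the underlying fact is the same.
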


\begin{proof}
    We prove the claim by induction on $d(v)$, that is, the depth of the node $v$ in the constructed tree. If $d(v)=0$, then $v$ is the root of the tree. By construction, $\lambda(v)$ is the initial configuration of $M$ on $w$ and it is a labeling configuration by definition. Moreover, in the procedure we define $\states(v)=(s_{\lambda(v)})$, and $\process(C)$ returns the tuple $(s_C)$ for every labeling configuration $C$; in particular, $\process(\lambda(v))$ returns $(s_{\lambda(v)})$. Since $\node(v)$ is undefined, we consider the root $u$ of $\tau$. It must be the case that $\rho(u)=s_{\lambda(v)}$, since the state $s_{\lambda(v)}$ that corresponds to the initial configuration of $M$ on $w$, and so to the root of $G$, is the state $s^A_\init$ (as defined in line~4 of $\mathsf{BuildNFTA}(w)$). If $u\cdot 1,\dots,u\cdot\ell$ are the children of $u$ in $\tau$, and $\rho(u\cdot j)=s_j$ for every $j\in[\ell]$,  since $\rho$ is a run of $A$ over $\tau$, there must be a transition $(s_{\lambda(v)},\tau(u),(s_1,\dots,s_\ell))$ in $\delta^A$. In the procedure, assuming $\lambda(v) \ra_M \{C'_1,\ldots,C'_n\}$ for some $n>0$ (i.e.~$v$ is a non-leaf node), we define $\assign(v)=(s_1,\dots,s_\ell)$.

    As for the first point of the lemma, the non-root nodes $v_1,\dots,v_n$ of depth $1$ are the direct children of the root, and, in the procedure, we define $\states(v_i)=\assign(v)$ if $\lambda(v)$ is an existential configuration and $v_i$ is the only child of $v$ in the constructed tree (in this case, there is only one node of depth $1$), or $\states(v_i)\subseteq\assign(v)$ for every $i\in[n]$ if $\lambda(v)$ is a universal configuration. In both cases, $v$ is the lowest ancestor of $v_1,\dots,v_n$ with $\lambda(v)$ being a labeling configuration, and we have that $\states(v_i)\subseteq\assign(v)$. This concludes the base case.

    Next, assume that the claim holds for all added nodes $v$ with $d(v)\le m$. We prove that it also holds for all added nodes $v$ with $d(v)=m+1$. Let $v$ be such a node, and let $p$ be its parent in the constructed tree. Since $d(p)\le m$, the inductive assumption holds for $p$. 
    If $\lambda(p)$ is a non-labeling configuration, the inductive assumption implies that  $\states(v)=(s_1,\dots,s_\ell)$ is a tuple returned by $\process(\lambda(p))$. If, on the other hand, $\lambda(p)$ is a labeling configuration, the inductive assumption implies that there exists a child $u$ of $\node(p)$ (or the root of $\tau$ if $\node(p)$ is undefined) with $\rho(u)=s_{\lambda(p)}$, such that if $u\cdot 1,\dots,u\cdot \ell$ are the children of $u$ in $\tau$ with $\rho(u\cdot j)=s_j$ for every $j\in [\ell]$, the transition $(s_{\lambda(p)},\tau(u),(s_1,\dots,s_\ell))$ occurs in $\delta^A$ and, since $p$ is a non-leaf node, $\assign(v)=(s_1,\dots,s_\ell)$. Hence, if $\lambda(p)$ is a non-labeling configuration, we now consider the tuple $\states(p)=(s_1,\dots,s_\ell)$, and if $\lambda(p)$ is a labeling configuration, we consider the tuple 
    $\assign(p)=(s_1,\dots,s_\ell)$.
    
    Then, if $\lambda(p)$ is an existential configuration, Observation~\ref{obs:labeling} or~\ref{obs:non-labeling} (depending on whether $\lambda(p)$ is a labeling or non-labeling configuration) implies that, assuming $\lambda(p) \ra_M \{C_1',\ldots,C_n'\}$ for $n>0$, $(s_1,\dots,s_\ell)$ is one of the tuples returned by $\process(C_i')$ for $i\in[n]$. In this case, in the procedure, we add a single child $v$ under $p$ with $\lambda(v)=C_i'$, and define $\states(v)=(s_1,\dots,s_\ell)$.
    Clearly, this child is the node $v$; hence, $\process(\lambda(v))$ returns the tuple $(s_1,\dots,s_\ell)$.
    
    If $\lambda(p)$ is a universal configuration, Observation~\ref{obs:labeling} or Observation~\ref{obs:non-labeling} (depending on whether $\lambda(p)$ is a labeling or non-labeling configuration) implies that, again assuming $\lambda(p) \ra_M \{C_1',\ldots,C_n'\}$ for some $n>0$, there are $p_1,\dots,p_n$, such that for every $i\in[n]$, $p_i$ is one of the tuples returned by $\process(C_i')$, and $(s_1,\dots,s_\ell)$ is the tuple obtained by merging $p_1,\dots,p_n$. In the procedure, we add $n$ children $v_1,\dots,v_n$ under $p$, with $\lambda(v_i)=C_i'$ for $i\in[n]$, and, clearly, $v=v_i$ for some $i\in [n]$; hence, we define $\states(v)=p_i$.
    
    Note that in both cases ($\lambda(p)$ being universal or existential), we have shown that $\process(\lambda(v))$ returns the tuple $\states(v)$.
    Moreover,
    if $\lambda(p)$ is a labeling configuration, $p$ is the lowest ancestor of $v$ with $\lambda(p)$ being a labeling configuration, and we have shown that 
    $\states(v)$ is a subtuple of $\assign(p)$. If $\lambda(p)$ is a non-labeling configuration, the inductive assumption implies that $\states(p)$ is a subtuple of $\assign(r)$, where $r$ is the lowest ancestor of $p$ with $\lambda(r)$ being a labeling configuration. Clearly, $r$ is also the lowest ancestor of $v$ with $\lambda(r)$ being a labeling configuration, and we conclude that $\states(v)$ is a subtuple of $\assign(r)$.

    Now, if $\lambda(v)$ is a non-labeling configuration, then we are done, as we have proved that the first and second points of the lemma hold. However, if $\lambda(v)$ is a labeling configuration, we need to prove that the third and fourth points hold. In particular, we need to establish that there exists a child $u$ of $\node(v)$ with $\rho(u)=s_{\lambda(v)}$, such that if $u\cdot 1,\dots,u\cdot \ell$ are the children of $u$ in $\tau$ with $\rho(u\cdot j)=s_j$ for every $j\in[\ell]$, the transition $(s_{\lambda(v)},\tau(u),(s_1,\dots,s_\ell))$ occurs in $\delta^A$. Consider the lowest ancestor $r$ of $v$ with $\lambda(r)$ being a labeling configuration. Since $d(r)\le m$, the inductive assumption implies that there exists a child $u'$ of $\node(r)$ with $\rho(u')=s_{\lambda(r)}$, such that if $u'\cdot 1,\dots,u'\cdot m$ are the children of $u'$ in $\tau$ with $\rho(u'\cdot j)=S_j'$ for every $j\in[m]$, the transition $(s_{\lambda(r)},\tau(u'),(s_1',\dots,s_m'))$ occurs in $\delta^A$. Note that $\assign(r)=(s'_1,\dots,s_m')$.
   Clearly, we have that $\node(v)=u'$, as only labeling configurations modify this information, and all the nodes along the path from $r$ to $v$ are associated with non-labeling configurations.

As we have already shown, it holds that $\states(v)$ is a subtuple of $\assign(r)$, that is, a subtuple of $(s'_1,\dots,s_m')$. Since $\lambda(v)$ is a labeling configuration, the only tuple returned by $\process(\lambda(v))$ is $(s_{\lambda(v)})$. We have also shown that $\process(\lambda(v))$ returns the tuple $\states(v)$; thus, we conclude that $\states(v)=(s_{\lambda(v)})$ (which prove the fourth point of the lemma) and so $s_{\lambda(v)}=s_j'$ for some $j\in[m]$. Therefore, there is indeed a child $u'\cdot j$ of $\node(v)=u'$ with 
$\rho(u'\cdot j)=s_j'=s_{\lambda(v)}$. If the children of $u'\cdot j$ in $\tau$ are $u'\cdot j\cdot 1,\dots,u'\cdot j\cdot t$ with $\rho(u'\cdot j\cdot i)=s''_i$ for all $i\in[t]$, since $\rho$ is a run of $A$ over $\tau$, clearly there is a transition $(s_{\lambda(v)},\tau(u'\cdot j),(s_1'',\dots,s_t''))$ in $\delta^A$, and in the procedure we define $\assign(v)=(s_1'',\dots,s_t'')$. Note that if $u'\cdot j$ is a leaf of $\tau$, since $\rho$ is a run of $A$ over $\tau$, we must have a transition $(s_{\lambda(v)},\tau(u'\cdot j),())$ in $\delta^A$.

Finally, if $v$ is a leaf and $\lambda(v)$ is a non-labeling configuration, we have already established that $\states(v)$ is a tuple returned by $\process(\lambda(v))$, and it is a subtuple of $\assign(r)$, where $r$ is the lowest ancestor of $v$ with $\lambda(r)$ being a labeling configuration. Since $v$ is a leaf of $T$, there is no transition of the form $\lambda(v) \ra_M \{C_1',\ldots,C_n'\}$, and so $\lambda(v)$ is a leaf of $G$. Therefore, in $\process(\lambda(v))$ we return the set $\{()\}$ in line~24 if $\lambda(v)$ is an accepting configuration or the empty set if $\lambda(v)$ is a rejecting configuration (this set is added to $Q$ in line~10 or~11). Hence, the only possible case is that $\states(v)=()$ and $\lambda(v)$ is an accepting configuration.

If $v$ is a leaf and $\lambda(v)$ is a labeling configuration, we have already shown that the transition $(s_{\lambda(v)},\tau(u),(s_1\dots,s_\ell))$ is in $\delta^A$, where $u$ is the child of $\node(v)$ in $\tau$ that we consider when adding $v$, $u\cdot 1,\dots,u\cdot\ell$ are the children of $u$ in $\tau$, $\rho(u\cdot j)=s_j$ for all $j\in\ell$. Since $v$ is a leaf, in $\process(\lambda(v))$ we return a set containing the single tuple $(s_\lambda(v))$ in line~24 (after adding this set to $Q$ in line~9), 
and the only transition added to $\delta^A$ with $s_\lambda(v)$ on its left-hand side is $(s_{\lambda(v)},\tau(u),())$ if $\lambda(v)$ is an accepting configuration. If $\lambda(v)$ is a rejecting configuration, we do not add any transition with $\lambda(v)$ on its left-hand side to $\delta^A$. 
Thus, the only possible case is that $\lambda(v)$ is an accepting configuration, and $u$ is a leaf of $\tau$.
\end{proof}

 In the described procedure, we start with a node associated with the initial configuration of $M$ on $W$, and we systematically add a single child for each node $v$ with $\lambda(v)$ being an existential configuration and all children for each node $v$ with $\lambda(v)$ being a universal configuration, while following the transitions $\lambda(v) \ra_M \{C'_1,\ldots,C'_n\}$ of $M$. Therefore, in combination with Lemma~\ref{lemma:valid_procedure}, it is easy to verify that the constructed $T$ is indeed a computation of $M$ on $w$. Moreover, since for every leaf $v$ of $T$ we have shown that $\lambda(v)$ is an accepting configuration, $T$ is an accepting computation. It is only left to show that $O=(V',E',\lambda')$ is the output of $T$. The proof consists of three steps. We start by showing the following.

 \begin{lemma}\label{lemma:path-to-labeled-free}
     Let $v$ be a node of $T$. Assume that $\assign(v)=(s_{C_1},\dots,s_{C_\ell})$ if $v$ is a node of $T$ with $\lambda(v)$ being a labeling configuration, or $\states(v)=(s_{C_1},\dots,s_{C_\ell})$ otherwise. Then, there is a labeled-free path from $v$ to a node $v'$ with $\lambda(v')$ being a labeling configuration if and only if $\lambda(v')=C_i$ for some $i\in[\ell]$.
 \end{lemma}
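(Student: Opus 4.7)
The plan is to prove the lemma by induction on the height $h(v)$ of the subtree of $T$ rooted at $v$. I split into cases depending on whether $\lambda(v)$ is labeling or non-labeling, and further on whether $\lambda(v)$ is an existential or universal configuration. In what follows, given a node $u$ of $T$, let $\mathsf{tup}(u)$ denote $\assign(u)$ if $\lambda(u)$ is labeling and $\states(u)$ otherwise.

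\medskip

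\noindent\textbf{Base case.} If $v$ is a leaf, then by Lemma~\ref{lemma:valid_procedure} we have $\mathsf{tup}(v) = ()$. Indeed, either $\lambda(v)$ is a non-labeling accepting configuration with $\states(v) = ()$, or $\lambda(v)$ is a labeling configuration and, since $v$ is a leaf of $T$ and of $\tau$ (after inspection of the construction), $\assign(v) = ()$. Since $v$ has no proper descendants, no node $v' \neq v$ is reachable from $v$, so the claim holds vacuously.

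\medskip

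\noindent\textbf{Inductive step.} Assume the claim holds for all nodes of height at most $h-1$, and let $v$ have height $h \geq 1$. Assuming $\lambda(v) \ra_M \{C_1', \ldots, C_n'\}$, let $v_1, \ldots, v_n$ be the children of $v$ in $T$ (if $\lambda(v)$ is existential, only one such child $v_{i_0}$ is added; if universal, all $n$ are added). By the construction of $T$, together with Lemma~\ref{lemma:valid_procedure}, we have that $\mathsf{tup}(v) = \states(v_{i_0})$ in the existential case, and $\mathsf{tup}(v)$ equals the tuple obtained by merging $\states(v_1), \ldots, \states(v_n)$ in the universal case. The key observation is now that for every child $v_i$, the labeling configurations reachable from $v_i$ via labeled-free paths are exactly those whose states occur in $\states(v_i)$: if $\lambda(v_i)$ is labeling, then by Lemma~\ref{lemma:valid_procedure} we have $\states(v_i) = (s_{\lambda(v_i)})$, and the only labeling configuration reachable from $v_i$ via a labeled-free path (in particular, via the trivial path consisting of $v_i$ alone) is $\lambda(v_i)$ itself, since any further descendant would require passing through $v_i$, which is labeling; if $\lambda(v_i)$ is non-labeling, the inductive hypothesis applied to $v_i$ yields the same conclusion.

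\medskip

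\noindent\textbf{Combining children.} Finally, the labeling configurations reachable from $v$ via labeled-free paths are precisely the union, over all children $v_i$ of $v$, of the labeling configurations reachable from $v_i$ via labeled-free paths, because every labeled-free path from $v$ to some $v'$ with $v' \neq v$ must start with an edge $(v, v_i)$ for some $i$ and then continue via a labeled-free path from $v_i$ to $v'$. Since the tuple $\mathsf{tup}(v) = (s_{C_1}, \ldots, s_{C_\ell})$ is, by construction, either $\states(v_{i_0})$ (existential case) or the merge of $\states(v_1), \ldots, \states(v_n)$ (universal case), the set of states appearing in $\mathsf{tup}(v)$ coincides with the union of states appearing in the $\states(v_i)$'s, which, by the previous paragraph, coincides with $\{s_{C} \mid C \text{ is a labeling configuration reachable from } v \text{ via a labeled-free path}\}$. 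This gives the desired equivalence.

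\medskip

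\noindent\textbf{Main obstacle.} The main technical care is the bookkeeping distinguishing $\assign(v)$ from $\states(v)$ at labeling versus non-labeling nodes, together with the fact that the merge operation $\otimes$ collects the children's tuples in exactly the way that matches the union of labeling configurations reachable along labeled-free branches. Invoking Lemma~\ref{lemma:valid_procedure} at each step ensures that the relevant tuples are well-defined and that singleton tuples $(s_{\lambda(v_i)})$ arise precisely when $\lambda(v_i)$ is labeling, which is what ties the inductive hypothesis to the claim.
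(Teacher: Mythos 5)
Your proof is correct and follows essentially the same route as the paper's: induction on the height of $v$, a base case at the leaves via Lemma~\ref{lemma:valid_procedure}, and an inductive step that splits on labeling/non-labeling and existential/universal configurations, using the fact that $\assign(v)$ (resp.\ $\states(v)$) is the merge of the children's $\states$ tuples and that every labeled-free path from $v$ either stops at a child or continues through a non-labeling child. The only cosmetic remark is that the phrase ``labeling configurations reachable from $v_i$ via labeled-free paths'' for a \emph{labeling} child $v_i$ should really be read as ``reachable from $v$ through the branch at $v_i$,'' which your parenthetical justification already makes clear.
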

 \begin{proof}
     We prove the claim by induction on $h(v)$, the height of the node $v$ in $T$. If $h(v)=0$, then $v$ is a leaf of $T$. Lemma~\ref{lemma:valid_procedure} implies that $\lambda(v)$ is an accepting configuration. If $\lambda(v)$ is a non-labeling configuration, $\process(\lambda(v))$ returns the set $\{()\}$.
     Then, Lemma~\ref{lemma:valid_procedure} implies that $\states(v)=()$, and the claim follows. If, on the other hand, $\lambda(v)$ is a labeling configuration, in the procedure we define $\assign(v)=()$, and the claim again follows.
     
     Next, assume that the claim holds for all nodes $v$ with $h(v)\le m$. We prove that it holds for all nodes $v$ with $h(v)=m$. If $\lambda(v)$ is a labeling configuration, assuming $\lambda(v) \ra_M \{C'_1,\ldots,C'_n\}$ for some $n>0$, in the procedure we either add one node $w_1$ under $v$ with $\states(w_1)=\assign(v)$ (if $\lambda(v)$ is existential), or we add $n$ nodes $w_1,\dots,w_n$ under $v$ with $\states(w_i)$ being a (possibly empty) subtuple of $\assign(v)$ for every $i\in[n]$, and such that $\assign(v)$ is obtained by merging the tuples $\states(w_1),\dots,\states(w_n)$.
     Similarly, if $\lambda(v)$ is a non-labeling configuration, assuming $\lambda(v) \ra_M \{C'_1,\ldots,C'_n\}$ for some $n>0$, in the procedure we either add one node $w_1$ under $v$ with $\states(w_1)=\states(v)$ (if $\lambda(v)$ is existential), or we add $n$ nodes $w_1,\dots,w_n$ under $v$ with $\states(w_i)$ being a (possibly empty) subtuple of $\states(v)$ for every $i\in[n]$, and such that $\states(v)$ is obtained by merging the tuples $\states(w_1),\dots,\states(w_n)$.
     
     If $\lambda(w_i)$ is a labeling configuration for some $i\in[n]$, Lemma~\ref{lemma:valid_procedure} implies that $\states(w_i)=(s_{\lambda(w_i)})$; hence, $s_{\lambda(w_i)}$ is one of the states in $(s_{C_1},\dots,s_{C_\ell})$, and clearly there is an (empty) labeled-free path from $v$ to $w_i$. If $\lambda(w_i)$ is a non-labeling configuration for some $i\in[n]$, since $h(w_i)\le m$, the inductive assumption implies that there is a labeled-free path from $w_i$ to nodes $v_{i_1},\dots,v_{i_t}$ such that $\states(w_i)=(s_{C_{i_1}},\dots,s_{C_{i_t}})$ and $\lambda(v_{i_j})=C_{i_j}$ for all $j\in[t]$. Clearly, this means that there is a labeled-free path from $v$ to $v_{i_j}$ for all $j\in[t]$. Since, as aforementioned, $(s_{C_1},\dots,s_{C_\ell})$ is obtained by merging the tuples $\states(w_1),\dots,\states(w_n)$ (or it is the single tuple $\states(w_1)$ if $\lambda(v)$ is existential), and every path from $v$ passes through one of its children, we conclude that there is a labeled-free path from $v$ to a node $v'$ with $\lambda(v')$ being a labeling configuration if and only if $\lambda(v')=C_i$ for some $i\in\ell$.
     \end{proof}

 Next, we show that every node of $\tau$ is associated with some node $v$ of $T$ with $\lambda(v)$ being a labeling configuration.

 \begin{lemma}\label{lemma:label-cover-tau}
     For every node $u$ of $\tau$, there is a node $v$ of $T$ with $\lambda(v)$ being a labeling configuration, such that $u$ is a child of $\node(v)$ (or the root of $\tau$ is $\node(v)$ is undefined) with $\rho(u)=s_{\lambda(v)}$.
 \end{lemma}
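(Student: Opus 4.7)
The plan is an induction on the depth $d$ of $u$ in $\tau$, strengthening the inductive hypothesis to: for every node $u$ of $\tau$ at depth at most $d$, there is a labeling node $v$ of $T$ such that $u$ is in fact the child of $\node(v)$ (or the root of $\tau$ if $\node(v)$ is undefined) that is picked by the procedure when processing $v$. The conclusion of the lemma follows immediately from this stronger statement, because any child picked by the procedure in particular satisfies $\rho(u) = s_{\lambda(v)}$ by construction.

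For the base case $d = 0$, take $v$ to be the root $v_0$ of $T$. Then $\lambda(v_0)$ is the initial configuration of $M$ on $w$, which is a labeling configuration, $\node(v_0)$ is undefined, and the associated state $s_{\lambda(v_0)}$ coincides with $s_{\init}^A$, the initial state of $A$ (line~4 of $\mathsf{BuildNFTA}(w)$). Because $\rho$ is a run of $A$ over $\tau$, it sends the root of $\tau$ to $s_{\init}^A$, so the root of $\tau$ is a valid choice for the procedure when processing $v_0$, and we declare it to be the one made.

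For the inductive step, let $u^*$ denote the parent of $u$ in $\tau$. The strengthened IH provides a labeling node $v^*$ of $T$ for which $u^*$ is the child of $\node(v^*)$ picked by the procedure. Consequently $\assign(v^*) = (s_1, \ldots, s_\ell)$ with $s_j = \rho(u^* \cdot j)$ and the transition $(s_{\lambda(v^*)}, \tau(u^*), (s_1, \ldots, s_\ell))$ is in $\delta^A$; writing $u = u^* \cdot j_0$, the state $\rho(u) = s_{j_0}$ therefore appears in $\assign(v^*)$. A short auxiliary claim, proved by induction on the non-labeling portion of $T$ between $v^*$ and its labeling descendants via labeled-free paths and combining Lemma~\ref{lemma:path-to-labeled-free} with Observations~\ref{obs:labeling} and~\ref{obs:non-labeling}, shows that the $\ell$ slots of $\assign(v^*)$ are in bijection with the labeling descendants of $v^*$ reached via labeled-free paths: for each $j$ there is a unique such descendant $v_j$ with $\node(v_j) = u^*$ and $s_{\lambda(v_j)} = s_j$. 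Taking $v = v_{j_0}$, the procedure is required to pick, when processing $v$, some child of $u^*$ whose $\rho$-value equals $s_{j_0} = \rho(u)$; we declare this choice to be $u$ itself, which then certifies the strengthened IH for $u$.

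The main obstacle is that this declaration must be made coherently across all labeling descendants of $v^*$: if several siblings of $u$ in $\tau$ share the same $\rho$-value, the corresponding descendants $v_j$ could \emph{a priori} all pick the same child. We resolve this globally, once and for all, by fixing for every labeling $v^*$ a bijection between its $\ell$ labeling descendants $v_1, \ldots, v_\ell$ (via labeled-free paths) and the $\ell$ children of its picked $\tau$-node, such that $v_j$ is matched to a child of $\rho$-value $s_{\lambda(v_j)}$. Such a bijection exists because the multisets $\{s_{\lambda(v_j)}\}_{j \in [\ell]}$ and $\{\rho(u^* \cdot j)\}_{j \in [\ell]}$ both equal $\{s_1, \ldots, s_\ell\}$; declaring these bijections to be the choices made by the non-deterministic procedure ensures that the strengthened IH propagates to every depth and, in particular, covers every node of $\tau$.
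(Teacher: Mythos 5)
Your proof follows essentially the same route as the paper's: induction on the depth of $u$ in $\tau$, pulling the witness for the parent $u^*$ from the inductive hypothesis, reading off the transition $(s_{\lambda(v^*)},\tau(u^*),(s_1,\dots,s_\ell))$ determined by $\assign(v^*)$, and using Lemma~\ref{lemma:path-to-labeled-free} to locate a labeling descendant $v$ of $v^*$ with $\node(v)=u^*$ and $s_{\lambda(v)}=\rho(u)$. Your explicit bijection fixing coherent choices when several siblings of $u$ share the same $\rho$-value is a welcome tightening of a point the paper's proof glosses over, but it does not change the argument.
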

 \begin{proof}
     We prove the claim by induction on $d(u)$, the depth of the node $u$ in $\tau$. If $d(u)=0$, then $u$ is the root of $\tau$. By construction, the root $v$ of $T$ is such that $\lambda(v)$ is the initial configuration of $M$ on $w$, which is a labeling configuration. Moreover, in the procedure, we consider the node $u$ when adding $v$ to $T$, and we have that $\rho(u)=s_{\lambda(v)}$ because $s_{\lambda(v)}$ is the initial state $s^A_\init$ of the NFTA.

     Next, assume that the claim holds for all nodes $u$ of $\tau$ with $d(u)\le m$, and we prove that it holds for nodes $u$ with $d(u)=m+1$. Let $u'$ be the parent of $u$ in $\tau$ (hence, $u$ is of the form $u'\cdot j$ for some $j$). The inductive assumption implies that there is a node $v'$ of $T$ with $\lambda(v')$ being a labeling configuration, such that $u'$ is a child of $\node(v')$ with $\rho(u')=s_{\lambda(v')}$. Since $\rho$ is a run of $A$ over $\tau$, there is a transition $(s_{\lambda(v')},\tau(u'),(s_{C_1},\dots,s_{C_\ell}))$ in $\delta^A$ such that $\rho(u)=s_{C_i}$ for some $i\in[\ell]$. In the procedure, we define $\assign(v')=(s_{C_1},\dots,s_{C_\ell}))$. Lemma~\ref{lemma:path-to-labeled-free} implies that there is a labeled-free path from $v'$ to some node $v$ with $\lambda(v)=C_i$. Note that $\node(v)=u'$, as this is the case for every node with a labeled-free path from $v'$ (only labeling configurations alter this information). When we consider the node $v$ in the procedure, we choose a child of $\node(v)$ (hence, a child of $u'$) such that $\rho(u)=s_{\lambda(v)}$, and this is precisely the node $u$. Hence, $u$ is a child of $\node(v)$ with $\rho(u)=s_{\lambda(v)}$, and this concludes our proof.
 \end{proof}

Finally, we show that $O$ is indeed the output of $T$.

\begin{lemma}
The following hold.
\begin{enumerate}
    \item $V' = \{v \in V \mid \lambda(v) \text{ is a labeling configuration of } M\}$,
    \item $E' = \{(u,v) \mid u \text{ reaches } v \text{ in } T \text{ via a } \text {labeled-free path}\}$,
    
    \item for every $v' \in V'$, $\lambda'(v') = z$ assuming that $\lambda(v)$ is of the form $(\cdot,\cdot,\cdot,z,\cdot,\cdot)$.
\end{enumerate}
\end{lemma}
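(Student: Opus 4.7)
The plan is to prove all three identities simultaneously via a natural bijection $\varphi$ between the labeling-configuration nodes of $T$ and the nodes of $\tau$, which by construction of $O$ are in canonical one-to-one correspondence with $V'$. The procedure that built $T$ explicitly associates, at every step that adds a labeling node $v$, a node $u$ of $\tau$: namely the root of $\tau$ if $\node(v)$ is undefined, or the child of $\node(v)$ selected so that $\rho(u) = s_{\lambda(v)}$ and a matching transition in $\delta^A$ is consumed. Set $\varphi(v) = u$, and identify each $v' \in V'$ with the $u \in \tau$ it was built from, so that $\varphi^{-1}$ can be viewed as a map from $V'$ to the labeling nodes of $T$.

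The main technical step is showing $\varphi$ is a bijection. Surjectivity is exactly Lemma~\ref{lemma:label-cover-tau}. Injectivity I would establish by induction on the depth of $v$ in the labeled-free tree structure of $T$: two distinct sibling labeling nodes $v \neq v'$ descending via labeled-free paths from the same parent labeling node $v_{\mathrm{par}}$ both satisfy $\node(v) = \node(v') = \varphi(v_{\mathrm{par}})$, but the recursive splitting of $\assign(v_{\mathrm{par}})$ across the non-labeling descendants of $v_{\mathrm{par}}$, traced in the proof of Lemma~\ref{lemma:path-to-labeled-free}, forces distinct sibling labeling nodes to correspond to distinct positions of their state $s_{\lambda(\cdot)}$ within $\assign(v_{\mathrm{par}})$, and hence to distinct children of $\node(v_{\mathrm{par}})$ in $\tau$. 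This yields claim (1).

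For claim (2), I would use Lemma~\ref{lemma:path-to-labeled-free} to translate parent-child edges of $\tau$ into labeled-free reachability in $T$, in both directions. Given an edge $(u_1, u_2)$ of $\tau$ with $v_i = \varphi^{-1}(u_i)$, the transition $(s_{\lambda(v_1)}, \tau(u_1), (s_{C_1}, \ldots, s_{C_\ell}))$ of $\delta^A$ consumed by $\rho$ at $u_1$ equals the one installed by the procedure for $v_1$, so $s_{\lambda(v_2)} \in \assign(v_1)$ and the lemma yields a labeled-free path from $v_1$ to $v_2$. The converse is the same lemma read in reverse, combined with the bijectivity of $\varphi$. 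Claim (3) is then immediate: line~21 of $\process(\lambda(v))$ installs transitions whose middle component is the string $z$ extracted from $\lambda(v) = (\cdot, \cdot, \cdot, z, \cdot, \cdot)$, so $\tau(\varphi(v)) = z$, and $\lambda'(v') = \tau(u)$ by the construction of $O$ from $\tau$.

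The main obstacle I anticipate is the injectivity of $\varphi$, which requires disentangling the procedure's implicit choice of $u$ among the valid children of $\node(v)$; carefully tracking how $\assign$-tuples are merged and split across universal and existential non-labeling branches, largely as is already done inside the proof of Lemma~\ref{lemma:path-to-labeled-free}, should suffice. Once the bijection is in place, the remaining content of the lemma is a direct transcription of Lemmas~\ref{lemma:valid_procedure} and~\ref{lemma:path-to-labeled-free} through $\varphi$.
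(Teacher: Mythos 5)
Your proposal is correct and follows essentially the same route as the paper: the paper likewise pairs each labeling node $v$ of $T$ with the child of $\node(v)$ chosen during the construction, invokes Lemma~\ref{lemma:label-cover-tau} for surjectivity and Lemma~\ref{lemma:path-to-labeled-free} for the edge correspondence, and reads off item (3) from the form $(s_C,z,\cdots)$ of the transitions installed by $\process$. The only difference is that you make injectivity of this correspondence an explicit proof obligation (via the positional splitting of the $\assign$-tuples across siblings), whereas the paper asserts the one-to-one correspondence without spelling that step out.
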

\begin{proof}
    The first point is rather straightforward. Every node $v$ of $T$ with $\lambda(v)$ being a labeling configuration is associated with a (unique) node $u$ of $\tau$; this is the child of $\node(v)$ that we choose when adding the node $v$ to $T$. Moreover, Lemma~\ref{lemma:label-cover-tau} implies that every node $u$ of $\tau$ is associated with a node $v$ of $T$ with $\lambda(v)$ being a labeling configuration.
    Hence, there is one-to-one correspondence between the nodes $v$ of $T$ with $\lambda(v)$ being a labeling configuration, and the nodes $u$ of $\tau$. Since $O$ contains a node for every node of $\tau$ by construction, and no other nodes, item (1) follows.

    The second item follows from Lemma~\ref{lemma:path-to-labeled-free} showing that there is a labeled-free path from a node $v$ to a node $v'$ with both $\lambda(v)$ and $\lambda(v')$ being labeling configurations, if and only if $\assign(v)=(s_{C_1},\dots,s_{C_\ell})$ and $\lambda(v')=C_i$ for some $i\in[\ell]$. In the procedure, we define $\assign(v)=(s_{C_1},\dots,s_{C_\ell})$ if there is a node $u$ of $\node(v)$ such that $\rho(u)=s_{\lambda(v)}$, with children $u\cdot 1,\dots,u\cdot\ell$ such that $\rho(u\cdot i)=s_{C_i}$ for all $i\in[\ell]$. Clearly, we have that $\node(v')=u$, as only nodes associated with labeling configurations alter this information.
    When we consider the node $v'$ in the procedure, we select a child $u'$ of $u$ such that $\rho(u')=s_{\lambda(v')}$; thus $s_{\lambda(v')}=s_{C_i}$ for some $i\in[\ell]$. Hence, the node $u'$ associated with $v'$ is a child of the node $u$ associated with $v$, and there is an edge $(u,u')$ in $\tau$, and so there is an edge between the corresponding nodes in $O$. 

    Finally, for every node $v\in V$ with $\lambda(v)$ being a labeling configuration, the node $u$ of $\tau$ that we consider when adding $v$ to $T$ is such that $\rho(u)=s_{\lambda(v)}$, where $\rho$ is a run of $A$ over $\tau$, and if $u\cdot 1,\dots,u\cdot \ell$ are the children of $u$ in $\tau$ with $\rho(u\cdot j)=s_{C_j}$ for every $j\in [\ell]$, the transition $(s_{\lambda(v)},\tau(u),(s_{C_1},\dots,s_{C_\ell}))$ occurs in $\delta^A$. When we add a transition to $\delta^A$ in $\mathsf{BuildNFTA(w)}$, we always add transitions of the form $(s_C,z,\cdots)$, where $C$ is a configuration of the form $(\cdot,\cdot,\cdot,z,\cdot,\cdot)$. Hence, if $\lambda(v)$ is a configuration of the form $(\cdot,\cdot,\cdot,z,\cdot,\cdot)$, we have that $\tau(u)=z$ and $\lambda'(v')=z$ for the corresponding node $v'$ of $O$. This concludes our proof.
\end{proof}

This concludes our proof of Lemma~\ref{lem:buildnfta}.

\OMIT{
\paragraph{Finalize the proof.} We have shown that there is one-to-one correspondence between the trees accepted by $A$ and the valid outputs of $M$ on $w$. However, the FPRAS for NFTAs is for trees of a given size. Since $M$ is a well-behaved ATO, there exists some polynomial $\mathsf{pol}:\mathbb{N}\rightarrow \mathbb{N}$ such that the size of every computation is bounded by $\mathsf{pol}(|w|)$. Clearly, $\mathsf{pol}(|w|)$ is also a bound on the size of the valid outputs of $M$ on $w$. We have already shown that every tree accepted by $A$ is obtained from a valid output by adding a single node under each leaf. Hence, if we define a polynomial $\mathsf{pol}'$ such that $\mathsf{pol}'(x)=2\times \mathsf{pol}(x)$ for every $x\in\mathbb{N}$, we have that $\mathsf{pol}'(|w|)$ is a bound on the size of the trees accepted by $A$.

Now, for every $i\in[\mathsf{pol}'(|w|)]$, we denote by $\mathcal{A}_i(A,\epsilon,\delta)$ the randomized algorithm that, given $A$, $\epsilon$ and $\delta$, satisfies:
\[
\text{\rm Pr}\left(|\mathcal{A}_i(A,\epsilon,\delta) - |L_i(A)||\ \leq\ \epsilon \cdot |L_i(A|\right)\ \geq\ 1-\delta.
\]
(Recall that ${L}_i(A)$ is the set of trees of size $i$ accepted by $A$.)
We claim that the randomized algorithm $\mathcal{A}(A,\epsilon,\delta)$ that executes $\mathcal{A}_i(A,\epsilon,\delta')$ for every $i\in[\mathsf{pol}'(|w|)]$, where $\delta'=2\times\mathsf{pol}'(|w|)$, and takes the sum of the results is an FPRAS for the problem of computing $|L(A)|$. Note that $L(A)=\bigcup_{i=1}^{\mathsf{pol}'(|w|)} L_i(A)$; hence, $|L(A)|=\sum_{i=1}^{\mathsf{pol}'(|w|)}|L_i(A)|$ (clearly, $L_i(A)\cap L_j(A)=\emptyset$ for every $i\neq j$).

Since each $\mathcal{A}_i$ runs in time polynomial in $|A|$, $\frac{1}{\epsilon}$, and $\log(\frac{1}{\delta'})$, clearly the algorithm $\mathcal{A}$ runs in time polynomial in $|A|$, $\frac{1}{\epsilon}$, and $\log(\frac{1}{\delta})$. Moreover, it holds that:
\begin{align*}
  &\text{\rm Pr}\left(|\mathcal{A}(A,\epsilon,\delta) - |L(A)||\ \leq\ \epsilon \cdot |L(A)|\right) 
  \\=&\text{\rm Pr}\left(\frac{1}{1+\epsilon}\cdot|L(A)|\ \leq \ \mathsf{A}(S,\epsilon,\delta)\ \leq \ (1+\epsilon) |L(A)|\right) 
  \\=&\text{\rm Pr}\left(\frac{1}{1+\epsilon}\cdot|L(A)|\ \leq \ \sum_{i=1}^{\mathsf{pol}'(|w|)}\mathcal{A}_i(A,\epsilon,\delta)\ \leq \ (1+\epsilon) |L(A)|\right) 
    \\&\text{\rm Pr}\left(\sum_{i=1}^{\mathsf{pol}'(|w|)}\frac{1}{1+\epsilon}\cdot|L_{i}(A)|\ \leq \ \sum_{i=1}^{\mathsf{pol}'(|w|)}\mathcal{A}_i(A,\epsilon,\delta)\ \leq \ \sum_{i=1}^{\mathsf{pol}'(|w|)}(1+\epsilon) |L_i(A)|\right) 
\end{align*}
Since for every $i\in[\mathsf{pol}'(|w|)]$ we have that:
\[\text{\rm Pr}\left(\frac{1}{1+\epsilon}\cdot|L_{i}(A)|\ \leq \ \mathcal{A}_i(A,\epsilon,\delta)\ \leq \ (1+\epsilon) |L_{i}(A)|\right) \ \geq \ 1-\delta'\]
and the events for each $i$ are independent, we conclude that:
\[\text{\rm Pr}\left(|\mathcal{A}(A,\epsilon,\delta) - |L(A)||\ \leq\ \epsilon \cdot |L(A)|\right)  \geq \ (1-\delta')^{\mathsf{pol}'(|w|)}.\]
Finally, we know (see, e.g.,~\cite{DBLP:journals/tcs/JerrumVV86}) that the following inequality holds:
\[(1-\frac{x}{2n})^n\ge 1-x\]
for any $0\le x\le 1$ and $n\ge 1$;
hence, we obtain the required FPRAS.
Therefore, $\mathcal{A}$ is also a randomized algorithm that, given $w$, $\epsilon$ and $\delta$ satisfies:
\[\text{\rm Pr}\left(|\mathcal{A}(A,\epsilon,\delta) - \mathsf{span}_M(w)|\ \leq\ \epsilon \cdot \mathsf{span}_M(w)\right) \ \geq \ 1-\delta.\]
where $A$ is the NFTA that we construct from $M$ using $\mathsf{BuildNFTA}$, and since we construct $A$ is polynomial time in $|w|$, $\mathcal{A}$ runs in time polynomial in $|w|$, $1/\epsilon$, and $\log(1/\delta)$.

\OMIT{With this claim in place, it is not hard to show that there is a run of $\mathcal{T}$ over $\tau$. In particular, we define the mapping $\Lambda$ as follows. For every $v\in V''$ that also occurs in $T$ we define:
\[\Lambda(v)=s_{v''}.\]
For every $v\in V''$ that does not occur in $T$ (these are the leaves of $\tau$ we define:
\[\Lambda(v)=s_{\accept}.\]
Every node $v$ of $\tau$ corresponding to a non-leaf of $T$ is also associated with some  node $v'$ of $T'$, and the claim implies that the transition
\[s_{v''}(\sem{O}(x_1,\dots,x_m))\rightarrow \sem{O}(s_{u_1''}(x_1),\dots, s_{u_m''}(x_m)\] occurs in $\delta$, where $u_1,\dots,u_m$ are the children of $v$ in $T$ (thus, also in $\tau$).
Every node $v$ of $\tau$ corresponding to a leaf of $T$ is also associated with some leaf $v'$ of $T'$, and the claim implies that the transition
\[s_{v''}(\sem{O}(x))\rightarrow \sem{O}(s_\accept(x))\] occurs in $\delta$.
Moreover, we always add the transition
$s_\accept(\epsilon)\rightarrow\epsilon$ to $\delta^A$ in line~3 of $\mathsf{DAGtoNFTA(G)}$. It is now easy to verify, that the mapping $\Lambda$ indeed defined a run of $\mathcal{T}$ over $\tau$.}

\OMIT{Let $T=(V,E,\mu)$ be a valid output of $M$ w.r.t.~$w$. Then, $T$ is the output of some accepting induced tree $T'=(V',E',\lambda)$ of $M$ w.r.t.~$w$.
For every node $v\in V$, we denote by $v^{T'}$ its corresponding node in $T'$, and by $v^G$ the node of $G$ representing the configuration $\lambda(v^{T'})$. Moreover, if the children of $v$ are $u_1,\dots,u_m$, we denote by $w_i^{T'}$ the  first node in the labeled-free path from $v_{T'}$ to $u_i^{T'}$ for every $i\in[m]$. If $u_i^{T'}$ is a direct child of $v^{T'}$, then we denote $w_i^{T'}=u_i^{T'}$.
Hence, the node representing the configuration of $w_i^{T'}$ in $G$ is $w_i^{G}$.

We will show that the ordered labeled tree $\tau=(V'',E'',\varphi,\succ)$ obtained from $T$ via the following procedure is accepted by $\mathcal{T}$:
\begin{enumerate}
    \item We start with $V''=V$, $\varphi(v)=\mu(v)$ for every $v\in V$, and $E''=E$.
    \item For every node $v\in V$ with children $u_1,\dots,u_m$, we use the order defined in line~18 when processing $v^G$ to define $\succ$; that is, if the order defined over the children of $v^G$ is $w_1^{G},\dots,w_m^{G}$, then the successor relation $\succ$ of $\tau$ is such that $u_{i+1}\succ u_i$, for all $i\in[m-1]$.
    \item We add a node $u$ under each leaf of $V''$ with $\varphi(u)=\epsilon$.
\end{enumerate}
Clearly, in this way, two different valid outputs $T_1, T_2$ of $M$ w.r.t.~$w$ give rise to two different ordered labeled trees $\tau_1,\tau_2$.

We prove, by induction on the size (i.e.,~number of vertices) of $\tau$, that there is a mapping $\Lambda:V''\rightarrow S$ such that:
\begin{itemize}
    \item for every $u\in V''$ with an ordered sequence $u_1,\dots,u_m$ of children, if $\Lambda(u)=s$ and $\Lambda(u_i)=s_i$ for $i\in[n]$, then
there exists $s(\varphi(u)(x_1,\dots,x_m))\rightarrow \varphi(u)(s_1(x_1),\dots,s_m(x_m))$ in $\delta$, and
\item for every leaf $u\in V''$, there is $s(\varphi(u))\rightarrow \varphi(u)$ in $\delta$.
\end{itemize}

The base case, $|V''|=2$, is simple. (Note that we always add an additional node when constructing $V''$; hence, the base case is when $V''$ contains two nodes.) In this case, the only node $r$ of $T'$ (the induced tree) is the root of the computation tree of $M$ w.r.t.~$w$. Since $T'$ is an accepting induced tree, we have that $r$ is of the form $(q_\accept,I,W,O,h_I,h_W,h_O)$. Therefore, $q_\init=q_\accept$ and $q_\accept\in Q_\mathcal{L}$. Moreover, since the computation tree contains a single node, the corresponding DAG $G$ is the computation tree itself. Hence, in lines $4,5,6$ of $\mathsf{Process}(\mathcal{T}, G, v)$ we will add a state $s_r$ to $S$ (which, in line~5 of $\mathsf{BuildNFTA}(w)$ will also be added to $S_0$) and a transition $s_r(\sem{O}(x))\rightarrow \sem{O}(s_\accept(x))$ to $\delta^A$. Together with the transition $s_\accept(\epsilon)\rightarrow\epsilon$ that is added to $\delta^A$ in line~3 of $\mathsf{BuildNFTA}(w)$, this is sufficient to allow a mapping $\Lambda$ satisfying the desired properties. In particular, we have that $\Lambda(v)=s_r$ for the node $v$ of $\tau$ corresponding to the node $r$. It is easy to verify that $\varphi(v)=\sem{O}$. Moreover, we define $\Lambda(u)=s_\accept$ for the only child $u$ of $v$ which is a leaf of the tree with $\varphi(u)=\epsilon$.

Next, we assume that the claim holds for $|V''|\in[c-1]$ and prove that it holds for $|V''|=c$. Let $z$ be the root of $\tau$, and let $y_1,\dots y_m$ be its children. For every $i\in[m]$ we denote by $\tau[u_i]$ the subtree of $\tau$ rooted at $u_i$. Each one of these subtrees contains at most $c-1$ nodes; thus, by the inductive assumption, for each such subtree there is a mapping $\Lambda_i$ that satisfies the desired properties. We take the union of these mapping (note that they are disjoint) and denote it by $\Lambda$. We will now show how to extend this mapping to the entire tree. 

Let $v$ be the node of $T$ corresponding to $z$ and let $u_1,\dots,u_m$ be the nodes of $T$ corresponding to $y_1,\dots,y_m$, respectively. For each $i\in[m]$, the node $u^{T'}_i$ (i.e.,~the node of $T'$ corresponding to $u_i$) must be such that $\lambda(u^{T'}_i)$ is a configuration associated with a labeling state of $M$. Therefore, whenever we process the corresponding node $u^G_i$ of $G$ in $\mathsf{Process}(\mathcal{T}, G, u^G_i)$, we return $\{(s_{u^G_i})\}$ in line~30.

Now, for everylet $w^i_1,\dots,w^i_t$ be the labeled-free path in $T'$ from }
}
\section{The Normal Form}

We prove the following result; note that there is no corresponding statement in the main body of the paper since the discussion on the normal form was kept informal.

\def\pronormalform{
	Consider a database $D$, a set $\dep$ of primary keys $\dep$, a CQ $Q(\bar x)$ from $\sjf$, a generalized hypertree decomposition $H$ of $Q$ of width $k$, and $\bar c \in \adom{D}^{|\bar x|}$. There exists a database $\hat{D}$, a CQ $\hat{Q}(\bar x)$ from $\sjf$, and a generalized hypertree decomposition $\hat{H}$ of $\hat{Q}$ of width $k+1$ such that:
	\begin{itemize}
		\item $(\hat{D},\hat{Q},\hat{H})$ is in normal form,
		\item $|\{D' \in \opr{D}{\dep} \mid \bar c \in Q(D')\}| = |\{D' \in \opr{\hat{D}}{\dep} \mid \bar c \in \hat{Q}(D')\}|$, and
		\item $(\hat{D},\hat{Q},\hat{H})$ is logspace computable in the combined size of $D,\dep,Q,H,\bar c$.
	\end{itemize}
}

\begin{proposition}\label{pro:normal-form}
	\pronormalform
\end{proposition}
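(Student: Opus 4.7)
The plan is to construct $(\hat D, \hat Q, \hat H)$ from $(D, Q, H)$ in three logspace stages, each of which preserves the number of operational repairs that entail the (possibly modified) query on the (possibly modified) database, while raising the width of the decomposition by at most one.

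\textbf{Stage 1 (absorb irrelevant relations into $\hat Q$).} For every relation $R$ occurring in $D$ but not in $Q$, I would introduce a fresh relation $R'$ of the same arity carrying the same primary key as $R$, replace each fact $R(\bar t) \in D$ by $R'(\bar t)$, and additionally insert one fact $R'(\bar c^*_R)$ whose tuple consists entirely of fresh constants. I would then add to $\hat Q$ the atom $R'(\bar y_R)$ with $\bar y_R$ a tuple of fresh variables. Because $R'(\bar c^*_R)$ forms a singleton (conflict-free) block, it belongs to every operational repair of $\hat D$, so the new atom is trivially satisfied in every operational repair; and because the $R'$-component of $\hat D$ has the same multiset of block sizes as the $R$-component of $D$, plus one additional singleton block contributing a multiplicative factor of one to the repair count, the number of operational repairs entailing the query is unchanged.

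\textbf{Stage 2 (reshape $H$ into a complete, 2-uniform decomposition).} I would first make $H$ complete: for each atom $R(\bar y)$ of the current query with no covering vertex in $H$, I would pick a vertex $v$ with $\bar y \subseteq \chi(v)$ (which exists by the tree-decomposition property) and attach as a new child of $v$ a leaf with bag $\bar y$ and $\lambda = \{R(\bar y)\}$; this preserves connectedness and does not increase the width. Next, I would enforce 2-uniformity by, for each non-leaf vertex $v$ with $t \geq 3$ children $u_1, \ldots, u_t$, inserting a right-leaning cascade of fresh intermediates $w_1, \ldots, w_{t-2}$ (so that $v$'s new children are $u_1, w_1$, each $w_i$ has children $u_{i+1}, w_{i+1}$, and $w_{t-2}$ has children $u_{t-1}, u_t$) with $\chi(w_i) = \chi(v)$ and $\lambda(w_i) = \lambda(v)$; and, for each non-leaf with a single child, adding a dummy leaf sibling with empty bag and label (to be populated in Stage~3). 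Connectedness is preserved because each intermediate bag duplicates an ancestor bag (and a variable in a child bag that is absent from $\chi(v)$ cannot occur in any other sibling's subtree, by connectedness in the original $H$), and the width is still at most $k$.

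\textbf{Stage 3 (strong completeness with count fix-up).} Finally, for every vertex $v$ of the current decomposition that is not the $\prec_T$-minimal covering vertex of any atom of the current query---this includes every newly introduced intermediate or dummy vertex, as well as any original vertex dominated by an earlier covering vertex---I would introduce a fresh unary relation $G_v$ (no primary key), add a fresh variable $z_v$ to $\chi(v)$, add the atom $G_v(z_v)$ to $\lambda(v)$, add the atom $G_v(z_v)$ to $\hat Q$, and add the singleton fact $G_v(c^*_v) \in \hat D$ with $c^*_v$ a fresh constant. Since $G_v$ occurs in no other vertex, $v$ becomes the unique (hence $\prec_T$-minimal) covering vertex of $G_v(z_v)$; since $G_v(c^*_v)$ is a singleton (conflict-free) block, $G_v(z_v)$ is satisfied by every operational repair, and the repair count is again preserved. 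Each such addition increases $|\lambda(v)|$ by at most one, so $\hat H$ has width at most $k+1$. Throughout all three stages, self-join-freeness of $\hat Q$ is automatic because every newly added atom uses a fresh relation, and the entire construction is a local, finitary manipulation of facts and of the tree $T$ encoded in the input, hence logspace-computable. The one subtlety worth flagging---and the part I expect to require the most care in the full write-up---is verifying that the connectedness condition survives Stage~2 and that the $\prec_T$-ordering after Stage~2 still makes each new vertex the (unique) $\prec_T$-minimal covering vertex of the atom it gets assigned in Stage~3; both follow from the simple observation that the fresh relations used in Stage~3 do not appear anywhere else in $\hat Q$.
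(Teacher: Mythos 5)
Your construction is correct and rests on the same basic device as the paper's: pad the query and the database with fresh, conflict-free unary facts and atoms that are present in (resp.\ satisfied by) every operational repair, so that the repair count is untouched while the decomposition acquires the missing structural properties at the cost of one extra atom per bag (hence width $k+1$). The differences are in the tree surgery. The paper first invokes the known logspace completion lemma to make $H$ complete at width $k$, and then replaces each vertex $v$ with $h$ children by a path of $h+1$ copies $v^{(1)},\ldots,v^{(h+1)}$, each carrying a private fresh unary atom $S^{(i)}_v(w^{(i)}_v)$; this achieves $2$-uniformity and strong completeness in one stroke, since every copy is the unique covering vertex of its private atom. You instead binarize with a cascade of bag-duplicating intermediates plus dummy leaves, and only afterwards attach a fresh $G_v$ atom to those vertices that are not already $\prec_T$-minimal covering vertices. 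This is equally valid (your connectedness argument for the cascade --- a variable missing from $\chi(v)$ lives in at most one child's subtree --- is exactly the right one), but the selectivity buys nothing over padding every vertex unconditionally and obliges you to run the logspace $\prec_T$-minimality test inside the reduction. One point where you are more careful than the paper: for a relation $R$ of $D$ absent from $Q$ you insert a guaranteed singleton witness fact over fresh constants, so the new atom is satisfiable in every repair even when all original $R$-blocks are emptied by the repairing process; the paper's corresponding step adds the atom $P_i(\bar z_i)$ without such a witness.

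The one thing to repair is Stage~1's renaming of $R$ to $R'$ ``carrying the same primary key'': this silently replaces $\dep$ by a different key set, whereas the proposition keeps $\dep$ fixed on both sides of the equality. The renaming is unnecessary --- keep the relation name $R$, add the single fact $R(\bar c^*_R)$ over fresh constants (its key value is new, so it forms a conflict-free singleton block under the original $\dep$) and the atom $R(\bar y_R)$ to $\hat Q$, which remains self-join-free since $R$ did not occur in $Q$. With that change, and the routine observation that the three stages compose as logspace functions, your proof goes through.
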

With the above proposition in place, together with the fact that $\spantl$ is closed under logspace reductions, to prove that $\sharp\mathsf{Repairs}[k]$ and $\sharp\mathsf{Sequences}[k]$ are in $\spantl$, for each $k > 0$, it is enough to focus on databases $D$, queries $Q$, and generalized hypertree decompositions $H$ such that $(D,Q,H)$ is in normal form.

We first prove Proposition~\ref{pro:normal-form} in the special case that $H$ is already complete. Then, the more general result will follow from the fact that logspace reductions can be composed (see, e.g.,~\cite{ArBa09}), and from the following well-known result:

\begin{lemma}[\cite{GoLS02}]\label{lem:complete-logspace}
	Given a CQ $Q(\bar x)$ and a generalized hypertree decomposition $H$ for $Q$ of width $k$, a complete generalized hypertree decomposition $H'$ of $Q$ of width $k$ always exists, and can be computed in logarithmic space  in the combined size of $Q$ and $H$.
\end{lemma}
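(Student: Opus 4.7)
My plan is to augment the given generalized hypertree decomposition $H = (T, \chi, \lambda)$ with additional leaf vertices, one per atom lacking a covering vertex, so as to obtain a complete generalized hypertree decomposition $H'$ of the same width $k$. The key observation is that by the first condition of a tree decomposition, every atom $R_i(\bar y_i)$ of $Q$ admits some vertex $v_i$ of $T$ such that $\bar y_i \setminus \bar x \subseteq \chi(v_i)$; if in addition $R_i(\bar y_i) \in \lambda(v_i)$, then $v_i$ is already a covering vertex for $R_i(\bar y_i)$, and nothing needs to be done. Otherwise, I will attach a fresh vertex $v_i'$ as a child of $v_i$, setting $\chi(v_i') := \bar y_i \setminus \bar x$ and $\lambda(v_i') := \{R_i(\bar y_i)\}$; by construction $v_i'$ is a covering vertex for $R_i(\bar y_i)$.

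The next step is to verify that the resulting $H' = (T', \chi', \lambda')$ is still a valid generalized hypertree decomposition of $Q$ and has width $k$. The inclusion $\chi'(v_i') \subseteq \bigcup_{R(\bar t) \in \lambda'(v_i')} \{t_1, \ldots, t_n\}$ holds trivially since $\bar y_i \setminus \bar x \subseteq \bar y_i$, so the coverage requirement of generalized hypertree decompositions is preserved. The first tree-decomposition condition is preserved because no vertex is removed. For the connectedness condition, note that every term $t \in \chi'(v_i')$ already belongs to $\chi(v_i) = \chi'(v_i)$, where $v_i$ is the parent of $v_i'$ in $T'$; hence the set $\{v \mid t \in \chi'(v)\}$ is obtained from the (already connected) set $\{v \mid t \in \chi(v)\}$ by attaching some leaves $v_i'$ to vertices that it already contains, and remains a connected subtree of $T'$. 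Finally, width is bounded because $|\lambda'(v_i')| = 1 \leq k$ (assuming w.l.o.g.\ $Q$ has at least one atom, otherwise the statement is vacuous).

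For the logspace bound, I would describe the construction as an enumeration procedure that, given $(Q,H)$ on the input tape, prints $H'$ position by position on the output tape. The procedure first outputs the vertices, edges, and $\chi/\lambda$ labels of the original $H$ by iterating over their binary encodings in the input (requiring only $O(\log)$-sized pointers into $Q$ and $H$). Then, for each atom $R_i(\bar y_i)$ of $Q$, it iterates over the vertices of $T$ to check whether any of them is a covering vertex for $R_i(\bar y_i)$; if none is found, it re-iterates to locate a vertex $v_i$ with $\bar y_i \setminus \bar x \subseteq \chi(v_i)$ (which must exist) and outputs a fresh identifier for $v_i'$, the edge from $v_i$ to $v_i'$, and the labels $\chi'(v_i') = \bar y_i \setminus \bar x$ and $\lambda'(v_i') = \{R_i(\bar y_i)\}$, all represented by pointers into the input. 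Throughout, only a bounded number of indices ranging over atoms and vertices are kept in the working tape, yielding logarithmic space overall.

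The main point to be careful about is the connectedness condition of tree decompositions, since adding fresh vertices can in principle violate it; this is why I insist on attaching each new $v_i'$ as a child of a vertex $v_i$ whose $\chi$-label already contains $\bar y_i \setminus \bar x$, which makes the preservation of connectedness immediate. Beyond this, the argument is routine.
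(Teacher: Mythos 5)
Your construction is correct. Note, however, that the paper does not actually prove this lemma: it is stated with a citation to the Gottlob--Leone--Scarcello work and used as a black box, so there is no in-paper argument to compare yours against. Your proof is the standard one and all the checks go through: attaching, for each uncovered atom $R_i(\bar y_i)$, a fresh leaf $v_i'$ under a vertex $v_i$ whose bag already contains the (non-answer) variables of $\bar y_i$ preserves condition (1) trivially, preserves connectedness because the vertex set of each variable only gains leaves hanging off vertices it already contains, satisfies the coverage requirement since $\lambda(v_i') = \{R_i(\bar y_i)\}$ covers $\chi(v_i')$, and keeps the width at $k$ since $|\lambda(v_i')| = 1$. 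The logspace argument is also fine: the whole procedure only needs a constant number of pointers into $Q$ and $H$. Two cosmetic remarks: since $\chi$ maps into $2^{\var{Q}\setminus\bar x}$, you should set $\chi(v_i')$ to the \emph{variables} of $\bar y_i$ outside $\bar x$ (discarding any constants in $\bar y_i$), which changes nothing in the verification; and your reading of the covering-vertex condition as $\bar y_i\setminus\bar x\subseteq\chi(v)$ is the one the paper itself uses implicitly in its normal-form construction, so that is not a gap.
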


In the rest of this section, we focus on proving Proposition~\ref{pro:normal-form}, assuming that $H$ is complete. We start by discussing how $\hat{D}$, $\hat{Q}(\bar x)$ and $\hat{H}$ are constructed. In what follows, fix a database $D$, a set $\dep$ of primary keys, a CQ $Q(\bar x)$ from $\sjf$, a \emph{complete} generalized hypertree decomposition $H$ of $Q$ of width $k$, and a tuple $\bar c \in \adom{D}^{\bar x}$.

\medskip
\noindent\paragraph{\underline{Construction of $(\hat{D},\hat{Q},\hat{H})$}}

\smallskip
\noindent
Assume $H = (T,\chi,\lambda)$, with $T=(V,E)$, and assume $Q(\bar x)$ is a CQ of the form $\text{Ans}(\bar x) \text{ :- } R_1(\bar y_1),\ldots,R_n(\bar y_n)$.

\medskip
\noindent
\textbf{The Query $\hat{Q}$.} The query $\hat{Q}$ is obtained by modifying $Q$ as follows. If $P_1/n_1,\ldots,P_m/n_m$ are all the relations occurring in $D$ but not in $Q$, add two atoms of the form $P_i(\bar z_i)$ and $P'_i(z_i')$ to $Q$, for each $i \in [m]$, where $P_i'/1$ is a fresh unary relation, $\bar z_i$ is a tuple of $n_i$ distinct variables not occurring in $Q$, and $z_i' \not \in \bar z_i$ is a variable not occuring in $Q$. Moreover, if $v_1,\ldots,v_\ell$ are all the vertices in $V$, with $v_i$ having $h_i \ge 0$ children in $T$, create $h_i + 1$ fresh unary relations $S^{(1)}_{v_i}/1,\ldots,S^{(h_i + 1)}_{v_i}/1$, for each $i \in [\ell]$, not occurring anywhere in $Q$, and add to $Q$ the atom $S^{(j)}_{v_i}(w^{(j)}_{v_i})$, for each $i \in [\ell]$, and $j \in [h_i + 1]$, where $w^{(j)}_{v_i}$ is a variable. Hence, $\hat{Q}$ is of the form:
\begin{multline*}
\text{Ans}(\bar x) \text{ :- } R_1(\bar y_1),\ldots,R_n(\bar y_n), P_1(\bar z_1),P'_1(z_1') \ldots, P_m(\bar z_m),P'_m(z_m'), \\
S^{(1)}_{v_1}(w^{(1)}_{v_1}),\ldots,S^{(h_1 + 1)}_{v_1}(w^{(h_1 + 1)}_{v_1}),\ldots,S^{(1)}_{v_\ell}(w^{(1)}_{v_\ell}),\ldots,S^{(h_\ell + 1)}_{v_\ell}(w^{(h_\ell + 1)}_{v_\ell}).
\end{multline*}
The query $\hat{Q}$ is clearly self-join free.

\medskip
\noindent
\textbf{The Database $\hat{D}$.} The database $\hat{D}$ is obtained from $D$ by adding an atom of the form $P'_i(c)$, for each $i \in [m]$, and an atom of the form $S^{(j)}_v(c)$, for each $v \in V$, and $j \in [h+1]$, where $h \ge 0$ is the number of children of $v$ in $T$, and $c$ is some constant.

\medskip
\noindent
\textbf{The Decomposition $\hat{H}$.} We now discuss how $\hat{H} = (\hat{T},\hat{\chi},\hat{\lambda})$, with $\hat{T} = (\hat{V},\hat{E})$, is constructed:
\begin{itemize}
	\item Concerning the vertex set $\hat{V}$, for each vertex $v \in V$ in $T$, having $h \ge 0$ children $u_1,\ldots,u_h$, $\hat{V}$ contains $h+1$ vertices $v^{(1)},\ldots,v^{(h+1)}$, and for each $i \in [m]$, $\hat{V}$ contains the vertices $v_{P_i}$ and $v_{P'_i}$.
	
	\item The edge set $\hat{E}$ contains the edges $v_{P_i} \rightarrow v_{P'_i}$, $v_{P_i} \rightarrow v_{P_{i+1}}$, for each $i \in [m-1]$, and the edges $v_{P_m} \rightarrow v_{P'_m}$, $v_{P_m} \rightarrow v^{(1)}$, where $v \in V$ is the root of $T$. Moreover, for each \emph{non-leaf} vertex $v \in V$ with $h > 0$ children $u_1,\ldots,u_h$, $\hat{E}$ contains the edges $v^{(i)} \rightarrow u^{(1)}_i$ and $v^{(i)} \rightarrow v^{(i+1)}$, for each $i \in [h]$.
	
	\item Concerning $\hat{\chi}$ and $\hat{\lambda}$, for each $i \in [m]$, $\hat{\chi}(v_{P_i}) = \bar z_i$, $\hat{\chi}(v_{P'_i}) = \{z_i'\}$, and $\hat{\lambda}(v_{P_i}) = \{P_i(\bar z_i)\}$, $\hat{\lambda}(v_{P'_i}) = \{P_i'(z_i')\}$, and for each vertex $v \in V$ with $h \ge 0$ children, $\hat{\chi}(v^{(i)}) = \chi(v) \cup \{w^{(i)}_v\}$, and $\hat{\lambda}(v^{(i)}) = \lambda(v) \cup \{S^{(i)}_v(w^{(i)}_v)\}$, for each $i \in [h+1]$.
\end{itemize}

\medskip
\noindent\paragraph{\underline{Proving Proposition~\ref{pro:normal-form} when $H$ is complete}}

\smallskip
\noindent
We now proceed to show that $(\hat{D},\hat{Q},\hat{H})$ enjoys all the properties stated in Proposition~\ref{pro:normal-form}, assuming that $H$ is complete.

\medskip
\noindent
\textbf{Complexity of construction.}
We start by discussing the complexity of constructing $(\hat{D},\hat{Q},\hat{H})$. Constructing the query $\hat{Q}$ requires producing, in the worst case, two atoms for each relation in $D$, and $|V|$ atoms, for each vertex $v \in V$, and so the number of atoms in $\hat{Q}$ is $O(|Q| + |D| + |V|^2)$, where $|Q|$ is the number of atoms in $Q$. Each atom can be constructed individually using logarithmic space, and by reusing the space for each atom. We can show the database $\hat{D}$ is constructible in logspace with a similar argument. Regarding $\hat{H}$, the set $\hat{V}$ contains at most $|V|$ vertices for each node of $V$, plus two nodes for each relation in $D$. Hence, $|\hat{V}| \in O(|V|^2 + |D|)$, and each vertex of $\hat{V}$ can be constructed by a simple scan of $E$, $D$ and $Q$. Moreover, $\hat{E}$ contains $2 \times m \le 2 \times |D|$ edges, i.e. the edges of the binary tree involving the vertices of the form $v_{P_i},v_{P'_i}$, and the vertex $v^{(1)}$, where $v \in V$ is the root of $T$, plus at most $2 \times |V|$ edges for each vertex of $V$, so overal $|\hat{E}| \in O(|V|^2 + |D|)$. Again, each edge is easy to construct. Finally, constructing $\hat{\chi}$ and $\hat{\lambda}$ requires a simple iteration over $D$ and $\hat{V}$, and copying the contents of $\chi$, and $\lambda$.

\medskip
\noindent
\textbf{$\hat{H}$ is a proper decomposition of width $k+1$.}
We now argue that $\hat{H}$ is a generalized hypertree decomposition of $\hat{Q}$ of width $k+1$. We need to prove that \emph{(1)} for each atom $R(\bar y)$ in $\hat{Q}$, there is a vertex $v \in \hat{V}$ with $\bar y \setminus \bar x \subseteq \hat{\chi}(v)$, \emph{(2)} for each $x \in \var{\hat{Q}} \setminus \bar x$, the set $\{v \in \hat{V} \mid x \in \hat{\chi}(v)\}$ induces a connected subtree of $\hat{T}$, \emph{(3)} for each $v \in \hat{V}$, $\hat{\chi}(v) \subseteq \var{\hat{\lambda}(v)}$, and \emph{4)} $\max_{v \in \hat{V}}|\lambda(v)| = k+1$.

We start with \emph{(1)}. For each atom $R_i(\bar y_i)$, for $i \in [n]$ in $\hat{Q}$, since there is a vertex $v \in V$ with $\bar y_i \setminus \bar x \subseteq \chi(v)$, then, e.g., the vertex $v^{(1)} \in \hat{V}$ is such that $\bar y_i \setminus \bar x \subseteq \hat{\chi}(v^{(1)})$, by construction of $\hat{\chi}$. Furthermore, for each atom $P_i(\bar z_i)$ (resp., $P_i'(z_i')$) in $\hat{Q}$, with $i \in [m]$, by construction of $\hat{\chi}$, $\bar z_i \setminus \bar x = \bar z_i \subseteq \hat{\chi}(v_{P_i})$ (resp., $z_i' \not \in \bar x$, and $z_i' \in \hat{\chi}(v_{P'_i})$), and for each atom $S^{(i)}_v(w^{(i)}_v)$ in $\hat{Q}$, with $v \in V$, we have that $w^{(i)}_v \not \in \bar x$, and $w^{(i)}_v \in \hat{\chi}(v^{(i)})$, again by construction of $\hat{\chi}$.

We now show item \emph{(2)}. Consider a variable $x \in \var{\hat{Q}} \setminus \bar x$. We distinguish two cases:
\begin{itemize} 
\item Assume $x \not \in \var{Q} \setminus \bar x$, i.e., $x \in \var{\hat{Q}} \setminus \var{Q}$. Hence, either $x$ occurs in some atom of the form $P_i(\bar z_i)$ (resp., $P_i'(z_i')$) in $\hat{Q}$, which means $x$ occurs only in $\hat{\chi}(v_{P_i})$ (resp., $\hat{\chi}(v_{P_i'})$), and thus $v_{P_i}$ (resp., $v_{P_i'}$) trivially forms a connected subtree, or $x = w^{(i)}_v$, for some atom of the form $S^{(i)}_v(w^{(i)}_v)$ in $\hat{Q}$, with $v \in V$, which means $x$ occurs only in $\hat{\chi}(v^{(i)})$, and thus $v^{(i)}$ trivially forms a connected subtree. 

\item Assume now that $x \in \var{Q} \setminus \bar x$. By contruction of $\hat{H}$, for each $v \in V$ having  $x \in \chi(v)$, $x$ occurs in $\hat{\chi}(v^{(i)})$, for each $i \in [h+1]$, where $h \ge 0$ is the number of children of $v$ in $T$, and $x$ does not occur anywhere else in $\hat{T}$. Since the set of vertices $\{v \mid x \in \chi(v)\}$ forms a subtree of $T$, by assumption, and since for each $v \in V$, with $h \ge 0$ children $u_1,\ldots,u_h$ in $T$, the vertices $v^{(1)},\ldots,v^{(h+1)}$ form a path in $\hat{T}$, and since when $h > 0$, each $v^{(i)}$, with $i \in [h]$, is connected to $u_i^{(1)}$ in $\hat{T}$, the set $\{v \in \hat{V} \mid x \in \hat{\chi}(v)\}$ forms a subtree of $\hat{T}$, as needed.
\end{itemize}

We finally show item \emph{(3)}. Consider a vertex $u \in \hat{V}$. If $v = v_{P_i}$ (resp., $v = v_{P_i'}$), for some $i \in [m]$, then $\hat{\chi}(v) = \bar z_i$ (resp., $z_i'$), and $\hat{\lambda}(v) = \{P_i(\bar z_i)\}$ (resp., $\{P_i'(z_i')\}$), and the claim follows trivially. If $u = v^{(i)}$, for some $v \in V$, and $i \in [h+1]$, where $h \ge 0$ is the number of children of $v$ in $T$, then $\hat{\chi}(v^{(i)}) = \chi(v) \cup \{w^{(i)}_c\}$ and $\hat{\lambda(v)} = \lambda(v) \cup \{S^{(i)}_v(w^{(i)}_v)\}$. Since $\chi(v) \subseteq \var{\lambda(v)}$, the claim follows. Finally, to prove \emph{4)} it is enough to observe that for any vertex $u \in \hat{V}$, $\hat{\lambda}(u)$ contains either one atom, or one atom more than $\lambda(v)$, where $v \in V$ is the node of $T$ from which $u$ has been contructed. Hence, $\hat{H}$ is a generalized hypertree decomposition of $\hat{Q}$ of width $k+1$.

\smallskip
\noindent
\textbf{$(\hat{D},\hat{Q},\hat{H})$ is in normal form.}
We now prove that $(\hat{D},\hat{Q},\hat{H})$ is in normal form, i.e., \emph{(1)} every relation in $\hat{D}$ also occurs in $\hat{Q}$, \emph{(2)} $\hat{H}$ is 2-uniform, and \emph{(3)} strongly complete. 
The fact that \emph{(1)} holds follows by construction of $\hat{D}$ and $\hat{Q}$.
To see why \emph{(2)} holds, note that every non-leaf node in $\hat{T}$  has always exactly two children. 
Finally, we show item \emph{(3)}. Consider an atom $\alpha$ in $\hat{Q}$. By construction of $\hat{\chi}$ and $\hat{\lambda}$, if $\alpha$ is of the form $P_i(\bar z_i)$ (resp., $P_i'(z_i)$), then $v_{P_i}$ (resp., $v_{P_i'}$) is a covering vertex for $\alpha$. Moreover, if $\alpha$ is of the form $S^{(i)}_v(w^{(i)}_v)$, for some $v \in V$, then $v^{(i)}$ is a covering vertex for $S^{(i)}_v(w^{(i)}_v)$. Finally, if $\alpha$ is of the form $R_i(\bar y_i)$, with $i \in [n]$, then if $v \in V$ is the covering vertex of $\alpha$ in $T$ (it always exists, since $H$ is complete, by assumption), then any vertex of the form $v^{(j)}$ in $\hat{V}$ is a covering vertex for $R_i(\bar y_i)$ in $\hat{T}$. Hence, $\hat{H}$ is complete. To see that $\hat{H}$ is \emph{strongly} complete it is enough to observe that every vertex $v$ of $\hat{T}$ is such that $\hat{\lambda}(v)$ contains an atom $\alpha$, whose variables all occur in $\hat{\chi}(v)$, such that that $\alpha \not \in \hat{\lambda}(u)$, for any other $u \in \hat{V}$, different from $v$.
So, $(\hat{D},\hat{Q},\hat{H})$ is in normal form.

\smallskip
\noindent
\textbf{$(\hat{D},\hat{Q},\hat{H})$ preserves the counts.} In this last part we need to show
\[
|\{D' \in \opr{D}{\dep} \mid \bar c \in Q(D')\}| =\\
|\{D' \in \opr{\hat{D}}{\dep} \mid \bar c \in \hat{Q}(D')\}|.
\]
To prove the claim, we rely on the following auxiliary notion:

\begin{definition}[\textbf{Key Violation}]\label{def:violation}
	Consider a schema $\ins{S}$, a database $D'$ over $\ins{S}$, and a key $\kappa = \key{R} = A$, with $R \in \ins{S}$. A {\em $D'$-violation} of $\kappa$ is a set $\{f,g\} \subseteq D'$ of facts such that $\{f,g\} \not\models \kappa$.	
	We denote the set of $D'$-violations of $\kappa$ by $\viol{D'}{\kappa}$. Furthermore, for a set $\dep'$ of keys, we denote by $\viol{D'}{\dep'}$ the set $\{(\kappa,v) \mid \kappa \in \dep' \textrm{~~and~~} v \in \viol{D'}{\kappa}\}$. \hfill\markfull
\end{definition}

Note that any $(D',\dep')$-justified operation of the form $-F$, for some database $D'$ and set $\dep'$ of keys, is such that $F \subseteq \{f,g\}$, where $\{f,g\}$ is some $D'$-violation in $\viol{D'}{\dep'}$.
We now reason about the sets $\viol{D}{\dep}$ and $\viol{\hat{D}}{\dep}$ of $D$-violations and $\hat{D}$-violations, respectively, of all keys in $\dep$. In particular, note that $\hat{D} = D \cup C$, where $D \cap C = \emptyset$, and each fact $\alpha \in C$ is over a relation that no other fact mentions in $\hat{D}$. Hence, we conclude that $\viol{D}{\dep} = \viol{\hat{D}}{\dep}$. Moreover, we observe that $\hat{Q}$ contains all atoms of $Q$, and in addition, for each fact $R(\bar t) \in C$, a single atom $R(\bar y)$, where each variable in $\bar y$ occurs exactly ones in $\hat{Q}$. The above observations imply that:
\begin{enumerate}
	\item for every database $D'$, $D' \in \opr{D}{\dep}$ iff $D' \cup C \in \opr{\hat{D}}{\dep}$;
	\item for each $D' \in \opr{D}{\dep}$, $\bar c \in Q(D)$ iff $\bar c \in \hat{Q}(D' \cup C)$, where $D' \cup C \in \opr{\hat{D}}{\dep}$.
\end{enumerate}
Hence, items~(1) and item~(2) together imply $|\{D' \in \opr{D}{\dep} \mid \bar c \in Q(D')\}| = |\{D' \in \opr{\hat{D}}{\dep} \mid \bar c \in \hat{Q}(D')\}|$. This concludes our proof.

\section{Proof of Lemma~\ref{lem:repairs-ato}}
	
We proceed to show that:

\begin{manuallemma}{\ref{lem:repairs-ato}}
	\lemrepairsato
\end{manuallemma}

We start by proving that the algorithm can be implemented as a well-behaved ATO $M^k_R$, and then we show that the number of valid outputs of $M^k_R$ on input $D$, $\dep$, $Q(\bar x)$, $H=(T,\chi,\lambda)$, and $\bar c \in \adom{D}^{\bar x}$, with $(D,Q,H)$ in normal form, is precisely the number of operational repairs $D' \in \opr{D}{\dep}$ such that $\bar c\in Q(D')$.

\subsection{Item (1) of Lemma~\ref{lem:repairs-ato}}
We are going to give a high level description of the ATO $M^k_R$ underlying the procedure $\sharp\mathsf{Rep}[k]$ in Algorithm~\ref{alg:repairs}.
We start by discussing the space used in the working tape. Every object that $\sharp\mathsf{Rep}[k]$ uses from its input and stores in memory, such as vertices of $T$, atoms/tuples from $D$ and $Q$, etc., can be encoded using logarithmically many symbols via a pointer encoded in binary to the part of the input where the object occurs, and thus can be stored in the working tape of $M^k_R$ using logarithmic space. Hence, the node $v$ of $T$, the sets $A$ and $A'$, and the fact $\alpha$, all require logarithmic space in the working tape (recall that $A$ and $A'$ contain a fixed number, i.e., $k$, of tuple mappings). It remains to discuss how $M^k_R$ can check that \emph{1)} $A \cup A' \cup \{\bar x \mapsto \bar c\}$ is coherent, \emph{2)} a vertex $v$ of $T$ is the $\prec_T$-minimal covering vertex of some atom, and \emph{3)} iterate over each element of $\block{\dep}{R_{i_j},D}$, for some relation $R_{i_j}$; checking whether $v$ is a leaf, or guessing a child of $u$ are all trivial tasks. To perform \emph{1)}, it is enough to try every variable $x$ in $A' \cup \{\bar x \mapsto \bar c\}$, and for each such $x$, try every pair of tuple mappings, and check if they map $x$ differently; $x$ and the two tuple mappings can be stored as pointers. To perform \emph{2)}, $M^k_R$ must be able to compute the depth of $v$ in $T$, and iterate over all vertices at the same depth of $v$, in lexicographical order, using logarithmic space. The depth can be easily computed by iteratively following the (unique) path from the root of $T$ to $v$, in reverse, and increasing a counter at each step. By encoding the counter in binary, the depth of $v$ requires logarithmically many bits and can be stored in the working tape. Iterating all vertices with the same depth of $v$ in lexicographical order can be done iteratively, by running one pass over each node of $T$, computing its depth $d$, and in case $d$ is different from the depth of $v$, the node is skipped. During this scan, we keep in memory a pointer to a node that at the end of the scan will point to the lexicographically smaller node, e.g., $u$, among the ones with the same depth of $v$. Then, when the subsequent node must be found, a new scan, as the above, is performed again, by looking for the lexicographically smallest vertex that is still larger than the previous one, i.e., $u$. This is the next smallest node, and $u$ is updated accordingly. The process continues, until the last node is considered. Storing a pointer to $u$ is feasible in logspace, and thus, together with the computation of the depth $v$ in logspace, it is possible to search for all nodes $u$ of $T$ such that $u \prec_T v$ in logspace. Then, checking for each of these nodes whether $\lambda(u)$ contains $R_{i_j}(\bar c_j)$, i.e., $v$ is not the $\prec_T$-minimal covering vertex for $R_{i_j}(\bar c_j)$ is trivial. 
Regarding \emph{3)}, one can follow a similar approach to the one for iterating the nodes of the same depth of $v$ lexicographically, where all the key values of the form $\keyval{\dep}{R_{i_j}(\bar t)}$ are instead traversed lexicographically.

We now consider the space used in the labeling tape of $M^k_R$. For this, according to Algorithm~\ref{alg:repairs}, $M^k_R$ needs to write in the labeling tape facts $\alpha$ of the database $D$, which, as already discussed, can be uniquely encoded as a pointer to the input tape where $\alpha$ occurs.

Regarding the size of a computation of $M^k_R$, note that each vertex $v$ of $T$ makes the machine $M^k_R$ execute all the lines from line~\ref{line:begin-repairs} to line~\ref{line:end-repairs} in Algorithm~\ref{alg:repairs}, before it  universally branches to one configuration for each child of $v$. Since all such steps can be performed in logspace, they induce at most a polynomial number of existential configurations, all connected in a path of the underlying computation of $M^k_R$. Since the number of vertices of $T$ is linear w.r.t.\ the size of the input, and $M^k_R$ never processes the same node of $T$ twice, the total size of any computation of $M^k_R$ be polynomial.

It remains to show that for any computation $T'$ of $M^k_R$, all labeled-free paths in $T'$ contain a bounded number of universal computations. This follows from the fact that before line~\ref{line:end-repairs}, $M^k_R$ has necessarily moved to a labeling state (recall that $H$ is strongly complete, and thus every vertex of $T$ is the $\prec_T$-minimal covering vertex of some atom). Hence, when reaching line~\ref{line:end-repairs}, either $v$ is a leaf, and thus $M^k_R$ halts, and no universal configuration have been visited from the last labeling configuration, or $M^k_R$ universally branches to the (only two) children of $v$, for each of which, a labeling configuration will necessarily be visited (again, because $H$ is strongly complete). Branching to two children requires visiting only one non-labeling configuration on each path. Hence, $M^k_R$ is a well-behaved ATO.

\subsection{Item (2) of Lemma~\ref{lem:repairs-ato}}
Fix a database $D$, a set $\dep$ of primary keys, a CQ $Q(\bar x)$, a generalized hypertree decomposition $H=(T,\chi,\lambda)$ of $Q$ of width $k$, and a tuple $\bar c \in \adom{D}^{\bar x}$, where $(D,Q,H)$ is in normal form.
We show that there is a bijection from the valid outputs of $M^k_R$ on input $(D,\dep,Q,H,\bar c)$ to the set of operational repairs $\{D' \in \opr{D}{\dep} \mid \bar c\in Q(D')\}$. We define a mapping $\mu$ in the following way. Each valid output $O = (V,E,\lambda')$ is mapped to the set $\{\lambda'(v)\mid v\in V\}\setminus\{\bot\}$. Intuitively, we collect all the labels of the nodes of the valid output which are not labeled with $\bot$. To see that $\mu$ is indeed a bijection as needed, we now show the following three statements:
\begin{enumerate}
	\item $\mu$ is correct, that is, it is indeed a function from the set
	$$\{O \mid O \text{ is a valid output of } M \text{ on } (D,\dep,Q,H,\bar c)\}$$ to 
	$\{D' \in \opr{D}{\dep} \mid \bar c\in Q(D')\}$.
	\item $\mu$ is injective.
	\item $\mu$ is surjective.
\end{enumerate}

\noindent
\paragraph{The mapping $\mu$ is correct.} Consider an arbitrary valid output $O = (V,E,\lambda')$ of $M^k_R$ on $(D,\dep,Q,H,\bar c)$. Let $D' = \mu(O)$. First note that all labels assigned in Algorithm~\ref{alg:repairs} are database atoms or the symbol $\bot$. Thus, $D'$ is indeed a database. Further, we observe in Algorithm~\ref{alg:repairs} that for every relation $R$ and every block $B\in\block{\dep}{R,D}$ exactly one vertex is labeled in the output $O$ (with either one of the atoms of the block or $\bot$) and, hence, we have that $D'\models \dep$, i.e., $D' \in \opr{D}{\dep}$ since every consistent subset of $D$ is an operational repair of $D$ w.r.t.~$\dep$. What remains to be argued is that $\bar c\in Q(D')$.

Since we consider a complete hypertree decomposition, every atom $R_i(\bar y_i)$ of $Q$ has a $\prec_{T}$-minimal covering vertex $v$. By definition of covering vertex, we have that $\bar y_i\subseteq \chi(v)$ and $R_i(y_i)\in\lambda(v)$. In the computation, for each vertex $v$ of $T$, we guess a set $A' = \{ \bar y_{i_1} \mapsto \bar c_1,\ldots,\bar y_{i_\ell} \mapsto \bar c_\ell\}$ assuming $\lambda(v) = \{ R_{i_1}(\bar y_{i_1}), \ldots, R_{i_\ell}(\bar y_{i_\ell})\}$, such that $R_{i_j}(\bar c_j) \in D$ for $j \in [\ell]$ and all the mappings are coherent with $\bar x\mapsto \bar c$. In particular, when we consider the $\prec_{T}$-minimal covering vertex of the atom $R_i(y_i)$ of $Q$, we choose some mapping $\bar y_i\mapsto \bar c_i$ such that $R_i(\bar c_i)\in D$. When we consider the block $B$ of $R_i(\bar c_i)$ in line~5, we choose $\alpha=R_i(\bar c_i)$ in line~7, which ensures that $D'\cap B=R_i(\bar c)$; that is, this fact will be in the repair. Since $Q$ is self-join-free, the choice of which fact to keep from the other blocks of $\block{\dep}{R_i,D}$ can be arbitrary (we can also remove all the facts of these blocks).  When we transition from a node $v$ of $T$ to its children, we let $A:=A'$, and then, for each child, we choose a new set $A'$ in line~2 that is consistent with $A$ and with $\bar x\mapsto \bar c$.

This ensures that two atoms $R_i(\bar y_i)$ and $R_j(\bar y_j)$ of $Q$ that share at least one variable are mapped to facts $R_i(\bar c_i)$ and $R_j(\bar c_j)$ of $D$ in a consistent way; that is, a variable that occurs in both $\bar y_i$ and $\bar y_j$ is mapped to the same constant in $\bar c_i$ and $\bar c_j$. Let $v$ and $u$ be the $\prec_{T}$-minimal covering vertices of $R_i(\bar y_i)$ and $R_j(\bar y_j)$, respectively. Let $z$ be a variable that occurs in both $\bar y_i$ and $\bar y_j$. By definition of covering vertex, we have that $\bar y_i\subseteq \chi(v)$ and $\bar y_j\subseteq \chi(u)$. By definition of hypertree decomposition, the set of vertices $s\in T$ for which $z\in\chi(s)$ induces a (connected) subtree of $T$. Therefore, there exists a subtree of $T$ that contains both vertices $v$ and $u$, and such that $z\in\chi(s)$ for every vertex $s$ of this subtree. When we process the root $r$ of this subtree, we map the variable $z$ in the set $A'$ to some constant (since there must exist an atom $R_k(\bar y_k)$ in $\lambda(r)$ such that $z$ occurs in $\bar y_k$ by definition of a hypertree decomposition). Then, when we transition to the children of $r$ in $T$, we ensure that the variable $z$ is mapped to the same constant, by choosing a new set $A'$ of mappings that is consistent with the set of mappings chosen for $r$, and so forth. Hence, the variable $z$ will be consistently mapped to the same constant in this entire subtree, and for every atom $R_i(\bar y_i)$ of $Q$ that mentions the variable $z$ it must be the case that the $\prec_{T}$-minimal covering vertex $v$ of $R_i(\bar y_i)$ occurs in this subtree because $z\in\chi(v)$. Hence, the sets $A$ of mappings that we choose in an accepting computation map the atoms of the query to facts of the database in a consistent way, and this induces a homomorphism $h$ from $Q$ to $D$. since we choose mappings that are coherent with the mapping $\bar x\mapsto\bar c$ of the answer variables of $Q$, we get that $\bar c\in Q(D')$.

\medskip
\noindent
\paragraph{The mapping $\mu$ is injective.} Assume that there are two valid outputs $O = (V,E,\lambda')$ and $O' = (V',E',\lambda'')$ of $M^k_R$ on $(D,\dep,Q,H,\bar c)$ such that $\mu(O) = \mu(O')$. Note that the order in which the algorithm goes through relations $R_{i_j}$ in $\lambda(v)$, for some vertex $v$ in $T$, and blocks $B$ of $R_{i_j}$ is arbitrary, but fixed. Further, we only produce one label per block, since the choices for each block of some atom $R_{i_j}(\bar y_i)$ are only made in the $\prec_{T}$-minimal covering vertex of $R_i(\bar y_i)$. Now, since $Q$ is self-join-free, no conflicting choices can be made on two different nodes for the same relation. Since every block corresponds to exactly one vertex in the output, and the order in which the blocks are dealt with in the algorithm is fixed, all outputs are the same up to the labeling function, i.e., $V=V'$ and $E=E'$ (up to variable renaming) and we only need to argue that $\lambda'=\lambda''$ to produce a contradiction to the assumption that $O\neq O'$. To this end, see again that every block of $D$ corresponds to one vertex of $V$, which is labeled with either an atom of that block or the symbol $\bot$ (to represent that no atom should be kept in the repair). Fix some arbitrary vertex $v\in V$. Let $B_v$ be the block that corresponds to the vertex $v$. If $\lambda'(v)=\bot$, we know that $\mu(O)\cap B_v = \emptyset$. Since $\mu(O) = \mu(O')$, we know that also $\mu(O')\cap B_v = \emptyset$ and thus $\lambda''(v)=\bot$ (if $\lambda''(v)=\alpha$ for some $\alpha\in B_v$, then $\mu(O')\cap B_v = \alpha$). On the other hand, if $\lambda'(v)=\alpha$ for some $\alpha\in B_v$, with an analogous argument, we have that also $\lambda''(v)=\alpha$ as needed.

\medskip
\noindent
\paragraph{The mapping $\mu$ is surjective.} Consider an arbitrary operational repair $D'\in \{D' \in \opr{D}{\dep} \mid \bar c\in Q(D')\}$. We need to show that there exists a valid output $O = (V,E,\lambda')$ of $M^k_R$ on $(D,\dep,Q,H,\bar c)$ such that $\mu(O)=D'$. Since $D' \in \opr{D}{\dep}$ such that $\bar c\in Q(D')$, we know that there exists a homomorphism $h$ from $Q(\bar x)$ to $D'$ such that $h(\bar x) = \bar c$. This means that in line~2 of the algorithm, we can always guess the set $A'$ according to $h$, since this will guarantee that $A' \cup A \cup \{\bar x \mapsto \bar c\}$ is coherent. Then, when traversing the hypertree decomposition (and the relations in each node and the blocks in each relation), we can always choose the labels according to $D'$. Note also that since every relation of $D$ occurs in $Q$, every block of $D$ is considered in the algorithm. Fix some arbitrary block $B$ in $D$. If $|B|=1$, then $B$ does not participate in a violation and it will always be present in every operational repair $D'$. Then, when $B$ is considered in Algorithm~\ref{alg:repairs}, we set $\alpha$ to the only fact in $B$ in line~6 and label the corresponding node in the output with $\alpha$. If $|B|\geq 2$ and some fact $R_{i_j}(\bar{c_j})$ of $B$ is in the image of the homomorphism $h$, since we chose $A'$ in such a way that it is coherent with $h$, we guess $\alpha$ as the fact $R_{i_j}(\bar{c_j})$ in line~7 of the algorithm. Otherwise, we guess $\alpha$ as the fact that is left in $B$ in $D'$ (if it exists), i.e., $\alpha:=B\cap D'$, if $B\cap D'\neq\emptyset$, and $\alpha:=\bot$ otherwise. It should now be clear that for an output $O$ generated in such a way, we have that $\mu(O)=D'$.

\section{The Case of Uniform Sequences}

Considering all the operational repairs to be equally important is a self-justified choice. However, as discussed in~\cite{CLPS22}, this does not take into account the support in terms of the repairing process. In other words, an operational repair obtained by very few repairing sequences is equally important as an operational repair obtained by several repairing sequences. This is rectified by the notion of sequence relative frequency, which is the percentage of repairing sequences that lead to an operational repair that entails a candidate answer.
For a database $D$, a set $\dep$ of keys, a CQ $Q(\bar x)$, and $\bar c \in \adom{D}^{|\bar x|}$, the {\em sequence relative frequency} of $\bar c$ w.r.t.~$D$, $\dep$, and $Q$ is
\begin{eqnarray*}
	\mathsf{RF}^{\us}(D,\dep,Q,\bar c) &=& \frac{|\{s \in \crs{D}{\dep} \mid \bar c \in Q(s(D))\}|}{|\crs{D}{\dep}|}.
\end{eqnarray*}

In this case, the problem of interest in the context of uniform operational CQA, focusing on {\em primary keys}, is defined as follows: for 
a class $\mathsf{Q}$ of CQs (e.g., $\sjf$ or  $\ghw_k$ for $k > 0$),

\medskip

\begin{center}
	\fbox{\begin{tabular}{ll}
			{\small PROBLEM} : & $\ocqa^\us[\mathsf{Q}]$
			\\
			{\small INPUT} : & A database $D$, a set $\dep$ of primary keys,\\
			& a query $Q(\bar x)$ from $\mathsf{Q}$, a tuple $\bar c \in \adom{D}^{|\bar x|}$.
			\\
			{\small OUTPUT} : &  $\mathsf{RF}^\us(D,\dep,Q,\bar c)$.
	\end{tabular}}
\end{center}

\medskip

\noindent As in the case of uniform repairs, we can show that the above problem is hard, even for self-join-free CQs of bounded generalized hypertreewidth. In particular, with $\sjf \cap \ghw_k$ for $k > 0$ being the class of self-join-free CQs of generalized hypertreewidth $k$, we show that:

\def\theocqasequencesexact{
	For every $k > 0$, $\ocqa^\us[\sjf \cap \ghw_k]$ is $\sharp ${\rm P}-hard.
}

\begin{theorem}\label{the:ocqa-us-exact}
\theocqasequencesexact
\end{theorem}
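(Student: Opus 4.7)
\medskip

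\noindent The plan is to reuse \emph{verbatim} the construction from the proof of Theorem~\ref{the:ocqa-exact}. Recall that, starting from a connected undirected graph $G=(V_G,E_G)$ with vertex partition $\{V_{G,L},V_{G,R}\}$, we built a database $D_G^k$, a set $\dep$ of two primary keys (on $V_L$ and $V_R$), and a self-join-free CQ $Q_k$ of generalized hypertreewidth $k$, such that approximating $\mathsf{RF}(D_G^k,\dep,Q_k,())$ suffices to count homomorphisms from $G$ to the fixed graph $H$ of Figure~\ref{fig:hard-graph}. We will show via the same polynomial-time Turing reduction from $\sharp H\text{-}\mathsf{Coloring}$ that $\ocqa^\us[\sjf \cap \ghw_k]$ is $\sharp ${\rm P}-hard; the only new ingredient will be the observation that, for this very specific construction, the sequence relative frequency coincides with the repair relative frequency.

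\medskip

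\noindent The key structural property of $D_G^k$ that we will exploit is that every non-singleton block w.r.t.~$\dep$ has size exactly $2$: each vertex $u\in V_{G,L}$ (resp.~$V_{G,R}$) contributes the block $\{V_L(u,0),V_L(u,1)\}$ (resp.~$\{V_R(u,0),V_R(u,1)\}$), and all other facts of $D_G^k$ (the $E$-facts, $T$-, $T'$-, and $C_{i,j}$-facts) form singleton blocks which are never touched by a justified operation. Setting $n=|V_G|$, the $(D_G^k,\dep)$-justified operations on a block $\{f,g\}$ are precisely $-f$, $-g$, and $-\{f,g\}$; all three are single operations that fully repair the block. Consequently, every complete $(D_G^k,\dep)$-repairing sequence has length exactly $n$ (one operation per block), and is obtained by choosing an ordering of the $n$ blocks together with one of the three operations for each block. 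Conversely, for any operational repair $D'\in\opr{D_G^k}{\dep}$, the operation used in each block is uniquely determined by $D'$, so exactly $n!$ complete repairing sequences lead to $D'$.

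\medskip

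\noindent From the above we immediately get $|\crs{D_G^k}{\dep}| = n!\cdot 3^{n}$ and, for every $D'\in\opr{D_G^k}{\dep}$, $|\{s\in\crs{D_G^k}{\dep}\mid s(D_G^k)=D'\}| = n!$. Therefore,
\[
\mathsf{RF}^\us(D_G^k,\dep,Q_k,())\ =\ \frac{n!\cdot|\{D'\in\opr{D_G^k}{\dep}\mid D'\models Q_k\}|}{n!\cdot 3^{n}}\ =\ \mathsf{RF}(D_G^k,\dep,Q_k,()).
\]
Hence the algorithm $\mathsf{HOM}$ from the proof of Theorem~\ref{the:ocqa-exact}, with $\mathsf{RF}$ replaced by $\mathsf{RF}^\us$ in its third step, is still a polynomial-time Turing reduction from $\sharp H\text{-}\mathsf{Coloring}$ to $\ocqa^\us[\sjf \cap \ghw_k]$, and the desired $\sharp ${\rm P}-hardness follows.

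\medskip

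\noindent I do not anticipate any serious obstacle: the construction is already engineered so that blocks have size at most $2$, which is exactly the regime in which the permutation factor $n!$ cancels out in the numerator and denominator of $\mathsf{RF}^\us$. If anything, the only point to be careful about is to verify explicitly that no justified operation can involve facts from singleton blocks or across different blocks; this is immediate from the definition of $(D',\dep)$-justified operation together with the fact that $\dep$ contains only the keys on $V_L$ and $V_R$.
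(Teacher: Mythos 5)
Your proposal is correct and follows essentially the same route as the paper: reuse the construction of Theorem~\ref{the:ocqa-exact} verbatim and observe that, because every conflicting block of $D_G^k$ has size two and is resolved by exactly one justified operation, each operational repair arises from exactly $|V_G|!$ complete repairing sequences, so $\mathsf{RF}^\us$ and $\mathsf{RF}$ coincide and the same Turing reduction applies. The paper states the $|V_G|!$ correspondence more tersely; your justification of it via the block structure is the intended argument.
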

\begin{proof}
    The fact that $\ocqa^\us[\sjf \cap \ghw_k]$ is $\sharp ${\rm P}-hard is proven similarly to the uniform repairs case. In particular, we use the exact same construction and simply show that the two relative frequencies, i.e., the repair relative frequency and the sequence relative frequency coincide. Let us give more details.

Fix an arbitrary $k>0$. We show that $\ocqa^\us[\sjf \cap \ghw_k]$ is $\sharp ${\rm P}-hard via a polynomial-time Turing reduction from $\sharp H\text{-}\mathsf{Coloring}$, where $H$ is the graph employed in the first part of the proof.
%
%
For a given graph $G$, let $S_k$, $D_G^k$, $\dep$, and $Q_k$ be the schema, database, set of primary keys, and Boolean self-join-free CQ of generalized hypertreewidth $k$, respectively, from the construction in the proof for the uniform repairs case.
We show that
\[
\mathsf{RF}(D_G^k,\dep,Q_k,())\ =\ \mathsf{RF}^\us(D_G^k,\dep,Q_k,()),
\] 
which implies that the polynomial-time Turing reduction from $\sharp H\text{-}\mathsf{Coloring}$ to $\ocqa[\sjf \cap \ghw_k]$ is a polynomial-time Turing reduction from $\sharp H\text{-}\mathsf{Coloring}$ to $\ocqa^\us[\sjf \cap \ghw_k]$.
Recall that
\[
\mathsf{RF}^\us(D_G^k,\dep,Q_k,())\ =\ \frac{|\{s \in \crs{D_G^k}{\dep} \mid s(D) \models Q_k\}|}{|\crs{D_G^k}{\dep}|}.
\]
By construction of $D_G^k$, each operational repair $D \in \opr{D_G^k}{\dep}$ can be obtained via $|V_G|!$ complete sequences of $\crs{D_G^k}{\dep}$. Thus, 
\begin{eqnarray*}
\mathsf{RF}^\us(D_G^k,\dep,Q_k,()) &=& \frac{|\{D \in \opr{D_G^k}{\dep} \mid D \models Q_k\}| \cdot |V_G|!}{|\opr{D_G^k}{\dep}| \cdot |V_G|!}\\
&=& \frac{|\{D \in \opr{D_G^k}{\dep} \mid D \models Q_k\}|}{|\opr{D_G^k}{\dep}|}.
\end{eqnarray*}
The latter expression is precisely $\mathsf{RF}(D_G^k,\dep,Q_k,())$, as needed.
\end{proof}

With the above intractability result in place, the question is again whether the problem of interest is approximable, that is, whether it admits an FPRAS. An FPRAS for $\ocqa^\us[\mathsf{Q}]$ is a randomized algorithm $\mathsf{A}$ that takes as input a database $D$, a set $\dep$ of primary keys, a query $Q(\bar x)$ from $\mathsf{Q}$, a tuple $\bar c \in \adom{D}^{|\bar x|}$, $\epsilon > 0$ and $0 < \delta < 1$, runs in polynomial time in $||D||$, $||\dep||$, $||Q||$, $||\bar c||$, $1/\epsilon$ and $\log(1/\delta)$, and produces a random variable $\mathsf{A}(D,\dep,Q,\bar c,\epsilon,\delta)$ such that
$
\text{\rm Pr}(|\mathsf{A}(D,\dep,Q,\bar c,\epsilon,\delta) - \mathsf{RF}^\us(D,\dep,Q,\bar c)|\ \leq\ \epsilon \cdot \mathsf{RF}^\us(D,\dep,Q,\bar c))\ \geq\
1-\delta.
$
As in the case of uniform repairs, the answer to the above question is negative, even if we focus on self-join-free CQs or CQs of bounded generalized hypertreewidth.

\def\theocqausapx{
	Unless  ${\rm RP} = {\rm NP}$,
\begin{enumerate}
	\item There is no FPRAS for $\ocqa^\us[\sjf]$.
	
	\item For every $k > 0$, there is no FPRAS for $\ocqa^\us[\ghw_k]$.
\end{enumerate}
}

\begin{theorem}\label{the:ocqa-us-apx}
\theocqausapx
\end{theorem}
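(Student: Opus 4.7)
The plan is to leverage the same two constructions used in the proof of Theorem~\ref{the:ocqa-apx} for uniform repairs, together with the observation (already exploited for Theorem~\ref{the:ocqa-us-exact}) that in both constructions the sequence relative frequency coincides with the repair relative frequency. Once this identity is established, the approximation-theoretic arguments of the uniform repairs proof transfer verbatim, and the two items can be handled in parallel.

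For item (1), I would reuse the reduction from $3{\rm-}\mathsf{Colorability}$ given in the proof of item (1) of Theorem~\ref{the:ocqa-apx}. That construction uses an \emph{empty} set of primary keys, so the input database $D_G$ is already consistent, $\crs{D_G}{\emptyset}$ contains only the empty sequence, and $\opr{D_G}{\emptyset} = \{D_G\}$. Hence $\mathsf{RF}^\us(D_G,\emptyset,Q_G,()) \in \{0,1\}$ equals $1$ iff $D_G \models Q_G$ iff $G$ is $3$-colorable. This makes the decision version $\mathsf{Pos}\ocqa^\us[\sjf]$ NP-hard, which, by the standard BPP-based argument recalled in the proof of item (1) of Theorem~\ref{the:ocqa-apx}, rules out an FPRAS unless ${\rm RP} = {\rm NP}$.

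For item (2), I would reuse the reduction from $\sharp\mathsf{MON2SAT}$ built in the proof of item (2) of Theorem~\ref{the:ocqa-apx}, producing the same $D_\varphi^k$, $\dep$, and $Q_\varphi^k \in \ghw_k$. The key combinatorial observation is that $D_\varphi^k$ has exactly $n = |\var{\varphi}|$ key violations, one per variable, each confined to a two-fact block $\{V(v,0),V(v,1)\}$ that is disjoint from every other block. Every complete $(D_\varphi^k,\dep)$-repairing sequence therefore consists of exactly $n$ operations, one per block, where each block admits precisely three justified fixing operations (delete $V(v,0)$, delete $V(v,1)$, or delete both). Consequently, $|\crs{D_\varphi^k}{\dep}| = n! \cdot 3^n$ and every operational repair is produced by exactly $n!$ complete sequences, so
\[
\mathsf{RF}^\us(D_\varphi^k,\dep,Q_\varphi^k,()) \;=\; \frac{\sharp\varphi \cdot n!}{3^n \cdot n!} \;=\; \frac{\sharp\varphi}{3^n} \;=\; \mathsf{RF}(D_\varphi^k,\dep,Q_\varphi^k,()),
\]
and the approximation-preserving reduction from $\sharp\mathsf{MON2SAT}$ established for uniform repairs carries over unchanged, yielding the desired FPRAS lower bound.

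The only mildly delicate point is the counting argument for item (2), which relies crucially on the fact that the constructed database has pairwise disjoint two-fact key-violating blocks with no cross-block interference; this is exactly what guarantees that every repair is witnessed by the same number $n!$ of complete sequences, and hence that $\mathsf{RF}^\us$ and $\mathsf{RF}$ agree on this instance. Aside from verifying this identity, the proof is essentially bookkeeping on top of the uniform repairs argument, so I do not anticipate any substantive obstacle.
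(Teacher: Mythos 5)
Your proposal is correct and follows essentially the same route as the paper: both items reuse the constructions from Theorem~\ref{the:ocqa-apx} and reduce to showing that $\mathsf{RF}^{\us}$ coincides with $\mathsf{RF}$ on those instances (trivially for the empty-key 3-colorability reduction, and via the $n!$-orderings-per-repair count for the $\sharp\mathsf{MON2SAT}$ reduction). The counting identity $|\crs{D_\varphi^k}{\dep}| = |\opr{D_\varphi^k}{\dep}|\cdot n!$ with exactly one justified operation per disjoint two-fact block is precisely the argument the paper gives.
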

\begin{proof}
We prove each item of the theorem separately.

\noindent\paragraph{Item (1).}
 We use the same proof strategy and construction as in the uniform repairs case. Consider the decision version of $\ocqa^\us[\sjf]$, dubbed $\mathsf{Pos}\ocqa^\us[\sjf]$, defined as expected:

\medskip

\begin{center}
	\fbox{\begin{tabular}{ll}
			{\small PROBLEM} : & $\mathsf{Pos}\ocqa^\us[\sjf]$
			\\
			{\small INPUT} : & A database $D$, a set $\dep$ of primary keys,\\
			& a query $Q(\bar x)$ from $\mathsf{SJF}$, a tuple $\bar c \in \adom{D}^{|\bar x|}$.
			\\
			{\small QUESTION} : & $\mathsf{RF}^\us(D,\dep,Q,\bar c)>0$.
	\end{tabular}}
\end{center}

\medskip

\noindent We show that $\mathsf{Pos}\ocqa^\us[\sjf]$ is NP-hard, which in turn implies that $\ocqa^\us[\sjf]$ does not admit an FPRAS, under the assumption that ${\rm RP}$ and ${\rm NP}$ are different.
For an undirected graph $G$, let $\ins{S}_G$, $D_G$, and $Q_G$ be the schema, database, and Boolean CQ from the construction in the proof of the uniform repairs case. Again, the set $\dep$ of keys is considered to be empty. Hence, we only have the empty repairing sequence, denoted $\varepsilon$, with $\varepsilon(D_G) = D_G$. It is easy to see that $\mathsf{RF}(D_G,\dep,Q_G,\bar c)=\mathsf{RF}^\us(D_G,\dep,Q_G,\bar c)$. This implies that $\mathsf{RF}^\us(D_G,\dep,Q_G,\bar c)=1>0$ iff $G$ is 3-colorable, as needed.

\noindent\paragraph{Item (2).}
Fix $k>0$. For proving the desired claim, it suffices to show that
\[
\mathsf{RF}(D_\varphi^k,\dep,Q_\varphi^k,())\ =\ \mathsf{RF}^\us(D_\varphi^k,\dep,Q_\varphi^k,()).
\]
We can then easily use the same construction from the proof for the uniform repairs case to show that there cannot exist an FPRAS for $\ocqa^\us[\ghw_k]$, as this would imply the existence of an FPRAS for $\sharp \mathsf{MON2SAT}$, which contradicts the assumption ${\rm NP}\neq{\rm RP}$.

For a Pos2CNF formula $\varphi$, let $\ins{S}_\varphi$, $D_\varphi^k$, $\dep$, and $Q_\varphi^k$ be the schema, database, set of keys, and CQ, respectively, from the construction in the proof for the uniform repairs case. Further, let $n=\var{\varphi}$. Since every complete repairing sequence of $D_\varphi^k$ consists of $n$ operations, one to resolve each violation of the form $\{V(v,0),V(v,1)\}$ for every $v\in\var{\varphi}$, and the fact that these operations can be applied in any order, it is easy to verify that 
\[
|\crs{D_\varphi^k}{\dep}|\ =\ |\opr{D_\varphi^k}{\dep}| \cdot n!
\]
and  
\begin{eqnarray*}
&&|\{s \in \crs{D_\varphi^k}{\dep} \mid s(D_\varphi^k) \models Q_\varphi^k\}|\\
&=&|\{D \in \opr{D_\varphi^k}{\dep} \mid D \models Q_\varphi^k\}| \cdot n!.
\end{eqnarray*}
Hence,
\begin{eqnarray*}
\mathsf{RF}^\us(D_\varphi^k,\dep,Q_\varphi^k,()) &=& \frac{|\{D \in \opr{D_\varphi^k}{\dep} \mid D \models Q_\varphi^k\}| \cdot n!}{|\opr{D_\varphi^k}{\dep}| \cdot n!}\\
&=& \frac{|\{D \in \opr{D_\varphi^k}{\dep} \mid D \models Q_\varphi^k\}|}{|\opr{D_\varphi^k}{\dep}|}\\
&=& \frac{\sharp\varphi}{3^n}\\
&=& \mathsf{RF}(D_\varphi^k,\dep,Q_\varphi^k,()),
\end{eqnarray*}
and the claim follows.
\end{proof}

Given Theorem~\ref{the:ocqa-us-apx}, the key question is whether we can provide an FPRAS if we focus on self-join-free
CQs of bounded generalized hypertreewidth. In the rest of this section, we provide an affirmative answer to this question.

\def\themainusfpras{
	For every $k > 0$, $\ocqa^\us[\sjf \cap \ghw_k]$ admits an FPRAS.
}

\begin{theorem}\label{the:main-us-fpras}
\themainusfpras
\end{theorem}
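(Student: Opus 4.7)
The plan is to mirror the argument used for Theorem~\ref{the:main-fpras}. Since
\[
\mathsf{RF}^\us(D,\dep,Q,\bar c)\ =\ \frac{|\{s \in \crs{D}{\dep} \mid \bar c \in Q(s(D))\}|}{|\crs{D}{\dep}|},
\]
the approach is: (a) compute the denominator $|\crs{D}{\dep}|$ exactly in polynomial time; (b) introduce an auxiliary counting problem $\sharp\mathsf{Sequences}[k]$ for the numerator and place it in $\spantl$; (c) combine the FPRAS for the numerator, delivered by Theorem~\ref{the:spantl-fpras}, with the exact denominator to obtain an FPRAS for the ratio by simply dividing the FPRAS output by $|\crs{D}{\dep}|$.

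For part (a), the key observation is the block decomposition induced by primary keys: every complete repairing sequence of $D$ is uniquely determined by its restrictions to the conflict blocks together with an interleaving of those restrictions into a single global sequence. Within a single block of size $k$, each justified operation either removes a single fact or both facts of a conflicting pair, so the number $N(k)$ of complete per-block sequences obeys $N(0)=N(1)=1$ and $N(k) = k\cdot N(k-1) + \binom{k}{2}\cdot N(k-2)$ for $k\geq 2$, and an analogous two-dimensional recurrence tracks the length distribution $f_i(\ell)$ of per-block sequences for each block $B_i$. A dynamic program that iteratively convolves the $f_i$ with binomial weights then yields
\[
|\crs{D}{\dep}|\ =\ \sum_L \sum_{\ell_1+\cdots+\ell_m = L}\binom{L}{\ell_1,\ldots,\ell_m}\prod_{i=1}^m f_i(\ell_i)
\]
in polynomial time.

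For part (b), I would adapt $\mathsf{Rep}[k]$ into an ATO $\mathsf{Seq}[k]$ that traverses the generalized hypertree decomposition exactly as in Algorithm~\ref{alg:repairs}, enforces the same coherence and $\prec_T$-minimal-covering-vertex bookkeeping, and thereby inherits verbatim both the query-entailment correctness analysis and the well-behavedness analysis. The only change is that, at each block, the single labeled node produced by $\mathsf{Rep}[k]$ (recording the surviving fact or $\bot$) is replaced by a short labeled subpath that encodes both the per-block repairing sequence itself (one label per operation, each stored as a pointer into the input tape, hence log-space) and a positional tag fixing this block's contribution to the global interleaving. Processing blocks in a fixed order and representing each positional tag as an offset relative to the number of global steps already committed lets the ATO maintain only a single logspace counter, and a straightforward induction shows that distinct pairs (per-block sequence choice, interleaving) give rise to distinct valid output trees. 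The main obstacle is exactly this interleaving-encoding step: the encoding must be arranged so that the map from valid outputs of $\mathsf{Seq}[k]$ to elements of $\{s \in \crs{D}{\dep} \mid \bar c \in Q(s(D))\}$ is a bijection, so that the $\spantl$ output count matches the numerator exactly. Once this bijection is in place, Theorem~\ref{the:spantl-fpras} supplies the desired FPRAS for $\sharp\mathsf{Sequences}[k]$, and combining it with part~(a) as described in (c) completes the argument.
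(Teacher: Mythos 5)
Your overall architecture is exactly the paper's: compute $|\crs{D}{\dep}|$ exactly in polynomial time (the paper simply cites~\cite{CLPS22} for this, whereas you re-derive it via a per-block recurrence and a multinomial convolution, which is a reasonable, more self-contained route), define $\sharp\mathsf{Sequences}[k]$, place it in $\spantl$ via an ATO $\mathsf{Seq}[k]$ obtained by modifying $\mathsf{Rep}[k]$, and divide by the exact denominator. Parts (a) and (c) are sound. The gap is in part (b), and it sits precisely at the point you yourself flag as ``the main obstacle'': you have correctly located the difficulty but not supplied the ideas that resolve it, and those ideas are the technical core of the proof.

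Concretely, two things go wrong with the encoding as you describe it. First, labeling each per-block operation ``as a pointer into the input tape'' does not work: to know which operations are justified at step $r$ of a block, the machine would have to remember which facts were removed at steps $1,\dots,r-1$, which is not logspace (and the labeling tape is write-only, so the output cannot be re-read); without that check, outputs encoding invalid ``sequences'' would be counted and the bijection would fail. The paper instead labels each operation with a shape template $(-g,p)$, where $-g\in\{-1,-2\}$ records whether one fact or a pair is removed and $p$ indexes into the set of currently available operations of that shape; crucially, the size of that set depends only on the \emph{number} $n$ of facts still to be removed (all facts of a block are mutually conflicting), so a single logspace counter suffices, and the translation back to concrete operations is performed only in the correctness proof. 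Second, a ``single logspace counter'' cannot coordinate the global interleaving across the universal branches of the computation: the blocks processed in the right subtree of a decomposition vertex must know how many operations were committed in the left subtree, but the left subtree's guesses live in a parallel branch that the right one cannot observe. The paper handles this by guessing the total sequence length $N$ up front, guessing at every branch point how many of the remaining operations go left versus right, verifying $N=0$ at every leaf, and, for each block, amplifying the number of output trees by a factor $\binom{b}{b'}$ (the number of ways to interleave that block's operations with those of all earlier blocks), with the chosen index guessed bit by bit and emitted as a path of labels --- which in turn relies on logspace computability of binomial coefficients via iterated multiplication. Without these devices the map from valid outputs to complete repairing sequences entailing $\bar c$ is not a bijection, so the $\spantl$ membership --- and hence the FPRAS --- does not follow from what you have written.
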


\smallskip

As in the case of uniform repairs, the denominator of the ratio in question, can be computed in polynomial time~\cite{CLPS22}; hence, to establish Theorem~\ref{the:main-us-fpras}, it suffices to show that the numerator of the ratio can be efficiently approximated. In fact, it suffices to show that the following auxiliary counting problem admit an FPRAS.
For $k>0$,

\begin{center}
	\fbox{\begin{tabular}{ll}
			{\small PROBLEM} : & $\sharp\mathsf{Sequences}[k]$
			\\
			{\small INPUT} : & A database $D$, a set $\dep$ of primary keys,\\
			& a query $Q(\bar x)$ from $\sjf$, a generalized hypertree\\ & decomposition of $Q$ of width $k$, $\bar c \in \adom{D}^{|\bar x|}$.
			\\
			{\small OUTPUT} : &  $|\{s \in \crs{D}{\dep} \mid \bar c \in Q(s(D))\}|$.
	\end{tabular}}
\end{center}

We prove the following.

\begin{theorem}\label{the:in-us-spantl}
	For every $k>0$, $\sharp\mathsf{Sequences}[k]$ is in $\spantl$.
\end{theorem}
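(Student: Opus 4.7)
My plan is to adapt the approach used to prove Theorem~\ref{the:in-spantl}. Specifically, I would design a procedure $\mathsf{Seq}[k]$ analogous to $\mathsf{Rep}[k]$, implement it as a well-behaved ATO $M_S^k$, and show that $\mathsf{span}_{M_S^k}(D,\dep,Q,H,\bar c) = |\{s \in \crs{D}{\dep} \mid \bar c \in Q(s(D))\}|$ for inputs in normal form. As a preliminary step, I would verify that the normal form construction of Proposition~\ref{pro:normal-form} preserves the count of complete repairing sequences: this holds because the auxiliary facts introduced in $\hat D$ belong to fresh relations and form singleton blocks, so they contribute no violations and hence no operations to any repairing sequence.

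The procedure $\mathsf{Seq}[k]$ would retain the HTD-traversal skeleton of $\mathsf{Rep}[k]$, including the existential guessing of tuple mappings at each HTD vertex $v$, the coherence check against $\{\bar x \mapsto \bar c\}$ and the previous set of mappings, and the universal branching to the two children of $v$. The essential modification lies in how each block $B$ is processed at its $\prec_T$-minimal covering vertex: instead of labeling a single node with the chosen fact of $B$ (or $\bot$), $\mathsf{Seq}[k]$ outputs a sub-path of nodes encoding the full local repairing sequence for $B$. Each node along this sub-path is labeled with a pointer to the pair or singleton of facts removed by the corresponding local operation, together with the global position of that operation in the complete sequence. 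The global position labels are what make two complete sequences that differ only in their interleaving produce distinct output trees.

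Well-behavedness of $M_S^k$ would be established along the lines of Lemma~\ref{lem:repairs-ato}: the sum of the local sub-path lengths is bounded by $|D|$, each label consists of a constant number of logarithmically encoded pointers, the running local position and the identity of the block currently being processed fit in logarithmic space on the working tape, and the HTD still guarantees bounded universal branching along every labeled-free path. The bijection between valid outputs and $\{s \in \crs{D}{\dep} \mid \bar c \in Q(s(D))\}$ would then be proved in the same style as Lemma~\ref{lem:repairs-ato}: each sub-path recovers a per-block local repairing sequence, and the collection of global position labels recovers the global interleaving, so distinct complete sequences map to distinct trees and vice versa.

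The main obstacle will be ensuring consistency of the global position labels across the different HTD branches, specifically that they form a valid permutation of $\{1,\ldots,N\}$ with $N$ being the total sequence length; since two blocks processed in different universal branches could otherwise claim the same position, this is a cross-branch property that cannot be checked locally within a single branch. I anticipate addressing this by guessing $N$ upfront and storing it on the working tape of the initial configuration so that it is inherited by all branches, and by coordinating a consistent assignment of position labels across blocks through the HTD structure, mirroring the way $\mathsf{Rep}[k]$ exploits the connectedness condition to maintain coherence of tuple mappings across branches.
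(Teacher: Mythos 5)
Your high-level plan (normal form preservation, HTD-traversal skeleton, per-block sub-paths of operation labels) matches the paper's, but the mechanism you propose for handling the interleaving of operations from different blocks --- labeling each operation node with its \emph{global position} in the complete sequence --- has a genuine gap, and it is exactly the obstacle you yourself flag at the end. For the output trees to be in bijection with complete repairing sequences, the position labels scattered across the different universal branches would have to form a permutation of $[N]$. After a universal branch, the two subtrees of the computation evolve independently; the only information they share is the logspace configuration at the branch point. A subset of $[N]$ recording which positions are still available cannot be stored there, and the connectedness condition of the HTD does not rescue you: it guarantees that the \emph{constantly many} shared variables between adjacent bags can be carried across the branch as pointers, whereas the partition of $[N]$ into per-block position sets is a global combinatorial object of linear size. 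Without a cross-branch disjointness check, outputs in which two blocks in different branches claim the same position would be accepted and counted, breaking the bijection; with it, the machine is no longer a well-behaved ATO.

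The paper's procedure $\mathsf{Seq}[k]$ avoids explicit positions altogether. It fixes a global order $B_1,\dots,B_m$ on the blocks (induced by $\prec_T$ and fixed orders on relations and blocks), guesses the total length $N$ up front, and at each universal branch guesses how many of the remaining operations fall into each subtree, so that when block $B_i$ is processed the machine knows $b'=k_1+\cdots+k_{i-1}$ and $b=k_1+\cdots+k_i$. It then emits an extra label carrying a guessed $p\in\left[\binom{b}{b'}\right]$, which multiplies the number of distinct output trees for a given repair by exactly the number of ways to interleave the operations of $B_i$ with those of $B_1,\dots,B_{i-1}$; the product over all blocks is the multinomial coefficient counting all interleavings. (Two further technical points you would also need: $\binom{b}{b'}$ has polynomially many bits, so $p$ must be emitted as a path of single-bit labels rather than one label, and computing its bits in logspace relies on logspace iterated multiplication and division.) If you replace your global-position labels with this binomial-amplification device, the rest of your argument goes through essentially as you describe.
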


To prove the above result, for $k>0$, we need to devise a procedure, which can be implemented as a well-behaved ATO $M_S^k$,  that take as input a database $D$, a set $\dep$ of primary keys, a CQ $Q(\bar x)$ from $\sjf$, a generalized hypertree decomposition $H$ of $Q$ of width $k>0$, and a tuple $\bar c \in \adom{D}^{|\bar x|}$, such that
\begin{eqnarray*}
\mathsf{span}_{M_{S}^{k}}(D,\dep,Q,H,\bar c) &=& |\{s \in \crs{D}{\dep} \mid \bar c \in Q(s(D))\}|.
\end{eqnarray*}

As in the case of uniform repairs, we are going to assume, w.l.o.g., that the input to this procedure is in normal form. We can show that, given a database $D$, a set $\dep$ of primary keys, a CQ $Q(\bar x)$ from $\sjf$, a generalized hypertree decomposition $H$ of $Q$ of width $k>0$, and a tuple $\bar c \in \adom{D}^{|\bar x|}$, we can convert in logarithmic space the triple $(D,Q,H)$ into a triple $(\hat{D},\hat{Q},\hat{H})$, where $\hat{H}$ is a generalized hypertree decomposition of $\hat{Q}$ of width $k+1$, that is in normal form while
$|\{s \in \crs{D}{\dep} \mid \bar c \in Q(s(D))\}| = |\{s \in \crs{\hat{D}}{\dep} \mid \bar c \in \hat{Q}(s(\hat{D}))\}|$. Formally,

\def\pronormalformus{
	Consider a database $D$, a set $\dep$ of primary keys $\dep$, a CQ $Q(\bar x)$ from $\sjf$, a generalized hypertree decomposition $H$ of $Q$ of width $k$, and $\bar c \in \adom{D}^{|\bar x|}$. There exists a database $\hat{D}$, a CQ $\hat{Q}(\bar x)$ from $\sjf$, and a generalized hypertree decomposition $\hat{H}$ of $\hat{Q}$ of width $k+1$ such that:
	\begin{itemize}
		\item $(\hat{D},\hat{Q},\hat{H})$ is in normal form,
		\item $|\{s \in \crs{D}{\dep} \mid \bar c \in Q(s(D))\}| = |\{s \in \crs{\hat{D}}{\dep} \mid \bar c \in \hat{Q}(s(\hat{D}))\}|$, and
		\item $(\hat{D},\hat{Q},\hat{H})$ is logspace computable in the combined size of $D,\dep,Q,H,\bar c$.
	\end{itemize}
}

\begin{proposition}\label{pro:normal-form-us}
	\pronormalformus
\end{proposition}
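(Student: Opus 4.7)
The plan is to use the exact same construction of $(\hat{D}, \hat{Q}, \hat{H})$ as in the proof of Proposition~\ref{pro:normal-form}, and inherit all the structural properties already established there, i.e., that $\hat{H}$ is a generalized hypertree decomposition of $\hat{Q}$ of width $k+1$, that $(\hat{D},\hat{Q},\hat{H})$ is in normal form, and that the construction is feasible in logarithmic space. The only new task is to show that the count of complete repairing sequences leading to a repair entailing $\bar c$ is preserved under the transformation.

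Recall that $\hat{D} = D \cup C$, where $C$ consists of one fact per fresh unary relation (the $P'_i$'s and $S^{(j)}_v$'s), and the $P_i$-facts already in $D$ are left unchanged. Since each fresh relation contains exactly one fact in $\hat{D}$ and no key in $\dep$ is defined over it, these additional facts participate in no key violation. Hence $\viol{D}{\dep} = \viol{\hat{D}}{\dep}$, and more generally, for every $D' \subseteq D$, $\viol{D'}{\dep} = \viol{D' \cup C}{\dep}$. Consequently, a set of facts $F$ yields a $(D',\dep)$-justified operation $-F$ if and only if $-F$ is $(D' \cup C, \dep)$-justified. This gives a natural one-to-one correspondence between $(D,\dep)$-repairing sequences and $(\hat{D},\dep)$-repairing sequences: a sequence $s = (\op_1,\ldots,\op_n)$ is $(D,\dep)$-repairing iff the same sequence, viewed as a sequence of $\hat{D}$-operations, is $(\hat{D},\dep)$-repairing. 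Moreover, since all operations remove only facts from $D$, we have $s(\hat{D}) = s(D) \cup C$ at every step, and thus $s(D) \models \dep$ iff $s(\hat{D}) \models \dep$, which means the bijection restricts to one between $\crs{D}{\dep}$ and $\crs{\hat{D}}{\dep}$.

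It remains to check that, under this bijection, $\bar c \in Q(s(D))$ iff $\bar c \in \hat{Q}(s(\hat{D}))$; this is the same equivalence established at the end of the proof of Proposition~\ref{pro:normal-form}, which only relies on the fact that each atom added when going from $Q$ to $\hat{Q}$ mentions a relation whose (unique) witnessing fact always lies in $C \subseteq s(\hat{D})$ and whose variables are fresh and so do not interact with the answer variables $\bar x$ or with the atoms of $Q$. Combining the two bijections gives
\[
|\{s \in \crs{D}{\dep} \mid \bar c \in Q(s(D))\}|\ =\ |\{s \in \crs{\hat{D}}{\dep} \mid \bar c \in \hat{Q}(s(\hat{D}))\}|,
\]
as required. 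No step of this proof is really an obstacle: all the non-trivial work (verifying the normal form properties, the width bound, and the logspace computability) has already been carried out in the proof of Proposition~\ref{pro:normal-form}, and the transfer from a statement about operational repairs to a statement about complete repairing sequences goes through transparently because the added facts in $C$ do not interact with the key violations that drive the repairing process.
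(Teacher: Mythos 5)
Your proposal is correct and follows essentially the same route as the paper: reuse the construction from Proposition~\ref{pro:normal-form}, observe that the added facts $C$ introduce no new key violations so that $\crs{D}{\dep} = \crs{\hat{D}}{\dep}$ (the paper states this identity directly; you justify it via the coincidence of justified operations, which is the intended reasoning), and then transfer the query-entailment equivalence. Nothing is missing.
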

\begin{proof}
    The proof is very similar to the proof of Theorem~\ref{pro:normal-form} (in particular, we use the exact same construction), with the only difference being the last part, where we prove that $(\hat{D},\hat{Q},\hat{H})$ preserves the counts. Here, we need to show that \[|\{s \in \crs{D}{\dep} \mid \bar c \in Q(s(D))\}| = |\{s \in \crs{\hat{D}}{\dep} \mid \bar c \in \hat{Q}(s(\hat{D}))\}|.\]
    To prove the claim, we again rely on the notion of key violation. As discussed in the proof of Theorem~\ref{pro:normal-form}, we have that $\viol{D}{\dep} = \viol{\hat{D}}{\dep}$. Moreover, $\hat{Q}$ contains all atoms of $Q$, and in addition, for each fact $R(\bar t) \in C$, a single atom $R(\bar y)$, where each variable in $\bar y$ occurs exactly ones in $\hat{Q}$. The above observations imply that:
\begin{enumerate}
	\item $\crs{D}{\dep} = \crs{\hat{D}}{\dep}$;
	\item for every database $D'$, $D' \in \opr{D}{\dep}$ iff $D' \cup C \in \opr{\hat{D}}{\dep}$;
	\item for each $D' \in \opr{D}{\dep}$, $\bar c \in Q(D)$ iff $\bar c \in \hat{Q}(D' \cup C)$, where $D' \cup C \in \opr{\hat{D}}{\dep}$.
\end{enumerate}
All three items together imply $|\{s \in \crs{D}{\dep} \mid \bar c \in Q(s(D))\}| = |\{s \in \crs{\hat{D}}{\dep} \mid \bar c \in \hat{Q}(s(\hat{D}))\}|$. This concludes our proof.
\end{proof}

We now proceed to prove Theorem~\ref{the:in-us-spantl}. This is done via the procedure $\mathsf{Seq}[k]$, depicted in Algorithm~\ref{alg:sequences}. 
We show that:

\def\lemsequencesato{
	For every $k > 0$, the following hold:
	\begin{enumerate}
		\item $\mathsf{Seq}[k]$ can be implemented as a well-behaved ATO $M_S^k$.
		\item For a database $D$, a set $\dep$ of primary keys, a CQ $Q(\bar x)$ from $\sjf$, a generalized hypertree decomposition $H$ of $Q$ of width $k$, and a tuple $\bar c \in \adom{D}^{|\bar x|}$, where $(D,Q,H)$ is in normal form, $\mathsf{span}_{M_S^k}(D,\dep,Q,H,\bar c) = |\{s \in \crs{D}{\dep} \mid \bar c \in Q(s(D))\}|$.
	\end{enumerate}
}

\begin{lemma}\label{lem:sequences-ato}
	\lemsequencesato	
\end{lemma}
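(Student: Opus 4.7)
My plan is to design $\mathsf{Seq}[k]$ as a refinement of $\mathsf{Rep}[k]$ whose valid outputs are in bijection with complete $(D,\dep)$-repairing sequences $s$ satisfying $\bar c \in Q(s(D))$, rather than with operational repairs. The key observation is that since $\dep$ consists of primary keys, every key violation lies inside a single block, so operations acting on distinct blocks commute. Consequently, a complete repairing sequence is uniquely determined by (i) for each block $B \in \block{\dep}{R,D}$ a \emph{local} sequence $s_B$ of operations restricted to $B$ reducing $B$ to at most one fact, and (ii) a global interleaving of these local sequences into a total order. The aim is to encode both pieces of data non-deterministically in the output tree of $M_S^k$.

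The outer traversal of $H=(T,\chi,\lambda)$ is identical to $\mathsf{Rep}[k]$: at each vertex $v$ we guess the tuple mapping $A'$ for the atoms in $\lambda(v)$, enforce coherence with $A \cup \{\bar x \mapsto \bar c\}$, and for each relation $R_{i_j}$ whose $\prec_T$-minimal covering vertex is $v$ we iterate through the blocks in $\block{\dep}{R_{i_j},D}$ in a fixed order. The per-block step is what changes: instead of emitting a single label for the surviving fact of $B$, the procedure emits a chain of labels $\alpha_1^B,\ldots,\alpha_{\ell_B}^B$ encoding a non-deterministically guessed local sequence $s_B = (\op_1^B,\ldots,\op_{\ell_B}^B)$ of justified $B$-operations whose cumulative effect is consistent with the choice forced by $A'$. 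Each label $\alpha_i^B$ additionally carries a guessed global timestamp $t_i^B \in [n]$, where $n \leq |D|$ is the total length of the encoded sequence. Because blocks are independent, justification of $\op_i^B$ can be verified locally from the already-emitted prefix of $s_B$ using only pointers to facts of $B$, so the working tape remains logspace.

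The main obstacle is enforcing that the guessed timestamps across all blocks form a permutation of $[n]$, as naively verifying distinctness would require space linear in $n$. I would address this by appending, after the decomposition-guided traversal has emitted all per-block labels, an \emph{interleaving-consistency} subtree attached to a dedicated leaf (which can be reserved via a mild strengthening of the normal form). This subtree is a linear chain of length $n$ whose $i$-th node is labeled with a pointer to the operation carrying timestamp $i$; it is generated by iterating a logspace counter $i$ and, at each step, re-traversing the decomposition to locate the matching block-operation. The subtree serves a dual purpose: it certifies that the timestamps form a permutation of $[n]$, and it provides a canonical serialization of the global sequence so that distinct complete repairing sequences yield distinct output trees.

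Well-behavedness of $M_S^k$ then follows by adapting the analysis for $M_R^k$: each label is a logspace pointer; the decomposition-guided and interleaving-consistency phases together contribute $O(|D|)$ labels; and the universal branching inherited from the $2$-uniform structure of $H$ still yields only a constant number of universal configurations per labeled-free path. For item~(2), I would mirror the three-step bijection argument from the proof of Lemma~\ref{lem:repairs-ato}: the map $\mu$ sends each valid output to the complete repairing sequence obtained by ordering the emitted operations according to their timestamps, and I would verify that $\mu$ is well-defined, injective, and surjective. The new ingredient relative to Lemma~\ref{lem:repairs-ato} is surjectivity, which I would establish by decomposing any given $s \in \crs{D}{\dep}$ with $\bar c \in Q(s(D))$ into its per-block restrictions and exhibiting an accepting computation of $M_S^k$ realizing them; the independence of blocks under primary keys guarantees that such a decomposition always exists.
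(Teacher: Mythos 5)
Your high-level reduction (bijection with pairs ``local per-block sequences + global interleaving'') matches the paper's, but the two mechanisms you propose for realizing it inside a well-behaved ATO both break on the resource restrictions of the model. First, the per-block step: you emit actual operations and claim that justification of $\op_i^B$ ``can be verified locally from the already-emitted prefix of $s_B$.'' The labeling tape is write-only and its content is erased at each labeling configuration, so the machine cannot consult its own emitted prefix; to verify that a guessed fact has not already been removed it would have to store the set of previously removed facts of $B$ on the working tape, which for a block of size $\Theta(|D|)$ exceeds logspace. The paper sidesteps exactly this (see the footnote to $\mathsf{Seq}[k]$) by emitting only \emph{shape templates} $(-g,p)$ with $g\in\{1,2\}$ and $p\in[\#\mathsf{opsFor}(n,-g)]$, where the machine tracks only the \emph{count} $n$ of facts still to be removed; the identity of each operation is reconstructed only in the bijection proof, never by the machine.

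Second, and more fundamentally, the interleaving-consistency subtree cannot be implemented. The timestamps $t_i^B$ are guessed existentially, written to the write-only labeling tape, and then gone; ``re-traversing the decomposition to locate the matching block-operation'' would require either re-reading the output (impossible) or storing up to $n$ timestamps of $\log n$ bits each (not logspace). Worse, blocks handled in the two subtrees below a universal branch live on \emph{parallel branches of the computation}, which do not communicate: no single branch, and hence no appended chain at a ``dedicated leaf,'' can ever see the guesses made on a sibling branch, so a global permutation property of $[n]$ is simply not checkable by the machine. Without that check, outputs with colliding or missing timestamps would be accepted and the count would be wrong. The paper's solution is to never materialize the interleaving at all: after finishing the $i$-th block it multiplies the number of distinct outputs by $\binom{b}{b'}$ (the number of ways to interleave that block's $b-b'$ operations among the $b$ positions so far), so that the product over all blocks telescopes to the multinomial coefficient counting all interleavings; coordination across the universal branch is achieved by guessing, \emph{before} branching, how the remaining operation budget $N$ splits between the two subtrees (verified by requiring $N=0$ at every leaf) and offsetting $b$ accordingly in the right branch. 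This needs only logspace counters and no cross-branch communication. Your bijection argument would go through once the encoding is replaced by the template-plus-amplification scheme, but as written the construction of $M_S^k$ does not yield a well-behaved ATO.
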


Theorem~\ref{the:in-us-spantl} readily follows from Lemma~\ref{lem:sequences-ato}. The rest of this section is devoted to discussing the procedure $\mathsf{Seq}[k]$ and proving that Lemma~\ref{lem:sequences-ato} holds. But first we need some auxiliary notions.

\OMIT{
\def\thmsequencesspantl{
	For every $k > 0$, $\sharp\mathsf{Sequences}[k] \in \spantl$.
}

\begin{theorem}\label{thm:sequences-spantl}
\thmsequencesspantl
\end{theorem}

To prove the above theorem, we introduce a procedure (i.e., Algorithm~\ref{alg:sequences}) describing the computation of an ATO $M$ with input a database $D$, a set $\dep$ of primary keys, a query $Q(\bar x)$ from $\sjf$, a generalized hypertree decomposition $H$ of $Q$ of width $k$, and a tuple $\bar c \in \adom{D}^{|\bar x|}$, such that $(D,Q,H)$ is in normal form. With Algorithm~\ref{alg:sequences} in place, our goal is to prove the following lemma.

\def\lemsequencesato{
	For every $k > 0$, the following hold:
	\begin{itemize}
		\item Algorithm~\ref{alg:sequences} can be implemented as a well-behaved ATO $M$, and
		\item for every database $D$, set $\dep$ of primary keys, query $Q(\bar x)$, generalized hypertree decomposition $H$ of $Q$ of width $k$, and $\bar c \in \adom{D}^{|\bar x|}$ with $(D,Q,H)$ in normal form, $\mathsf{span}_M(D,\dep,Q,H,\bar c) = \sharp\mathsf{Sequences}[k](D,\dep,Q,H,\bar c)$.
	\end{itemize}
}

\begin{lemma}\label{lem:sequences-ato}
\lemsequencesato	
\end{lemma}

Theorem~\ref{thm:sequences-spantl} immediately follows from Lemma~\ref{lem:sequences-ato}, Proposition~\ref{pro:normal-form}, and Proposition~\ref{pro:logspace-closure}.
The rest of this section is devoted to discuss Algorithm~\ref{alg:sequences}, and proving Lemma~\ref{lem:sequences-ato}. We first need some auxiliary notions.
}

\medskip
\noindent \paragraph{Auxiliary Notions.}
Consider a database $D$ and a set $\dep$ of primary keys over a schema $\ins{S}$. 
Moreover, for an integer $n>0$ and $\alpha \in  (\{R(\bar c) \mid R/m \in \ins{S} \text{ and } \bar c \in \adom{D}^m\} \cup \{\bot\})$, we define the set
\[
	\mathsf{shape}(n,\alpha)\ =\ \begin{cases}
		\{-1,-2\} & \text{if } n>1, \\
		\{-1\} & \text{if } n=1 \text{ and } \alpha \neq \bot,\\
		\emptyset & \text{otherwise.}
	\end{cases}
\]
The above set essentially collects the ``shapes'' of operations (i.e., remove a fact or a pair of facts) that one can apply on a block of the database, assuming that only $n$ facts of the block must be removed overall, and either the block must remain empty (i.e., $\alpha = \bot$), or the block must contain exactly one fact (i.e., $\alpha \neq \bot$).
Finally, for an integer $n>0$ and $i \in \{-1,-2\}$, let $\#\mathsf{opsFor}(n,i) = n$, if $i = -1$, and $\#\mathsf{opsFor}(n,i) = \frac{n(n-1)}{2}$, if $i = -2$. Roughly, assuming that $n$ facts of a block must be removed, $\#\mathsf{opsFor}(n,i)$ is the total number of operations of the shape determined by $i$ that can be applied.

\setlength{\textfloatsep}{2.5em}
\begin{algorithm}[t]
	\SetInd{0.7em}{0.7em}
	\DontPrintSemicolon
	\SetArgSty{textnormal}
	\KwIn{A database $D$, a set $\dep$ of primary keys, a CQ $Q(\bar x)$ from $\sjf$, a generalized hypertree decomposition $H=(T,\chi,\lambda)$ of $Q$ of width $k$, and $\bar c \in \adom{D}^{|\bar x|}$, where $(D,Q,H)$ is in normal form}
	\vspace{2mm}
	{$v := \mathsf{root}(T)$; $A := \emptyset$, $b := 0$;\\}
	{Guess $N \in [|D|]$;\\}
	\vspace{1mm}
	{Assuming $\lambda(v) = \{ R_{i_1}(\bar y_{i_1}), \ldots, R_{i_\ell}(\bar y_{i_\ell})\}$, guess a set $A' =$ $\{ \bar y_{i_1} \mapsto \bar c_1,\ldots,\bar y_{i_\ell} \mapsto \bar c_\ell\}$, with $R_{i_j}(\bar c_j) \in D$ for $j \in [\ell]$, and verify $A \cup A' \cup \{\bar x \mapsto \bar c\}$ is coherent; if not, \textbf{reject};\\}\label{line:begin-sequences}
	\vspace{1mm}
	\For{$j=1, \ldots, \ell$}{
		\If{$v$ is the $\prec_T$-minimal covering vertex for $R_{i_j}(\bar y_{i_j})$}{
			\ForEach{$B \in \block{\dep}{R_{i_j},D}$}{
				
				\lIf{$B = \{\beta\}$}{$\alpha := \beta$;}
				\lElseIf{$R_{i_j}(\bar c_j) \in B$}{$\alpha := R_{i_j}(\bar c_j)$;}
				\lElse{Guess $\alpha \in B \cup \{\bot\}$;}
				
				\lIfElse{$\alpha = \bot$}{$n := |B|$}{$n:= |B| - 1$;}\label{line:being-inner-seq}
				
				{$b' := b$;\\}
				
				\While{ $n>0$ }{
					\lIf{$\mathsf{shape}(n,\alpha) = \emptyset$}{\textbf{reject};}\lElse{Guess $-g \in \mathsf{shape}(n,\alpha)$;}
					{Guess $p \in [\#\mathsf{opsFor}(n,-g)]$;\\}
					{Label with $(-g,p)$;\\}
					{$b := b + 1$; $N := N -1$; $n := n - g$;\\}
				}
				{Guess $p \in \left[b \choose b'\right]$;\\}
				{Label with $(\alpha,p)$;\\}\label{line:end-inner-seq}
			}
		}
	}
	
	\If{$v$ is not a leaf of $T$}{
		{Guess $p \in \{0,\ldots,N\}$;\\}\label{line:being-outer-seq}
		Assuming $u_1,u_2$ are the (only) children of $v$ in $T$, universally guess $i \in \{1,2\}$;\\
		\lIf{$u_i$ is the first child of $v$}{$N := p$;}
		\lElse{$N := N - p$; $b := b + p$;}\label{line:end-outer-seq}
		{$v := u_i$; $A := A'$;\\}
		{\textbf{goto} line \ref{line:begin-sequences};\\}
	}\lElse{\textbf{if} $N = 0$ \textbf{then} \textbf{accept}; \textbf{else} \textbf{reject};}
	
	\caption{The alternating procedure $\mathsf{Seq}[k]$}\label{alg:sequences}
\end{algorithm}

\medskip
\noindent \paragraph{The Procedure $\mathsf{Seq}[k]$.}
Let $M_S^k$ be the ATO underlying $\mathsf{Seq}[k]$. As for $\mathsf{Rep}[k]$, we assume that $M_S^k$ immediately moves from the initial state to a non-labeling state.
$\mathsf{Seq}[k]$ proceeds similarly to $\mathsf{Rep}[k]$, i.e., it traverses the generalized hypertree decomposition $H$, where for each vertex $v$, it guesses a set of tuple mappings $A'$ coherent with $\{\bar x \mapsto \bar c\}$ and with the set of tuple mappings of the previous vertex, and non-deterministically chooses which atom (if any) to keep from each block relative to the relation names in $\lambda(v)$.

The key difference between $\mathsf{Seq}[k]$ and $\mathsf{Rep}[k]$ is that, for a certain operational repair $D'$ that $\mathsf{Seq}[k]$ internally guesses, $\mathsf{Seq}[k]$ must output a number of trees that coincides with the number of complete $(D,\dep)$-repairing sequences $s$ such that $s(D) = D'$. To achieve this, at the beginning of the computation, $\mathsf{Seq}[k]$ guesses the length $N$ of the repairing sequence being computed. Then, whenever it chooses some $\alpha \in B \cup \{\bot\}$ for a certain block $B$ (lines~7-9), then, in lines~10-17, it will non-deterministically output a \emph{path} of the form $L_1 \rightarrow \cdots \rightarrow L_n$, where $L_1,\ldots,L_n$ are special symbols describing the ``shape'' of operations to be performed in order to leave in $B$ only the fact $\alpha$, if $\alpha \neq \bot$, or no facts, if $\alpha = \bot$.
%
%
Each symbol $L_i$ is a pair of the form $(-g,p)$, where $-g \in \{-1,-2\}$ denotes whether the operation removes one or a pair of facts. For example, a path $(-1,n_1) \rightarrow (-2,n_2) \rightarrow (-2,n_3) \rightarrow (-1,n_4)$ is a template specifying that the first operation on the block should remove a single fact, then the next two operations should remove pairs of facts, and the last operation should remove a single fact.

However, a symbol in $\{-1,-2\}$ alone only denotes a \emph{family} of operations over the block $B$, i.e., there may be multiple ways for translating such a symbol into an actual operation, depending on the size of $B$ and the value of $\alpha$. 
%
%
Thus, in a pair $(-g,p)$, the integer $p$ acts as an \emph{identifier} of the concrete operation being applied having the shape of $-g$.\footnote{Identifiers are used in place of actual operations as the procedure would not be able to remember the whole ``history'' of applied operations in logarithmic space.}
The last important observation is that $\mathsf{Seq}[k]$ outputs a template of \emph{all} operations of a certain block, before moving to another block, and further, blocks are considered in a fixed order. However, a complete $(D,\dep)$-repairing sequence might interleave operations coming from different blocks in an arbitrary way as each block is independent. The procedure accounts for this in lines 18-19 as follows.
Let $k_1,\dots,k_m$ be the number of operations applied over each individual block during a run of $\mathsf{Seq}[k]$ (where $m$ is the total number of blocks of $D$). Any arbitrary interleaving of the above operations gives rise to a complete $(D,\dep)$-repairing sequence. Hence, the number of complete $(D,\dep)$-repairing sequences that uses those numbers of operations is the multinomial coefficient
\[{k_1+\dots+k_m\choose k_1,\dots,k_m}={k_1\choose 0}\times {k_1+k_2\choose k_1}\times\dots\times {k_1+\dots+k_m\choose k_1+\dots+k_{m-1}}.\]
Hence, when $\mathsf{Seq}[k]$ completely ``repairs'' the $i$-th block of $D$ (i.e., it reaches line~18), in lines~18-19 it ``amplifies'' the number of trees by guessing an integer $p \in \left[{b\choose b'}\right]$, with $b = k_1 + \cdots + k_i$ and $b' = k_1 + \cdots k_{i-1}$, and by running the statement ``Label with $(\alpha,p)$''.
%
%
We can show that $b \choose b'$, and actually any number $p \in \left[{b \choose b'}\right]$, can be computed in logspace using ideas from~\cite{Chiu2021}. However, the number of bits required to write $p \in \left[{b \choose b'}\right]$ in the labeling tape is polynomial. So, by ``Label $(\alpha,p)$'' we actually mean that $M_S^k$ outputs a \emph{path} of polynomial length of the form $(\alpha,b_1) \rightarrow (\alpha,b_2) \rightarrow \cdots \rightarrow (\alpha,b_n)$, where $b_i$ is the $i$-th bit of the binary representation of $p$.

With all blocks relative to a vertex $v$ completely processed, the procedure moves in parallel to the two children of $v$. To this end, it guesses how many operations are going to be applied in the sequences produced in the left and the right branch, respectively. It does this by guessing an integer $p \in \{0,\ldots,N\}$, which essentially partitions $N$ as the sum of two numbers, i.e., $p$ and $N-p$, and then adapts the values of $N$ (number of operations left) and $b$ (number of operations applied so far) in each branch, accordingly. A computation is accepting if in each of its leaves all required operations have been applied (i.e., $N = 0$).


\begin{example}\label{ex:sequences-algo}
	Consider the database $D$, the set $\dep$ of primary keys, the Boolean CQ $Q$, and the generalized hypertree decomposition $H$ of $Q$ used in Example~\ref{ex:repairs-algo}.
	A possible accepting computation of the ATO $M_S^2$ described by $\mathsf{Seq}[2]$ on input $D$, $\dep$, $Q$, $H$, and the empty tuple $()$, is the one corresponding to the operational repair $D' = \{P(a_1,c), S(c,d), T(d,a_1), U(c,f), U(h,i)\}$, with $D' \models Q$, that outputs a tree of one of the following forms:
	
	\medskip
	\noindent
	\includegraphics[width=\textwidth]{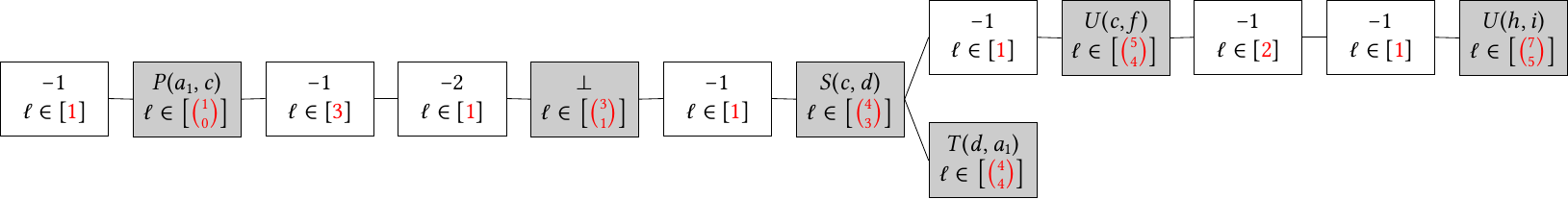}
	\ \\
	\noindent
	\includegraphics[width=1\textwidth]{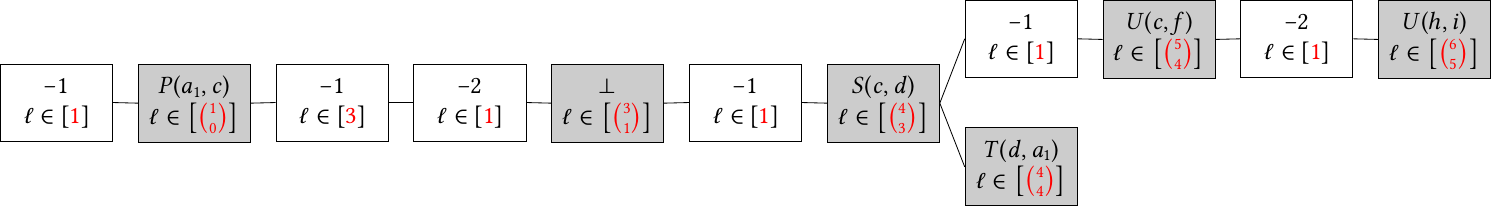}
	%
	
	\noindent Note that, for simplicity, we omit the root with label $\epsilon$, and each pair that labels a node is presented by showing each of its elements on a separate line inside the node. Moreover, the nodes with a gray background containing a pair $(\alpha,\ell)$ are actually a shorthand for paths of the form $(\alpha,b_1) \rightarrow (\alpha,b_2) \rightarrow \cdots \rightarrow (\alpha,b_n)$, where $b_i$ is the $i$-th bit of $\ell$.
	Different accepting computations of the procedure that internally guess the same operational repair $D'$ will output a tree obtained from one of the above two forms, having each $\ell$ equal to some value in the corresponding set. Hence, the total number of such trees is $s_1 + s_2$, where $s_1$ (resp., $s_2$) is the product of all ``amplifying factors'' written in red in each label of the first (resp., second) tree in the figure above, i.e., 
	\[
	\begin{array}{ll}
		s_1 = & 1 \times {1 \choose 0} \times 3 \times 1 \times {3 \choose 1} \times 1 \times {4 \choose 3} \times {4 \choose 4} \times 1 \times {5 \choose 4} \times 2 \times 1 \times {7 \choose 5}, \\\ \\
		s_2 = & 1 \times {1 \choose 0} \times 3 \times 1 \times {3 \choose 1} \times 1 \times {4 \choose 3} \times {4 \choose 4} \times 1 \times {5 \choose 4} \times 1 \times {6 \choose 5}.
	\end{array}
	\]
	One can verify that $s_1 + s_2$ is precisely the number of complete $(D,\dep)$-repairing sequences $s$ such that $s(D) = D'$, as needed. \hfill\markfull
\end{example}

\OMIT{
\medskip
\noindent \paragraph{The Procedure $\mathsf{Seq}[k]$.}
As for $\mathsf{Rep}[k]$, we assume that $\mathsf{Seq}[k]$ immediately moves from the initial state to a non-labeling state.
It proceeds similarly to $\mathsf{Rep}[k]$, i.e., it performs a traversal of the generalized hypertree decomposition $H$, and during the traversal it guesses a set of mappings $A'$  that map $\bar x$ to $\bar c$ and is coherent with the set of mappings of the previous vertex. It then makes a (possibly non-deterministic) choice of which atom (if any) to keep from each block of each relation name $R_{i_j}$ occurring in $\lambda(v)$, where $v$ is the currently processed vertex of $H$.

The key difference of $\mathsf{Seq}[k]$ compared to $\mathsf{Rep}[k]$ is that, for a certain operational repair $D'$ that the algorithm internally guesses by non-deterministically choosing the final state of each block of $D$, the algorithm needs to produce a number of trees that coincides with the number of complete $(D,\dep)$-repairing sequences $s$ such that $s(D) = D'$. To achieve this, at the beginning of the computation, $\mathsf{Seq}[k]$ guesses the length $N$ of the repairing sequence being computed. Then, whenever the algorithm chooses some $\alpha \in B \cup \{\bot\}$ for a certain block $B$ of the database $D$, it will non-deterministically output a path that provides a ``template'' describing the operations to be performed on that block in order to leave in $B$ only the fact $\alpha$, if $\alpha \neq \bot$, or no facts if $\alpha = \bot$.
This template is represented in the output of the procedure as a path of nodes labeled with $-1$ or $-2$, indicating the removal of a single fact or a pair of facts, respectively.
For example, consider the sequence $-1,-2,-2,-1$. Here, the first operation on the block should remove a single fact, then the next two operations should remove pairs of facts, and the last operation should remove a single fact. After this sequence of $-1$ and $-2$ symbols, the last symbol for the block will be $\alpha$, which indicates the final state of the block, i.e., either empty or contains a single fact.

A crucial observation is that each symbol $-1$ or $-2$ actually denotes a \emph{family} of operations over the block $B$, i.e., there may be multiple ways for translating it into an actual operation. For instance, in a block of five facts, either four or five potential operations could remove a single fact, depending on the block's final state. If the block is left empty, any of the five facts can be removed. On the other hand, if a specific fact is kept, only the remaining four facts can be removed. Thus, the actual label the algorithm produces for each operation is a \emph{pair} of a symbol $-i$, with $i \in \{1,2\}$, and an integer $p$ guessed from $[\#\mathsf{opsFor}(n,-i)]$, where $\#\mathsf{opsFor}(n,-i)$ is precisely the number of possible operations of the shape $-i$ that can be applied; this accurately accounts for all the possible translations of the symbol $-i$ into an actual operation.\footnote{The reason why the algorithm produces such templates, rather than actual sequences of operations is because, given it has chosen the first $\ell$ operations for a block $B$, in order to choose the next operation, it must remember al the previous $\ell$ operations, which cannot be done in logarithmic space.}
 
The last important observation is that $\mathsf{Seq}[k]$ produces \emph{all} operations of a certain block, before moving to another block, and blocks are considered in a fixed order. However, a complete $(D,\dep)$-repairing sequence might interleave operations coming from different blocks in an arbitrary way as each block is independent from the others. The procedure accounts for this as follows.
Let $k_1,\dots,k_m$ be the number of operations applied over each individual block during a run of $\mathsf{Seq}[k]$ (where $m$ is the total number of blocks of $D$). The number of ways to create a complete $(D,\dep)$-repairing sequence using those operations is provided by the multinomial coefficient
\[
{k_1+\dots+k_m\choose k_1,\dots,k_m}\ =\ \dfrac{(k_1 + \cdots + k_m)}{k_1! \times \cdots \times k_m!}.
\]
It is well-known that such a multinomial coefficient can be represented as a product of binomial coefficients as follows:
\[{k_1+\dots+k_m\choose k_1,\dots,k_m}={k_1\choose 0}\times {k_1+k_2\choose k_1}\times\dots\times {k_1+\dots+k_m\choose k_1+\dots+k_{m-1}}.\]
Hence, when $\mathsf{Seq}[k]$ completely ``repairs'' the $i$-th block of $D$, it uses the corresponding binomial coefficient ${k_1 + \cdots + k_i} \choose {k_1 + \cdots k_{i-1}}$ to ``amplify'' the number of trees produced; the binomial coefficient of the current block is kept updated using the variables $b$ and $b'$. For instance, when the first block $B$ of $D$ has been completely processed, the algorithm guesses an integer $p \in \left[{b\choose b'}\right]$, where $b' = 0$ and $b= k_1$, and outputs a node labeled with $(\alpha,p)$, where $\alpha$ is the atom to keep in the block (encoded as a pointer), or $\bot$ if no atom should be kept. When the algorithm completes the second block, the binomial coefficient will be $b \choose b'$, where $b' = k_1$ and $b = k_1 + k_2$, and so on until all blocks determined by the current vertex $v$ of $H$ have been processed.
Note that, although one can show that the binomial coefficient $b \choose b'$, and thus, any number $p \in \left[{b \choose b'}\right]$, can be computed in logspace (this is a consequence of~\cite{Chiu2021}), the number of bits required to store $p \in \left[{b \choose b'}\right]$ is polynomial. Hence, we cannot afford to label a single node with $(\alpha,p)$, as this would violate the well-behavedness of the underlying ATO. Instead, ``Label $(\alpha,p)$'' means that the machine outputs a \emph{path} of polynomial length the form 
$(\alpha,b_1) \rightarrow (\alpha,b_2) \rightarrow \cdots \rightarrow (\alpha,b_n)$, where $b_i$ is the $i$-th bit of the binary representation of $p$.

With all blocks relative to a vertex $v$ completely processed, the procedure moves in parallel to the two children of $v$. To this end, it guesses how many operations are going to be applied in the sequences produced in the left and the right branch, respectively. It does this by guessing an integer $p \in \{0,\ldots,N\}$, which essentially partitions $N$ as the sum of two numbers, i.e., $p$ and $N-p$, and then adapts the values of $N$ and $b$ in each branch, accordingly. A computation is accepting if in each of its leaves the required number $N$ of operations have all been applied (i.e., $N = 0$).

We proceed to give an example that illustrates the above discussion concerning the procedure $\mathsf{Seq}[k]$.

\begin{example}\label{ex:sequences-algo}
Consider the database $D$, the set $\dep$ of primary keys, the Boolean CQ $Q$, and the generalized hypertree decomposition $H$ of $Q$ given in Example~\ref{ex:repairs-algo}.
A possible accepting computation of the ATO described by $\mathsf{Seq}[2]$ with input $D$, $\dep$, $Q$, $H$, and the empty tuple $()$, is the one corresponding to the operational repair $D' = \{P(a_1,c), S(c,d), T(d,a_1), U(c,f), U(h,i)\}$, with $D' \models Q$, that outputs a tree of one of the following forms:

\vspace*{2mm}
\noindent
\includegraphics[width=0.48\textwidth]{example-tree1-seq}
\ \\
\noindent
\includegraphics[width=0.44\textwidth]{example-tree2-seq}
\vspace*{2mm}

\noindent where each pair that labels a node is presented by showing each of its elements on a separate line inside the node. Moreover, the nodes with a thick border containing a pair $(\alpha,\ell)$ are actually a shorthand for paths of the form $(\alpha,b_1) \rightarrow (\alpha,b_2) \rightarrow \cdots \rightarrow (\alpha,b_n)$, where $b_i$ is the $i$-th bit of $\ell$.
Different accepting computations of the procedure that internally guess the same operational repair $D'$ will output a tree obtained from one of the above two forms, having each $\ell$ equal to some value in the corresponding set. Hence, the total number of such trees is $s_1 + s_2$, where $s_1$ (resp., $s_2$) is the product of all ``amplifying factors'' written in red in each label of the first (resp., second) tree in the figure above, i.e., 
\[
\begin{array}{ll}
s_1 = & 1 \times {1 \choose 0} \times 3 \times 1 \times {3 \choose 1} \times 1 \times {4 \choose 3} \times {4 \choose 4} \times 1 \times {5 \choose 4} \times 2 \times 1 \times {7 \choose 5}, \\\ \\
s_2 = & 1 \times {1 \choose 0} \times 3 \times 1 \times {3 \choose 1} \times 1 \times {4 \choose 3} \times {4 \choose 4} \times 1 \times {5 \choose 4} \times 1 \times {6 \choose 5}.
\end{array}
\]
One can verify that $s_1 + s_2$ is precisely the number of complete $(D,\dep)$-repairing sequences $s$ such that $s(D) = D'$, as needed. \hfill\markfull
\end{example}
}

We now proceed to prove Lemma~\ref{lem:sequences-ato}.
We start by proving that the algorithm can be implemented as a well-behaved ATO $M^k_S$, and then we show that the number of valid outputs of $M^k_S$ is precisely the number of complete repairing sequences $s$ such that $\bar c\in Q(s(D))$.

\subsection{Item (1) of Lemma~\ref{lem:sequences-ato}}
The proof of this item proceeds similarly to the one for the first item of Lemma~\ref{lem:repairs-ato}. The additional part we need to argue about is how the  ATO $M^k_S$ underlying Algorithm~\ref{alg:sequences} implements lines~\ref{line:being-inner-seq}-\ref{line:end-inner-seq}, and lines~\ref{line:being-outer-seq}-\ref{line:end-outer-seq} using logarithmic space in the working and labeling tape. Regarding lines~\ref{line:being-inner-seq}-\ref{line:end-inner-seq}, the first two lines require storing a number no larger than $|D|$, which can be easily stored in logarithmic space. Then, computing $\mathsf{shape}(n,\alpha)$ and $\sharp\mathsf{opsFor}(n,-g)$ is trivially doable in logarithmic space. Similarly, guessing the number $p \in [\sharp\mathsf{opsFor}(n,-g)]$ requires logarithmic space. The operation "Label with $(-g,p)$" requires only logarithmic space in the labeling tape, since both $-g$ and $p$ can be encoded in logarithmic space. Similarly, additions and subtractions between two logspace-encoded numbers can be carried out in logarithmic space.

The challenging part is guessing $p \in \left[{b \choose b'}\right]$ and writing $(\alpha,p)$ in the labeling tape, without explicitly storing $p$ and ${b \choose b'}$ in the working tape, and not going beyond the space bounds of the labeling tape. This is because ${b \choose b'}$ and $p$ require, in principle, a polynomial number of bits, since they encode exponentially large numbers w.r.t.\ the size of the input. To avoid the above, $M^k_S$ does the following: it guesses one bit $d$ of $p$ at the time, starting from the most significant one. Then, $M^k_S$ writes $(\alpha,d)$ in the labeling tape, moves to a labeling state, and then moves to a non-labeling state, and continues with the next bit of $p$. We now need to show how each bit of $p$ can be guessed in logarithmic space, by also guaranteeing that the bits chosen so far place $p$ in the interval $\left[{b \choose b'}\right]$. Note that
$${b \choose b'} = \dfrac{b!}{b'! \times (b-b')!} = \dfrac{b \times (b-1) \times \cdot \times (b-b'+1)}{b'!},$$
and thus ${b \choose b'}$ can be computed by performing two iterated multiplications: $B_1 = b \times (b-1) \times \cdot \times (b-b'+1)$, and $B_2 = b'! = b' \times (b'-1) \times \cdots \times 1$, of polynomially many terms, and then by dividing $B_1 / B_2$. Iterated multiplication and integer division are all doable in deterministic logspace~\cite{Chiu2021}, and thus, by composition of logspace-computable functions~\cite{ArBa09}, $M^k_S$ can compute one bit of ${b \choose b'}$ at the time, using at most logspace, and use such a bit to guide the guess of the corresponding bit of $p$.

We conclude by discussing lines~\ref{line:being-outer-seq}-\ref{line:end-outer-seq}. Guessing $p \in \{0,\ldots,N\}$ is clearly feasible in logspace, since $N \le |D|$. The remaining lines can be implemented in logarithmic space because of arguments similar to the ones used for the proof of item~(1) of Lemma~\ref{lem:repairs-ato}.

\subsection{Item (2) of Lemma~\ref{lem:sequences-ato}}
We prove that there is a one-to-one correspondence between the valid outputs of $M^k_S$ on $D,\dep,Q,H,\bar c$ and the complete repairing sequences $s$ such that $\bar c\in Q(s(D))$. Assume that $H=(T^H,\chi^H,\lambda^H)$ with $T^H=(V^H,E^H)$. In what follows, for each set $P$ of $D$-operations, we assume an arbitrary fixed order over the operations in $P$ that we denote by $\prec_P$. In addition, let $B_1,\dots,B_m$ be the order defined over all the blocks of $D$ by first considering the order $\prec_{T^H}$ over the nodes of $T^H$, then for each node $v$ of $T^H$ considering the fixed order used in Algorithm~\ref{alg:sequences} over the relation names $R_i$ such that $v$ is the $\prec_{T^H}$-minimal covering vertex of the atom $R_i(\bar y_i)$ of $Q$, and finally, for each relation name $R_i$, considering the fixed order used in Algorithm~\ref{alg:sequences} over the blocks of $\block{\dep}{R_i,D}$. For each $i\in[m]$, consider the first $i$ blocks $B_1,\dots,B_i$, and let $s_{\le i-1}$ be a sequence of operations over the blocks $B_1,\dots,B_{i-1}$ and $s_i$ be a sequence of operations over $B_i$. We assume an arbitrary fixed order over all the permutations of the operations in $s_{\le i-1},s_i$ that are coherent with the order of the operations in $s_{\le i-1}$ and with the order of operations in $s_i$ (i.e.,~all the possible ways to interleave the sequences $s_{\le i-1},s_i$), and denote it by $\prec_{s_{\le i-1},s_i}$. For simplicity, we assume in the proof that in line~19 of Algorithm~\ref{alg:sequences} we produce a single label $(\alpha,p)$ rather than a sequence of labels representing the binary representation of $p$. This has no impact on the correctness of the algorithm.

\medskip
\noindent\paragraph{Sequences to Valid Outputs.} Consider a complete repairing sequence $s$ such that $\bar c \in Q(s(D))$. We construct a valid output $O=(V',E',\lambda')$ as follows. First, for each vertex $v\in V^H$, we add to $V'$ a corresponding vertex $v'$. If there exists and edge $(v,u)\in E^H$, then we add the corresponding edge $(v',u')$ to $E'$.

Next, for every vertex $v'\in V'$ corresponding to a vertex $v\in V^H$, we replace $v'$ by a path $v'_1 \rightarrow \dots \rightarrow v'_m$ of vertices. We add the edges $(u',v'_1)$ and $(v'_m,u'_\ell)$ for $\ell\in[2]$, assuming that $u'$ is the parent of $v'$ and $u'_1,u'_2$ are the children of $v'$ in $V'$. Note that initially each node of $V'$ has precisely two children since we assume that $(D,Q,H)$ is in normal form; hence, every node of $V^H$ has two children.
The path $v'_1 \rightarrow \dots \rightarrow v'_m$ is such that for every relation name $R_i$ with $v$ being the $\prec_{T^H}$-minimal covering vertex of the atom $R_i(\bar y_i)$ of $Q$ (note that for every $v$ there is at least one such relation name since $(D,Q,H)$ is in normal form), and for every block $B\in \block{\dep}{R_i,D}$, there is subsequence $v'_{i_1} \rightarrow \dots \rightarrow v'_{i_t}$ where:
\begin{itemize}
\item $t-1$ is the number of operations in $s$ over the block $B$,
\item $\lambda'(v'_{i_r})=(-1,p)$ if the $r$-th operation of $s$ over $B$ is the $p$-th operation according to $\prec_{P}$, where $P$ is the set of operations removing a single fact $f\not\in s(D)$ of $B'$, and $B'\subseteq B$ is obtained by applying the first $r-1$ operations of $s$ over $B$, and excluding the fact of $s(D)\cap B$ (if such a fact exists),
\item $\lambda'(v'_{i_r})=(-2,p)$ if the $r$-th operation of $s$ over $B$ is the $p$-th operation according to $\prec_{P}$, where $P$ is the set of operations removing a pair of facts $f,g\not\in s(D)$ of $B'$, and $B'\subseteq B$ is obtained by applying the first $r-1$ operations of $s$ over $B$, and excluding the fact of $s(D)\cap B$ (if such a fact exists),
\item $\lambda'(v'_{i_t})=(\alpha,p)$ if $B\cap s(D)=\{\alpha\}$, $B$ is the $i$-th block of $D$, and the permutation of the operations of $s$ over the blocks $B_1,\dots,B_i$ is the $p$-th in $\prec_{s_{\le i-1},s_i}$, where $s_{\le i-1}$ is the sequence of operations in $s$ over the blocks $B_1,\dots,B_{i-1}$ and $s_i$ is the sequence of operations in $s$ over the block $B_i$,
\item $\lambda'(v'_{i_t})=(\bot,p)$ if $B\cap s(D)=\emptyset$, $B$ is the $i$-th block of $D$, and then the permutation of the operations of $s$ over the blocks $B_1,\dots,B_i$ is the $p$-th in $\prec_{s_{\le i-1},s_i}$, where $s_{\le i-1}$ is the sequence of operations in $s$ over the blocks $B_1,\dots,B_{i-1}$ and $s_i$ is the sequence of operations in $s$ over the block $B_i$.
\end{itemize}
The order of such subsequences in $v'_1 \rightarrow \dots \rightarrow v'_m$ must be coherent with the order $B_1,\dots,B_n$: the first subsequence corresponds to the first block of the first relation name $R_i$ with $v'$ being the $\prec_{T^H}$-minimal covering vertex of the atom $R_i(\bar y_i)$ of $Q$, etc.

Since $H$ is complete, every atom $R_i(\bar y_i)$ of $Q$ has a $\prec_{T^H}$-minimal covering vertex, and since $(D,Q,H)$ is in normal form, every relation of $D$ occurs in $Q$. Hence, every block of $D$ has a representative path of nodes in $O$. Since $\bar c \in Q(s(D))$, there exists a homomorphism $h$ from $Q$ to $D$ with $h(\bar x)=\bar c$. Hence, there is an accepting computation of $M^k_S$ with $O$ being its output, that is obtained in the following way via Algorithm~\ref{alg:sequences}:
\begin{enumerate}
    \item We guess $N=|s|$ (i.e.,~the length of the sequence $s$) in line~2,
    \item for every node $v$, assuming $\lambda(v) = \{ R_{i_1}(\bar y_{i_1}), \ldots, R_{i_\ell}(\bar y_{i_\ell})\}$, we guess the set $A' =$ $\{ \bar y_{i_1} \mapsto h(\bar y_{i_1}),\ldots,\bar y_{i_\ell} \mapsto h(\bar y_{i_\ell})\}$ in line~3,
    \item for every block $B$, we guess $\alpha=s(D)\cap B$ if $s(D)\cap B\neq\emptyset$ or $\alpha=\bot$ otherwise in lines~7---9,
    \item for every block $B$ with $v'_{i_1} \rightarrow \dots \rightarrow v'_{i_t}$ being its representative path in $O$, in the $r$-th iteration of the while loop of line~12 over this block, for all $1\le r\le t-1$, we guess $-g$ in line~14 and $p$ in line~15 assuming that $\lambda'(v'_{i_r})=(-g,p)$,
    \item for every block $B$ with $v'_{i_1} \rightarrow \dots \rightarrow v'_{i_t}$ being its representative path in $O$, we guess $p$ in line~18 assuming $\lambda'(v'_{i_t})=(\alpha,p)$,
    \item we guess the total number $p$ of operations of $s$ over the blocks $B$ such that $B\in \block{\dep}{R_i,D}$ and the $\prec_{T^H}$-minimal covering vertex of the atom $R_i(\bar y_i)$ of $Q$ occurs in the subtree rooted at $u$ (the first child of $v$) in line~21.
\end{enumerate}

The number $N$ that we guess in line~2 clearly belongs to $[|D|]$ because $|s|\le |D|$ for every complete repairing sequence $s$. Moreover, since $h$ is a homomorphism from $Q$ to $D$ with $h(\bar x)=\bar c$, for every atom $R_i(\bar y_i)$ of $Q$ it holds that $R_i(h(\bar y_i))\in D$; hence, it is always possible to choose a set $A'$ in line~3 that is coherent with the homomorphism as well as with the previous choice and with the mapping $\bar x\mapsto \bar c$---for every atom $R_i(\bar y_i)$ that we consider, we include the mapping $\bar y_i \mapsto h(\bar y_i)$ in the set. In addition, for every block $B$, since $s$ is a complete repairing sequence, $s(D)\cap B$ either contains a single fact or no facts at all. Since $\bar c\in Q(s(D))$, as witnessed by $h$, if $R_i(h(\bar y_i))\in B$, then $s(D)$ contains the fact $R_i(h(\bar y_i))$ and this is the only fact of $s(D)\cap B$. In this case, we will define $\alpha=R_i(h(\bar y_i))$ in line~7 or line~8, when considering the block $B$ in line~6.

Next, if $v'_{i_1},\dots,v'_{i_t}$ is the path of nodes in $O$ representing the block $B$, and $\lambda'(v'_{i_r})=(-g,p)$ for some $r\in[t-1]$, then $p\in [\#\mathsf{opsFor}(n,-g)]$. This holds since we start with $n=|B|$ if $\alpha=\bot$ or $n=|B|-1$ otherwise, and decrease, in line~17, the value of $n$ by $1$ when guessing $-1$ in line~14 (i.e,~when applying an operation that removes a single fact) or by $2$ when guessing $-2$ in line~14 (i.e.,~when applying an operation that removes a pair of facts); hence, at the $r$-th iteration of the while loop of line~12 over $B$, it holds that $n$ is precisely the size of the subset $B'$ that is obtained by applying the first $r-1$ operations of $s$ over $B$, and excluding the fact of $s(D)\cap B$ if such a fact exists. This implies that $\#\mathsf{opsFor}(n,-g)$ is precisely the number of operations over $B'$ that remove a single fact if $-g=-1$ (there are $n$ such operations, one for each fact of $B'$) or the number of operations over $B'$ that remove a pair of facts if $-g=-2$ (there are $\frac{n (n-1)}{2}$ such operations, one for every pair of facts of $B'$). Note that the number of possible operations only depends on the size of a block and not its specific set of facts, as all the facts of a block are in conflict with each other. Since $\prec_P$ is a total order over precisely these operations, we conclude that $p\in [\#\mathsf{opsFor}(n,-g)]$.

Next, we argue that if $\lambda'(v'_{i_t})=(\alpha,p)$ (or $\lambda'(v'_{i_t})=(\bot,p)$), then $p$ is a value in $[{b \choose b'}]$.
An important observation here is that whenever we reach line~18, and assuming that $B$ is the block that is currently being handled and it is the $i$-th block of $D$ (w.r.t.~the order $B_1,\dots,B_m$ over the blocks), we have that  $b'$ is the number of operations in $s$ over the blocks $B_1,\dots,B_{i-1}$ and $b$ is the number of operations in $s$ over the blocks $B_1,\dots,B_i$.
This holds since we start with $b=b'=0$, and increase the value of $b$ by one in every iteration of the while loop of line~12 (hence, with every operation that we apply over a certain block). When we are done handling a block, we define $b'=b$ in line~11, before handling the next block. Moreover, for each relation name $R_j$, we handle the blocks of $\block{\dep}{R_j,D}$ in the fixed order defined over these blocks, and for each node $v$, we handle the relation names $R_j$ with $v$ being the $\prec_{T^H}$-minimal covering vertex of the atom $R_j(\bar y_j)$ of $Q$ in the fixed order defined over these relation names. 

In lines~20---25, when we consider the children of the current node $v$, we guess the number $p$ of operations for the first child of $v$ (hence, for all the blocks $B'\in \block{\dep}{R_j,D}$ such that the $\prec_{T^H}$-minimal covering vertex of the atom $R_j(\bar y_j)$ of $Q$ is in the subtree rooted at $v$, as each block is handled under its $\prec_{T^H}$-minimal covering vertex), and we define $b=b+p$ for the second child. By doing so, we take into account all the operations over the blocks that occur before the blocks handled under the second child of $v$ in the order, even before they are actually handled. Note that all the blocks handled under the first child appear before the blocks handled under the second child, as the first child appears before the second child in the order $\prec_{T^H}$ by definition. Hence, whenever we consider a certain block in line~6, $b$ always holds the total number of operations applied over the  previous blocks in the order, in line~11 we define $b'=b$, and when we finish handling this block in the last iteration of the while loop of line~12, $b$ holds the total number of operations applied over the previous blocks in the order and the current block.

Now, in our procedure for constructing $O$, the value $p$ that we choose for $\lambda'(v_{i_t}')$ is determined by the number of permutations of the sequences $s_{\le i-1}$ and $s_i$, where $s_{\le i-1}$ is the sequence of operations in $s$ over the blocks $B_1,\dots,B_{i-i}$ and $s_i$ is the sequence of operations over $B_i$. As said, the number of operations in $s_{\le i-1}$ is the value of $b'$ defined in line~11 when considering the block $B$ in line~6, and the number of operations in $s_{\le i}$ is precisely the value of $b$ after the last iteration of the while loop of line~12. The number of ways to interleave these sequences, which is the number of permutations of these operations that are coherent with the order of the operations in both $s_{\le i-1}$ and $s_i$ is precisely ${b\choose b'}$, and so $p\in[{b\choose b'}]$.

Finally, the number $p$ that we guess in line~21 is clearly in the range $0,\dots,N$, as the number of operations of $s$ over the blocks $B$ that are handled under some subtree of $T^H$ is bounded by the total number of operations in $s$. 
Note that for every leaf of the described computation we have that $N=0$ in line~27, as for each node $v$, the number $N$ of operations that we assign to it is precisely the number of operations over the blocks handled under its subtree. We conclude that the described computation is indeed an accepting computation, and since \emph{(1)} the described computation and $O$ consider the blocks of $D$ in the same order $B_1,\dots,B_n$, each block under its $\prec_{T^H}$-minimal covering vertex, and \emph{(2)} the only labeling configurations are those that produce the labels in line~16 and line~19, and we use the labels of $O$ to define the computation, it is now easy to verify that $O$ is the output of this computation.

It remains to show that two different sequences $s$ and $s'$ give rise to two different outputs $O$ and $O'$. Assume that $s=(o_1,\dots,o_m)$ and $s'=(o_1',\dots,o_\ell')$. If $m\neq \ell$, then $O\neq O'$ since the number of nodes in $O$ is $m+n$, where $n$ is the total number of blocks of $D$ (one node for each operation and an additional node for each block), and the number of nodes in $O'$ is $\ell+n$. If $m=\ell$, there are three possible cases. There first is when there is a block $B$ such that the operations of $s$ over $B$ and the operations of $s'$ over $B$ conform to different templates or $s(D)\cap B\neq s'(D)\cap B$. In this case, the path of nodes in $O$ representing the block $B$ and the path of nodes in $O'$ representing this block will have different labels, as at some point they will use a different $-g$, or a different $\alpha$ for the last node. 

The second case is when $s$ and $s'$ conform to the same templates over all the blocks and $s(D)\cap B=s'(D)\cap B$ for every block $B$, but they differ on the operations over some block $B$. Assume that they differ on the $r$-th operation over $B$, and this is the first operation over this block they differ on. In this case, the $r$-th node in the representative path of $B$ in $O$ and the $r$-th node in the representative path of $B$ in $O'$ will have different labels, as they will use a different $p$. This is the case since when determining the value of $p$ we consider the same subset $B'$ of $B$ in both cases (as the first $r-1$ operations over $B$ are the same for both sequences and $s(D)\cap B=s'(D)\cap B$), and the same $-g$ (as they conform to the same template), and so the same set $P$ of operations. The value of $p$ then depends on the $r$-th operation of $s$ over $B$ (respectively, the $r$-th operation of $s'$ over $B$) and the order $\prec_P$ over the operations of $P$, and since these are two different operations, the values will be different. 

The last case is when $s$ and $s'$ use the exact same operations in the same order for every block. In this case, the only difference between $s$ and $s'$ is that they interleave the operations over the different blocks differently. For each $i\in[m]$, let $s_{\le i-1}$ be the sequence of operations in $s$ over the blocks $B_1,\dots,B_{i-1}$, and let $s'_{\le i-1}$ be the sequence of operations in $s'$ over these blocks. Let $B_j$ be the first block such that $s_{\le j-1}=s'_{\le j-1}$, but $s_{\le j}\neq s'_{\le j}$. Note that $s_j$ (the sequence of operations of $s$ over $B_j$) and $s'_j$ (the sequence of operations of $s'$ over $B_j$ are the same). In this case, the last node in the representative path of $B_j$ in $O$ and the last node in the representative path of $B_j$ in $O'$ will have a different label. They will share the same $\alpha$ since they apply the same operations; hence keep the same fact of this block (or remove all its facts). However, the value of $p$ will be different as it is determined by the sequence $s_{\le j}$ (which is the same as $s'_{\le j}$, the sequence $s_j$ (which is the same as $s'_j$), and the order $\prec_{s_{\le j-1}, s_j}$ over all the ways to interleave these sequences. Since $s_{\le j}\neq s'_{\le j}$, these sequences are interleaved in different ways, and the values will be different.

\medskip
\noindent\paragraph{Valid Outputs to Sequences.} Next, let $O=(V',E',\lambda')$ be a valid output of $M^k_S$ on $D,\dep,Q,H,\bar c$. In each computation, we traverse the nodes of $T^H$ in the order defined by $\prec_{T^H}$, for each node $v$ we go over the relation names $R_i$ such that $v$ is the $\prec_{T^H}$-minimal covering vertex of the atom $R_i(\bar y_i)$ of $Q$ in some fixed order (there is at least one such relation name for every $v$ since $(D,Q,H)$ is in normal form), and for each relation name $R_i$ we go over the blocks of $\block{\dep}{R_i,D}$ in some fixed order. For every block, we produce a set of labels that must all occur along a single path of the output of the computation, as all the guesses we make along the way are existential. Since $O$ is the output of some computation of $M^k_S$ on $D,\dep,Q,H,\bar c$, it must be of the structure described in the proof of the previous direction. That is, for every node $v$ of $T^H$, we have a path of nodes in $O$ representing the blocks handled under $v$, and under the last node representing the last block of $v$, we have two children, each representing the first block handled by a child of $v$ in $T^H$. Note that since $H$ is complete, every relation name $R_i$ is such that the atom $R_i(\bar y_i)$ of $Q$ has a $\prec_{T^H}$-minimal covering vertex, and so every block of $D$ has a representative path in $O$.

It is now not hard to convert an output $O=(V',E',\lambda')$ of an accepting computation into a complete repairing sequence $s$ with $\bar c\in Q(s(D))$. In particular, for every node $v$ of $T^H$, every relation name $R_i$ such that $v$ is the $\prec_{T^H}$-minimal covering vertex of the atom $R_i(\bar y_i)$ of $Q$, and every block $B\in\block{\dep}{R_i,D}$, we consider the path $v'_1 \rightarrow \dots \rightarrow v'_m$ of vertices in $V'$ representing this block. Clearly, the first $m-1$ vertices are such that $\lambda'(v_i')$ is of the form $(-g,p)$ (these labels are produced in the while loop of line~12), and the last vertex is such that $\lambda'(v_m')$ is of the form $(\alpha,p)$ (this label is produced in line~19). Hence, the sequence $s$ will have $m-1$ operations over the block $B$, such that the $r$-th operation is a removal of a single fact if $\lambda'(v_r')=(-1,p)$ for some $p$, or it is the removal of a pair of facts if $\lambda'(v_r')=(-2,p)$ for some $p$. The specific operation is determined by the value $p$ in the following way. 

If after applying the first $r-1$ operations over $B$, and removing the fact $\alpha$ of $\lambda'(v_m')$ (if $\alpha\neq\bot$) from the result, we obtain the subset $B'$, then the operation that we choose is the $p$-th operation in $\prec_P$, where $P$ is the set of all operations removing a single fact of $B'$ if $\lambda'(v_r')=(-1,p)$ or it is the set of all operations removing a pair of facts of $B'$ if $\lambda'(v_r')=(-2,p)$. Note that such an operation exists since when we start handling the block $B$ in line~6, we define $n=|B|$ if $\alpha=\bot$ or $n=|B|-1$ otherwise, and we decrease the value of $n$ by one when we guess $-1$ in line~14 and by $2$ when we guess $-2$. Thus, in the $r$-th iteration of the while loop of line~12 over this block, $n$ is precisely the size of the subset $B'$ of $B$ as described above. Then, the $-g$ that we guess in this operation depends on the size of $B'$ and the chosen $\alpha$. If $n>1$, then the removal of any single fact of $B'$ and the removal of any pair of facts of $B'$ are all $D$-justified operations, as all the facts of $B'$ are in conflict with each other; in this case, $-g\in\{-1,-2\}$. If $n=1$, the only $D$-justified operation is the removal of the only fact of $B'$ in the case where $\alpha\neq\bot$, as these two facts are in conflict with each other; in this case, $-g=-1$. If $n=1$ and $\alpha=\bot$ or $n=0$, there are no $D$-justified operations over $B'$.
Hence, we only choose $-1$ (respectively, $-2$) if there is at least one $D$-justified operation that removes a single fact (respectively, a pair of facts), as determined by $\mathsf{shape}(n,\alpha)$. The value $p$ is from the range $[\#\mathsf{opsFor}(n,-g)]$, where $\#\mathsf{opsFor}(n,-g)$ is precisely the number of such operations available. If $-g=-1$, then the number of available operations is $n$---one for each fact of $B'$. If $-g=-2$, then the number of available operations is $\frac{n (n-1)}{2}$---one for each pair of facts of $B'$. Note that the number of operations only depends on $n$ and not on the specific block, as all the facts of a block are in conflict with each other.

As for the last node $v_m'$, $\lambda'(v_m')=(\alpha,p)$ determines the final state of the block $B$, as well as the positions of the operations over $B$ in the sequence established up to that point (as we will explain later). Since we always disregard the fact of $\alpha$ when choosing the next operation, clearly none of the operations of $s$ over $B$ removes this fact. Hence, we eventually have that $s(D)\cap B=\alpha$ if $\alpha\neq\bot$, or $s(D)\cap B=\emptyset$ if $\alpha=\bot$ (in this case, we allow removing any fact of $B'$ in the sequence). Note that if $\alpha=\bot$, $|B'|=2$, and we choose an operation that removes a single fact, this computation will be rejecting, as in the next iteration of the while loop we will have that $n=1$ and $\alpha=\bot$, and, in this case, $\mathsf{shape}(n,\alpha)=\emptyset$. Hence, if $\alpha=\bot$, the last operation of $s$ over $D$ will always be the removal of a pair of facts when precisely two facts of the block are left, and we will have that $s(D)\cap B=\emptyset$.

We have explained how we choose, for every block of $D$, the sequence of operations in $s$ over this block, and it is only left to show how to interleave these operations in the sequence. We do so by following the order $B_1,\dots,B_m$ over the blocks. We start with the block $B_1$ and denote by $s_{\le 1}$ the sequence of operations of $s$ over $B_1$. This sequence of operations has already been determined. Next, when we consider the node $B_i$ for some $i>1$, we denote by $s_i$ the sequence of operations over this block, and by $s_{\le i-1}$ the sequence of operations over the blocks $B_1,\dots,B_{i-1}$. If $s_{\le i}$ is of length $b$ and $s_{\le i-1}$ is of length $b'$, we have $b$ positions in the sequence $s_{\le i}$, and all we have to do is to choose, among them, the $b'$ positions for the sequence $s_{\le i-1}$, and the rest of the positions will be filled by the sequence $s_i$; the number of choices is ${b\choose b'}$. As argued in the proof of the previous direction, when we handle the block $B_i$, in line~18 the value of $b$ is the number of operations in $s_{\le i}$ and the value of $b'$ is the number of operations in $s_{\le i-1}$; hence, if the path of nodes representing the block $B_i$ is $v_1' \rightarrow \dots \rightarrow v_m'$ and $\lambda'(v_m')=(\alpha,p)$, $p$ is a number in $[{b\choose b'}]$, and it uniquely determines the sequence $s_{\le i}$; this is the $p$-th permutation in $\prec_{s_{\le i-1},s_i}$. Finally, we define $s=s_{\le n}$.

It is left to show that $s$ is a complete repairing sequence, and that $\bar c\in Q(s(D))$. Since for every block $B$ with a representative path $v_1' \rightarrow \dots \rightarrow v_m'$ we have that $\lambda'(v_m')=(\alpha,p)$, where $\alpha$ is either a single fact of $B$ or $\bot$, $s(D)$ contains, for every block of $D$, either a single fact or no facts. Since there are no conflicts among different blocks, we conclude that $s$ is indeed a complete repairing sequence. Moreover, since we consider a complete hypertree decomposition, every atom $R_i(\bar y_i)$ of $Q$ has a $\prec_{T^H}$-minimal covering vertex $v$. By definition of covering vertex, we have that $\bar y_i\subseteq \chi^H(v)$ and $R_i(y_i)\in\lambda^H(v)$. In the computation, for each vertex $v$ of $T^H$, we guess a set $A' =$ $\{ \bar y_{i_1} \mapsto \bar c_1,\ldots,\bar y_{i_\ell} \mapsto \bar c_\ell\}$ assuming $\lambda(v) = \{ R_{i_1}(\bar y_{i_1}), \ldots, R_{i_\ell}(\bar y_{i_\ell})\}$, such that $R_{i_j}(\bar c_j) \in D$ for $j \in [\ell]$ and all the mappings are coherent with $\bar x\mapsto \bar c$. In particular, when we consider the $\prec_{T^H}$-minimal covering vertex of the atom $R_i(y_i)$ of $Q$, we choose some mapping $\bar y_i\mapsto \bar c_i$ such that $R_i(\bar c_i)\in D$. When we consider the block $B$ of $R_i(\bar c_i)$ in line~6, we choose $\alpha=R_i(\bar c_i)$ in line~8, which ensures that eventually $s(D)\cap B=R_i(\bar c)$; that is, we keep this fact in the repair. Since $Q$ is self-join-free, the choice of which fact to keep from the other blocks of $\block{\dep}{R_i,D}$ can be arbitrary (we can also remove all the facts of these blocks).  When we transition from a node $v$ of $T^H$ to its children, we define $A=A'$, and then, for each child, we choose a set $A$ in line~3 that is consistent with $A'$ and with $\bar x\mapsto \bar c$.

This ensures that two atoms $R_i(\bar y_i)$ and $R_j(\bar y_j)$ of $Q$ that share at least one variable are mapped to facts $R_i(\bar c_i)$ and $R_j(\bar c_j)$ of $D$ in a consistent way; that is, a variable that occurs in both $\bar y_i$ and $\bar y_j$ is mapped to the same constant in $\bar c_i$ and $\bar c_j$. Let $v$ and $u$ be the 
$\prec_{T^H}$-minimal covering vertices of $R_i(\bar y_i)$ and $R_j(\bar y_j)$, respectively. Let $z$ be a variable that occurs in both $\bar y_i$ and $\bar y_j$. By definition of covering vertex, we have that $\bar y_i\subseteq \chi^H(v)$ and $\bar y_j\subseteq \chi^H(u)$. By definition of hypertree decomposition, the set of vertices $s\in T^H$ for which $z\in\chi^H(s)$ induces a (connected) subtree of $T^H$. Therefore, there exists a subtree of $T^H$ that contains both vertices $v$ and $u$, and such that $z\in\chi^H(s)$ for every vertex $s$ of this substree. When we process the root $r$ of this subtree, we map the variable $z$ in the set $A'$ to some constant (since there must exist an atom $R_k(\bar y_k)$ in $\lambda(r)$ such that $z$ occurs in $\bar y_k$ by definition of a hypertree decomposition). Then, when we transition to the children of $r$ in $T^H$, we ensure that the variable $z$ is mapped to the same constant, by choosing a new set $A'$ of mappings that is consistent with the set of mappings chosen for $r$, and so fourth. Hence, the variable $z$ will be consistently mapped to the same constant in this entire subtree, and for every atom $R_i(\bar y_i)$ of $Q$ that mentions the variable $z$ it must be the case that the 
$\prec_{T^H}$-minimal covering vertex $v$ of $R_i(\bar y_i)$ occurs in this subtree because $z\in\chi^H(v)$. Hence, the sets $A$ of mappings that we choose in an accepting computations map the atoms of the query to facts of the database in a consistent way, and this induces a homomorphism $h$ from $Q$ to $D$. And since we choose mappings that are coherent with the mapping $\bar x\mapsto\bar c$ of the answer variables of $Q$, we conclude that $\bar c\in Q(s(D))$.

Finally, we show that two different valid outputs $O=(V^O,E^O,\lambda^O)$ and $O'=(V^{O'},E^{O'},\lambda^{O'})$ give rise to two different complete repairing sequences $s$ and $s'$, respectively. If the values of $N$ guessed in line~2 in the computations of $O$ and $O'$ are different, then clearly $s\neq s'$, as the length of the sequence is determined by $N$. This holds since we decrease this value by one with every operation that we apply, and only accept if $N=0$ for every leaf of $T^H$; that is, if all our guesses for the number of operations in the computation are coherent with the sequence. Similarly,
if the length of the representative path of some block $B$ in $O$ is different from the length of the representative path of $B$ in $O'$, then $s\neq s'$, as they differ on the number of operations over $B$. If both sequences have the same length and the same number of operations for each block, then there are three possible cases. 
The first is when there exists a block $B$ such that the representative path $v_1 \rightarrow \dots \rightarrow v_m$ of $B$ in $O$ and the representative path $u_1 \rightarrow \dots \rightarrow u_m$ of $B$ in $O'$ are such that for some $i\in[m-1]$, if $\lambda^O(v_i)=(-g,p)$ and $\lambda^{O'}(u_i)=(-g',p')$ then $-g\neq -g'$, or $\lambda^O(v_m)=(\alpha,p)$ and $\lambda^{O'}(u_m)=(\alpha',p')$ with $\alpha\neq\alpha'$.
In this case, the operations of $s$ over $B$ and the operations of $s'$ over $B$ conform to different templates or keep a different fact of $B$ in the repair; hence, we again conclude that $s\neq s'$.

The second case is when for every block $B$, if $v_1 \rightarrow \dots \rightarrow v_m$ is the path in $O$ representing the block $B$ and $u_1 \rightarrow \dots \rightarrow u_m$ is the path in $O'$ representing the block $B$, then for every $i\in[m-1]$, if $\lambda^O(v_i)=(-g,p)$ and $\lambda^{O'}(u_i)=(-g',p')$, then $-g=-g'$, but for some $i\in[m-1]$, $p\neq p'$. Assume that this holds for the $r$-th node in the path, and this is the first node they differ on; that is, $\lambda^O(v_i)=\lambda^{O'}(u_i)$ for all $i\in[r-1]$, but $\lambda^O(v_r)\neq\lambda^{O'}(u_r)$.
In this case, the $r$-th operation of $s$ over $B$ and the $r$-th operation of $s'$ over $B$ will be different. This is the case since when determining the $r$-th operation of the sequence over $B$, we consider the same subset $B'$ of $B$ in both cases (as the first $r-1$ operations over $B$ are the same for both sequences) and the same $\alpha$, and so the same set $P$ of operations. The $r$-th operation in $s$ (respectively, $s'$) over $B$ is the $p$-th (respectively, $p'$-th) operation in the order $\prec_P$ over such operations, and since $p\neq p'$, these are different operations.

The last case is when for some block $B$, with $v_1 \rightarrow \dots \rightarrow v_m$ being its representative path in $O$ and $u_1 \rightarrow \dots \rightarrow u_m$ being its representative path in $O'$, we have that $\lambda'^O(u_i)=\lambda^O(v_i)$ for all $i\in[m-1]$, $\lambda^O(v_m)=(\alpha,p)$, $\lambda^{O'}(v_m)=(\alpha',p')$, $\alpha=\alpha'$, and $p\neq p'$. In this case, we consider the first block $B$ in the order $B_1,\dots,B_m$ for which this is the case. For each $i\in[m]$, let $s_{\le i-1}$ be the sequence of operations in $s$ over the blocks $B_1,\dots,B_{i-1}$, and let $s'_{\le i-1}$ be the sequence of operations in $s'$ over these blocks. Clearly, $s_{\le i-1}=s'_{\le i-1}$, as for all the previous blocks, $O$ and $O'$ agree on all the labels along the representative path of the block. However, the sequences $s_{\le i}$ and $s'_{\le i}$ will be different, and since they are subsequences of $s$ and $s'$, respectively, we will conclude that $s\neq s'$.
This is the case since the way we choose to interleave the sequences $s_{\le i-1}$ (resp., $s'_{\le i-1}$) and $s_i$ (resp., $s'_i$) depends on the values $p$ and $p'$; that is, we choose the $p$-th (resp., $p'$-th) permutation in the order 
 $\prec_{s_{\le i-1}, s_i}$ (note that $s_i=s'_i$). Since $p\neq p'$, these will be two different permutation, and $s_i\neq s'_i$. This concludes our proof.
\begin{algorithm}[t]
	\SetInd{0.7em}{0.7em}
	\DontPrintSemicolon
	\SetArgSty{textnormal}
		\KwIn{A database $D$, a CQ $Q(\bar x)$, and a generalized hypertree decomposition $H=(T,\chi,\lambda)$ of $Q$ of width $k$, where $(D,Q,H)$ is in normal form}
		\vspace{2mm}
		
		{$v := \mathsf{root}(T)$; $A := \emptyset$;\\}
		\vspace{1mm}
		
		{Assuming $\lambda(v) = \{ R_{i_1}(\bar y_{i_1}), \ldots, R_{i_\ell}(\bar y_{i_\ell})\}$, guess a set $A' =$ $\{ \bar y_{i_1} \mapsto \bar c_1,\ldots,\bar y_{i_\ell} \mapsto \bar c_\ell\}$, with $R_{i_j}(\bar c_j) \in D$ for $j \in [\ell]$, and verify $A \cup A'$ is coherent; if not, \textbf{reject};\\}
		
		\vspace{1mm}
		\ForEach{$x \in \chi(v) \cap \bar x$}{
			{Assuming $c$ is the constant to which $x$ is mapped in $A'$, Label with $(x,c)$;\\}
			
		}
		
		\If{$v$ is not a leaf of $T$}{
			Assuming $u_1,u_2$ are the (only) children of $v$ in $T$, universally guess $i \in \{1,2\}$;\\
			{$v := u_i$; $A := A'$;\\}
			{\textbf{goto} line 2;\\}
		}\lElse{\textbf{accept};}
		\caption{The alternating procedure $\mathsf{GHWCQ}[k]$}\label{alg:ghwcq}
\end{algorithm}

\section{Other Database Problems in $\spantl$}

In this section, we show the following:

\begin{proposition}\label{pro:other-prob-spantl}
	For every $k > 0$, $\sharp \mathsf{GHWCQ}[k],\sharp\mathsf{UR}[k] \in \spantl$, and unless $\nlogspace = \logcfl$, $\sharp \mathsf{GHWCQ}[k],\sharp\mathsf{UR}[k] \not \in \spanl$.
\end{proposition}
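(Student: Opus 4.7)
The plan is to handle each problem by describing a well-behaved ATO that realizes it, exploiting the given generalized hypertree decomposition exactly as was done for $\sharp\mathsf{Repairs}[k]$. As in Lemma~\ref{lem:repairs-ato}, I will assume via a logspace normal-form reduction (analogous to Proposition~\ref{pro:normal-form}) that the input triple $(D,Q,H)$ is strongly complete and $2$-uniform; since $\spantl$ is closed under logspace reductions (Proposition~\ref{pro:logspace-closure}), this incurs no generality loss.

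For $\sharp\mathsf{GHWCQ}[k] \in \spantl$, the witnessing ATO is exactly the procedure $\mathsf{GHWCQ}[k]$ already displayed as Algorithm~\ref{alg:ghwcq}: it traverses $T$ top-down, at every vertex $v$ guesses a coherent set $A'$ of tuple mappings for the atoms in $\lambda(v)$, and whenever $v$ is the $\prec_T$-minimal covering vertex of an atom containing an answer variable $x$, it enters a labeling state writing the pair $(x,c)$ where $c$ is the value $A'$ assigns to $x$. Well-behavedness is established exactly as for $M_R^k$: $A$ and $A'$ contain at most $k$ mappings and can be stored in logspace via pointers into the input; strong completeness forces at least one labeling configuration per vertex, bounding the number of universal configurations on any labeled-free path by a constant; and $2$-uniformity keeps the universal branching degree bounded. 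The correctness argument then builds a bijection between valid outputs and distinct answer tuples $\bar c$, by associating each output tree with the tuple it spells out on the answer variables; coherence of the guessed sets guarantees the existence of a homomorphism to $D$ that realizes $\bar c$, while injectivity follows from the fact that the fixed order over $T$ and over the variables in each $\chi(v)$ makes the output shape depend only on $H$ and its label content on $\bar c$.

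For $\sharp\mathsf{UR}[k] \in \spantl$, I would adapt $\mathsf{Rep}[k]$ by replacing the block-by-block choice with a fact-by-fact choice: for each fact $\alpha$ of $D$ associated (in some fixed order) with the $\prec_T$-minimal covering vertex of an atom in $\lambda(v)$, the machine guesses whether to include $\alpha$ or not, and if so labels with (a pointer to) $\alpha$, otherwise labels with $\bot$; if a fact is forced to be present because it realizes the current mapping $A'$ witnessing $\bar c \in Q(D')$, the choice is deterministic. Well-behavedness follows by the same reasoning as for $M_R^k$. The bijection argument is, if anything, simpler than in Lemma~\ref{lem:repairs-ato}: each valid output encodes a subset $D' \subseteq D$ whose characteristic vector is spelled out along the paths of the tree, and coherence of the guessed $A'$ ensures $\bar c \in Q(D')$; conversely every such subset is reachable by choosing $A'$ to match some homomorphism witnessing $\bar c \in Q(D')$.

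Finally, the separation from $\spanl$ uses exactly the pattern of Proposition~\ref{pro:notinspanl}. If $\sharp\mathsf{GHWCQ}[k]$ or $\sharp\mathsf{UR}[k]$ were in $\spanl$, then, by definition of $\spanl$, their decision versions would lie in $\nlogspace$. But the decision version of $\sharp\mathsf{GHWCQ}[k]$ is precisely the problem of deciding, given $D$, $Q \in \ghw_k$ and a width-$k$ decomposition, whether $Q(D)\neq\emptyset$, which is $\logcfl$-hard by~\cite{GoLS02}; and the decision version of $\sharp\mathsf{UR}[k]$ coincides with it, because a subset $D'\subseteq D$ with $\bar c\in Q(D')$ exists if and only if $\bar c\in Q(D)$ (answers are monotone under extension). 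Since $\nlogspace$ is closed under logspace reductions, either containment would force $\logcfl\subseteq\nlogspace$, contradicting $\nlogspace\neq\logcfl$. The main subtlety in the whole proof is to make sure, in the $\sharp\mathsf{UR}[k]$ algorithm, that processing facts only at $\prec_T$-minimal covering vertices together with self-join-freeness of $Q$ ensures each fact of $D$ is decided exactly once, so that distinct subsets correspond to distinct outputs.
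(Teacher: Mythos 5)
Your proposal follows essentially the same route as the paper: its Algorithms for $\mathsf{GHWCQ}[k]$ and $\mathsf{UR}[k]$ are precisely the two ATOs you describe (labeling with pairs $(x,c)$ for the answer variables, and making a keep-or-drop choice per fact at the $\prec_T$-minimal covering vertex, with $\bot$ recording exclusion), and the $\spanl$ separation is argued identically via the $\logcfl$-hardness of the decision versions from~\cite{GoLS02} together with the fact that decision versions of $\spanl$ functions lie in $\nlogspace$. The one point you should make explicit for $\sharp \mathsf{GHWCQ}[k]$ is the standing assumption (also needed in~\cite{ACJR21}) that the given decomposition covers the answer variables $\bar x$ as well: without it the connectedness condition does not apply to answer variables, so the coherence checks between consecutive vertices need not propagate their values across the tree, and a valid output could spell out an inconsistent tuple.
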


In what follows, fix an integer $k > 0$. To prove the second part of Proposition~\ref{pro:other-prob-spantl} we can use similar arguments used to prove Proposition~\ref{pro:notinspanl}. In particular, it is enough to observe that the decision problems $\sharp \mathsf{GHWCQ}[k]_{>0}$ and $\sharp\mathsf{UR}[k]_{>0}$, associated to $\sharp \mathsf{GHWCQ}[k]$ and $\sharp\mathsf{UR}[k]$, respectively, are both $\logcfl\hard$. This is true, since $\sharp \mathsf{GHWCQ}[k]_{>0}$ corresponds to the problem of checking whether there exists at least a tuple $\bar c$ that is an answer to a given CQ $Q$ over a database $D$, when a generalized hypertree decomposition of $Q$ of width $k$ is given~\cite{GoLS02}, while $\sharp\mathsf{UR}[k]_{>0}$ is the problem of checking whether a given tuple $\bar c$ is an answer to a CQ $Q$ over a database $D$, when a generalized hypertree decomposition of $Q$ of width $k$ is given~\cite{GoLS02}. Hence, using the same arguments used in the proof of Proposition~\ref{pro:notinspanl}, we conclude that unless $\nlogspace = \logcfl$,  $\sharp \mathsf{GHWCQ}[k]$ and $\sharp\mathsf{UR}[k]$ are not in $\spanl$.

For the membership of $\sharp \mathsf{GHWCQ}[k]$ and $\sharp\mathsf{UR}[k]$ in $\spantl$, we present alternating procedures with output for the problems $\sharp \mathsf{GHWCQ}[k]$ and $\sharp\mathsf{UR}[k]$. The procedures are similar in spirit to the procedure $\mathsf{Rep}[k]$ depicted in Algorithm~\ref{alg:repairs} used in the main body to show that the problem $\sharp \mathsf{Repairs}[k]$ is in $\spantl$. To prove that such procedures are well-behaved, and that the number of accepted outputs coincides with the count of the corresponding problem, one can use very similar arguments as the ones we use for the proof of Lemma~\ref{lem:repairs-ato}.

	\begin{algorithm}[t]
	\SetInd{0.7em}{0.7em}
	\DontPrintSemicolon
	\SetArgSty{textnormal}
		\KwIn{A database $D$, a CQ $Q(\bar x)$ from $\sjf$, a generalized hypertree decomposition $H=(T,\chi,\lambda)$ of $Q$ of width $k$, and $\bar c \in \adom{D}^{|\bar x|}$, where $(D,Q,H)$ is in normal form}
		\vspace{2mm}
		
		{$v := \mathsf{root}(T)$; $A := \emptyset$;\\}
		\vspace{1mm}
		
		{Assuming $\lambda(v) = \{ R_{i_1}(\bar y_{i_1}), \ldots, R_{i_\ell}(\bar y_{i_\ell})\}$, guess a set $A' =$ $\{ \bar y_{i_1} \mapsto \bar c_1,\ldots,\bar y_{i_\ell} \mapsto \bar c_\ell\}$, with $R_{i_j}(\bar c_j) \in D$ for $j \in [\ell]$, and verify $A \cup A' \cup \{\bar x \mapsto \bar c\}$ is coherent; if not, \textbf{reject};\\}
		
		\vspace{1mm}
		\For{$j=1, \ldots, \ell$}{
			\If{$v$ is the $\prec_T$-minimal covering vertex for $R_{i_j}(\bar y_{i_j})$}{
				\ForEach{atom $\beta \in D$ with predicate $R_{i_j}$}{
					\lIf{$\beta = R_{i_j}(\bar c_j)$}{$\alpha := \beta$}
					\lElse{Guess $\alpha \in \{\beta,\bot\}$;}
					
					{Label with $\alpha$;\\}
				}
			}
		}
		
		\If{$v$ is not a leaf of $T$}{
			Assuming $u_1,u_2$ are the (only) children of $v$ in $T$, universally guess $i \in \{1,2\}$;\\
			{$v := u_i$; $A := A'$;\\}
			{\textbf{goto} line 2;\\}
		}\lElse{\textbf{accept};}
		\caption{The alternating procedure $\mathsf{UR}[k]$}\label{alg:ur}
\end{algorithm}

\medskip
\noindent \textbf{Counting Answers to CQs.}
We start by focusing on the problem  $\sharp \mathsf{GHWCQ}[k]$, i.e., the probem that given as input a database $D$, a CQ $Q(\bar x)$, and a generalized hypertree decomposition $H$ of $Q$ of width $k$, asks for $|Q(D)|$. In particular, for this problem, a generalized hypertree decomposition $H=(T,\chi,\lambda)$ is understood to consider also the output variables $\bar x$ of $Q$, i.e., the tree decomposition $(T,\chi)$ considers all the variables of $Q$ and not only the variables $\var{Q} \setminus \bar x$. Note that this assumption is also needed in~\cite{ACJR21} for providing an FPRAS for the problem $\sharp \mathsf{GHWCQ}[k]$.
Furthermore, as already done for $\sharp \mathsf{Repairs}[k]$, we assume w.l.o.g. that the triple $(D,Q,H)$ is in normal form.

Algorithm~\ref{alg:ghwcq} depicts an alternating procedure with output, dubbed $\mathsf{GHWCQ}[k]$, with inputs $D$, $Q$, and $H$, where $(D,Q,H)$ is in normal form, and whose accepted outputs are node-labeled trees corresponding to the answers in $Q(D)$.
The procedure $\mathsf{GHWCQ}[k]$ follows the same structure as the procedure $\mathsf{Rep}[k]$ of Algorithm~\ref{alg:repairs} in the main body of the paper, i.e., it traverses the generalized hypertree decomposition $H$ of $Q$, and guessess a witness $A'$, at each node $v$ of $H$, of the fact that the atoms in $\lambda(v)$ can be homomorphically mapped to the database $D$. The key difference is that now, when such a witness $A'$ is found for the current node $v$ of $H$, the content of $A'$ is used to produce \emph{part of an answer of $Q$}. This part is the one relative to the atoms in $\lambda(v)$. Hence, a single accepted output of $\mathsf{GHWCQ}[k]$ collects in each of its nodes, a part of a tuple mapping of the form $\bar x \mapsto \bar c$, where $\bar c \in Q(D)$. Each part is encoded as a pair of the form $(x,c)$, with $x \in \bar x$ and $c \in \bar c$. When all such pairs are combined together, we obtain a complete description of the tuple mapping $\bar x \mapsto  \bar c$. Here, for the tuple $\bar c$ to have \emph{all} its parts occuring in some node of an output, it is crucial that the tree decomposition of $H$ considers all variables in $\var{Q}$, and not only the variables $\var{Q} \setminus \bar x$. The fact that $\mathsf{GHWCQ}[k]$ is well-behaved follows by the same arguments used to prove that $\mathsf{Rep}[k]$ is well-behaved.

\medskip
\noindent \textbf{Uniform Reliability.}
We now focus on the problem $\sharp\mathsf{UR}[k]$, i.e., the probem that given as input a database $D$, a CQ $Q(\bar x)$ from $\sjf$, a generalized hypertree decomposition $H$ of $Q$ of width $k$, and a tuple $\bar c \in \adom{D}^{|\bar x|}$, asks for the number of subsets $D'$ of $D$ such that $\bar c \in Q(D')$. In particular, here we assume $H$ is a generalized hypertree decomposition as defined in Section~\ref{sec:preliminaries}. Again, we assume w.l.o.g. that $(D,Q,H)$ is in normal form.

We observe that $\sharp\mathsf{UR}[k]$ is very similar to our problem $\sharp \mathsf{Repairs}[k]$. Indeed, an operational repair $D'$ is nothing else than a special kind of subset of $D$. Hence, it is enough to extend the procedure $\mathsf{Rep}[k]$ for the problem $\sharp \mathsf{Repairs}[k]$ so that it can output an arbitrary subset $D'$ of $D$ for which $\bar c \in Q(D')$. The alternating procedure with output depicted in Algorithm~\ref{alg:ur}, dubbed $\mathsf{UR}[k]$, does exactly this. In fact, one can verify that it is just a slight modification of the procedure $\mathsf{Rep}[k]$ used to show that $\sharp \mathsf{Repairs}[k]$ is in $\spantl$. The key difference is that now, when the set of tuple mappings $A'$ is guessed, the procedure has to choose, for each predicate $R_{i_j}$ in $\lambda(v)$, where $v$ is the node of $H$ for which $A'$ has been guessed, and for each atom in $D$ with predicate $R_{i_j}$, whether this atom is going to be part of the subset $D'$, with the condition that the atom with predicate $R_{i_j}$ whose tuple is already determined by $A'$ must necessarily be part of $D'$. 
The fact that $\mathsf{UR}[k]$ is well-behaved can be proved with the same arguments used to prove the well-behavedness of the procedure $\mathsf{Rep}[k]$.

\bibliographystyle{ACM-Reference-Format}

\bibliography{references}

\end{document}